\tikzset{
    massive/.style={
        decoration={},
        decorate
    },
    massless/.style={
        color=white!20!black,
        decoration={coil, aspect=1, mirror, segment length=1.2mm, amplitude=0.6mm},
        decorate
    },
    photon/.style={
        decoration={snake, aspect=0.75, mirror, segment length=2mm, amplitude=0.5mm},
        decorate
    }
}
\theoremstyle{plain}
\newtheorem{theorem}{Theorem}[section]
\newtheorem{conjecture}[theorem]{Conjecture}
\newtheorem{lemma}[theorem]{Lemma}
\newtheorem{remark}[theorem]{Remark}
\newtheorem{proposition}[theorem]{Proposition}
\theoremstyle{definition}
\numberwithin{equation}{section}
\newtheorem{definition}[theorem]{Definition}
\newenvironment{example}
  {\pushQED{\qed}\examplex}
  {\popQED\endexamplex}
\newcommand{\m}{\mathfrak{m}}
\newcommand{\F}{\mathscr{F}}
\newcommand{\G}{\mathscr{G}}
\newcommand{\I}{\mathscr{I}}
\newcommand{\M}{\mathcal{M}}
\newcommand{\Z}{\mathbb{Z}}
\newcommand{\E}{\mathcal{E}}
\newcommand{\C}{\mathbb{C}}
\newcommand{\V}{\mathcal{V}}
\newcommand{\D}{\delta}
\renewcommand{\d}{\partial}
\newcommand{\R}{\mathbb{R}}
\newcommand{\Newt}{\textup{Newt}}
\definecolor{mycolor1}{rgb}{0.00000,0.44700,0.74100}
\definecolor{mycolor2}{rgb}{0.8500, 0.3250, 0.0980}
\definecolor{mycolor3}{rgb}{0.9290, 0.6940, 0.1250}
\definecolor{mycolor4}{rgb}{0.4940, 0.1840, 0.5560}
\definecolor{mycolor5}{rgb}{0.4660, 0.6740, 0.1880}
\colorlet{blue}{RoyalBlue}
\newcommand{\be}{\begin{equation}}
\newcommand{\ee}{\end{equation}}
\newcommand{\nn}{\nonumber}
\renewcommand{\I}{\mathcal{I}}
\renewcommand{\R}{\mathbb{R}}
\renewcommand{\E}{\mathrm{E}}
\renewcommand{\d}{\mathrm{d}}
\newcommand{\U}{\mathcal{U}}
\renewcommand{\F}{\mathcal{F}}
\renewcommand{\V}{\mathcal{V}}
\renewcommand{\D}{\mathrm{D}}
\renewcommand{\L}{\mathrm{L}}
\newcommand{\eps}{\varepsilon}
\renewcommand{\Z}{\mathbb{Z}}
\renewcommand{\m}{\mathsf{m}}
\newcommand{\n}{\mathsf{n}}
\renewcommand{\M}{\mathsf{M}}
\renewcommand{\C}{\mathbb{C}}
\newcommand{\Ue}{\mathfrak{U}}
\newcommand{\Fe}{\mathfrak{F}}
\renewcommand{\Newt}{\mathrm{Newt}}
\newcommand{\conv}{\textup{Conv}}
\newcommand{\e}{\mathrm{e}}
\newcommand{\An}{{\tt A}_n}
\renewcommand{\G}{\mathcal{G}}
\newcommand{\PLDwebsite}{\url{https://mathrepo.mis.mpg.de/PLD/}}
\title{\huge Principal Landau Determinants}
\author[1]{Claudia Fevola,}\emailAdd{claudia.fevola@inria.fr}
\author[2]{Sebastian Mizera,}\emailAdd{smizera@ias.edu}
\author[3]{Simon Telen}\emailAdd{simon.telen@mis.mpg.de}
\affiliation[1]{Université Paris-Saclay, Inria, 91120 Palaiseau, France}
\affiliation[2]{Institute for Advanced Study, Einstein Drive, Princeton, NJ 08540, USA}
\affiliation[3]{Max Planck Institute for Mathematics in the Sciences, Inselstra\ss e 22, 04103 Leipzig, Germany}
\abstract{%
We reformulate the Landau analysis of Feynman integrals with the aim of advancing the state of the art in modern particle-physics computations. We contribute new algorithms for computing Landau singularities, using tools from polyhedral geometry and symbolic/numerical elimination. Inspired by the work of Gelfand, Kapranov, and Zelevinsky (GKZ) on generalized Euler integrals, we define the principal Landau determinant of a Feynman diagram. We illustrate with a number of examples that this algebraic formalism allows to compute many components of the Landau singular locus. We adapt the GKZ framework by carefully specializing Euler integrals to Feynman integrals. For instance, ultraviolet and infrared singularities are detected as irreducible components of an incidence variety, which project dominantly to the kinematic space. We compute principal Landau determinants for the infinite families of one-loop and banana diagrams with different mass configurations, and for a range of cutting-edge Standard Model processes. Our algorithms build on the \texttt{Julia} package \texttt{Landau.jl} and are implemented in the new open-source package \texttt{PLD.jl} available at \PLDwebsite.\\
\newpage
\noindent
PROGRAM SUMMARY\\
Program title: \texttt{PLD.jl}\\
Developer's respository link: \PLDwebsite \\
Licensing provisions: Creative Commons by 4.0 (CC by 4.0)\\
Programming language: \texttt{Julia}\\
Supplementary material: The repository includes the source code with documentation (PLD\_code.zip), a jupyter notebook tutorial providing installation and usage instructions (PLD\_notebook.zip), a database containing the output of our algorithm on 114 examples of Feynman integrals (PLD\_database.zip).\\ 
Nature of problem: A fundamental challenge in scattering amplitude is to determine the values of complexified kinematic invariants for which an
amplitude can develop singularities. Bjorken, Landau, and Nakanishi wrote a system of polynomial constraints, nowadays known as the Landau equations. This project aims to rigorously revisit the Landau analysis of the singularity locus of Feynman integrals with a practical view towards explicit computations.\\
Solution method: We define the principal Landau determinant (PLD), which is a variety inspired by the work of Gelfand, Kapranov, and Zelevinsky (GKZ). We conjecture that it provides a subset of the singularity locus, and we implement effective algorithms to compute its defining equation explicitly.\\
References: OSCAR \cite{OSCAR}, HomotopyContinuation.jl \cite{10.1007/978-3-319-96418-8_54}, Landau.jl \cite{Mizera:2021icv}
}
\begin{document} 

\setcounter{tocdepth}{2}
\maketitle
\setcounter{page}{2}

\section{Introduction}

Our ability to perform high-precision computations of scattering amplitudes in quantum field theory relies on new insights into their analytic structure. A fundamental challenge in this field is to determine the values of complexified kinematic invariants for which a given amplitude can develop singularities. These are poles or branch points, interchangeably called \emph{anomalous thresholds} or \emph{Landau singularities}. A deeper understanding of this problem would have an immediate impact on the cutting-edge computations in the method of differential equations \cite{Badger:2023eqz}, symbol-level constraints on polylogarithmic Feynman integrals and beyond \cite{Arkani-Hamed:2022rwr,Bourjaily:2022bwx}, and the non-perturbative bootstrap~\cite{Kruczenski:2022lot}.

The question itself has a long history and dates back to the work of Bjorken, Landau, and Nakanishi \cite{Bjorken:1959fd,Landau:1959fi,10.1143/PTP.22.128}. These authors wrote a system of polynomial constraints, nowadays known as the \emph{Landau equations}, for determining the singularities. See \cite{Eden:1966dnq,bjorken1965relativistic,Itzykson:1980rh} for textbook expositions. As is well-known, Landau analysis was never formulated precisely enough to be applicable to the Standard Model computations of current importance to collider phenomenology, especially when massless particles are involved, see, e.g., \cite[Ex.~5.3]{Collins:2011zzd}. There are also known examples where a naive application of Landau equations does not detect all singularities of Feynman integrals; instead, a more careful blow-up analysis is needed \cite{Landshoff1966,doi:10.1063/1.1724262,Berghoff:2022mqu}.
The outstanding problems that need to be addressed before a large-scale application of Landau analysis are (i) practical formulation of Landau conditions in the presence of massless particles and ultraviolet/infrared (UV/IR) divergences which make Feynman integrals singular everywhere in the kinematic space; (ii) systematic classification of the systems of equations that need to be solved to actually account for all singularities; and (iii) providing practical tools for solving such systems.

In this work, we formalize Landau singularities as the subspace of kinematics on which the Feynman integrand is ``more singular'' than generically, thus addressing point (i).
More concretely, we introduce the \emph{Euler discriminant variety}. Calling the integration space $X$, the Euler discriminant variety is the locus of kinematic invariants for which the signed Euler characteristic $|\chi(X)|$ drops compared to its generic value. To perform explicit computations, we define the \emph{principal Landau determinant} (PLD), which is an approach to (ii) that employs polyhedral geometry to scan over different ways Schwinger parameters can go to zero or infinity. Finally, to address point (iii), we introduce the package \texttt{PLD.jl} available open-source at
\begin{center}
\PLDwebsite.
\end{center}
It implements symbolic and numerical elimination algorithms introduced in this paper. As concrete examples, we will apply it to the Feynman diagrams shown in Fig.~\ref{fig:diagrams} and \cite[Fig.~1]{Mizera:2021icv}. They are summarized in a database of $114$ examples of different graph topologies and mass assignments accessible through the above link. 

Our results were announced in \cite{Fevola:2023kaw} with an emphasis on the physical aspects. The present paper motivates our definitions and fleshes out the algorithmic details.

\paragraph{\bf Summary of contributions.} The mathematical problem at hand is described as follows. We consider an $\E$-dimensional integral ${\cal I}(z)$ whose integrand depends on parameters $z$. Here $\E$ is the number of internal edges in a Feynman diagram, and $z = (z_1, \ldots, z_s)$ represents all kinematic invariants. This integral is a holomorphic function of $z$ on a neighborhood of generic complex parameters $z^* \in \mathbb{C}^s$. The Landau singular locus is an algebraic variety in $\mathbb{C}^s$, at which analytic continuation of ${\cal I}(z)$ may fail. This is formalized via differential equations satisfied by our integral, using the language of \emph{$D$-modules}. For a friendly introduction, see \cite{sattelberger2019d} and references therein. The steps are (a) to find a holonomic $D$-ideal annihilating ${\cal I}(z)$ and (b) to compute its singular locus \cite[Def.~1.12]{sattelberger2019d}. We conjecture that the result is the Euler discriminant. Unfortunately, while algorithms for step (b) exist, step (a) is usually problematic. 

Gelfand, Kapranov, and Zelevinsky (GKZ) consider particular integrals ${\cal I}(z)$, which they call \emph{generalized Euler integrals}, whose holonomic $D$-ideal can be constructed purely combinatorially \cite{gelfand1990generalized}. The result is nowadays referred to as a \emph{GKZ system}, or \emph{$A$-hypergeometric system}. The singular locus is defined by the \emph{principal $A$-determinant} $E_A$, which is a homogeneous polynomial in the parameters $z$ (Thm. \ref{thm:GKZsinglocus}). At the same time, the principal $A$-determinant characterizes when the topology of the integration space changes: it detects drops in the Euler characteristic (Thm. \ref{thm:eulercharvolume}). In other words, $E_A$ is a first example of an Euler discriminant. We recall the GKZ framework in Sec.~\ref{sec:2}.

Feynman integrals can be seen as specializations of GKZ integrals: $z$ is restricted to lie in the kinematic space, which can be viewed as a linear subspace of the GKZ parameter space. At the level of the singular locus, this specialization is quite tricky: 
\begin{equation} \label{eq:rule} \tag{*}
\text{One can \emph{not} just substitute kinematic variables in the principal $A$-determinant.}
\end{equation} 
We discuss this slogan at length in Sec.~\ref{sec:2}. Nonetheless, with the necessary care, the algebraic techniques for computing principal $A$-determinants can be adapted to the Feynman setting to compute components of the Landau singular locus. These components form the principal Landau determinant (PLD). The precise definition is given in Sec.~\ref{sec:3}, and we include a comparison with the Euler discriminant. We conjecture that the variety defined by the PLD is contained in the Euler discriminant variety, and verify this in all our examples.

Exceptions to the rule \eqref{eq:rule} are discussed in Sec.~\ref{sec:4}. We prove that for one-loop diagrams with several different mass configurations, the Euler discriminant equals the intersection of the principal $A$-determinant with kinematic space. This justifies the emphasis on one-loop examples in previous approaches \cite{Dlapa:2023cvx}. Our proofs use combinatorics and tools from~\cite{gelfand2008discriminants}.

Sec.~\ref{sec:algorithm} is on how to compute principal Landau determinants. Like in \cite{Mizera:2021icv}, we present symbolic and symbolic-numerical algorithms, relying on computer algebra and numerical nonlinear algebra. We explain how to use our open-source software \texttt{PLD.jl}, which finds components of Landau singular loci that had not been computed before. 

Sec.~\ref{sec:conclusion} provides an outlook and a list of open questions. This paper also comes with three appendices. In App.~\ref{sec:appendixHyperInt}, we explain how to use the compatibility graph algorithm implemented in \cite{Panzer:2014caa} for computing Landau singularities and contrast it with PLD. In App.~\ref{sec:appendix}, we review the derivation of the Schwinger parameter formula for Feynman integrals. Finally, App.~\ref{sec:appendix2} discusses PLD in the language of toric geometry.

\begin{figure}[t]
\vspace{-1em}
\centering
\captionsetup{justification=centering}

\begin{subfigure}[c]{0.3\textwidth}
\centering
\begin{tikzpicture}[scale = 0.9,line width=1,scale=1.3]
    \coordinate (v1) at (0,0);
    \coordinate (v2) at (0,1);
    \coordinate (v3) at (1,1);
    \coordinate (v4) at (2,1);
    \coordinate (v5) at (2,0);
    \coordinate (v6) at (1,0);
    \draw[massless] (v6) -- (v1) -- (v2) -- (v3);
    \draw[massive] (v6) -- (v3) -- (v4) -- (v5) -- (v6);
    \draw[massless] (v1) -- ++(-135:0.5);
    \draw[massless] (v2) -- ++(135:0.5);
    \draw[massless] (v4) -- ++(45:0.5);
    \draw[massless] (v5) -- ++(-45:0.5);
    \foreach \point in {v1, v2, v3, v4, v5, v6} {
    	\fill[white!20!black] (\point) circle (1.5pt);
    }
	\node at (0,0.5) [font=\footnotesize, anchor=east] {$\alpha_1$};
	\node at (0.5,-0.1) [font=\footnotesize, anchor=north] {$\alpha_2$};
	\node at (1.5,-0.1) [font=\footnotesize, anchor=north] {$\alpha_3$};
	\node at (2,0.5) [font=\footnotesize, anchor=west] {$\alpha_4$};
	\node at (1.5,1) [font=\footnotesize, anchor=south] {$\alpha_5$};
	\node at (0.5,1) [font=\footnotesize, anchor=south] {$\alpha_6$};
	\node at (1,0.5) [font=\footnotesize, anchor=west] {$\alpha_7$};
	\node at (-0.5,1.5) [font=\footnotesize] {$p_1$};
	\node at (-0.5,-0.5) [font=\footnotesize] {$p_2$};
	\node at (2.5,-0.5) [font=\footnotesize] {$p_3$};
	\node at (2.5,1.5) [font=\footnotesize] {$p_4$};
\end{tikzpicture}
\caption{Double-box with\\ an inner massive loop,\\ $G = \texttt{inner-dbox}$}
\end{subfigure}
\begin{subfigure}[c]{0.3\textwidth}
\centering
\begin{tikzpicture}[scale = 0.9,line width=1,scale=1.3]
    \coordinate (v1) at (0,0);
    \coordinate (v2) at (0,1);
    \coordinate (v3) at (1,1);
    \coordinate (v4) at (2,1);
    \coordinate (v5) at (2,0);
    \coordinate (v6) at (1,0);
    \draw[massive] (v1) -- (v2) -- (v3) -- (v4) -- (v5) -- (v6) -- (v1);
    \draw[massless] (v3) -- (v6);
    \draw[massless] (v1) -- ++(-135:0.5);
    \draw[massless] (v2) -- ++(135:0.5);
    \draw[massless] (v4) -- ++(45:0.5);
    \draw[massless] (v5) -- ++(-45:0.5);
    \foreach \point in {v1, v2, v3, v4, v5, v6} {
    	\fill[white!20!black] (\point) circle (1.5pt);
    }
	\node at (0,0.5) [font=\footnotesize, anchor=east] {$\alpha_1$};
	\node at (0.5,-0.1) [font=\footnotesize, anchor=north] {$\alpha_2$};
	\node at (1.5,-0.1) [font=\footnotesize, anchor=north] {$\alpha_3$};
	\node at (2,0.5) [font=\footnotesize, anchor=west] {$\alpha_4$};
	\node at (1.5,1) [font=\footnotesize, anchor=south] {$\alpha_5$};
	\node at (0.5,1) [font=\footnotesize, anchor=south] {$\alpha_6$};
	\node at (1,0.5) [font=\footnotesize, anchor=west] {$\alpha_7$};
	\node at (-0.5,1.5) [font=\footnotesize] {$p_1$};
	\node at (-0.5,-0.5) [font=\footnotesize] {$p_2$};
	\node at (2.5,-0.5) [font=\footnotesize] {$p_3$};
	\node at (2.5,1.5) [font=\footnotesize] {$p_4$};
\end{tikzpicture}
\caption{Double-box with\\ an outer massive loop,\\ $G = \texttt{outer-dbox}$}
\end{subfigure}
\begin{subfigure}[c]{0.3\textwidth}
	\centering
	\begin{tikzpicture}[scale = 0.9,line width=1,scale=1.3]
		\coordinate (v1) at (0,0);
		\coordinate (v2) at (0,1);
		\coordinate (v3) at (1,1);
		\coordinate (v4) at (2,1);
		\coordinate (v5) at (2,0);
		\coordinate (v6) at (1,0);
		
		\draw[massless] (v1) -- (v2);
		\draw[massless] (v2) -- (v3);
		\draw[massless] (v1) -- (v6);
		\draw[massive] (v3) -- (v4);
		\draw[massive] (v4) -- (v6);
		\filldraw[white] (1.5,0.5) circle (4pt);
		\draw[massive] (v3) -- (v5);
		\draw[massive] (v5) -- (v6);
		\draw[massless] (v1) -- ++(-135:0.5);
		\draw[massless] (v2) -- ++(135:0.5);
		\draw[dashed] (v4) -- ++(45:0.6);
		\draw[massless] (v5) -- ++(-45:0.5);
		\foreach \point in {v1, v2, v3, v4, v5, v6} {
			\fill[white!20!black] (\point) circle (1.5pt);
		}
		\node at (0,0.5) [font=\footnotesize, anchor=east] {$\alpha_1$};
		\node at (0.5,-0.1) [font=\footnotesize, anchor=north] {$\alpha_2$};
		\node at (1.5,-0.1) [font=\footnotesize, anchor=north] {$\alpha_3$};
		\node at (1.7,0.6) [font=\footnotesize, anchor=west] {$\alpha_4$};
		\node at (1.5,1) [font=\footnotesize, anchor=south] {$\alpha_5$};
		\node at (0.5,1) [font=\footnotesize, anchor=south] {$\alpha_6$};
		\node at (0.8,0.6) [font=\footnotesize, anchor=west] {$\alpha_7$};
		\node at (-0.5,1.5) [font=\footnotesize] {$p_1$};
		\node at (-0.5,-0.5) [font=\footnotesize] {$p_2$};
		\node at (2.5,-0.5) [font=\footnotesize] {$p_3$};
		\node at (2.65,1.5) [font=\footnotesize] {$p_4$};
	\end{tikzpicture}
	\caption{Non-planar double-box for\\ Higgs + jet production, $G = \texttt{Hj-npl-dbox}$}
\end{subfigure}

\begin{subfigure}[c]{0.3\textwidth}
\centering
\begin{tikzpicture}[scale = 0.9,line width=1,scale=1.3]
    \coordinate (v1) at (0,0);
    \coordinate (v2) at (0,1);
    \coordinate (v3) at (1,1);
    \coordinate (v4) at (2,1);
    \coordinate (v5) at (2,0);
    \coordinate (v6) at (1,0);
    \draw[photon] (v1) -- (v2);
    \draw[photon] (v6) -- (v3);
    \draw[photon] (v5) -- (v4);
    \draw[massive,Maroon] (v2) -- (v3) -- (v4);
    \draw[massive,RoyalBlue] (v1) -- (v6) -- (v5);
    \draw[massive,RoyalBlue] (v1) -- ++(-135:0.5);
    \draw[massive,Maroon] (v2) -- ++(135:0.5);
    \draw[massive,Maroon] (v4) -- ++(45:0.5);
    \draw[massive,RoyalBlue] (v5) -- ++(-45:0.5);
    \foreach \point in {v1, v2, v3, v4, v5, v6} {
    	\fill[white!20!black] (\point) circle (1.5pt);
    }
	\node at (0,0.5) [font=\footnotesize, anchor=east] {$\alpha_1$};
	\node at (0.5,-0.1) [font=\footnotesize, anchor=north] {$\alpha_2$};
	\node at (1.5,-0.1) [font=\footnotesize, anchor=north] {$\alpha_3$};
	\node at (2,0.5) [font=\footnotesize, anchor=west] {$\alpha_4$};
	\node at (1.5,1) [font=\footnotesize, anchor=south] {$\alpha_5$};
	\node at (0.5,1) [font=\footnotesize, anchor=south] {$\alpha_6$};
	\node at (1,0.5) [font=\footnotesize, anchor=west] {$\alpha_7$};
	\node at (-0.5,1.5) [font=\footnotesize] {$p_1$};
	\node at (-0.5,-0.5) [font=\footnotesize] {$p_2$};
	\node at (2.5,-0.5) [font=\footnotesize] {$p_3$};
	\node at (2.5,1.5) [font=\footnotesize] {$p_4$};
\end{tikzpicture}
\caption{Double-box for Bhabha\\ scattering, $G = \texttt{Bhabha-dbox}$}
\end{subfigure}
\begin{subfigure}[c]{0.3\textwidth}
\centering
\begin{tikzpicture}[scale = 0.9,line width=1,scale=1.3]
    \coordinate (v1) at (0,0);
    \coordinate (v2) at (0,1);
    \coordinate (v3) at (1,1);
    \coordinate (v4) at (2,1);
    \coordinate (v5) at (2,0);
    \coordinate (v6) at (1,0);
    \draw[photon] (v1) -- (v2);
    \draw[photon] (v3) -- (v4);
    \draw[photon] (v5) -- (v6);
    \draw[massive,RoyalBlue] (v4) -- (v5);
    \draw[massive,Maroon] (v1) -- (v6) -- (v3) -- (v2);
    \draw[massive,Maroon] (v1) -- ++(-135:0.5);
    \draw[massive,Maroon] (v2) -- ++(135:0.5);
    \draw[massive,RoyalBlue] (v4) -- ++(45:0.5);
    \draw[massive,RoyalBlue] (v5) -- ++(-45:0.5);
    \foreach \point in {v1, v2, v3, v4, v5, v6} {
    	\fill[white!20!black] (\point) circle (1.5pt);
    }
	\node at (0,0.5) [font=\footnotesize, anchor=east] {$\alpha_1$};
	\node at (0.5,-0.1) [font=\footnotesize, anchor=north] {$\alpha_2$};
	\node at (1.5,-0.1) [font=\footnotesize, anchor=north] {$\alpha_3$};
	\node at (2,0.5) [font=\footnotesize, anchor=west] {$\alpha_4$};
	\node at (1.5,1) [font=\footnotesize, anchor=south] {$\alpha_5$};
	\node at (0.5,1) [font=\footnotesize, anchor=south] {$\alpha_6$};
	\node at (1,0.5) [font=\footnotesize, anchor=west] {$\alpha_7$};
	\node at (-0.5,1.5) [font=\footnotesize] {$p_1$};
	\node at (-0.5,-0.5) [font=\footnotesize] {$p_2$};
	\node at (2.5,-0.5) [font=\footnotesize] {$p_3$};
	\node at (2.5,1.5) [font=\footnotesize] {$p_4$};
\end{tikzpicture}
\caption{Second double-box for Bhabha scattering, $G = \texttt{Bhabha2-dbox}$}
\end{subfigure}
\begin{subfigure}[c]{0.3\textwidth}
\centering
\vspace{0.7em}
\begin{tikzpicture}[scale = 0.9,line width=1,scale=1.3]
    \coordinate (v1) at (0,0);
    \coordinate (v2) at (0,1);
    \coordinate (v3) at (1,1);
    \coordinate (v4) at (2,1);
    \coordinate (v5) at (2,0);
    \coordinate (v6) at (1,0);
    
    \draw[photon] (v6) -- (v4);
    \filldraw[white] (1.5,0.5) circle (4pt);
    \draw[photon] (v3) -- (v5);
    \draw[photon] (v1) -- (v2);
    \draw[massive,Maroon] (v2) -- (v3) -- (v4);
    \draw[massive,RoyalBlue] (v1) -- (v6) -- (v5);
    \draw[massive,RoyalBlue] (v1) -- ++(-135:0.5);
    \draw[massive,Maroon] (v2) -- ++(135:0.5);
    \draw[massive,Maroon] (v4) -- ++(45:0.5);
    \draw[massive,RoyalBlue] (v5) -- ++(-45:0.5);
    \foreach \point in {v1, v2, v3, v4, v5, v6} {
    	\fill[white!20!black] (\point) circle (1.5pt);
    }
	\node at (0,0.5) [font=\footnotesize, anchor=east] {$\alpha_1$};
	\node at (0.5,-0.1) [font=\footnotesize, anchor=north] {$\alpha_2$};
	\node at (1.5,-0.1) [font=\footnotesize, anchor=north] {$\alpha_3$};
	\node at (1.7,0.6) [font=\footnotesize, anchor=west] {$\alpha_4$};
	\node at (1.5,1) [font=\footnotesize, anchor=south] {$\alpha_5$};
	\node at (0.5,1) [font=\footnotesize, anchor=south] {$\alpha_6$};
	\node at (0.7,0.6) [font=\footnotesize, anchor=west] {$\alpha_7$};
	\node at (-0.5,1.5) [font=\footnotesize] {$p_1$};
	\node at (-0.5,-0.5) [font=\footnotesize] {$p_2$};
	\node at (2.5,-0.5) [font=\footnotesize] {$p_3$};
	\node at (2.5,1.5) [font=\footnotesize] {$p_4$};
\end{tikzpicture}
\caption{Non-planar double-box for Bhabha scattering, $G = \texttt{Bhabha-npl-dbox}$}
\end{subfigure}
\\

\begin{subfigure}[c]{0.3\textwidth}
\centering
\begin{tikzpicture}[scale = 0.9,line width=1,scale=1]
    \coordinate (v1) at (-1,0);
    \coordinate (v2) at (0,1);
    \coordinate (v3) at (0,-1);
    \coordinate (v4) at (1.5,0);
    \draw[massive,Maroon] (v1) -- (v2);
    \draw[massive,RoyalBlue] (v1) -- (v3);
    \draw[massive,orange] (v2) -- (v3);
    \draw[massive,OliveGreen] (v2) -- (v4);
    \draw[massive,red] (v3) -- (v4);
    \draw[massive] (v1) -- ++(-150:0.5);
    \draw[massive] (v1) -- ++(-210:0.5);
    \draw[massive] (v4) -- ++(30:0.5);
    \draw[massive] (v4) -- ++(-30:0.5);
    \foreach \point in {v1, v2, v3, v4} {
    	\fill[white!20!black] (\point) circle (1.5pt);
    }
	\node at (-0.4,-0.6) [font=\footnotesize, anchor=east] {$\alpha_1$};
	\node at (1.4,-0.7) [font=\footnotesize, anchor=east] {$\alpha_2$};
	\node at (-0.4,0.7) [font=\footnotesize, anchor=east] {$\alpha_4$};
	\node at (1.4,0.7) [font=\footnotesize, anchor=east] {$\alpha_3$};
	\node at (0.7,0) [font=\footnotesize, anchor=east] {$\alpha_5$};
	\node at (-1.4,0.3) [font=\footnotesize, anchor=east] {$p_1$};
	\node at (-1.4,-0.3) [font=\footnotesize, anchor=east] {$p_2$};
	\node at (2.55,0.3) [font=\footnotesize, anchor=east] {$p_4$};
	\node at (2.55,-0.3) [font=\footnotesize, anchor=east] {$p_3$};
\end{tikzpicture}
\caption{Kite diagram with generic masses, $G = \texttt{kite}$}
\end{subfigure}
\begin{subfigure}[c]{0.3\textwidth}
\centering
\begin{tikzpicture}[scale = 0.9,line width=1,scale=1]
    \coordinate (v1) at (-1,0);
    \coordinate (v2) at (0,0.8);
    \coordinate (v3) at (0,-0.8);
    \draw[massive,Maroon] (v1) -- (v2);
    \draw[massive,RoyalBlue] (v1) -- (v3);
    \draw[massive,orange] (v2) to[out=-65,in=65] (v3);
    \draw[massive,OliveGreen] (v2) to[out=-105,in=105] (v3);
    \draw[massive] (v1) -- ++(-150:0.5);
    \draw[massive] (v1) -- ++(-210:0.5);
    \draw[massive,violet] (v2) -- ++(15:0.5);
    \draw[massive,pink] (v3) -- ++(-15:0.5);
    \foreach \point in {v1, v2, v3} {
    	\fill[white!20!black] (\point) circle (1.5pt);
    }
	\node at (-0.4,-0.6) [font=\footnotesize, anchor=east] {$\alpha_2$};
	\node at (-0.1,0) [font=\footnotesize, anchor=east] {$\alpha_3$};
	\node at (-0.4,0.7) [font=\footnotesize, anchor=east] {$\alpha_1$};
	\node at (0.9,0) [font=\footnotesize, anchor=east] {$\alpha_4$};
	\node at (-1.4,0.3) [font=\footnotesize, anchor=east] {$p_1$};
	\node at (-1.4,-0.3) [font=\footnotesize, anchor=east] {$p_2$};
	\node at (1.2,0.9) [font=\footnotesize, anchor=east] {$p_4$};
	\node at (1.2,-1) [font=\footnotesize, anchor=east] {$p_3$};
\end{tikzpicture}
\caption{\label{subfig:par}Parachute diagram with generic masses, $G = \texttt{par}$}
\end{subfigure}
\begin{subfigure}[c]{0.3\textwidth}
\hspace*{-0.5cm}
	\begin{tikzpicture}[scale = 0.9,line width=1,scale=1.3]
		\coordinate (v1) at -(0.2,-0.2);
		\coordinate (v7) at -(-0.3,0.5);
		\coordinate (v2) at -(0.2,1.2);
		\coordinate (v3) at -(1,1);
		\coordinate (v4) at -(2,1);
		\coordinate (v5) at -(2,0);
		\coordinate (v6) at -(1,0);
		\draw[massless] (v6) -- (v1) -- (v7) -- (v2) -- (v3);
		\draw[massive] (v3) -- (v4) -- (v6);
		\filldraw[white] (1.5,0.5) circle (4pt);
		\draw[massive] (v3) -- (v5);
		\draw[massive] (v5) -- (v6);
		\draw[massless] (v1) -- ++(-135:0.5);
		\draw[massless] (v2) -- ++(135:0.5);
		\draw[dashed] (v4) -- ++(45:0.6);
		\draw[massless] (v5) -- ++(-45:0.5);
		\draw[massless] (v7) -- ++(-180:0.5);
		\foreach \point in {v1, v2, v3, v4, v5, v6, v7} {
			\fill[white!20!black] (\point) circle (1.5pt);
		}
		\node at (0,1) [font=\footnotesize, anchor=east] {$\alpha_1$};
		\node at (0,0) [font=\footnotesize, anchor=east] {$\alpha_2$};
		\node at (0.6,-0.2) [font=\footnotesize, anchor=north] {$\alpha_3$};
		\node at (1.5,-0.1) [font=\footnotesize, anchor=north] {$\alpha_4$};
		\node at (1.7,0.6) [font=\footnotesize, anchor=west] {$\alpha_5$};
		\node at (1.5,1) [font=\footnotesize, anchor=south] {$\alpha_6$};
		\node at (0.6,1.1) [font=\footnotesize, anchor=south] {$\alpha_7$};
		\node at (0.8,0.6) [font=\footnotesize, anchor=west] {$\alpha_8$};
		\node at (-0.4,1.5) [font=\footnotesize] {$p_1$};
		\node at (-1,0.5) [font=\footnotesize] {$p_2$};
		\node at (-0.4,-0.5) [font=\footnotesize] {$p_3$};
		\node at (2.5,-0.5) [font=\footnotesize] {$p_4$};
		\node at (2.65,1.55) [font=\footnotesize] {$p_5$};
	\end{tikzpicture}
	\caption{Non-planar penta-box fir Higgs + jet production, $G = \texttt{Hj-npl-pentb}$}
\end{subfigure}

\begin{subfigure}[c]{0.3\textwidth}
\hspace*{-0.6cm}
\begin{tikzpicture}[scale = 0.9,line width=1,scale=1.3]
    \coordinate (v1) at (-0.2,0.2);
    \coordinate (v7) at (0.3,-0.5);
    \coordinate (v2) at (-0.2,-1.2);
    \coordinate (v3) at (-1,-1);
    \coordinate (v4) at (-1.8,-1.2);
    \coordinate (v8) at (-2.3,-0.5);
    \coordinate (v5) at (-1.8,0.2);
    \coordinate (v6) at (-1,0);
    \draw[massless] (v1) -- (v7) -- (v2) -- (v3) -- (v4) -- (v8) -- (v5) -- (v6) -- (v1);
    \draw[massless] (v3) -- (v6);
    \draw[massless] (v1) -- ++(45:0.5);
    \draw[massless] (v2) -- ++(-45:0.5);
    \draw[massless] (v4) -- ++(-135:0.5);
    \draw[massless] (v5) -- ++(135:0.5);
    \draw[massless] (v7) -- ++(0:0.5);
    \draw[massless] (v8) -- ++(180:0.5);
    \foreach \point in {v1, v2, v3, v4, v5, v6, v7, v8} {
    	\fill[white!20!black] (\point) circle (1.5pt);
    }
	\node at (-2,0) [font=\footnotesize, anchor=east] {$\alpha_1$};
	\node at (-2,-1) [font=\footnotesize, anchor=east] {$\alpha_2$};
	\node at (-1.4,-1.2) [font=\footnotesize, anchor=north] {$\alpha_3$};
	\node at (-0.6,-1.2) [font=\footnotesize, anchor=north] {$\alpha_4$};
	\node at (-0,-1) [font=\footnotesize, anchor=west] {$\alpha_5$};
	\node at (0,0) [font=\footnotesize, anchor=west] {$\alpha_6$};
	\node at (-0.6,0.15) [font=\footnotesize, anchor=south] {$\alpha_7$};
	\node at (-1.4,0.15) [font=\footnotesize, anchor=south] {$\alpha_8$};
	\node at (-0.95,-0.5) [font=\footnotesize, anchor=west] {$\alpha_9$};
	\node at (-2.4,0.5) [font=\footnotesize] {$p_1$};
	\node at (-3,-0.5) [font=\footnotesize] {$p_2$};
	\node at (-2.4,-1.5) [font=\footnotesize] {$p_3$};
	\node at (0.4,-1.5) [font=\footnotesize] {$p_4$};
	\node at (0.8,-0.7) [font=\footnotesize] {$p_5$};
	\node at (0.4,0.5) [font=\footnotesize] {$p_6$};
\end{tikzpicture}
\caption{Massless planar double-pentagon, $G = \texttt{dpent}$}
\end{subfigure}
\begin{subfigure}[c]{0.3\textwidth}
	\centering
	\begin{tikzpicture}[scale = 0.9,line width=1,scale=1.3]
		\coordinate (v1) at (-0.2,0.2);
		\coordinate (v7) at (0.1,-0.3);
		\coordinate (v2) at (-0.2,-1.2);
		\coordinate (v3) at (-1,-1);
		\coordinate (v4) at (-2,-1);
		\coordinate (v8) at (0.1,-0.8);
		\coordinate (v5) at (-2,0);
		\coordinate (v6) at (-1,0);
		\draw[massless] (v5) -- (v4) -- (v3) -- (v2) -- (v8) -- (v6) -- (v5);
		\filldraw[white] (-0.3,-0.55) circle (4pt);
		\draw[massless] (v3) -- (v7) -- (v1) -- (v6);
		\draw[massless] (v1) -- ++(45:0.5);
		\draw[massless] (v2) -- ++(-45:0.5);
		\draw[massless] (v4) -- ++(-135:0.5);
		\draw[massless] (v5) -- ++(135:0.5);
		\draw[massless] (v7) -- ++(0:0.5);
		\draw[massless] (v8) -- ++(0:0.5);
		\foreach \point in {v1, v2, v3, v4, v5, v6, v7, v8} {
			\fill[white!20!black] (\point) circle (1.5pt);
		}
		\node at (-2,-0.5) [font=\footnotesize, anchor=east] {$\alpha_1$};
		\node at (-1.4,-1.1) [font=\footnotesize, anchor=north] {$\alpha_2$};
		\node at (-0.6,-1.2) [font=\footnotesize, anchor=north] {$\alpha_3$};
		\node at (-0.05,-1.15) [font=\footnotesize, anchor=west] {$\alpha_4$};
		\node at (-0.05,0.05) [font=\footnotesize, anchor=west] {$\alpha_8$};
		\node at (-0.6,0.15) [font=\footnotesize, anchor=south] {$\alpha_7$};
		\node at (-1.4,0.05) [font=\footnotesize, anchor=south] {$\alpha_6$};
		\node at (-1.3,-0.7) [font=\footnotesize, anchor=west] {$\alpha_9$};
		\node at (-1.3,-0.3) [font=\footnotesize, anchor=west] {$\alpha_5$};
		\node at (-2.4,0.5) [font=\footnotesize] {$p_1$};
		\node at (0.8,-0.95) [font=\footnotesize] {$p_4$};
		\node at (-2.4,-1.5) [font=\footnotesize] {$p_2$};
		\node at (0.4,-1.5) [font=\footnotesize] {$p_3$};
		\node at (0.8,-0.25) [font=\footnotesize] {$p_6$};
		\node at (0.4,0.5) [font=\footnotesize] {$p_5$};
	\end{tikzpicture}
	\caption{Massless non-planar double-pentagon, $G = \texttt{npl-dpent}$}
\end{subfigure}
\begin{subfigure}[c]{0.3\textwidth}
	\hspace*{-0.5cm}
	\vspace*{-0.2cm}
	\begin{tikzpicture}[scale = 0.9,line width=1,scale=1.3]
		\coordinate (v1) at (-0.2,0.2);
		\coordinate (v7) at (0.1,-0.3);
		\coordinate (v2) at (-0.2,-1.2);
		\coordinate (v3) at (-1,-1);
		\coordinate (v4) at (-1.8,-1.2);
		\coordinate (v8) at (-2.3,-0.5);
		\coordinate (v5) at (-1.8,0.2);
		\coordinate (v6) at (-1,0);
		\draw[massless] (v5) -- (v8) -- (v4) -- (v3) -- (v2) -- (v6) -- (v5);
		\filldraw[white] (-0.5,-0.7) circle (4pt);
		\draw[massless] (v3) -- (v7) -- (v1) -- (v6);
		\draw[massless] (v1) -- ++(45:0.5);
		\draw[massless] (v2) -- ++(-45:0.5);
		\draw[massless] (v4) -- ++(-135:0.5);
		\draw[massless] (v5) -- ++(135:0.5);
		\draw[massless] (v7) -- ++(0:0.5);
		\draw[massless] (v8) -- ++(180:0.5);
		\foreach \point in {v1, v2, v3, v4, v5, v6, v7, v8} {
			\fill[white!20!black] (\point) circle (1.5pt);
		}
		\node at (-2,0) [font=\footnotesize, anchor=east] {$\alpha_1$};
		\node at (-2,-1) [font=\footnotesize, anchor=east] {$\alpha_2$};
		\node at (-1.4,-1.2) [font=\footnotesize, anchor=north] {$\alpha_3$};
		\node at (-0.6,-1.2) [font=\footnotesize, anchor=north] {$\alpha_9$};
		\node at (-0.25,-0.7) [font=\footnotesize, anchor=west] {$\alpha_4$};
		\node at (-0.05,0.05) [font=\footnotesize, anchor=west] {$\alpha_5$};
		\node at (-0.6,0.15) [font=\footnotesize, anchor=south] {$\alpha_6$};
		\node at (-1.4,0.15) [font=\footnotesize, anchor=south] {$\alpha_7$};
		\node at (-1.3,-0.5) [font=\footnotesize, anchor=west] {$\alpha_8$};
		\node at (-2.4,0.5) [font=\footnotesize] {$p_1$};
		\node at (-3,-0.5) [font=\footnotesize] {$p_2$};
		\node at (-2.4,-1.5) [font=\footnotesize] {$p_3$};
		\node at (0.4,-1.5) [font=\footnotesize] {$p_6$};
		\node at (0.8,-0.35) [font=\footnotesize] {$p_4$};
		\node at (0.4,0.5) [font=\footnotesize] {$p_5$};
	\end{tikzpicture}
	\caption{Second massless non-planar double-pentagon, $G = \texttt{npl-dpent2}$}
\end{subfigure}

\caption{\label{fig:diagrams}Catalogue of two-loop examples relevant to Standard Model computations. Wavy and curly lines represent massless particles, while solid and dashed ones are massive. Propagators with the same color have the same mass.}
\end{figure}
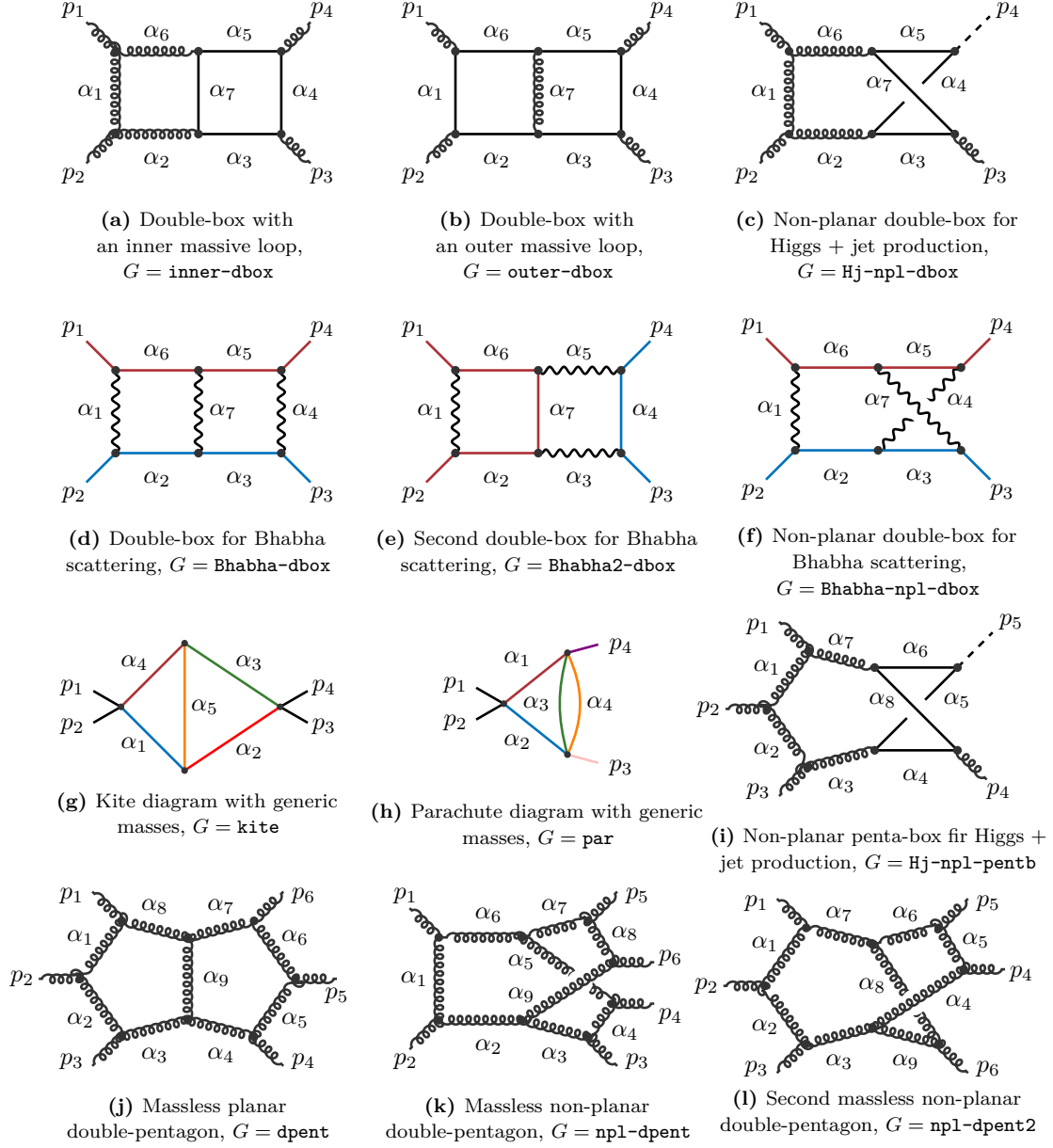

\paragraph{Relation to previous work and historical overview.} 
The literature on Landau singularities is vast and multi-faceted. Here, we outline a few of the most relevant directions that help to put our work in context. After the original papers \cite{Bjorken:1959fd,Landau:1959fi,10.1143/PTP.22.128}, an effort to rigorously define Landau singularities was made by Pham and collaborators in momentum space \cite{AIHPA_1967__6_2_89_0,Pham1968}, see \cite{Hwa:102287,pham2011singularities,Hannesdottir:2022xki} for reviews. His ``Landau variety'' is the projection to the external kinematic space of the critical set of the singularity locus of propagators. At the time, it was only computable for ``generic enough'' integrals such as those associated with one-loop Feynman diagrams with generic masses and no UV/IR divergences, as more complicated cases require compactifications and/or homology with local coefficients \cite{Hwa:102287,pham2011singularities}. Picard--Lefschetz theory was applied to analyze local behavior of finite Feynman integrals around real singularities in generic-mass configurations, see, e.g., \cite{pham1965formules,10.1063/1.1704822,Boyling1966,Hannesdottir:2021kpd,Berghoff:2022mqu}. Independently, Boyling described Landau varieties and compactifications by iterated blow-ups in Schwinger parameter space \cite{Boyling1968}, though they were not applied in practical examples at the time. Decades later, equivalent blow-ups appeared in the motivic approach to Feynman integrals \cite{Bloch:2005bh}. Brown \cite{Brown:2009ta} and Panzer \cite{Panzer:2014caa} reconsidered Landau varieties in the context of linear reducibility and algorithmic evaluation of Feynman integrals in terms of multiple polylogarithms. More recently, compactifications for individual diagrams were studied in \cite{Berghoff:2022mqu} with most advanced examples being the triangle with massless internal edges and the generic-mass parachute diagram.

Parallel work by multiple authors explored the space-time interpretation of Landau singularities and their connection to causality and locality non-perturbatively, where $\alpha$-positive singularities (those with all Schwinger parameters positive or zero) become important, see \cite{iagolnitzer2014scattering,Mizera:2023tfe} for reviews. Coleman and Norton showed that such singularities can be mapped to classical scattering processes \cite{Coleman:1965xm}. Bros, Epstein, and Glaser proved that they cannot appear in certain regions of the kinematic space connecting kinematic channels and establishing crossing symmetry \cite{Bros:1965kbd,Bros:1985gy}, see also \cite{Mizera:2021fap,Caron-Huot:2023ikn}. Chandler and Stapp formulated Landau singularities in terms of macrocausality \cite{PhysRev.174.1749,Chandler:1969bd}, where the notion of essential support of correlation functions \cite{Iagolnitzer:1991wj} plays a central role. Caron-Huot, Giroux, Hannesdottir, and one of the authors extended the Coleman--Norton interpretation to non-$\alpha$-positive singularities for asymptotic observables \cite{Caron-Huot:2023ikn}. Multiple practical ways of calculating $\alpha$-positive singularities are known \cite{Eden:1966dnq}; they can be computed numerically at high loop orders using semi-definite programming techniques \cite{Correia:2021etg}. By contrast with the above approaches, our work considers complex Landau singularities without the positivity condition and is applicable to dimensional regularization.

Landau singularities were studied from the perspective of microlocal analysis and holonomic systems. Sato conjectured that all scattering amplitudes are holonomic \cite{Sato:1975br}, which would imply that around any Landau singularity $\Delta = 0$, they can locally behave only as $\sim \Delta^a \log^b \Delta$ for $a \in \C$ and $b \in \Z_{\geq 0}$ (in the modern language, $\Delta$ are the zeros and singularities of the ``symbol letters'' for polylogarithmic integrals \cite{Maldacena:2015iua}). This conjecture was disproved for scattering amplitudes \cite{Kawai:1981fs,Bros:1983vf}, but it might still hold for individual Feynman integrals, see, e.g., \cite{Kashiwara:1977nf}. It is also known that scattering amplitudes can have accumulations of singularities \cite{Correia:2021etg,Mizera:2022dko}, though it was argued that this cannot happen in physical kinematics \cite{10.1063/1.1705398}; see also \cite{Eberhardt:2022zay} for a discussion in string theory. It has been long known that Feynman integrals can be treated as sufficiently-generalized hypergeometric functions. In particular, techniques from Gelfand--Kapranov--Zelevinsky systems \cite{gelfand2008discriminants} were previously applied to Feynman integrals, see \cite{Klausen:2023gui} for a recent review. Two of the present authors generalized $A$-discriminants to Landau discriminants \cite{Mizera:2021icv}. They did not apply to diagrams with UV/IR divergences (dominant components) and the present work provides an extension to those cases. Principal $A$-determinants were previously applied to Landau analysis in \cite{Klausen:2021yrt,Dlapa:2023cvx}. Our work explains why Feynman integrals are not sufficiently generic for such GKZ results to apply directly, which motivates the introduction of principal Landau determinants.

In massless theories, Landau singularities can be studied in momentum twistor space \cite{Dennen:2015bet,Dennen:2016mdk}. Prlina et al. sketched a proof of a conjecture that in the planar limit, for any diagram with a fixed number of external legs $\n$, all of its first-type Landau singularities are contained in the singular locus of a single ``ziggurat'' diagram \cite{Prlina:2018ukf}. The latter has been determined for $\n \leq 7$ on certain subspaces of the kinematic space \cite{Lippstreu:2023oio}. That work does not take into account different scalings of loop momenta and Schwinger parameters considered here.

Following Libby and Sterman \cite{Libby:1978bx}, Landau equations were also used to determine necessary conditions for IR singularities of off-shell Green's functions (with external masses $\M_i \neq 0$) in QCD in momentum space \cite{Collins:2011zzd}; sufficiency was studied in \cite{Collins:2020euz}. Its modern incarnation is the method of regions \cite{Beneke:1997zp,Jantzen:2012mw} which studies different soft/collinear kinematic regions and uses Newton polytopes to classify rates at which Schwinger parameters contract/expand \cite{Jantzen:2011nz,Ananthanarayan:2018tog,Arkani-Hamed:2022cqe,Gardi:2022khw}. 
This approach is conceptually closest to ours, though it concerns only $\alpha$-positive solutions, while we treat all complex singularities.

\section{Motivation: Singularities and saddle point equations}\label{sec:2}

\subsection{Principal A-determinants} \label{sec:Adet}

Let $A = [m_1 ~ \cdots ~ m_s] \in \mathbb{Z}^{n \times s}$ be an integer matrix with no repeated columns, of rank $n$. The columns $m_i \in \mathbb{Z}^n$ are the exponent vectors appearing in a Laurent polynomial 
\[ f_A(\alpha;z) \, = \, z_1 \, \alpha^{m_1} \, + \, z_2 \, \alpha^{m_2} \, + \, \cdots \, + \, z_s \, \alpha^{m_s}, \]
where $\alpha = (\alpha_1, \ldots, \alpha_n)$ and $\alpha^{m_i}$ is short for the monomial $\alpha_1^{m_{1i}} \cdots \alpha_n^{m_{ni}}$. The coefficients $z_i$ are indeterminates which take complex values. Once coefficients $z \in \mathbb{C}^s$ are fixed, the Laurent polynomial $f(\alpha;z)$ defines a hypersurface in the algebraic torus $(\mathbb{C}^*)^n$: 
\begin{equation} \label{eq:VAz} 
V_{A,z} \, = \, V_{(\mathbb{C}^*)^n}(f_A(\alpha;z)) \, = \, \{ \alpha \in (\mathbb{C}^*)^n \, : \, f_A(\alpha;z) = 0 \}. 
\end{equation}
Here $\mathbb{C}^* = \mathbb{C} \setminus \{0\}$. The coordinate hyperplanes $\{\alpha_i = 0\}$ are excluded since some entries of $A$ may be negative. The \emph{$A$-discriminant} $\Delta_A$ records values of $z$ for which $V_{A,z}$ is a singular hypersurface. More precisely, consider the set  
\[ \nabla_A^\circ \, = \, \Big\{ z \in \mathbb{C}^s \, : \, \exists \alpha \in (\mathbb{C}^*)^n \text{ s.t. } f_A(\alpha;z) = \partial_{\alpha} f_A (\alpha;z) = 0 \Big\}, \]
where we use the notation $\partial_\alpha = (\partial_{\alpha_1}, \ldots,  \partial_{\alpha_n})$ with partial derivatives $\partial_{\alpha_i} = \frac{\partial}{\partial \alpha_i}$ for brevity.
This is in general not a closed subvariety of $\mathbb{C}^s$.
The \emph{$A$-discriminant variety} $\nabla_A$ is obtained by taking the Zariski closure of $\nabla_A^\circ$, which is by definition the smallest algebraic variety containing it. This agrees with the closure in the usual topology. Under mild hypotheses on $A$, the $A$-discriminant variety is a hypersurface (i.e., it has codimension 1 in $\mathbb{C}^s$), and its defining polynomial $\Delta_A$ is the $A$-discriminant polynomial, or simply the $A$-discriminant. This polynomial is defined up to a nonzero scalar multiple, and it can always be taken to have integer coefficients. If ${\rm codim} \nabla_A > 1$, we set $\Delta_A = 1$. 
\begin{example} \label{ex:s=1and2}
When $s = 1$, the Laurent polynomial $f_A = z \, \alpha^m$ only has one term. We have $\Delta_A = z$ in this case. When $s = 2$, one checks that $\Delta_A = 1$.  
\end{example}
The \emph{principal $A$-determinant} $E_A$ is a different polynomial in the coefficients $z_i$, defined via $A$-resultants \cite[Chpt.~10, Sec.~1]{gelfand2008discriminants}. We are mostly interested in the hypersurface defined by this polynomial. With this in mind, it is more convenient to recall the description of $E_A$ as the product of several discriminants, one of which is $\Delta_A$. Let ${\rm Conv}(A) \subset \mathbb{R}^n$ be the convex lattice polytope obtained as the convex hull of the columns $m_i$ of $A$, and let $F(A)$ be the set of all its faces. Here ${\rm Conv}(A)$ is viewed as a face of itself, i.e.~${\rm Conv}(A) \in F(A)$. For a face $Q \in F(A)$, we let $A \cap Q$ be the submatrix of $A$ consisting of all columns $m_i \in Q$. The $(A \cap Q)$-discriminant $\Delta_{A \cap Q}$ is a polynomial in the variables $z_i$, with $m_i \in Q$. Note that $\Delta_A = \Delta_{A \cap Q}$ when $Q = {\rm Conv}(A)$. We have
\begin{equation}\label{eq:EA}
E_A \, = \, \prod_{Q \in F(A)} \Delta_{A \cap Q}^{e_Q},
\end{equation}
for some positive integer exponents $e_Q >0$. A precise formula for these exponents and a statement for the equivalence between the $A$-resultant definition and \eqref{eq:EA} are found in \cite[Chpt.~10]{gelfand2008discriminants}. Here is an easy example. 
\begin{example}[$n = 2, s = 4$] \label{ex:n=2ands=4}
We consider the lattice points of the unit square
\setlength\arraycolsep{5pt}
\[ A = \begin{pmatrix}
0 & 1 & 0 & 1\\ 
0 & 0 & 1 & 1
\end{pmatrix}, \quad \text{which gives} \quad f_A(\alpha,z) \,  =\, z_1 + z_2 \, \alpha_1 + z_3 \, \alpha_2 + z_4 \, \alpha_1\alpha_2.
\]
The curve $V_{A,z}$ is singular when it is the union of a horizontal and a vertical line, this happens when $\Delta_A = z_1z_4-z_2z_3 = 0$. The polygon ${\rm Conv}(A)$ is $[0,1]^2$, and $F(A)$ consists of one 2-dimensional face, four 1-dimensional faces and 4 vertices. For each of the one-dimensional faces $Q$, by Ex.~\ref{ex:s=1and2} we have $\Delta_{A \cap Q} = 1$. The same example says that for the vertex $m_i \in A$ we have $\Delta_{m_i} = z_i$. By the second part of \cite[Chpt.~10, Thm.~1.2]{gelfand2008discriminants}, in this example, all exponents $e_Q$ are equal to 1. Hence, Eq.~\ref{eq:EA} gives
\[ E_A \, = \, z_1 \cdot z_2 \cdot z_3 \cdot z_4 \cdot (z_1 z_4 - z_2 z_3). \qedhere \]
\end{example}

An important topological invariant of the variety $V_{A,z}$ is its Euler characteristic $\chi(V_{A,z})$. The principal $A$-determinant shows up when studying how this number depends on $z$. Let ${\rm Aff}(A) \subset \mathbb{R}^n$ be the smallest affine subspace containing the points $m_1, \ldots, m_s$, i.e., the columns of $A$. We write $\Lambda = {\rm Aff}(A) \cap \mathbb{Z}^n$ for the corresponding affine lattice. Let ${\rm vol}(A)$ be the normalized volume of the lattice polytope ${\rm Conv}(A)$ in the lattice $\Lambda$. 
If ${\rm Conv}(A)$ has dimension $n$, then ${\rm vol}(A)$ is the standard Euclidean volume multiplied with a factor $n!$. The proof of the following result can be found in \cite[Thm. 13]{AMENDOLA2019222}. 
\begin{theorem} \label{thm:eulercharvolume}
The signed Euler characteristic $|\chi(V_{A,z})|$ equals $ {\rm vol}(A)$ if and only if $z \in \mathbb{C}^s \setminus \{ E_A = 0 \}$. Moreover, when $E_A(z) = 0$, we have $|\chi(V_{A,z})|<{\rm vol}(A)$.
\end{theorem}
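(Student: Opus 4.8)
The plan is to read $|\chi(V_{A,z})|$ as the number of critical points of a generic logarithmic ``master function'' and to follow how that number degenerates in $z$. After a routine reduction I may assume $\Conv(A)$ is full-dimensional, so that ${\rm vol}(A) = n!\,\Vol(\Conv(A))$ (in the non-full-dimensional case one works in the subtorus of $(\mathbb{C}^*)^n$ along which $V_{A,z}$ is an honest hypersurface and in the lattice $\Lambda$). For $z \notin \{E_A = 0\}$ the hypersurface $V_{A,z}$ is smooth in $(\mathbb{C}^*)^n$, since $\nabla_A^\circ$ records the $z$ at which it is singular and $\nabla_A^\circ \subseteq \{\Delta_A = 0\} \subseteq \{E_A = 0\}$, because $\Delta_A = \Delta_{A\cap\Conv(A)}$ is one of the factors of $E_A$ in~\eqref{eq:EA}. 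By the theorem of Huh equating the signed Euler characteristic of a smooth very affine variety with its maximum-likelihood degree, $(-1)^{n-1}\chi(V_{A,z})$ then equals the number $N(z)$ of critical points in $V_{A,z}$ of a generic character $\alpha \mapsto \alpha_1^{u_1}\cdots\alpha_n^{u_n}$; by Lagrange multipliers, $N(z)$ counts the solutions $(\alpha,\lambda)\in(\mathbb{C}^*)^{n+1}$ of $f_A(\alpha;z)=0$ and $\alpha_j\,\partial_{\alpha_j}f_A(\alpha;z)=\lambda u_j$, $j=1,\dots,n$. A BKK computation on this system (after eliminating $\lambda$) shows the generic value of $N(z)$ is ${\rm vol}(A)$; since $N(z)$ is upper semicontinuous in $z$ and a standard index-theoretic inequality gives $|\chi(V_{A,z})| \le N(z)$ also on the singular fibers, one obtains $|\chi(V_{A,z})| \le {\rm vol}(A)$ for all $z$.

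For the ``if'' direction I would translate the non-vanishing of $E_A$ into geometry via the product formula~\eqref{eq:EA}: $E_A(z)\ne 0$ iff $\Delta_{A\cap Q}(z)\ne 0$ for every face $Q\in F(A)$, which — after clearing the monomial factor from each face truncation — says exactly that $V_{A,z}$ is \emph{nondegenerate with respect to $\Conv(A)$}, i.e.\ every truncation $f_{A\cap Q}(\cdot\,;z)$ cuts out a smooth hypersurface in its torus, equivalently the closure of $V_{A,z}$ in the toric variety $X_{\Conv(A)}$ meets every torus orbit transversally. For such $z$, the classical formula of Kushnirenko and Khovanskii — proved by additivity of $\chi$ over the orbit stratification of $X_{\Conv(A)}$ together with induction on dimension — gives $\chi(V_{A,z}) = (-1)^{n-1}{\rm vol}(A)$, hence $|\chi(V_{A,z})| = {\rm vol}(A)$.

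For the converse, and the final sentence of the theorem, suppose $E_A(z) = 0$. Then some truncation $f_{A\cap Q}(\cdot\,;z)$ is singular, so $V_{A,z}$ fails to be nondegenerate with respect to $\Conv(A)$. In the critical-point picture this forces a critical point of the master function to escape to the boundary divisor of $X_{\Conv(A)}$ attached to $Q$ (or to collide with a second critical point), so $N(z)$ — and hence $|\chi(V_{A,z})|$, by the first paragraph — drops strictly below ${\rm vol}(A)$. One would make this precise through a vanishing-cycle computation along a generic one-parameter degeneration $z(t)\to z$, checking that the vanishing cycles occur with the sign that lowers $|\chi|$.

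I expect this last point to be the main obstacle. Upper semicontinuity is soft and only yields ``$\le$''; proving that the inequality is \emph{strict} as soon as $E_A(z)=0$ needs a genuine local analysis near the degenerate stratum — one must show that a singular face truncation $f_{A\cap Q}$, for $Q$ of arbitrary dimension and possibly with several faces degenerating at once, really removes at least one critical point, and control the sign of the vanishing cycles uniformly over $F(A)$ using the exponents $e_Q$ and the secondary-fan combinatorics recorded in \cite[Chpt.~10]{gelfand2008discriminants}. A minor but unavoidable technicality is the lattice bookkeeping in the reduction to the full-dimensional case, so that ${\rm vol}(A)$ is always computed in $\Lambda$.
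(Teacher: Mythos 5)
The paper does not prove Thm.~\ref{thm:eulercharvolume} itself; it cites \cite[Thm.~13]{AMENDOLA2019222}. Your toolbox (Huh's ML-degree theorem, BKK, Khovanskii's nondegeneracy formula) is the right one, and the direction $E_A(z)\ne 0 \Rightarrow |\chi(V_{A,z})|={\rm vol}(A)$ is correctly reduced to Khovanskii's formula via the identification of $E_A(z)\ne 0$ with nondegeneracy of $f_A(\cdot;z)$ along the faces of $\Conv(A)$. However, two steps are not supported as written. First, your bound $|\chi(V_{A,z})|\le N(z)$ ``also on the singular fibers'' does not follow from Huh's theorem, which requires a \emph{smooth} very affine variety: when $\Delta_A(z)=0$ the hypersurface $V_{A,z}$ is singular, and the critical-point count of a character restricted to $V_{A,z}$ loses its topological meaning. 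The fix --- the route the paper itself takes in the proof of Thm.~\ref{thm:closedstrata} --- is to apply Huh's theorem to the complement $X_z=(\mathbb{C}^*)^n\setminus V_{A,z}$, which is smooth for every $z$, combined with $\chi(V_{A,z})=-\chi(X_z)$ and parameter continuation to obtain the upper bound ${\rm vol}(A)$.

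Second, and more substantially, the strict drop $|\chi(V_{A,z})|<{\rm vol}(A)$ when $E_A(z)=0$ is only outlined, and you flag it yourself as ``the main obstacle.'' Semicontinuity gives $\le$, not $<$; you would still have to prove that each degenerate face truncation $f_{A\cap Q}$ genuinely removes at least one critical point from the torus and that this cannot be compensated by contributions from another face, uniformly over faces $Q$ of all dimensions and over simultaneous degenerations. That step is precisely the content of \cite[Thm.~13]{AMENDOLA2019222}, and the vanishing-cycle sketch in your last paragraph does not yet replace it. Until it is supplied, only the ``if'' direction of the equivalence --- and none of the strict inequality --- is actually established.
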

The Euler characteristic is relevant to us because it counts the number of linearly independent \emph{$A$-hypergeometric functions}. These are given by integrals which are similar to Feynman integrals, as we will see in the next section.

\subsection{GKZ systems vs. Feynman integrals} \label{sec:GKZvsFeyn}
For $i = 1, \ldots, \ell$, let $A_i = [m_{i,1}~\cdots~m_{i,s_i}] \in \mathbb{Z}^{n \times s_i}$ be an integer matrix as in the previous section. The Laurent polynomials $f_{A_i}(\alpha;z) = z_{i1} \alpha^{m_{i,1}} + \cdots + z_{is_i} \alpha^{m_{i,s_i}}$ define an integral 
\begin{align} \label{eq:Eulerintegral} {\cal I}_\Gamma(z) \, &=\, \int_{\Gamma} f^{\mu} \alpha^\nu \, \frac{{\rm d} \alpha}{\alpha}\\
&= \,  \int_{\Gamma} \, f_{A_1}(\alpha;z)^{\mu_1} \cdots f_{A_\ell}(\alpha;z)^{\mu_\ell} \, \alpha_1^{\nu_1} \cdots \alpha_\ell^{\nu_\ell} \, \frac{{\rm d} \alpha_1}{\alpha_1} \wedge \cdots \wedge \frac{{\rm d} \alpha_n}{\alpha_n} .
\end{align}
The exponents $\mu \in \mathbb{C}^\ell$ and $\nu \in \mathbb{C}^n$ are complex numbers, so that the integrand is multi-valued. Let $V_{A_i,z} \subset (\mathbb{C}^*)^n$ be as in \eqref{eq:VAz}. The twisted $n$-cycle $\Gamma$ is an $n$-chain on 
\begin{equation} \label{eq:Xz}
X_z \,=\, (\mathbb{C}^*)^n  \, \setminus \, ( V_{A_1,z} \cup \cdots \cup V_{A_\ell,z} ), 
\end{equation}
with zero \emph{twisted boundary}. Here \emph{twisted} means essentially that $\Gamma$ also records the choice of which branch of $f^\mu \alpha^\nu$ to integrate. The integral \eqref{eq:Eulerintegral} was called a \emph{generalized Euler integral} by Gelfand, Kapranov and Zelevinsky (GKZ) \cite{gelfand1990generalized}. See \cite[Chpt.~2]{AomotoKita} for more details, and \cite{agostini2022vector,matsubara2023four} for recent overviews.

As a function of the coefficients $z_{ij}$, the integral ${\cal I}_\Gamma(z)$ satisfies a system of linear PDE called \emph{GKZ system} \cite[Sec.~4]{agostini2022vector}. This system of diffential equations is encoded by a \emph{$D$-module} denoted $H_{A}(\kappa)$. The parameters are $\kappa = (-\nu,\mu)$, and
\setcounter{MaxMatrixCols}{20}
\setlength\arraycolsep{5pt}
\begin{equation} \label{eq:cayley} A \, = \, \begin{pmatrix}
 & A_1 & & & A_2 & & & \cdots & & & A_\ell  \\
 1& \cdots & 1 & 0 & \cdots & 0 & & & & 0 & \cdots & 0\\ 
 0& \cdots & 0 & 1 & \cdots & 1 & & & & 0 & \cdots & 0\\  
 0& \cdots & 0 & 0 & \cdots & 0 & & \cdots & & 0 & \cdots & 0 \\ 
 0& \cdots & 0 & 0 & \cdots & 0 & & & & 1 & \cdots & 1
\end{pmatrix} \quad \in \,  \mathbb{Z}^{(n + \ell) \times (s_1 + \cdots + s_\ell)}.
\end{equation}
Below, we will write $s = s_1 + \cdots + s_\ell$ and $z = (z_1, \ldots, z_s) \in \mathbb{C}^s$ for brevity. The following remarkable result from \cite[Thms.~1.4 and 2.10]{gelfand1990generalized} demonstrates how the principal $A$-determinant $E_A$ from \eqref{eq:EA} governs the analytic properties of the function ${\cal I}_\Gamma(z)$.
\begin{theorem} \label{thm:GKZsinglocus}
    For generic $\kappa = (-\nu, \mu)$ and for $z^* \in \mathbb{C}^s \setminus \{E_A = 0\}$, the vector space of local solutions to the GKZ system $H_A(\kappa)$ at $z = z^*$ has dimension $(-1)^n \cdot \chi(X_{z^*})  = (-1)^{n + \ell-1} \cdot \chi(V_{A,z^*}) = {\rm vol}(A)$. All solutions are obtained by varying the twisted cycle $\Gamma$ in ${\cal I}_\Gamma(z)$ from \eqref{eq:Eulerintegral}. The singular locus of the $D$-module $H_A(\kappa)$ is the variety $\{E_A = 0 \}$.
\end{theorem}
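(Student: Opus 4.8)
The plan is to reduce everything to the single Laurent polynomial $f_A$ and the single very affine variety $X_z$ attached to the Cayley matrix \eqref{eq:cayley}, and to treat the three assertions --- the dimension/Euler-characteristic count, the spanning by twisted cycles, and the shape of the singular locus --- in turn; the statement is classical, and most pieces can be assembled from \cite{gelfand1990generalized,gelfand2008discriminants,AomotoKita}. I would begin with holonomicity of $H_A(\kappa)$ (so that ``solution-space dimension'' and ``singular locus'' are meaningful), which is a finiteness statement about the characteristic variety proved by a Gr\"obner/filtration argument on the Weyl algebra, together with the fact that for generic $\kappa$ the holonomic rank equals the normalized volume ${\rm vol}(A)$ --- obtained by deforming $H_A(\kappa)$ along a generic weight to its ``fake initial'' system and counting the standard pairs of the toric ideal $I_A$, which number ${\rm vol}(A)$. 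Since a holonomic $D$-module restricts to a flat connection on the complement of its singular locus and $\mathbb{C}^s\setminus\{E_A=0\}$ is connected, once this complement is shown nonsingular for $H_A(\kappa)$ the local solution dimension is automatically the constant ${\rm vol}(A)$ there.

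Next I would prove the Euler-characteristic identities. For the shift $(-1)^n\chi(X_{z^*})=(-1)^{n+\ell-1}\chi(V_{A,z^*})$ I project the Cayley hypersurface $V_{A,z}=\{\sum_{i=1}^\ell y_i f_{A_i}(\alpha;z)=0\}$ onto the $\alpha$-coordinates: over a point with all $f_{A_i}(\alpha;z)\neq 0$ the fiber is a generic linear hyperplane section of $(\mathbb{C}^*)^\ell$, with Euler characteristic $(-1)^{\ell-1}$; over a point where some $f_{A_j}(\alpha;z)=0$ the variable $y_j$ is unconstrained, so the fiber contains a free $\mathbb{C}^*$ factor and has Euler characteristic $0$. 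The ``good'' locus is exactly $X_z$ of \eqref{eq:Xz}, so additivity and multiplicativity of $\chi$ give $\chi(V_{A,z})=(-1)^{\ell-1}\chi(X_z)$, the asserted relation. That this common quantity equals $\pm\,{\rm vol}(A)$ for generic $z^*$ is then Theorem~\ref{thm:eulercharvolume} applied to the Cayley configuration (equivalently, a direct toric-compactification computation of $\chi(X_z)$, cf.\ \cite{AMENDOLA2019222}).

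For the spanning statement, that each $\mathcal{I}_\Gamma(z)$ in \eqref{eq:Eulerintegral} solves $H_A(\kappa)$ is the computational half: the toric (box) operators annihilate it by integration by parts --- legitimate since $\Gamma$ has vanishing twisted boundary --- and the Euler operators by homogeneity of $f^\mu\alpha^\nu$, as carried out in \cite{gelfand1990generalized} and \cite[Chpt.~3]{AomotoKita}. Surjectivity onto the solution space I would get from twisted de Rham theory on the affine $n$-fold $X_{z^*}$: the period pairing identifies the span of the $\mathcal{I}_\Gamma$ with the dual of the cohomology of the complex $(\Omega^\bullet_{X_{z^*}},\,\mathrm{d}+\omega\wedge)$, where $\omega=\sum_i\mu_i\,\mathrm{d}\log f_{A_i}+\sum_j\nu_j\,\mathrm{d}\log\alpha_j$; for generic $\kappa$ this cohomology is concentrated in degree $n$ (vanishing above since $X_{z^*}$ is affine, below by an Aomoto-type generic-twist vanishing argument), of dimension $|\chi(X_{z^*})|={\rm vol}(A)$ by the previous step, and a dimension count against the rank forces the $\mathcal{I}_\Gamma$ to span everything.

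Finally I would identify the singular locus via the characteristic variety ${\rm Ch}(H_A(\kappa))\subset T^*\mathbb{C}^s$, whose projection to $\mathbb{C}^s$ away from the zero section is by definition the singular locus. One shows that ${\rm Ch}(H_A(\kappa))$ is independent of $\kappa$ and decomposes into conormal pieces indexed by the faces $Q\in F(A)$, the $Q$-piece projecting onto the discriminant variety $\nabla_{A\cap Q}$; together with the factorization $E_A=\prod_Q\Delta_{A\cap Q}^{e_Q}$ of \eqref{eq:EA} and $e_Q>0$ this yields singular locus $=\bigcup_Q\nabla_{A\cap Q}=\{E_A=0\}$. The inclusion ``$\subseteq$'' also has a clean geometric reading: $\Delta_{A\cap Q}(z)\neq 0$ for every face $Q$ is exactly nondegeneracy of $f_A(\cdot\,;z)$ with respect to its Newton polytope, whence (Kouchnirenko--Varchenko transversality) the toric closure of $V_{A,z}$ meets every torus orbit transversally, the stratified family $\{X_z\}$ is topologically locally trivial over $\mathbb{C}^s\setminus\{E_A=0\}$, and the periods extend holomorphically. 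I expect this last step --- pinning down ${\rm Ch}(H_A(\kappa))$ precisely enough for the reverse inclusion, i.e.\ that the singular locus is no smaller than $\{E_A=0\}$ --- to be the main obstacle; it is the interplay of the $A$-resultant definition of $E_A$, the factorization \eqref{eq:EA}, and the conormal geometry of the toric variety of $A$ that does the real work, and I would follow \cite[Chpts.~9--10]{gelfand2008discriminants} closely there.
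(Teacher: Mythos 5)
The paper does not prove this theorem; it is quoted directly from \cite[Thms.~1.4 and 2.10]{gelfand1990generalized}, with the Euler-characteristic interpretation supplied by Thm.~\ref{thm:eulercharvolume} (from \cite[Thm.~13]{AMENDOLA2019222}). Your proposal therefore reconstructs a proof rather than paralleling one, and the overall architecture is sound and matches the modern treatments: holonomicity and rank $= {\rm vol}(A)$ via \Groebner deformation to the fake initial system and counting standard pairs of $I_A$; surjectivity of the period map via twisted de~Rham cohomology concentrated in top degree for generic $\kappa$; and the singular locus via the conormal decomposition of the characteristic variety indexed by faces of $\Conv(A)$, following \cite[Chpts.~9--10]{gelfand2008discriminants}. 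You also correctly flag where the genuine work lies --- the reverse inclusion $\{E_A=0\}\subseteq{\rm Sing}\,H_A(\kappa)$.

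There is, however, a concrete error in your Euler-characteristic step that as written makes the identity vacuous. You fiber $V_{A,z}=\{\sum_{i=1}^\ell y_i f_{A_i}(\alpha;z)=0\}\subset(\mathbb{C}^*)^{n+\ell}$ over the $\alpha$-coordinates and claim that over a point with all $f_{A_i}(\alpha)\neq 0$ the fiber has Euler characteristic $(-1)^{\ell-1}$. But that fiber is $\{y\in(\mathbb{C}^*)^\ell : \sum_i c_i y_i = 0\}$ with all $c_i\neq 0$, a hyperplane \emph{through the origin}: it is invariant under the diagonal $\mathbb{C}^*$-scaling, hence a $\mathbb{C}^*$-bundle, hence has Euler characteristic $0$, not $(-1)^{\ell-1}$. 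Your decomposition then gives $\chi(V_{A,z})=0$ identically, which cannot equal $\pm{\rm vol}(A)$. The source of the slip is that the Cayley configuration is degenerate: the columns of the matrix $A$ in \eqref{eq:cayley} all lie in the affine hyperplane where the last $\ell$ coordinates sum to $1$, so the hypersurface whose Euler characteristic enters the theorem lives in a torus of dimension $n+\ell-1$, not $n+\ell$. You should dehomogenize first (e.g.\ set $y_1=1$ and fiber $\{f_{A_1}+\sum_{i\geq 2}y_i f_{A_i}=0\}\subset(\mathbb{C}^*)^{n+\ell-1}$ over $\alpha$); then the fiber over the good locus is an \emph{affine} (non-homogeneous) hyperplane section of $(\mathbb{C}^*)^{\ell-1}$, whose Euler characteristic you can compute from the generic-arrangement formula, while the fibers over the loci where some $f_{A_i}$ vanishes still carry a free $\mathbb{C}^*$ factor and contribute $0$. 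Redoing the bookkeeping this way recovers the claimed relation between $\chi(X_z)$, $\chi(V_{A,z})$, and ${\rm vol}(A)$; the rest of your sketch then goes through.
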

For the meaning of ${\rm vol}(A)$ in this statement, see the discussion preceding Thm.~\ref{thm:eulercharvolume}.
The Euler integrals \eqref{eq:Eulerintegral} appear in particle physics as \emph{Feynman integrals} \cite{matsubara2023four}. In that case, $\ell = 1$ or $\ell = 2$, and the coefficients of $f_{A_1}, f_{A_2}$ are linear functions of the kinematic parameters. We will discuss the general construction below. See \cite{delaCruz:2019skx,Klausen:2019hrg} for recent literature on the GKZ approach to Feynman integrals.

\begin{example} \label{ex:sunriseproblem}
We work out an illustrative example, corresponding to the \emph{banana diagram} with three internal edges, Fig.~\ref{fig:schlegel} (left). The integral is 
\begin{equation}\label{eq:banana}
{\cal I} \, = \, \int_{\Gamma} [(1 - {\textstyle\sum}_{i=1}^3 \m_i \alpha_i)(\alpha_1\alpha_2 + \alpha_1\alpha_3 + \alpha_2\alpha_3) + s  \alpha_1\alpha_2\alpha_3]^\mu \alpha_1^{\nu_1}\alpha_2^{\nu_2}\alpha_3^{\nu_3} \, \frac{{\rm d}\alpha_1}{\alpha_1} \wedge \frac{{\rm d}\alpha_2}{\alpha_2} \wedge \frac{{\rm d}\alpha_3}{\alpha_3} .\end{equation}
This is a function of $s, \m_1, \m_2, \m_3$. In the above setup, the corresponding matrix is 
\[A \, = \, \begin{pmatrix}
1 & 1 & 0 & 2 & 2 & 0 & 1 & 1 & 0 & 1\\
1 & 0 & 1 & 1 & 0 & 2 & 2 & 0 & 1 & 1\\
0 & 1 & 1 & 0 & 1 & 1 & 0 & 2 & 2 & 1\\ 
1 & 1 & 1 & 1 & 1 & 1 & 1 & 1 & 1 & 1
\end{pmatrix}.
\]
The ten coefficients are either constant, or linear functions of $s, \m_i$:
\begin{equation} \label{eq:zsub}
   (z_1, \ldots, z_{10}) \, = \, (1,1,1,-\m_1,-\m_1,-\m_2,-\m_2,-\m_3,-\m_3,s-\m_1-\m_2-\m_3).
\end{equation}
This parameterizes a four-dimensional affine subspace ${\cal K} \subset \mathbb{C}^{10}$, which in physics is called the \emph{kinematic space}. Motivated by Thm.~\ref{thm:GKZsinglocus}, a first approximation for the singular locus of our integral ${\cal I}(s,\m_1,\m_2,\m_3)$ is ${\cal K} \cap \{E_A = 0 \}$. We chose this example because it nicely illustrates that this is \emph{not} the right approach. 

The polytope ${\rm Conv}(A) \subset \mathbb{R}^3$ has dimension three: it is contained in the 3-dimensional hyperplane in $\mathbb{R}^4$ where the last coordinate is 1. The Schlegel diagram with respect to the hexagonal facet $458967$ of this polytope is shown in Fig.~\ref{fig:schlegel} (right). 
\begin{figure}
    \centering
    \raisebox{3em}{\includegraphics[scale=1]{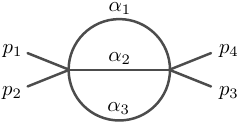}}
    \qquad\qquad
    \begin{tikzpicture}
    \node (img) {\includegraphics[width=5cm]{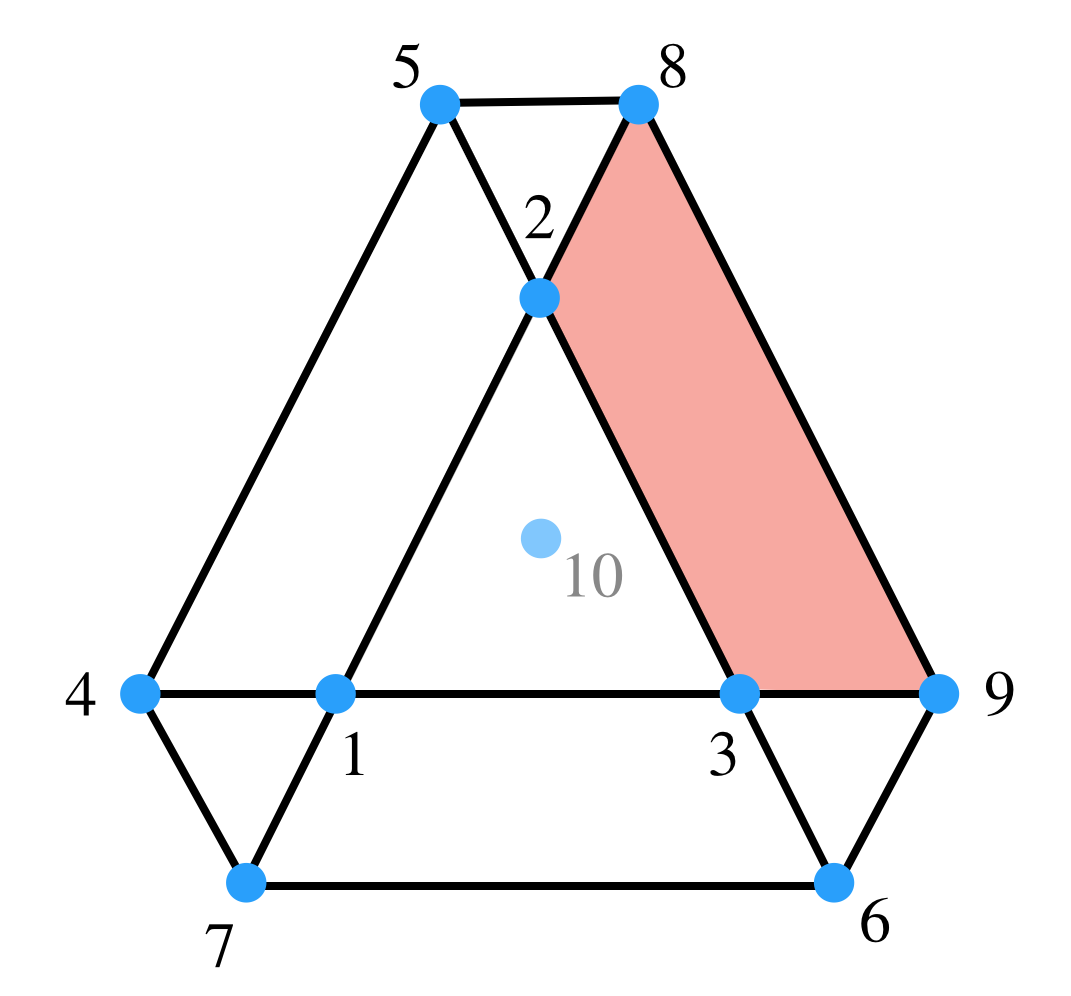}};
    \node [below left, xshift=4cm, yshift=1cm] at (img) {$z_2 z_9 - z_3 z_8$};
    \draw[-{Classical TikZ Rightarrow[length=1mm]}] (1.7,0.6) -- (1,0.2);
    \end{tikzpicture}
    \caption{\label{fig:schlegel}Left: Banana diagram $\mathsf{B}_3$. Right: The polytope ${\rm Conv}(A)$ for the banana diagram with three edges, ${\tt B}_3$. Its $f$-vector is $(9,15,8)$.}
\end{figure}
Here the vertices are labeled consistently with the columns of $A$. The lattice point in the tenth column is an interior point of that facet. Let $Q = 2389 \in F(A)$ be the quadrilateral marked in red. By the formula \eqref{eq:EA}, $\Delta_{A \cap Q}$ is a factor of the principal $A$-determinant $E_A$. One easily checks that the submatrix $A \cap Q$ is obtained from the matrix $A$ from Ex.~\ref{ex:n=2ands=4} by applying the following injective affine integer transformation: 
\[ x^\top \, \longmapsto \, x^\top \begin{pmatrix}
    -1 & 1 &  0  & 0 \\ 0 & 0 &  1  & 0 
\end{pmatrix} + \begin{pmatrix}
    1 & 0 & 1 & 1
\end{pmatrix}.
\]
By \cite[Prop.~1.4]{gelfand2008discriminants}, their discriminants are identical: $\Delta_{A \cap Q} \, = \, z_2z_9-z_3z_8$.
When plugging in \eqref{eq:zsub}, we see that $\Delta_{A\cap Q}({\cal K}) = 0$, and hence $E_A({\cal K}) = 0$. Hence, the containment of the singular locus of ${\cal I}(s,\m_1,\m_2,\m_3)$ in ${\cal K} \cap \{ E_A = 0 \}$ is trivial. In fact, this happens for most Feynman diagrams, see Tab.~\ref{tab:volVSEuler}. To remedy this, a more specialized notion of \emph{discriminants} and \emph{principal determinants} is needed. The first steps along these lines for Feynman integrals were taken in the seminal papers \cite{Bjorken:1959fd,Landau:1959fi,10.1143/PTP.22.128}. In \cite{Mizera:2021icv}, two of the authors formalized the notion of \emph{Landau discriminants}, to account for leading singularities of Feynman integrals. This paper introduces \emph{principal Landau determinants}, see Sec.~\ref{sec:3}. To some extent, these are to Landau discriminants what principal $A$-determinants are to $A$-discriminants. 
\end{example}

\begin{remark}\label{rmk:1}
By Thm.~\ref{thm:eulercharvolume}, the inclusion ${\cal K} \subset \{E_A = 0 \}$ holds if and only if for a generic point $z^* \in {\cal K}$, we have $|\chi(V_{A,z^*})| < {\rm vol}(A)$. This gives a practical way to test this inclusion. The Euler characteristic $\chi(V_{A,z^*})$ can be computed reliably using numerical homotopy methods implemented in the Julia package \texttt{HomotopyContinuation.jl} \cite{10.1007/978-3-319-96418-8_54}, see \cite[Sec.~5]{Mizera:2021icv} or \cite[Sec.~6]{agostini2022vector}. The volume ${\rm vol}(A)$ can be computed, for instance, using the software package \texttt{Oscar.jl} \cite{OSCAR}. In our example above, the inequality is
\be
\chi(V_{A,z^\ast}) \,=\, 7 \,<\, 10 \,=\, {\rm vol}(A)\, .
\ee
Here, ${\rm vol }(A) = 10$ is the volume of ${\rm Conv}(A)$ in the three-dimensional affine space containing it, scaled so that a standard simplex has volume 1. 
\end{remark}

We conclude the section by justifying the above claim that what happens in Ex.~\ref{ex:sunriseproblem} happens for \emph{most Feynman diagrams}. 
Let $G$ be a Feynman diagram with $\E$ internal edges, $\n$ external legs, and graph polynomial ${\cal G}_G={\cal U}_G + {\cal F}_G$. Recall that the coefficients of the graph polynomial are either constant, or linear functions of the kinematic parameters $s_{ij},\m_e,\M_i$. Here we use the notation from \cite[Sec.~2]{Mizera:2021icv}: $s_{ij}$ are Mandelstam invariants, $\m_e = m_e^2$ is the squared mass of the $e$-th internal propagator, and $\M_i = M_i^2$ is the squared mass of the $i$-th external leg. Beyond restricting from generic coefficients in $\mathbb{C}^s$ to generic kinematics in ${\cal K}$, we will allow to put certain internal/external masses to zero. More precisely, we focus on the following meaningful subspaces of parameters:
\begin{itemize} 
    \item ${\cal E}^{(\M_i,0)}\, = \,{\cal K} \cap \{\m_e = 0, \,\, e=1,\dots,\E\},$
    \item ${\cal E}^{(0,\m_e)}\, = \,{\cal K} \cap \{\M_i = 0, \, i=1,\dots,\n\},$ and
    \item ${\cal E}^{(0,0)}\, = \,{\cal K} \cap \{\m_e = \M_i = 0, \, e=1,\dots,\E,\, i=1,\dots,\n\}.$
\end{itemize}
Let ${\cal E} \subseteq {\cal K}$ be any of these spaces. The matrix $A({\cal E})$ has $n=\E$ rows, and its columns are all exponents occurring in ${\cal G}_G$ \textit{for generic choices of kinematics in} ${\cal E}\subset {\cal K}$. The number of columns may depend on ${\cal E}$. In line with Rmk.~\ref{rmk:1}, we tested the inclusion ${\cal E} \subset \{E_A = 0 \}$ for all the graphs from \cite[Fig. 1]{Mizera:2021icv} and Fig.~\ref{fig:diagrams} by comparing the values of the signed Euler characteristic 
and the volume $\text{ vol}(A({\cal E}))$. Here the Euler characteristic we compute is $(-1)^{\E-1}\cdot\chi(V_{A}({\cal E}))$, where $V_{A}({\cal E})$ is the zero locus of ${\cal G}_G$ in $(\mathbb{C}^*)^\E$, for generic kinematics in ${\cal E} \subset {\cal K}$.
In most cases, we indeed have $(-1)^{\E-1}\cdot\chi(V_A({\cal E}))<\hbox{vol}(A({\cal E}))$. This means that ${\cal E}$ is contained in the principal $A$-determinant variety. The only exceptions arise for one-loop and banana diagrams. These cases will be studied in detail in Sec. \ref{sec:4}.

\begin{table}[t]
\begin{center}
\begin{footnotesize}
\begin{tabular}{ c | c | c | c | c}
  $G$ & ${\cal K}$ & ${\cal E}^{(\M_i,0)}$ & ${\cal E}^{(0,\m_e)}$& ${\cal E}^{(0,0)}$   \\
  \hline
 $\texttt{A}_4$ & \textcolor{blue}{(15,\,15)} & \textcolor{blue}{(11,\,11)} & (11,\,15) & \textcolor{blue}{(3,\,3)} \\
 $\texttt{B}_4$ & (15,\,35) & \textcolor{blue}{(1,1)} & (15,35) & \textcolor{blue}{(1,1)}\\ 
 \texttt{par} & (19,\,35) & (4,\,8) & (13,\,35) & (1,\,3) \\
 \texttt{acn} & (55,\,136) & (20,\,54) & (36,\,136) & (3,\,9)\\
 \texttt{env} & (273,\,1496) & (56,\,262) &(181,\,1496) & (10,\,80)\\
 \texttt{npltrb} & (116,\,512) & (28,\,252) &(77,\,512) &(5,\,61) \\
 \texttt{tdetri} & (51,\,201) & (4,\,18) &(33,\,201) & (1,\,5)\\
 \texttt{debox} & (43,\,96) & (11,\,33) & (31,\,96)& (3,\,10)\\
 \texttt{tdebox} & (123,\,705) & (11,\,113) &(87,\,705) &(3,\,41) \\
 \texttt{pltrb} & (81,\,417) & (16,\,201) &(61,\,417) &(4,\,80) \\
 \texttt{dbox} & (227,\,1422) & (75,\,903) & (159,\,1422) & (12,\,238)\\
  \texttt{pentb} & (543,\,4279) & (228,\,3148) & (430,\,4279) & (62,\,1186) \\
\end{tabular}
\quad
\begin{tabular}{ c | c }
  $G$ & ${\cal E}$ \\
  \hline
 \texttt{inner-dbox} & (43,\,834)  \\
 \texttt{outer-dbox} & (64,\,1302)\\
 \texttt{Hj-npl-dbox} & (99,\,1016)\\
 \texttt{Bhabha-dbox} & (64,\,774)\\
 \texttt{Bhabha2-dbox} & (79,\,910) \\
 \texttt{Bhabha-npl-dbox} & (111,\,936) \\
 \texttt{kite} & (30,\,136)\\
 \texttt{par} & (19,\,35)\\
 \texttt{Hj-npl-pentb} & (330,\,3144) \\
  \texttt{dpent} & (281,\,5511)\\
  \texttt{npl-dpent} & (631,\,5784)\\
  \texttt{npl-dpent2} & (458,\,5467)\\
\end{tabular}
\caption{Comparing the signed Euler characteristic and volume $((-1)^{\E-1}\cdot\chi(V_A({\cal E})),\,\text{vol}(A({\cal E})))$ for the Feynman diagrams in \cite[Fig.~1]{Mizera:2021icv} and Fig.~\ref{fig:diagrams}. The kinematic subspace ${\cal E}$ in the left table is the full kinematic space ${\cal K}$ or one of its linear subspaces obtained from setting internal/external masses to zero. In the table on the righthand side, it denotes the custom choices of kinematics fixed in Fig.~\ref{fig:diagrams}. The blue values indicate the cases where volume and Euler characteristic coincide.
Notice that, for each diagram, $\text{vol}(A({\cal K}))=\text{vol}(A({\cal E}^{(0,\m_e)}))$, since the vanishing of external masses does not change the monomial support of the graph polynomial.}
\label{tab:volVSEuler}
\end{footnotesize}
\end{center}
\end{table}
The numbers in Tab. \ref{tab:volVSEuler} were computed using \texttt{Julia}. In particular, the following code lines illustrate how to compute the volume for the graph $G = \texttt{par}$: 
\vspace{-0.5cm}
\begin{minted}{julia}
using PLD
using Oscar

edges = [[3,1],[1,2],[2,3],[2,3]];
nodes = [1,1,2,3];
E = length(edges);
U, F, α, p = getUF(edges, nodes, internal_masses = :generic, 
                                 external_masses = :generic);

ConvA = newton_polytope(U+F)
factorial(E)*volume(ConvA)
\end{minted}

We relied on our \texttt{Julia} package \texttt{PLD.jl} to generate the Symanzik polynomials of each graph $G$ with $\E$ edges. The volume function is instead a feature of the computer algebra system \texttt{Oscar} \cite{OSCAR} available as a \texttt{Julia} package. A snippet for computing the Euler characteristic will be displayed in Sec.~\ref{sec:3}. 

\section{Landau analysis} \label{sec:3}

By Thm.~\ref{thm:eulercharvolume}, the principal $A$-determinant characterizes when the signed Euler characteristic of $X_z$ drops below the generic value. Moreover, by Thm.~\ref{thm:GKZsinglocus}, it coincides with the singular locus of the solutions to a GKZ system. Such solutions are the integrals \eqref{eq:Eulerintegral}, where $\Gamma$ ranges over the twisted homology of $X_z$. The roles of $E_A$ in Thms.~\ref{thm:eulercharvolume} and \ref{thm:GKZsinglocus} are strongly related to each other. When the Euler characteristic drops, there are fewer independent twisted $n$-cycles, which means fewer independent integrals of the form \eqref{eq:Eulerintegral}. Different local solutions to the GKZ system may collide or diverge near $\{E_A = 0\}$, and this causes singularities. An example is found on page 1 of \cite{saito2013grobner}.

Feynman integrals are of the form \eqref{eq:Eulerintegral}, where $\ell = 1$ or $\ell = 2$ and the coefficients $z$ are specialized to the kinematic space ${\cal K} \subset \mathbb{C}^s$. As we illustrated in Sec.~\ref{sec:GKZvsFeyn}, it may well be that the principal $A$-determinant vanishes identically on ${\cal K}$. Then the signed Euler characteristic for generic $z \in {\cal K}$ is $\chi^* < {\rm vol}(A)$. To capture the singular locus of Feynman integrals, we need to detect values $z \in {\cal K}$ for which $(-1)^n \cdot \chi(X_z) < \chi^*$, or, values $z \in {\cal K}$ for which the hypersurface $\{ \alpha \, : \, {\cal U}(\alpha) + {\cal F}(\alpha;z) = 0 \}$ is \emph{more singular than usual}. Here ${\cal U}$ and ${\cal F}$ are \emph{Symanzik polynomials}, whose definition is recalled in \cite[Sec.~2.2]{Mizera:2021icv}. The sum of these two polynomials is called the \emph{graph polynomial} or \emph{Lee-Pomeransky polynomial}. It is denoted by ${\cal G}  = {\cal U} + {\cal F}$. This section describes how we propose to detect such ``extra singular'' $z$-values geometrically. 

\subsection{Euler discriminants}

We start with our general set-up from Sec.~\ref{sec:GKZvsFeyn}. We investigate the Euler characteristic of the very affine variety $X_z$ in \eqref{eq:Xz} as a function of $z \in \mathbb{C}^s$. Recall that
\[ X_z \, = \, \{ \alpha \in (\mathbb{C}^*)^n \, : \, f_i(\alpha,z) \neq 0, \, i = 1, \ldots, \ell \}. \]
Importantly, we do not consider the full parameter space $\mathbb{C}^s$, but we work on a subspace denoted by ${\cal E}$. In Landau analysis, ${\cal E} = {\cal K}$ is kinematic space, or a linear subspace.
\begin{theorem} \label{thm:closedstrata}
    For any irreducible subvariety ${\cal E} \subset \mathbb{C}^s$ and any integer $k$, define
    \[ Z_k({\cal E}) \, = \, \{ z \in {\cal E} \,: \, | \chi(X_z) | \, \leq \,  k \}, \quad \V_k({\cal E}) \, = \, \{ z \in {\cal E} \,: \, | \chi(X_z) | \, = \,  k \}.\]
    For each $k$, $Z_k({\cal E}) = \bigcup_{j=0}^k \V_j({\cal E})$ is Zariski closed in ${\cal E}$. In particular, for the maximal value $\chi^* = \max_{z' \in {\cal E}} |\chi(X_{z'})|$, we have $Z_{\chi^*}({\cal E}) \, = \, {\cal E}$ and $\V_{\chi^*}({\cal E})$ 
    is open and dense in ${\cal E}$.  
\end{theorem}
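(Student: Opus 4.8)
The plan is to exhibit $Z_k(\mathcal{E})$ as the image of a constructible set under a suitable projection, and then to appeal to upper semicontinuity of the Euler characteristic in families of very affine varieties. First I would set up a family: consider the incidence variety
\[
\mathcal{X} \, = \, \{ (\alpha, z) \in (\mathbb{C}^*)^n \times \mathcal{E} \, : \, f_i(\alpha, z) \neq 0, \ i = 1, \ldots, \ell \},
\]
together with the projection $\pi \colon \mathcal{X} \to \mathcal{E}$, whose fiber over $z$ is exactly $X_z$. The key input is that $|\chi(X_z)|$ is constructible as a function of $z$ and, more precisely, that it can only \emph{drop} on Zariski-closed subsets: there is a stratification of $\mathcal{E}$ into finitely many locally closed pieces on each of which $z \mapsto \chi(X_z)$ is constant, and the strata where $|\chi|$ is small are unions of closed strata. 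This is the standard fact that constructible functions obtained as fiberwise Euler characteristics of a morphism of varieties are upper semicontinuous after the sign is fixed — here the sign is uniform because each $X_z$ is a smooth very affine variety of fixed dimension $n$, so $(-1)^n \chi(X_z) \geq 0$ and $|\chi(X_z)| = (-1)^n \chi(X_z)$. I would cite the relevant results on Euler characteristics of fibers (e.g. via the theory of constructible functions and the existence of a Whitney stratification of $\pi$, or via the fact that $\chi$ of a very affine variety equals the number of critical points of a generic master function, which is semicontinuous), rather than reproving them.

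Granting that, the argument is short. Stratify $\mathcal{E} = \bigsqcup_m S_m$ into finitely many irreducible locally closed subvarieties on which $|\chi(X_z)|$ takes a constant value $c_m$. Then $\mathcal{V}_k(\mathcal{E}) = \bigcup_{c_m = k} S_m$ and $Z_k(\mathcal{E}) = \bigcup_{c_m \leq k} S_m$. That $Z_k(\mathcal{E})$ equals $\bigcup_{j \leq k} \mathcal{V}_j(\mathcal{E})$ is immediate from the definitions. To see $Z_k(\mathcal{E})$ is Zariski closed, I would argue that its complement $\{ z : |\chi(X_z)| > k \}$ is open: if $|\chi(X_{z_0})| > k$ then, by upper semicontinuity, $|\chi(X_z)| \geq |\chi(X_{z_0})| > k$ for all $z$ in a Zariski-open neighborhood of $z_0$, so the complement of $Z_k$ is a union of Zariski-open sets, hence open. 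Equivalently, one shows each $Z_k(\mathcal{E})$ is a finite union of closures of strata: a stratum $S_m$ with $c_m \leq k$ has closure $\overline{S_m}$ contained in $Z_k(\mathcal{E})$ because on the boundary strata the Euler characteristic can only have dropped further, so $c_{m'} \leq c_m \leq k$ there.

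For the last sentence: $\chi^* = \max_{z'} |\chi(X_{z'})|$ is attained on some stratum, and since $\chi^*$ is maximal, upper semicontinuity forces $\mathcal{V}_{\chi^*}(\mathcal{E})$ to contain a nonempty Zariski-open subset of $\mathcal{E}$ — indeed, pick $z_0$ with $|\chi(X_{z_0})| = \chi^*$; then $\{ z : |\chi(X_z)| \geq \chi^* \} = \{ z : |\chi(X_z)| = \chi^* \} = \mathcal{V}_{\chi^*}(\mathcal{E})$ is Zariski open, and it is nonempty, hence dense since $\mathcal{E}$ is irreducible. And $Z_{\chi^*}(\mathcal{E}) = \mathcal{E}$ trivially, since $|\chi(X_z)| \leq \chi^*$ for every $z$ by definition of the maximum.

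The main obstacle is establishing the upper semicontinuity / constructibility input rigorously: one must be careful that $|\chi|$ rather than $\chi$ is the semicontinuous quantity, which relies on knowing that all the $X_z$ are smooth of the same dimension $n$ (true here, as $X_z$ is an open subset of the torus), and one must invoke a genuine theorem — the existence of a stratification of the morphism $\pi$ over which the fibers are topologically locally trivial (Thom--Mather), or the interpretation of $\chi(X_z)$ via the likelihood/Euler-characteristic-as-critical-points results of Huh and others — to guarantee the strata are algebraic and finite in number. Everything after that is bookkeeping with the stratification.
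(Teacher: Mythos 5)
Your skeleton — interpret $|\chi(X_z)|$ as a count of regular critical points of a master function, invoke some persistence-of-solutions principle to get semicontinuity, then do bookkeeping with a stratification — is the same route the paper takes. The bookkeeping at the end is fine, as is your argument that $\V_{\chi^*}(\mathcal{E})$ is open and dense once the closedness of the $Z_k$'s is granted. But the input you try to black-box is not available in the form you quote, and this is where the real work of the proof lies.

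It is \emph{not} a standard fact that ``constructible functions obtained as fiberwise Euler characteristics of a morphism of varieties are upper semicontinuous after the sign is fixed.'' For fibers of a general morphism (even a flat proper one) $\chi$ is constructible but jumps in both directions: a family of smooth plane cubics degenerating to a nodal cubic has $\chi$ jump from $0$ up to $1$ at the special fiber, and no sign normalization fixes this. Whitney/Thom--Mather stratification gives topological local triviality over each stratum of the base, hence constructibility, but says nothing about the direction in which $\chi$ changes at boundary strata. What \emph{does} give the needed monotonicity in this problem is the combination the paper actually assembles: Huh's theorem identifies $(-1)^n\chi(X_z)$ with the number of \emph{nonsingular} solutions of the critical-point system for generic $(\mu,\nu)$, and the generalized parameter continuation theorem of Sommese--Wampler says that the number of nonsingular isolated solutions of a parametric polynomial system is attained generically and can only drop on a proper Zariski-closed subset of any irreducible parameter variety. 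Your proof as written replaces this two-step assembly with an appeal to a nonexistent general theorem; if you instead carry out the critical-point route you mention in passing, you arrive at exactly the paper's argument. (As a minor matter, the monotonicity you use is lower, not upper, semicontinuity of $|\chi|$ in the Zariski topology — $\{z : |\chi(X_z)| > k\}$ is open — and it must be applied component-by-component to the closures $\overline{\V_q(\mathcal{E})}$, as the paper does, to get the reverse inclusion $\bigcup_{q\le k}\overline{\V_q(\mathcal{E})} \subseteq Z_k(\mathcal{E})$.)
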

We will prove Thm.~\ref{thm:closedstrata} in this section. It justifies the following definition of the \emph{Euler discriminant}, which is a polynomial in the coordinate ring $\mathbb{C}[{\cal E}]$ of ${\cal E}$, vanishing at $z \in {\cal E}$ if and only if $|\chi(X_z)|$ is smaller than usual. 
\begin{definition}[Euler discriminant]\label{def:chidiscriminant}
    With the notation of Thm.~\ref{thm:closedstrata}, the \emph{Euler discriminant variety} of the family $X_z$ of very affine varieties over ${\cal E}$ is the closed subvariety $\nabla_\chi({\cal E}) =  Z_{\chi^* - 1}({\cal E}) = {\cal E} \setminus \V_{\chi^*}({\cal E}) \subset \mathbb{C}^s$. If $\nabla_\chi({\cal E})$ is defined by a single equation $\Delta_\chi({\cal E}) \in \mathbb{C}[{\cal E}]$ (unique up to scaling), then we call $\Delta_\chi({\cal E})$ the \emph{Euler discriminant}. 
\end{definition}

\begin{remark}
    The terminology \emph{Euler discriminant} was used by Esterov in \cite[Definition 3.1]{esterov2013discriminant} in a more general context. More precisely, in that paper, the Euler discriminant is a Weil divisor in ${\cal E}$ whose support equals the codimension-1 part of $\nabla_\chi({\cal E})$. 
\end{remark}
\begin{example} \label{ex:chivsPAD}
    By Thm.~\ref{thm:eulercharvolume}, when ${\cal E} = \mathbb{C}^s$, $\chi^* = {\rm vol}(A)$ and the Euler discriminant is the principal $A$-determinant: $\Delta_\chi({\cal E}) = E_A$.
\end{example}
An important challenge in Landau analysis is to compute the Euler discriminant for the family of very affine varieties defined by the graph polynomial ${\cal G}_G= {\cal U}_G + {\cal F}_G$ on a specific parameter subspace ${\cal E}$. This computation is out of reach for large diagrams. The \emph{principal Landau determinant}, as defined in Sec.~\ref{subsec:definition}, provides a state-of-the-art approach to computing many irreducible components of these Euler discriminants from physics.

\begin{proof}[Proof of Thm.~\ref{thm:closedstrata}]
   The signed Euler characteristic of $X_z$ is the number of solutions $\alpha \in X_z$ to the critical point equations for $\log (f(\alpha;z)^\mu \alpha^\nu)$:
   \begin{equation} \label{eq:critpteqs}
   \sum_{i=1}^\ell  \frac{\mu_i \, \frac{\partial f_i(\alpha;z)}{\partial \alpha_j}}{f_i} + \frac{\nu_j}{\alpha_j} \, = \, 0, \quad j = 1, \ldots, n, 
   \end{equation}
   for generic values of the parameters $\mu,\nu$ \cite[Thm.~1]{huh2013maximum}. Moreover, for such generic parameters, all $|\chi(X_z)|$ solutions are regular. This means that the $n\times n$ Jacobian matrix of \eqref{eq:critpteqs} evaluated at each of the solutions has rank $n$. 
   
   In order to apply a powerful theorem from algebraic geometry, called the \emph{generalized parameter continuation theorem} in \cite[Thm. 7.1.4]{sommese2005numerical}, we reformulate \eqref{eq:critpteqs} as a system of equations on a product of projective spaces. We do this by clearing denominators. We also add the new equation $\alpha_1 \cdots \alpha_n f_1\cdots f_\ell \alpha_{n+1} - 1 = 0$, with new variable $\alpha_{n+1}$, so as to impose that $\alpha_1 \cdots \alpha_n f_1\cdots f_\ell \neq 0$. The result is 
   \begin{equation} \label{eq:critpteqsnodenom}\alpha_j \sum_{i=1}^\ell \mu_i \frac{\partial f_i(\alpha;z)}{\partial \alpha_j} \prod_{q \neq i} f_q + \nu_j \prod_{i = 1}^\ell f_i = 0, \, \, j = 1, \ldots, n, \quad \prod_{j=1}^{n+1}\alpha_j \prod_{i=1}^\ell f_i - 1 = 0. \end{equation}
   Finally, we homogenize with respect to the $\alpha$- and $z$-variables. We obtain $n + 1$ equations on $\mathbb{P}^{n+1}$ ($\alpha$-coordinates) with parameters in $\mathbb{P}^{n+\ell-1} \times \mathbb{P}^s$ ($(\mu,\nu)$-coordinates and $z$-coordinates respectively). We now apply \cite[Thm. 7.1.4]{sommese2005numerical}. This theorem requires quite some notation. For transparency, we list its key players here in the notation of \cite{sommese2005numerical}, highlighted in blue, together with their values in our context:
   \begin{itemize}
       \item $\textcolor{blue}{X} = \mathbb{P}^{n+1}$ ($\alpha$-space) and $\textcolor{blue}{Y} = \mathbb{P}^{n+\ell-1} \times \mathbb{P}^s$ ($(\mu,\nu,z)$-space),
       \item $\textcolor{blue}{U \subset X}$ is $\mathbb{C}^{n+1} \subset \mathbb{P}^{n+1}$ with coordinates $\alpha_1, \ldots, \alpha_{n+1}$,
       \item $\textcolor{blue}{Q \subset Y}$ is $\mathbb{P}^{n+\ell-1} \times \overline{\cal E}$, where $\overline{\cal E}$ is the closure of ${\cal E} \subset \mathbb{C}^s$ in $\mathbb{P}^s$,
       \item $\textcolor{blue}{F}$ consists of the $n+1$ equations in \eqref{eq:critpteqsnodenom}.    
   \end{itemize}
    By \cite[Thm.~7.1.4]{sommese2005numerical}, the maximal number of nonsingular solutions $\alpha$ in $\mathbb{C}^{n+1}$ is attained for generic parameters $(\mu,\nu,z) \in \mathbb{P}^{n+\ell-1} \times \overline{\cal E}$. Hence, it is the same as the maximal number of nonsingular solutions for $z$-parameters in the dense open subset ${\cal E} \subset \overline{\cal E}$. By \cite[Thm.~1]{huh2013maximum}, the maximal number of nonsingular solutions with parameters in $\mathbb{P}^{n+\ell-1} \times \{z' \}$ is $|\chi(X_{z'})|$ (here we use $\textcolor{blue}{Q} = \mathbb{P}^{n+\ell-1} \times \{z' \}$ in the parameter continuation theorem). Letting $z'$ run over ${\cal E}$, we see that for almost all parameters $(\mu,\nu,z)$, there are $\chi^* = \max_{z' \in {\cal E}}|\chi(X_{z'})|$ nonsingular solutions. Let $\tilde{\cal U} \subset \mathbb{P}^{n+\ell-1} \times \overline{\cal E}$ be a nonempty Zariski open subset on which this number is attained. 
    There is a Zariski open set ${\cal U} \subset \overline{\cal E}$ such that $\tilde{\cal U} \cap (\mathbb{P}^{n+\ell-1} \times \{ z\} )$ is nonempty, and thus dense in $\mathbb{P}^{n+\ell-1} \times \{ z\}$, for all $z \in {\cal U}$. Therefore, $\chi^*$ is the generic value of $|\chi(X_z)|$ on $\overline{\cal E}$, and this value can only drop on $\overline{\cal E} \setminus {\cal U}$. The same statement is true on the dense open subset ${\cal E} \subset \overline{\cal E}$.
    
    By definition, $Z_k({\cal E}) = \bigcup_{q=0}^k \V_q({\cal E})$. This easily implies $Z_k({\cal E}) \subseteq \bigcup_{q = 0}^k \overline{\V_q({\cal E})}$, where $\overline{\V_q({\cal E})}$ is the closure in $\mathbb{C}^s$. To prove the theorem, it suffices to show that this inclusion is in fact an equality. To show the reverse inclusion, let $\overline{\V_q ({\cal E})} = W_1^q \cup \cdots \cup W_t^q$ be an irreducible decomposition. We repeat the reasoning above, replacing $\overline{\cal E}$ with the closure $\overline{W_i^q}$ in $\mathbb{P}^s$. We find that on each of the irreducible varieties $W_i^q$, the generic signed Euler characteristic is $q$, and this is the maximal value attained on $W_i^q$. This shows $W_i^q \subset Z_k$, for all $i$ and $q \leq k$, and the theorem is proved.    
\end{proof}

\subsection{Definition of the principal Landau determinant} \label{subsec:definition}

As announced above, the principal Landau determinant is meant to be a more-tractable-to-compute replacement of the Euler discriminant, tailored to the case where $X_z$ comes from a Feynman integral. For instance, in Lee-Pomeransky representation, the parameters could be $n = \E, \ell = 1$, and $f_1 = {\cal G}_G$ is the graph polynomial as defined above. A priori, the parameter space ${\cal E} \subset \mathbb{C}^s$ is the kinematic space ${\cal K}$, but it often makes sense to restrict our analysis to smaller subregions, as we did in Sec.~\ref{sec:GKZvsFeyn}. For instance, we might want to implement the preknowledge of some masses being zero, or equal to each other. For this reason, we will work with a subspace ${\cal E} \subset {\cal K}$. For simplicity, we will take it to be a linear subspace. For instance, in \eqref{eq:zsub} we may choose ${\cal E}$ defined by the conditions $\m_1 = \m_2 = \m_3$. The principal Landau determinant will be defined as a nonzero element of the ring $\mathbb{C}[{\cal E}]$ of polynomial functions on ${\cal E}$. In the case of \eqref{eq:zsub}, this is $\mathbb{C}[{\cal E}] = \mathbb{C}[\m_1, \m_2,\m_3, s]$ when ${\cal E} = {\cal K}$, or $\mathbb{C}[{\cal E}] = \mathbb{C}[\m, s]$ when ${\cal E} = {\cal K} \cap \{\m_1 = \m_2 = \m_3 = \m\}$.

Let $G$ be a Feynman diagram with graph polynomial ${\cal G}_G = {\cal U}_G + {\cal F}_G$. The matrix $A$ has $n = \E$ rows, where $\E$ equals the number of internal edges of $G$. Its columns are all exponents occurring in ${\cal G}_G$ \emph{for generic choices of kinematics in ${\cal E}$}. Clearly, this depends on ${\cal E}$. For each face $Q$ of ${\rm Conv}(A)$, we let ${\cal G}_{G,Q}$ be the polynomial obtained by summing only the terms of ${\cal G}_G$ whose exponents lie in $Q$.

For each face $Q$ of ${\rm Conv}(A)$, we consider the \emph{incidence variety}
\begin{equation}\label{eq:incidence-variety}
Y_{G,Q}({\cal E}) \, = \, \Big\{ (\alpha, z) \in (\mathbb{C}^*)^{\E} \times {\cal E} \, : \, {\cal G}_{G,Q}(\alpha;z)  =  \partial_{\alpha} \, {\cal G}_{G,Q}(\alpha;z) \, = \, 0 \Big\}.
\end{equation}
For later convenience, we break this variety up into its irreducible components: 
\begin{equation} \label{eq:irreddecomp} Y_{G,Q}({\cal E}) \, = \, \bigcup_{i \in \mathbb{I}(G,Q)} Y_{G,Q}^{(i)}({\cal E}). \end{equation}
Here $\mathbb{I}(G,Q)$ is some finite indexing set, and $Y_{G,Q}^{(i)}({\cal E})$ are distinct, irreducible varieties. Each of these has a natural projection \[\nabla_{G,Q}^{(i), \circ}({\cal E}) \,=\, \pi_{\cal E}(Y_{G,Q}^{(i)}({\cal E})) \,\subset\, {\cal E},\]
obtained by dropping the $\alpha$-coordinates. This is reminiscent of the open $A$-discriminants $\nabla_A^\circ$ we saw in Sec.~\ref{sec:Adet}. The Zariski closure of 
$\nabla_{G,Q}^{(i), \circ}({\cal E})$ in ${\cal E}$ is $\nabla_{G,Q}^{(i)}({\cal E})$.

To each $i \in \mathbb{I}(G,Q)$, we associate the codimension of this projection:
\[ {\rm codim}(i) \, = \,  {\rm dim}({\cal E}) - {\rm dim}(\nabla_{G,Q}^{(i)}({\cal E})). \] 
We set $\mathbb{I}(G,Q)_1 = \{ i \in \mathbb{I}(G,Q) \, : \, {\rm codim}(i) = 1 \}.$
All varieties $\nabla_{G,Q}^{(i)}({\cal E})$ with $i \in \mathbb{I}(G,Q)_1$ are defined by a single equation:

\[ \nabla_{G,Q}^{(i)}({\cal E}) = \{ \Delta_{G,Q}^{(i)}({\cal E}) = 0 \}, \quad \text{with} \quad \Delta_{G,Q}^{(i)}({\cal E}) \in \mathbb{C}[{\cal E}] \setminus \{ 0 \}.\]

\begin{definition}[Principal Landau determinant] \label{def:PLD}
The \emph{principal Landau determinant} (PLD) associated with the Feynman diagram $G$ and the parameter space ${\cal E}$ is the unique (up to scale) square-free polynomial $E_G({\cal E}) \in \mathbb{C}[{\cal E}]$ defining the \emph{PLD variety} 
\[  {\rm PLD}_G({\cal E}) \, = \, \{ E_G({\cal E}) = 0 \} \, =  \left \{ \, \prod_{Q \in F(A)} \, \prod_{i \in \mathbb{I}(G,Q)_1} \Delta_{G,Q}^{(i)}({\cal E}) = 0 \right \} . \]
\end{definition}
Notice that, since we are primarily interested in the vanishing locus $\{ E_G({\cal E}) = 0\}$ of the principal Landau determinant, our definition takes out the factors $\Delta_{G,Q}^{(i)}({\cal E})$ that appear more than once. 

We end the section with a comment and conjecture on the relation between ${\rm PLD}_G({\cal E})$ and the Euler discriminant variety $\nabla_\chi({\cal E})$. First, in light of Ex. \ref{ex:chivsPAD}, notice that when ${\cal E} = \mathbb{C}^s$ is the entire GKZ parameter space, ${\rm PLD}_G({\cal E}) = \nabla_\chi({\cal E})$ is the variety of the principal $A$-determinant. The proof of \cite[Thm.~13]{AMENDOLA2019222} shows that solutions to the face equations ${\cal G}_{G,Q} = \partial_\alpha {\cal G}_{G,Q} = 0$ correspond to critical points of \eqref{eq:critpteqs} that lie on the boundary of a toric compactification of $X_z$ for generic parameters $\mu, \nu$. Our definition of ${\rm PLD}_G({\cal E})$ aims to detect when there are more such critical points on the boundary then usual, which by \cite[Thm.~1]{huh2013maximum} means that the Euler characteristic is smaller than usual. We could not prove this intuition, but we formalize it with the following conjecture. 
\begin{conjecture} \label{conj:pldvschi}
    For any Feynman diagram $G$ and any linear subspace ${\cal E} \subseteq {\cal K}$, we have ${\rm PLD}_G({\cal E}) \subseteq \nabla_\chi({\cal E})$, where $\nabla_\chi({\cal E})$ is the Euler discriminant for the family of very affine varieties given by $X_z = (\mathbb{C}^*)^\E \setminus \{{\cal G}_G(\alpha;z) = 0 \}, z \in {\cal E}$. 
\end{conjecture}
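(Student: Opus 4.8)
The plan is to deduce the conjecture from a conservation-of-critical-points principle combined with Huh's theorem relating the signed Euler characteristic to a count of critical points of a master function. Since $\nabla_\chi({\cal E})$ is Zariski closed (Thm.~\ref{thm:closedstrata}) and ${\rm PLD}_G({\cal E}) = \bigcup_{Q \in F(A)}\bigcup_{i \in \mathbb{I}(G,Q)_1}\nabla_{G,Q}^{(i)}({\cal E})$ is a finite union of irreducible varieties, it suffices to show that each $\nabla_{G,Q}^{(i)}({\cal E})$ with $i \in \mathbb{I}(G,Q)_1$ is contained in $\nabla_\chi({\cal E})$, and (by closedness of $\nabla_\chi$) even enough to verify $|\chi(X_{z^*})| < \chi^* := \max_{z \in {\cal E}}|\chi(X_z)|$ for $z^*$ ranging over a Zariski-dense subset of $\nabla_{G,Q}^{(i)}({\cal E})$. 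Here $\ell = 1$ and $f_1 = {\cal G}_G$; I would fix exponents $\mu \in \mathbb{C}$, $\nu \in \mathbb{C}^\E$ generic with respect to $z^*$, so that by \cite[Thm.~1]{huh2013maximum} the critical points of the master function $\Phi_z = {\cal G}_G(\alpha;z)^\mu \alpha^\nu$ in $(\mathbb{C}^*)^\E$ are all nondegenerate and their number equals $|\chi(X_z)|$, with generic value $\chi^*$ on ${\cal E}$.

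Next I would carry out the bookkeeping of ``critical points at infinity'' as in the proof of \cite[Thm.~13]{AMENDOLA2019222}. Fix a smooth projective toric variety $\mathbb{P}_\Sigma$ whose fan refines the normal fan of ${\rm Conv}(A)$; its torus orbits $O_\tau$ are indexed by cones $\tau$, each of which lies in the relative interior of the normal cone of a unique face $Q(\tau) \in F(A)$. Homogenizing and clearing denominators in the critical equations as in \eqref{eq:critpteqsnodenom}, the logarithmic critical scheme of $\Phi_z$ extends to a scheme in $\mathbb{P}_\Sigma$ whose length is the constant intersection number ${\rm vol}(A)$ whenever it is $0$-dimensional. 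On the orbit $O_\tau$ the leading term of ${\cal G}_G$ is ${\cal G}_{G,Q(\tau)}$, and the valuation/Newton-polytope analysis in the cited proof shows that a critical point of $\Phi_z$ limits into $O_\tau$ if and only if the face system ${\cal G}_{G,Q}(\alpha;z) = \partial_\alpha {\cal G}_{G,Q}(\alpha;z) = 0$ is solvable in $(\mathbb{C}^*)^\E$, i.e. the fibre of $Y_{G,Q}({\cal E})$ over $z$ is nonempty. The key lemma to establish is: the length $c_Q(z)$ of the part of the compactified critical scheme supported on $\bigcup_{Q(\tau)=Q}O_\tau$ is upper semicontinuous in $z \in {\cal E}$, it is locally constant on ${\cal E} \setminus {\rm PLD}_G({\cal E})$, and it strictly increases as $z$ specializes into $\nabla_{G,Q}^{(i)}({\cal E})$ for $i \in \mathbb{I}(G,Q)_1$. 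Granting this, conservation ${\rm vol}(A) = |\chi(X_z)| + \sum_{Q \in F(A)} c_Q(z)$ for generic $z \in {\cal E}$ together with orbit-wise semicontinuity forces $|\chi(X_{z^*})| \le \chi^* - 1$ on a dense subset of $\nabla_{G,Q}^{(i)}({\cal E})$, which is all we need.

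The strict-increase step is the heart of the matter. For $z^*$ generic in $\nabla_{G,Q}^{(i)}({\cal E})$ with ${\rm codim}(i) = 1$, by definition there is $\alpha^*$ with $(\alpha^*,z^*) \in Y_{G,Q}^{(i)}({\cal E})$, and, because this component does not dominate ${\cal E}$, the point $\alpha^*$ is a genuinely new singular point of the face hypersurface: it is not a limit of singular points of ${\cal G}_{G,Q}(\cdot;z)$ for $z$ generic in ${\cal E}$. Using deformation to the normal cone of $O_{\tau}$ (for $\tau$ with $Q(\tau) = Q$), I would show that this new solution of the face discriminant condition forces an extra logarithmic critical point of $\Phi_z$ to migrate out of $(\mathbb{C}^*)^\E$ into the fibre over $\alpha^*$ on $O_\tau$ as $z \to z^*$, so that $c_Q$ jumps; one must also rule out that such a migration is compensated by a critical point flowing back from another orbit, which is precisely why orbit-wise (not merely total) upper semicontinuity of the boundary critical scheme is required, and this should follow from flatness of the relative compactified critical scheme over ${\cal E}$ once excess components at infinity are understood.

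I expect the main obstacle to be the non-genericity of the linear slice ${\cal E} \subset {\cal K}$ --- the same phenomenon behind the slogan \eqref{eq:rule}. Over the full parameter space $\mathbb{C}^s$ every step is standard and recovers ${\rm PLD}_G(\mathbb{C}^s) = \{E_A = 0\} = \nabla_\chi(\mathbb{C}^s)$ (Ex.~\ref{ex:chivsPAD}), because Bernstein--Kushnirenko genericity applies termwise; on a linear subspace it fails, so (i) the critical scheme over ${\cal E}$ need not stay $0$-dimensional at infinity --- indeed the identical vanishing of face discriminants on ${\cal E}$ for many diagrams (Tab.~\ref{tab:volVSEuler}) signals positive-dimensional boundary critical loci whose contribution to ${\rm vol}(A)$ must be computed via Esterov's treatment of non-generic systems \cite{esterov2013discriminant} --- and (ii) one must show that every codimension-$1$ component $\nabla_{G,Q}^{(i)}({\cal E})$ actually ``carries'' an escaping critical point rather than being a phantom with no effect on $\chi$. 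A $D$-module route --- identifying $\nabla_\chi({\cal E})$ with the singular locus of a holonomic ideal annihilating the restricted Euler integral and comparing it with the GKZ singular locus $\{E_A = 0\}$ --- is tempting but runs into the known subtleties of restricting $D$-modules to a subspace (jumps of the characteristic variety), which is yet another face of \eqref{eq:rule}; I would therefore pursue the toric critical-point argument, treating the refined non-generic transversality analysis on ${\cal E}$ as the crux.
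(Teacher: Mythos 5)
This statement is left \emph{open} in the paper: it is stated as Conjecture~\ref{conj:pldvschi}, and the sentence immediately preceding it reads ``We could not prove this intuition, but we formalize it with the following conjecture.'' There is therefore no proof in the paper for your proposal to be measured against. What the paper does give is exactly the heuristic you reproduce: by the argument in the proof of \cite[Thm.~13]{AMENDOLA2019222}, solutions to the face system ${\cal G}_{G,Q} = \partial_\alpha {\cal G}_{G,Q} = 0$ correspond to critical points of the log-likelihood on the toric boundary, and by \cite[Thm.~1]{huh2013maximum} a surplus of such boundary critical points should force $|\chi(X_z)|$ below its generic value $\chi^*$. Your sketch is that same intuition, carried a bit further into an intended bookkeeping of orbit-wise boundary multiplicities.

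Your proposal is not, and does not claim to be, a proof — and the places you flag as ``the crux'' are precisely the places where the argument currently breaks. The conservation identity ${\rm vol}(A) = |\chi(X_z)| + \sum_Q c_Q(z)$ and the orbit-wise upper semicontinuity of $c_Q$ both rest on the compactified critical scheme being $0$-dimensional and flat over the parameter space; over the linear slice ${\cal E}\subset \mathbb{C}^s$ this fails in essentially every Feynman example with UV/IR divergences, because entire faces $Q$ already satisfy ${\cal G}_{G,Q}=\partial_\alpha{\cal G}_{G,Q}=0$ in positive dimension for \emph{generic} $z\in{\cal E}$ (the dominant components $\mathbb{I}(G,Q)\setminus\mathbb{I}(G,Q)_1$; equivalently the phenomenon behind the slogan~\eqref{eq:rule} and Tab.~\ref{tab:volVSEuler}). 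In that regime neither $c_Q(z)$ nor its ``jump'' is a well-defined length without a further excess-intersection or local-Euler-characteristic refinement, and the ``strict increase'' step — that every codimension-$1$ component $\nabla_{G,Q}^{(i)}({\cal E})$ actually carries an escaping critical point rather than sitting inside an already positive-dimensional boundary locus — is exactly the step one cannot currently verify in general. You should present this as a conditional program identifying the right obstacles (which it does accurately), not as a proof; it would also help to make explicit the reduction you implicitly rely on, namely that by Thm.~\ref{thm:closedstrata} it suffices to exhibit a single $z^*$ in each $\nabla_{G,Q}^{(i)}({\cal E})$, $i\in\mathbb{I}(G,Q)_1$, with $|\chi(X_{z^*})|<\chi^*$, since $\nabla_\chi({\cal E})$ is Zariski closed. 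Finally, note that Ex.~\ref{ex:embcomp} in the paper shows the reverse inclusion fails, and the embedded-component phenomenon analyzed there (and in App.~\ref{sec:appendix2}) is additional evidence that the compactified critical scheme over ${\cal E}$ has genuinely non-reduced structure — another warning sign for any naive flatness or semicontinuity claim.
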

We have verified Conj.~\ref{conj:pldvschi} numerically in all our examples, and Ex.~\ref{ex:embcomp} below shows that the opposite inclusion ${\rm PLD}_G({\cal E}) \supseteq \nabla_\chi({\cal E})$ may fail. 

\subsection{Examples}

Here we provide four examples illustrating the computation of the principal Landau determinant. We start with the running example of the banana diagram, first with generic masses in Ex.~\ref{ex:banana-generic} and then with one zero mass in Ex.~\ref{ex:banana-nongeneric}. In
Ex.~\ref{ex:singular}, we compare our definition of ${\rm PLD}_G({\cal E})$ with recently proposed alternatives. In particular, in \cite{Klausen:2021yrt} the singular locus of integrals with non-generic parameters ${\cal E}$ is studied by first computing the principal A-determinant $E_A$, then restricting each factor in \eqref{eq:EA} to ${\cal E}$, and finally discarding all discriminants $\Delta_{A \cap \Gamma}$ which vanish after restriction. Ex.~\ref{ex:singular} illustrates how this can fail.
The paper \cite{Dlapa:2023cvx} proposes a more sophisticated approach based on Taylor expansions.
Finally, Ex.~\ref{ex:embcomp} shows that the opposite inclusion in Conj.~\ref{conj:pldvschi} may fail to hold. 

\begin{example}\label{ex:banana-generic}
Consider the example $G = \mathsf{B}_3$ from Sec.~\ref{sec:GKZvsFeyn}. At first, let us pick ${\cal E} = {\cal K}$ to be the entire kinematic space parametrized by $(\m_1,\m_2,\m_3,s)$, as in \eqref{eq:zsub}. On the codimension-$1$ face $Q = 2389$, the initial form ${\cal G}_{\mathsf{B}_3, 2389}$ is given by
\be
{\cal G}_{\mathsf{B}_3, 2389} \,=\, (1 - \m_3 \alpha_3)(\alpha_1 + \alpha_2)\alpha_3.
\ee
The corresponding incidence variety $Y_{\mathsf{B}_3, 2389}({\cal K})$ is carved out by the system of equations
\begin{align}
{\cal G}_{\mathsf{B}_3, 2389} &\,=\, (1 - \m_3 \alpha_3)(\alpha_1 + \alpha_2)\alpha_3 \,=\, 0 ,\\
\partial_{\alpha_1} {\cal G}_{\mathsf{B}_3, 2389} \,=\, \partial_{\alpha_2} {\cal G}_{\mathsf{B}_3, 2389} &\,=\, (1 - \m_3 \alpha_3)\alpha_3 \,=\, 0,\\
\partial_{\alpha_3} {\cal G}_{\mathsf{B}_3, 2389} &\,=\, (1 - 2\m_3 \alpha_3)(\alpha_1 + \alpha_2) \,=\, 0\,
\end{align}
on the total space $(\mathbb{C}^\ast)^3 \times {\cal K}$. It has a single $5$-dimensional component
\be
Y_{\mathsf{B}_3, 2389}^{(1)}({\cal K}) \,=\, \{ (\alpha,z) \in (\mathbb{C}^\ast)^3 \times {\cal K} : \alpha_1 + \alpha_2 = 1- \m_3 \alpha_3 = 0 \}.
\ee
Its projection
$\nabla_{G,\Gamma}^{(i)}({\cal K})$ has codimension $0$ (there is a solution $\alpha \in (\mathbb{C}^\ast)^3$ for almost any $s$, $\m_1$, $\m_2$, $\m_3$) and hence does not contribute to the principal Landau determinant.

On the other hand, the codimension-$2$ face $\Gamma = 89$ contained in $2389$ gives
\be
{\cal G}_{\mathsf{B}_3, 89} \,=\, - \m_3 (\alpha_1 + \alpha_2) \alpha_3^2, 
\ee
and the incidence variety $Y_{\mathsf{B}_3, 89}({\cal K})$ is defined by the equations
\begin{align}
{\cal G}_{\mathsf{B}_3, 89} &\,=\, - \m_3 (\alpha_1 + \alpha_2) \alpha_3^2 \,=\, 0,\\
\partial_{\alpha_1} {\cal G}_{\mathsf{B}_3, 89} \,=\, \partial_{\alpha_2} {\cal G}_{\mathsf{B}_3, 89} &\,=\, - \m_3 \alpha_3^2 \,=\, 0,\\
\partial_{\alpha_3} {\cal G}_{\mathsf{B}_3, 89} &\,=\, - 2\m_3 (\alpha_1 + \alpha_2) \alpha_3 \,=\,0.
\end{align}
It has a single $6$-dimensional component
\be
Y_{\mathsf{B}_3, 89}^{(1)}({\cal K}) \,=\, \{ (\alpha,z) \in (\mathbb{C}^\ast)^3 \times {\cal K} : \m_3 = 0 \},
\ee
whose projection $\pi_{\cal E}(Y_{\mathsf{B}_3, 89}^{(1)}({\cal K}))$ has codimension $1$ and contributes $\Delta_{\mathsf{B}_3, 89}^{(1)}({\cal K}) = \m_3$ to the principal Landau determinant.
Physically, it is a mass divergence, which causes a possible singularity of the integral \eqref{eq:banana} on the subspace \eqref{eq:zsub} if $\m_3=0$.

Repeating analogous analysis for all faces of $\Newt({\cal G}_{\mathsf{B}_3})$ gives the principal Landau determinant:
\begin{align}
E_{\mathsf{B}_3}({\cal K}) \,=\, \m_1 \m_2 \m_3 s \Big[&s^4 - 4 s^3 (\m_1 + \m_2 + \m_3)\\
&+ 2s^2 (3\m^2_1 + 3\m^2_2 + 3\m^2_3 + 2\m_1 \m_2 + 2\m_2 \m_3 + 2\m_3 \m_1) \\
&-4 s (\m_1^3 + \m_2^3 + \m_3^3 - \m_1 \m_2 (\m_1 + \m_2) - \m_2 \m_3 (\m_2 + \m_3) \\
&\qquad - \m_3 \m_1 (\m_3 + \m_1) + 10 \m_1 \m_2 \m_3) + \lambda(\m_1, \m_2, \m_3)^2 \Big],
\end{align}
where
\be\label{eq:Kallen}
\lambda(a,b,c) := a^2 + b^2 + c^2 - 2ab - 2bc - 2ac
\ee
is the K\"all\'en function.
The term in the square brackets comes from the facet $456789,10$ dual to the ray $(-1,-1,-1)$. Once expressed in terms of the particle masses $m_e$ (such that $\m_e = m_e^2$), it factors into four components 
\be
[ s - (m_1 + m_2 + m_3)^2]
[ s - (m_1 + m_2 - m_3)^2]
[ s - (m_1 - m_2 + m_3)^2]
[ s - (m_1 - m_2 - m_3)^2].
\ee
This is a well-known result for the singular locus of the Feynman integral $\I_{\mathsf{B}_3}$. 
\end{example}

\begin{example}\label{ex:banana-nongeneric}
Consider the same example, but on the subspace ${\cal E} \subset {\cal K}$ obtained by setting $\m_1 = 0$. Note that ${\cal E} \subset \{ E_{\mathsf{B}_3}({\cal K}) = 0\}$, which means we cannot reuse the results of the previous example directly. Instead, the polytope $\Newt({\cal G}_{{\tt B}_3}({\cal E}))$ is smaller and has $f$-vector $(7,11,6)$.
The principal Landau determinant on ${\cal E}$ is
\be
E_{\mathsf{B}_3}({\cal E}) \,=\, \m_2 \m_3 s\, \lambda(s, \m_1, \m_2),
\ee
where once again, the final factor factors in terms of $m_e$'s. The $\lambda$ function contribution comes from the face with the weight $(-2,-1,-1)$.

We could have tested the inclusion ${\cal E} \subset \{ E_{\mathsf{B}_3}({\cal K}) = 0\}$ by computing Euler characteristics. One of the simplest ways is to get them is by counting the number of critical points of the log-likelihood function $W = \mu_1 \log f + \sum_{i=1}^{3} \nu_i \log \alpha_i$, see \cite[Sec.~5]{Mizera:2021icv} for details. In practice, this check can be performed by using the following self-contained snippet in \texttt{Julia}, which we first run on the kinematic space ${\cal K}$:
\begin{minted}{julia}
using HomotopyContinuation

@var α[1:3], s, m[1:3], u[1:4]
f = (1 - m[1]*α[1] - m[2]*α[2] - m[3]*α[3])*
    (α[1]*α[2] + α[2]*α[3] + α[3]*α[1]) + s*α[1]*α[2]*α[3]

W = u[1] * log(f) + dot(u[2:4], log.(α))
dW = System(differentiate(W, α), parameters = [s; m; u])

Crit = monodromy_solve(dW)
crt = certify(dW, Crit)
println(ndistinct_certified(crt))
\end{minted}
After loading packages, the lines $4{-}6$ define the variables of the problem and the polynomial $f$. The system of equations is set up in the lines $8{-}9$ and solved with one command in line $11$ using homotopy continuation \cite{10.1007/978-3-319-96418-8_54}, followed by certification \cite{breiding2020certifying} and printing the result. The code returns $7$ in agreement with the result quoted in Rmk.~\ref{rmk:1}. Specializing to the subspace $\cal E$ amounts to inserting the substitution
\begin{minted}[firstnumber=last]{julia}
W = subs(W, m[1]=>0)
\end{minted}
between the lines $8$ and $9$. This changes the result to $4$, indicating that $\cal E$ belongs to the principal Landau determinant hypersurface ${\rm PLD}_{\mathsf{B}_3}({\cal K})$.
\end{example}

\begin{example}\label{ex:singular}
Consider the generalized Euler integral with $\ell=1$, $n=2$, and take
\be
f_1 \,=\, (1+\alpha_1)(a+b \alpha_1+c\alpha_2+d\alpha_1 \alpha_2),
\ee
where $\cal E$ is parametrized by $(a,b,c,d)$.
In analyzing its singularities, one could attempt to apply the definition of the principal $A$-determinant for $f_1$ with generic coefficients first and then specialize to $\cal E$. In this case, the $A$ matrix is
\[A \, = \, \begin{pmatrix}
0 & 1 & 0 & 2 & 1 & 2 \\
0 & 0 & 1 & 0 & 1 & 1 \\
1 & 1 & 1 & 1 & 1 & 1 
\end{pmatrix}.
\]
Direct computation gives the principal A-determinant 
\begin{align}\label{eq:EA-singular}
E_A &\,=\, z_1 z_3 z_4 z_6 (z_2^2 - 4 z_1 z_4)(z_5^2 - 4 z_3 z_6)\\
&\quad\;(z_3^2 z_4^2 - z_2 z_3 z_4 z_5 + z_1 z_4 z_5^2 + 
 z_2^2 z_3 z_6 - 2z_1 z_3 z_4 z_6 - z_1 z_2 z_5 z_6 + 
 z_1^2 z_6^2).
\end{align}
The subspace $\cal E$ giving specialized coefficients we are interested in is
\be
(z_1, z_2, z_3, z_4, z_5, z_6) \,=\, (a, a+b, c, b, c+d, d). 
\ee
On this subspace, the final factor in \eqref{eq:EA-singular} evaluates to zero. A naive approach would be to simply discard it \cite{Klausen:2021yrt}, but keep the rest (other proposals, based on taking limits also exist \cite{Dlapa:2023cvx}):
\be\label{eq:EA-naive}
E_A^{\text{naive}}({\cal E}) \,=\, a b c d (a-b)(c-d). 
\ee
However, this prescription does not correctly account for all singularities of the integral, as one can verify on simple examples. For instance, taking $\mu_1 = -2$, $\nu_1 = \nu_2 = 1$, and $\Gamma = \R_+^2$ one finds that
\begin{align}
I(a,b,c,d) \, = \, \int_{\R_+^2} &\frac{\d \alpha_1\wedge \d \alpha_2}{[(1+\alpha_1)(a + b \alpha_1 + c \alpha_2 + d \alpha_1 \alpha_2)]^2}\\
&\qquad \qquad = \frac{1}{(a-b)(c-d)} - \frac{1}{bc - ad} \left[ \frac{b^2 \log(a/b)}{(a-b)^2}  + \frac{d^2 \log(d/c)}{(c-d)^2} \right]
\end{align}
for $a,b,c,d>0$ and its analytic continuation elsewhere. The integral is singular when $bc = ad$ on most sheets of the logarithms, which was not detected by \eqref{eq:EA-naive}. We note that, in order to interpret this integral as a pairing between a twisted cycle and a twisted cocycle, one needs to replace $\Gamma$ by its \emph{regularization}, see e.g. \cite[Theorem* 3.24]{matsubara2023four}.

Hence, the desired description of the singular locus of $I(a,b,c,d)$ is
\be
E({\cal E}) \,=\, a b c d (a-b) (c-d) (bc - ad) .
\ee
The previously-missing component comes from the dense face $Q$ (the interior of $\conv(A)$). Let us see how it arises. The incidence variety $Y_{Q}({\cal E})$ is defined by 
\begin{align}
f_1 &\,=\, (1+\alpha_1)(a+b \alpha_1+c\alpha_2+d\alpha_1 \alpha_2) \,=\, 0\, ,\\
\partial_{\alpha_1} f_1 &\,=\, a + b + 2b \alpha_1 + (c+d) \alpha_2 + 2 d \alpha_1 \alpha_2 \,=\, 0, \\
\partial_{\alpha_2} f_1 &\,=\, (1 + \alpha_1)(c + d \alpha_1) \,=\, 0 .
\end{align}
The variety corresponding to the radical ideal of the polynomials above has $2$ irreducible components with dimensions $4$ and $3$, respectively. The first one is
\be
Y_Q^{(1)}({\cal E}) \,=\, \Big\{ (\alpha,z) \in (\C^\ast)^2 \times {\cal E}\,:\, \alpha_1 + 1 = a - b + (c-d) \alpha_2 = 0 \Big\}.
\ee
It is a \emph{dominant component}, meaning that $\nabla^{(1)}_Q({\cal E}) = {\cal E}$ (there is a solution for every $(a,b,c,d) \in \mathbb{C}^4$, except for $\{c=d\} \setminus \{a = b\}$, which are included in $\nabla_{Q}^{(1)}({\cal E})$ by closure), and hence is discarded. However, we are not allowed to discard the second component:
\be
Y_Q^{(2)}({\cal E}) \,=\, \Big\{ (\alpha,z) \in (\C^\ast)^2 \times {\cal E}: bc-ad = a + b \alpha_1 = c + d\alpha_1 = a + c\alpha_2  = b + d\alpha_2 = 0 \Big\}.
\ee
Its projection $\pi_{\cal E}(Y_{Q}^{(2)}({\cal E}))$ has codimension $1$ and gives the discriminant surface $\{bc=ad\}$ missed in the naive approach.

This example illustrates why throwing away faces with dominant components from the principal A-determinant might lead to incorrect results, in addition to complicating the computation in intermediate stages, and a more specialized definition of the principal Landau determinant is necessary. Dominant components appear in nearly all examples of Feynman integrals studied in this paper, since they are tied to UV/IR divergences.
\end{example}

\begin{example} \label{ex:embcomp}
    This example serves to illustrate why we cannot have the opposite inclusion in Conj.~\ref{conj:pldvschi}. That is, we might have $\nabla_\chi({\cal E}) \subsetneq {\rm PLD}_G({\cal E})$. The simplest diagram for which we observed this is the parachute diagram $G = \texttt{par}$. To understand the failure of this inclusion from a mathematical perspective, it is instructive to consider a smaller example first. We set $\ell = 1$ and consider the very affine surface $X_z \subset (\mathbb{C}^*)^2$ defined by
    \[ f(\alpha_1,\alpha_2;z) \, = \, (\alpha_2-1)^2 - (\alpha_1-z)\alpha_1^2. \]
    That is, $X_z$ is the complement of a nodal cubic $V_z = \{f(\alpha_1, \alpha_2;z) = 0 \} \subset (\mathbb{C}^*)^2$ in the two-dimensional complex torus. Our parameter space is ${\cal E} = \mathbb{C}$. We claim that the generic Euler characteristic $\chi^*$ is equal to 4. To see this, we define the following set: 
    \[ P = \{ (\alpha_1,\alpha_2) \in V_z \, : \, f(\alpha_1,0;z) = 0   \text{ or } \alpha_1 = z \}. \]
    One checks that $P$ consists of four points for generic $z$. These points are indicated in blue in Fig.~\ref{fig:nodalcubic}. We obtain a fibration $V_z \setminus P \rightarrow \mathbb{C}^* \setminus \{4 \text{ points} \}$ by simply forgetting the $\alpha_2$-coordinate. Fibres of this map consist of two points. Using the excision and fibration properties of the Euler characteristic, we obtain 
    \[ \chi(X_z) \, = \, \chi((\mathbb{C}^*)^2) - \chi(V_z) \, = \, 0 - [\chi(V_z \setminus P) + \chi(P)] \, = \, 0 - [ -4 \cdot 2 + 4] \, = \, 4. \]
    When $z = 0$, $V_0$ is a cuspidal cubic: the cusp is at $(\alpha_1, \alpha_2) = (0,1)$. A similar reasoning, using a fibration $V_0 \setminus P \rightarrow \mathbb{C}^* \setminus \{ 3 \text{ points} \}$, gives $\chi(X_0) = 3$. The dots with a dashed border on the right part of Fig. \ref{fig:nodalcubic} are not real. We have shown that $0 \in \nabla_\chi({\cal E})$. 
    \begin{figure}
        \centering        \includegraphics[height = 4.5cm]{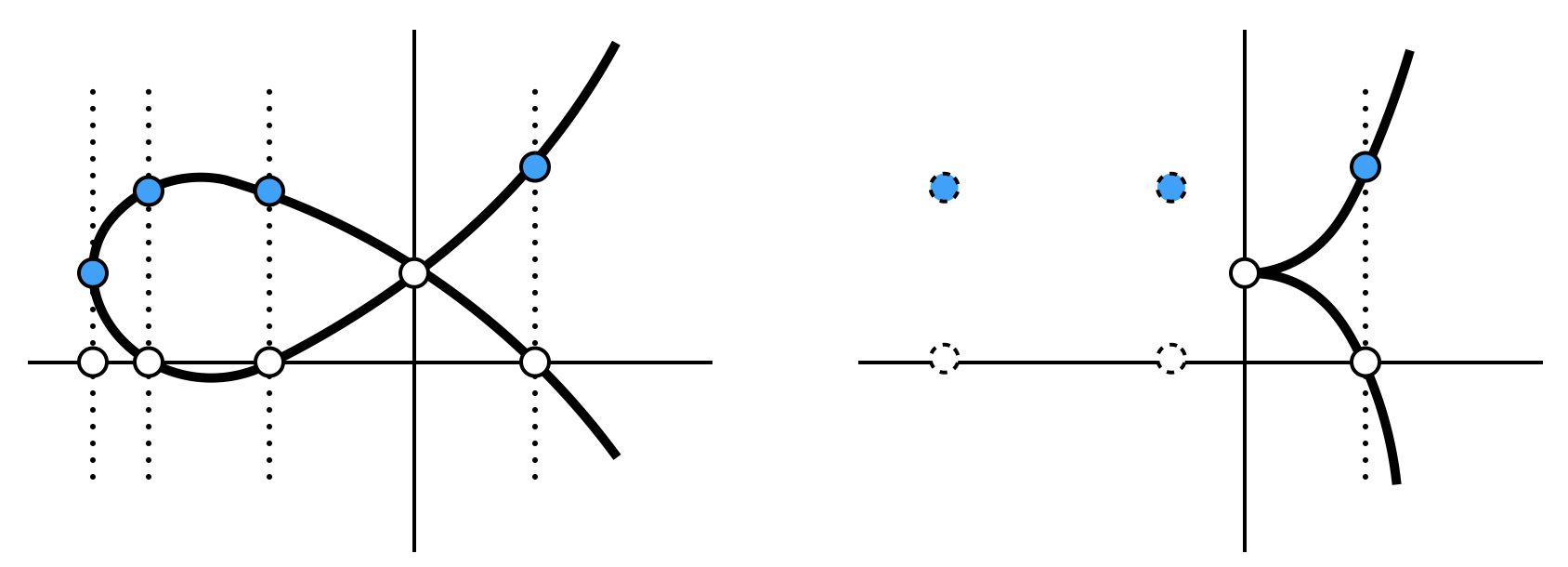}
        \caption{The very affine surface $X_z$ is the complement of a nodal cubic for generic $z$ (left). For $z = 0$, it is the complement of a cuspidal cubic (right).}
        \label{fig:nodalcubic}
    \end{figure}
    To make sense of the principal Landau determinant in this example, we think of $f$ as the graph polynomial ${\cal G}_G$ of a fictional diagram $G$. The polytope ${\rm Conv}(A)$ is the triangle with vertices $(0,0), (3,0), (0,2)$. To compute ${\rm PLD}_G({\cal E})$, we investigate the incidence varieties $Y_{G,Q}({\cal E})$ from \eqref{eq:incidence-variety} for the faces $Q$ of that triangle. One checks easily that, when $Q$ is a vertex, $Y_{G,Q}({\cal E})$ is empty. The same is true for the two-dimensional face $Q = {\rm Conv}(A)$. When $Q$ is one of the edges $(0,0)-(0,2)$, $(0,2)-(3,0)$, the equations for $Y_{G,Q}({\cal E})$ do not depend on $z$, so all its component are either empty or project dominantly to ${\cal E}$. Hence, also these faces do not contribute to ${\rm PLD}_G({\cal E})$. Finally, when $Q$ is the edge $(0,0)-(3,0)$, $Y_{G,Q}({\cal E})$ is defined by $1-(\alpha_1-z)\alpha_1^2 = -3\alpha_1^2+2z \alpha_1 = 0$. Eliminating $\alpha_1$ gives $4z^3+27 = 0$. In particular, we have $0 \notin {\rm PLD}_G({\cal E})$. 

    We have concluded that the PLD analysis does not detect the drop in the Euler characteristic for $z = 0$. Here is an ad hoc remedy for this. When $z = 0$, the nodal singularity at $(\alpha_1,\alpha_2) = (0,1)$ becomes a cusp. We now look at the incidence equations $f = \alpha_1 \frac{\partial f}{\partial \alpha_1} = \alpha_2 \frac{\partial f}{\partial \alpha_2} = 0$ in a partial compactification $\mathbb{C}^2 \supset (\mathbb{C}^*)^2$ which contains the boundary $\{\alpha_1 = 0\}$ containing the cusp. The ideal generated by these three equations in the ring $\mathbb{C}[\alpha_1,\alpha_2,z]$ has three primary components: 
    \begin{equation} \label{eq:PDcubic} \left \langle f, \alpha_1 \frac{\partial f}{\partial \alpha_1}, \alpha_2 \frac{\partial f}{\partial \alpha_2} \right \rangle \, = \, \langle \alpha_1,\alpha_2-1\rangle \cap \langle \alpha_2,3\alpha_1-2z,4z^3+27 \rangle \cap \langle z,\alpha_2-1,\alpha_1^3 \rangle. \end{equation}
    The first component projects dominantly to the parameter space ${\cal E}$, reflecting the fact that there is always (at least) a nodal singularity at $(0,1)$. The second component projects to the PLD given by $4z^3+27 = 0$. Finally, and most importantly, the third component projects to $z=0$. It is an embedded point supported on the first primary component $\langle \alpha_1, \alpha_2-1 \rangle$, indicating that for $z = 0$, the nodal singularity gets \emph{extra} singular. The same happens for $G = {\tt par}$, see Sec.~\ref{subsec:beyondstandard}. App.~\ref{sec:appendix2} presents a toric compactification which can be used systematically to detect such components. 
    \end{example}

\subsection{Different formulations}
Our Def. \ref{def:PLD} is based on a specialized GKZ analysis of the Feynman integral $I_G$. That integral is viewed as a generalized Euler integral of the type \eqref{eq:Eulerintegral}, with $n = \E$ the number of internal edges, $\ell = 1$ and $f = f_1 = {\cal U}_G + {\cal F}_G$. A different integral formula for $I_G$, called \emph{Feynman representation} establishes it as \eqref{eq:Eulerintegral} with $n = \E-1$, $\ell = 2$ and $f = (f_1,f_2) = (\overline{{\cal U}_G}, \overline{{\cal F}_G})$. Here $\overline{f}$ is the \emph{dehomogenization} of $f$, where we set $\alpha_\E = 1$. At the level of integrals, we have the following well-known result.
\begin{proposition}
Feynman integrals $I_G$ without numerators can be equivalently represented as integrals
\be
I_G = \int_{\R_+^{\E-1}} \frac{\d^{\E-1}\overline{\alpha}}{ \overline{\U_G}^{-\mu_1}\, \overline{\F_G}^{-\mu_2}}\prod_{i=1}^{\E-1} \overline{\alpha_i}^{\nu_i - 1} = \frac{\Gamma(-\mu_1 - \mu_2)}{\Gamma(-\mu_1)\Gamma(-\mu_2)} \int_{\R_+^{\E}} \frac{\d^{\E}\alpha}{(\U_G + \F_G)^{-\mu_1 - \mu_2}} \prod_{i=1}^{\E} \alpha_i^{\nu_i - 1}
\ee
up to an overall normalization $\Gamma(-\mu_2)$. Here, $\nu_i$ are the powers of propagators associated to every internal edge and $\mu_1 = -\mu_2 - \D/2$, $\mu_2 = \L \D/2 - \sum_{i=1}^{\E} \nu_i$, where $\D$ is the space-time dimension and $\L$ is the number of loops in the diagram.
\end{proposition}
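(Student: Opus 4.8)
The plan is to derive the second (Lee--Pomeransky) form from the first (Feynman parametric) form by a single Euler--Beta splitting, the first form itself being the classical Cheng--Wu representation reviewed in App.~\ref{sec:appendix} (see also \cite[Sec.~2.2]{Mizera:2021icv}). Write $\nu_{\mathrm{tot}} = \sum_{i=1}^{\E}\nu_i$, so that $-\mu_2 = \nu_{\mathrm{tot}} - \L\D/2$ and $\mu_1+\mu_2 = -\D/2$. The starting point, supplied by App.~\ref{sec:appendix}, is that $I_G$ equals $\Gamma(-\mu_2)$ times the $(\E-1)$-fold integral displayed in the proposition; this representation in turn comes from the projective parametric integral by inserting $1 = \delta\!\left(1-\sum_{i\in S}\alpha_i\right)$ and choosing $S=\{\E\}$, i.e.\ dehomogenizing via $\overline{\alpha_i} = \alpha_i/\alpha_\E$. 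I would recall here the two facts about Symanzik polynomials that drive the rest of the computation: $\U_G$ is homogeneous of degree $\L$ and $\F_G$ of degree $\L+1$ in the $\alpha$'s.

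For the second equality I would first work in the strip $\L\D/2 < \nu_{\mathrm{tot}} < (\L+1)\D/2$, i.e.\ $-\mu_1>0$ and $-\mu_2>0$, where the $(\E-1)$-fold integral converges absolutely with positive integrand on $\R_+^{\E-1}$. There I apply the Beta identity
\[
\frac{1}{A^{-\mu_1}B^{-\mu_2}} \;=\; \frac{\Gamma(-\mu_1-\mu_2)}{\Gamma(-\mu_1)\,\Gamma(-\mu_2)}\int_0^\infty \frac{t^{-\mu_2-1}\,\d t}{(A+tB)^{-\mu_1-\mu_2}}
\]
with $A = \overline{\U_G}(\overline{\alpha})$ and $B = \overline{\F_G}(\overline{\alpha})$, and swap the order of the $t$- and $\overline{\alpha}$-integrations by Tonelli's theorem. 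This rewrites the $(\E-1)$-fold integral as $\tfrac{\Gamma(-\mu_1-\mu_2)}{\Gamma(-\mu_1)\Gamma(-\mu_2)}$ times $\int_{\R_+^{\E-1}}\!\!\int_0^\infty t^{-\mu_2-1}\,(\overline{\U_G}+t\,\overline{\F_G})^{\mu_1+\mu_2}\prod_{i=1}^{\E-1}\overline{\alpha_i}^{\nu_i-1}\,\d t\,\d^{\E-1}\overline{\alpha}$.

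I would then undo the dehomogenization by setting $\alpha_\E := t$ and substituting $\overline{\alpha_i} = \alpha_i/\alpha_\E$ for $i=1,\dots,\E-1$, a block-triangular change of variables with Jacobian $\alpha_\E^{\E-1}$. Homogeneity gives $\overline{\U_G}(\alpha/\alpha_\E)+\alpha_\E\,\overline{\F_G}(\alpha/\alpha_\E) = \alpha_\E^{-\L}(\U_G+\F_G)(\alpha)$, and the remaining powers of $\alpha_\E$ come from $t^{-\mu_2-1}$, the Jacobian, and $\prod_{i<\E}\overline{\alpha_i}^{\nu_i-1}$. Collecting them, the total exponent of $\alpha_\E$ is $(-\mu_2-1)-(\E-1)-\sum_{i<\E}(\nu_i-1)-\L(\mu_1+\mu_2)$, which — using $-\mu_2 = \nu_{\mathrm{tot}}-\L\D/2$ and $\mu_1+\mu_2=-\D/2$ — telescopes exactly to $\nu_\E-1$. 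Hence the integrand becomes $(\U_G+\F_G)^{\mu_1+\mu_2}\prod_{i=1}^{\E}\alpha_i^{\nu_i-1}$ over $\R_+^{\E}$, which is the asserted Lee--Pomeransky integral with prefactor $\Gamma(-\mu_1-\mu_2)/(\Gamma(-\mu_1)\Gamma(-\mu_2))$. To finish, I would remove the auxiliary strip condition: after extracting the explicit $\Gamma$-factors, both sides are holomorphic in $(\nu_i,\D)$ on a neighbourhood of the strip — either by the standard convergence estimates for the parametric integrals, or by reading each integral as a pairing of the (regularized, cf.\ \cite[Theorem*~3.24]{matsubara2023four}) twisted cycle $\R_+^{\bullet}$ with a twisted cocycle depending holomorphically on the exponents — so equality on the strip propagates everywhere by analytic continuation.

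I expect the only genuine subtlety to be this analytic bookkeeping: the Feynman and Lee--Pomeransky integrals are literally convergent only on a thin region of parameter space, so one must either argue inside it and continue, or phrase the identity at the level of regularized twisted cycles from the outset; by contrast the change of variables and the telescoping of the $\alpha_\E$-exponent are routine once the degrees of $\U_G$ and $\F_G$ are on the table.
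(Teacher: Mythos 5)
Your proof is correct and takes essentially the same route as the paper: a single Euler--Beta (Feynman/Cheng--Wu) split of $\overline{\U_G}^{\mu_1}\overline{\F_G}^{\mu_2}$ followed by the change of variables $y=\alpha_\E$, $\overline{\alpha_i}=\alpha_i/\alpha_\E$, using homogeneity of $\U_G,\F_G$ and telescoping the $\alpha_\E$-exponent to $\nu_\E-1$. The only addition is your explicit remarks on the convergence strip and analytic continuation, which the paper leaves implicit; one small imprecision worth tidying is that the factor appearing in the measure is $\alpha_\E^{-(\E-1)}$ (the Jacobian of $(\overline{\alpha},t)\mapsto\alpha$ is $\alpha_\E^{\E-1}$, so its reciprocal enters), but your exponent bookkeeping already uses the correct sign.
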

\begin{proof}
The equality can be easily seen as an application of the identity
\be
\overline{\U_G}^{\mu_1} \overline{\F_G}^{\mu_2} = \frac{\Gamma(-\mu_1 -\mu_2)}{\Gamma(-\mu_1) \Gamma(-\mu_2)} \int_{\R_+} \d y\,   y^{-\mu_2 - 1} (\overline{\U_G} + y \overline{\F_G})^{\mu_1 + \mu_2}\, ,
\ee
followed by the change of variables
\be
y = \alpha_\E, \qquad \overline{\alpha_i} = \frac{\alpha_i}{\alpha_\E} \quad\text{for}\quad i=1,2,\ldots,\E-1\, .
\ee
To identify the factor $\alpha_\E^{\nu_\E-1}$, one needs to use $-\mu_2 - \L(\mu_1 + \mu_2) - \sum_{i=1}^{\E-1}\nu_i = \nu_\E$.
\end{proof}
In App.~\ref{sec:appendix}, we review how to also include numerator factors, which does not change any of the conclusions of our analysis.

The $\E$-dimensional very affine variety $(\mathbb{C}^*)^\E \setminus V_{(\mathbb{C}^*)^\E}({\cal G}_G)$ (see \eqref{eq:Xz}) is replaced with $(\mathbb{C}^*)^{\E-1} \setminus V_{(\mathbb{C}^*)^{\E-1}}(\overline{{\cal U}_G}\overline{{\cal F}_G})$. 
In this new setup, the GKZ paradigm would lead us to consider the principal $A$-determinant, where $A$ is the matrix \eqref{eq:cayley} with $\E+1$ rows formed from the exponents $A_1$ of $\overline{{\cal U}_G}$ and $A_2$ of $\overline{{\cal F}_G}$. In Sec.~\ref{subsec:definition}, one would replace the graph polynomial ${\cal G}_G$ by $y_1 \cdot \overline{{\cal U}_G} + y_2 \cdot \overline{{\cal F}_G}$, where $y_1$ and $y_2$ are new variables corresponding to the last two rows of $A$. We obtain $\E + 2$ critical point equations 
\[ y_1 \cdot \overline{{\cal U}_G} + y_2 \cdot \overline{{\cal F}_G} \, =  \, \partial_{\alpha_1, \ldots, \alpha_{\E-1},y_1,y_2} (y_1 \cdot \overline{{\cal U}_G} + y_2 \cdot \overline{{\cal F}_G}) \, = \,  0\]
on the torus $(\mathbb{C}^*)^{\E+1}$ with coordinates $\alpha_1, \ldots, \alpha_{\E-1}, y_1,y_2$. These are homogeneous in $y_1, y_2$, so we may dehomogenize and set $y_1 = 1, y_2 = y$. 
This section explains why this different approach would lead to the same definition. Here is the key observation.
\begin{lemma} \label{lem:varphi}
    Let ${\cal U}, {\cal F} \in \mathbb{C}[\alpha_1, \ldots, \alpha_\E]$ be homogeneous polynomials of degree $\L$ and $\L+1$ respectively. Let $\overline{\cal U}, \overline{\cal F}$ be their dehomogenizations after setting $\alpha_\E = 1$. The map
    \begin{equation} \label{eq:isomtori}
    \varphi: \, (\mathbb{C}^*)^\E \rightarrow (\mathbb{C}^*)^\E : \quad (\alpha_1, \ldots, \alpha_{\E-1}, y) \longmapsto (y\alpha_1, \ldots, y \alpha_{\E-1}, y) 
    \end{equation}
    is an isomorphism of tori which sends the hypersurface $\{ \overline{\cal U} + y \cdot \overline{\cal F} = 0 \}$ to $\{{\cal U} + {\cal F} = 0 \}$.
\end{lemma}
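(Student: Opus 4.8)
The plan is to verify the two assertions of the lemma in turn: that $\varphi$ is an isomorphism of algebraic tori, and that it matches up the two hypersurfaces. For the first point I would simply observe that $\varphi$ is a monomial map and write down its inverse explicitly. Since $y\neq 0$, the image $(y\alpha_1,\dots,y\alpha_{\E-1},y)$ indeed lies in $(\C^*)^\E$, so $\varphi$ is a well-defined morphism, and writing source coordinates as $(\alpha_1,\dots,\alpha_{\E-1},y)$ and target coordinates as $(\beta_1,\dots,\beta_\E)$ it is the group homomorphism associated with the integer matrix
\[
M \,=\, \begin{pmatrix} I_{\E-1} & 0 \\ \mathbf{1}^{\top} & 1 \end{pmatrix} \,\in\, \mathrm{GL}_\E(\Z), \qquad \det M = 1 ,
\]
whose inverse $M^{-1}$ corresponds to $(\beta_1,\dots,\beta_\E)\mapsto(\beta_1/\beta_\E,\dots,\beta_{\E-1}/\beta_\E,\beta_\E)$. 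Both maps being given by Laurent monomials, $\varphi$ is an isomorphism of tori, and there is nothing more to do here.

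For the second point I would compute the pullback $\varphi^\ast(\U+\F)$ as a Laurent polynomial on $(\C^*)^\E$ and read off its zero set. Because $\U$ is homogeneous of degree $\L$ and $\overline{\U}(\alpha_1,\dots,\alpha_{\E-1}) = \U(\alpha_1,\dots,\alpha_{\E-1},1)$, homogeneity gives $\U(y\alpha_1,\dots,y\alpha_{\E-1},y) = y^\L\,\overline{\U}(\alpha_1,\dots,\alpha_{\E-1})$; similarly, since $\F$ has degree $\L+1$, $\F(y\alpha_1,\dots,y\alpha_{\E-1},y) = y^{\L+1}\,\overline{\F}(\alpha_1,\dots,\alpha_{\E-1})$. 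Adding the two identities,
\[
(\U+\F)\bigl(\varphi(\alpha_1,\dots,\alpha_{\E-1},y)\bigr) \,=\, y^\L\,\overline{\U}(\alpha) \,+\, y^{\L+1}\,\overline{\F}(\alpha) \,=\, y^\L\bigl(\overline{\U}(\alpha)+y\,\overline{\F}(\alpha)\bigr).
\]

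Finally, $y^\L$ is a unit in the coordinate ring $\C[(\C^*)^\E]$, so the vanishing locus of $\varphi^\ast(\U+\F)$ on the source torus coincides with the vanishing locus of $\overline{\U}+y\,\overline{\F}$; together with the first point, this shows $\varphi$ restricts to an isomorphism $\{\overline{\U}+y\,\overline{\F}=0\}\xrightarrow{\,\sim\,}\{\U+\F=0\}$. I do not anticipate a genuine obstacle: the entire content is the homogeneity identity above, and the only things to keep track of are that $\overline{\U},\overline{\F}$ are polynomials in the $\E-1$ variables $\alpha_1,\dots,\alpha_{\E-1}$ while $\U,\F$ involve all $\E$ of them, and that it is exactly the degree gap between $\U$ and $\F$ that produces the extra factor $y$ in front of $\overline{\F}$.
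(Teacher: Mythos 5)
Your proof is correct and follows essentially the same route as the paper's: both show that $\varphi$ is a monomial map whose exponent matrix is unimodular (your $M$ is exactly the paper's matrix $\mathcal{Q}$), and both deduce the hypersurface claim from the homogeneity identity $\varphi^\ast(\U+\F) = y^\L(\overline{\U}+y\,\overline{\F})$. The extra details you supply — the explicit inverse of $\varphi$ and the observation that $y^\L$ is a unit so the vanishing loci agree — are correct and merely spell out what the paper leaves implicit.
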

\begin{proof}
    The map $\varphi$ is an isomorphism because the matrix of exponents 
    \begin{equation} \label{eq:Q}
    {\cal Q} \, = \, \begin{pmatrix}
        1 & 0 & 0 & \cdots & 0 & 0 \\
        0 & 1 & 0 & \cdots & 0 & 0 \\
        0 & 0 & 1 & \cdots & 0 & 0 \\
        \vdots & \vdots & \vdots & \ddots & 1 & 0 \\
        1 & 1 & 1 & \cdots & 1 & 1
        \end{pmatrix}
    \end{equation}
    has determinant $1$. The second claim follows from the fact that the pullback $\varphi^*({\cal U } + {\cal F})$ of ${\cal U} + {\cal F}$ under the map $\varphi$ equals $y^\L \cdot (\overline{\cal U} + y \cdot \overline{\cal F})$.
\end{proof}
    Notice that, in \eqref{eq:isomtori}, the first torus $(\mathbb{C}^*)^\E$ has coordinates $\alpha_1, \cdots, \alpha_{\E-1}, y$, and the second has coordinates $\alpha_1, \ldots, \alpha_\E$. To avoid confusion, we will denote these tori by $T_1 \simeq T_2 \simeq (\mathbb{C}^*)^\E$, and $\varphi: T_1 \rightarrow T_2$. 

    The linear map ${\cal Q}: \mathbb{R}^\E \rightarrow \mathbb{R}^\E$ with ${\cal Q}$ as in \eqref{eq:Q} is an isomorphism which maps $\mathbb{Z}^\E \subset \mathbb{R}^\E$ bijectively onto itself. The Newton polytope $P_2 \subset \mathbb{R}^\E$ of ${\cal U} + {\cal F}$ is mapped to the Newton polytope $P_1 \subset \mathbb{R}^\E$ of $\overline{\cal U} + y \cdot \overline{\cal F}$, and ${\cal Q}$ defines a bijection between faces of $P_2$ and faces of $P_1$. For a face $Q_2 \subset P_2$, we write $Q_1 = {\cal Q}(Q_2)$ for the corresponding face of $P_1$. Note that $P_2$ is the polytope ${\rm Conv}(A)$ in Sec.~\ref{subsec:definition}.

    Let ${\cal H}_G = \overline{{\cal U}_G} + y \cdot \overline{{\cal F}_G}$. Similar to what we did for ${\cal G}_G$ above, for each face $Q_1 \subset P_1$ we let ${\cal H}_{G,Q_1}$ be the sum of the terms of ${\cal H}_G$ whose exponents lie in $Q_1$. For each face $Q_2$ of $P_2$, we consider the incidence varieties 
      \begin{small}
    \begin{align*}
Y_{G,Q_2}({\cal E}) \, &= \, \left \{ (\alpha, z) \in T_2 \times {\cal E} \, : \, {\cal G}_{G,Q_2}(\alpha;z) \,  = \, \partial_\alpha \, {\cal G}_{G,Q_2}(\alpha;z) \, = \, 0 \right \}, \\
Y_{G,Q_1}({\cal E}) \, &= \, \left \{ (\bar{\alpha}, z) \in T_1 \times {\cal E} \, : \, {\cal H}_{G,Q_1}(\bar{\alpha};z)  \, = \,  
\partial_{\bar{\alpha}} \, {\cal H}_{G,Q_1}(\bar{\alpha};z)  \, = \, 0 \right \}.
\end{align*}
\end{small}
Here $Y_{G,Q_2}({\cal E})$ is identical to the incidence variety seen in \eqref{eq:incidence-variety}, and the derivatives appearing in the definition of $Y_{G,Q_1}({\cal E})$ are with respect to $\bar{\alpha} = (\alpha_1, \ldots, \alpha_{\E-1}, y)$. 
\begin{proposition} \label{prop:equiv}
The restriction of the map $\varphi \times {\rm id}: T_1 \times {\cal E} \rightarrow T_2 \times {\cal E}$, with $\varphi$ as in \eqref{eq:isomtori}, to $Y_{G,Q_1}({\cal E})$ is an isomorphism $Y_{G,Q_1}({\cal E}) \rightarrow Y_{G,Q_2}({\cal E})$.
\end{proposition}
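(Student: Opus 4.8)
The plan is to promote Lemma~\ref{lem:varphi} from a statement about the two hypersurfaces to a statement about every pair of \emph{corresponding face polynomials}, and then to carry the equations defining the incidence varieties through $\varphi$ after rewriting them in terms of logarithmic derivatives. First I would record how $\varphi^\ast$ acts on monomials: since $\varphi$ is the torus automorphism attached to the unimodular matrix ${\cal Q}$ of \eqref{eq:Q}, one has $\varphi^\ast(z\,\alpha^m) = z\,\bar\alpha^{\,{\cal Q}m}$, so coefficients are untouched and exponent vectors are relabelled by ${\cal Q}$. Because ${\cal Q}$ identifies the face $Q_2 \subset P_2$ with the corresponding face $Q_1 \subset P_1$ (as explained just before the statement), retaining only the terms supported on $Q_2$ and pulling back gives, for every face,
\[ \varphi^\ast({\cal G}_{G,Q_2}) \,=\, y^{\L} \cdot {\cal H}_{G,Q_1}, \]
which is simply the ``faceized'' version of the identity $\varphi^\ast({\cal U}_G + {\cal F}_G) = y^{\L}(\overline{{\cal U}_G} + y\,\overline{{\cal F}_G})$ used in the proof of Lemma~\ref{lem:varphi}. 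As $y$ is a unit on $T_1$, this already shows that pulling back the first incidence equation ${\cal G}_{G,Q_2} = 0$ yields an equation equivalent to ${\cal H}_{G,Q_1} = 0$.

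Next I would treat the derivative equations. On the torus each $\partial_{\alpha_i}$ may be traded for the logarithmic derivative $\theta_i = \alpha_i \partial_{\alpha_i}$, so $Y_{G,Q_2}({\cal E})$ is cut out on $T_2 \times {\cal E}$ by the ideal $\langle {\cal G}_{G,Q_2}, \theta_1 {\cal G}_{G,Q_2}, \ldots, \theta_{\E} {\cal G}_{G,Q_2} \rangle$, and $Y_{G,Q_1}({\cal E})$ on $T_1 \times {\cal E}$ by $\langle {\cal H}_{G,Q_1}, \bar\theta_1 {\cal H}_{G,Q_1}, \ldots, \bar\theta_{\E} {\cal H}_{G,Q_1} \rangle$ with $\bar\theta_j = \bar\alpha_j \partial_{\bar\alpha_j}$. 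The $\theta_i$ span the automorphism-invariant Lie algebra of the torus, so $\varphi^\ast$ conjugates them linearly: $\varphi^\ast \circ \theta_i = \sum_j ({\cal Q}^{-1})_{ij}\, \bar\theta_j \circ \varphi^\ast$. Feeding in the displayed identity together with the Leibniz rule $\bar\theta_j(y^{\L} h) = y^{\L}\bar\theta_j h + \delta_{j\E}\,\L\, y^{\L} h$, one gets $\varphi^\ast(\theta_i {\cal G}_{G,Q_2}) = y^{\L}\sum_j ({\cal Q}^{-1})_{ij}\,\bar\theta_j {\cal H}_{G,Q_1} + ({\cal Q}^{-1})_{i\E}\,\L\, y^{\L}{\cal H}_{G,Q_1}$. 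Since $y^{\L}$ is a unit and ${\cal Q}^{-1}$ is invertible over $\mathbb{C}$, it follows that $\varphi^\ast$ carries the first ideal onto the second. Hence $\varphi \times {\rm id}$, which is already an isomorphism of the ambient varieties $T_1 \times {\cal E} \to T_2 \times {\cal E}$, restricts to an isomorphism $Y_{G,Q_1}({\cal E}) \to Y_{G,Q_2}({\cal E})$; its inverse is the restriction of $\varphi^{-1} \times {\rm id}$, obtained by rerunning the computation with ${\cal Q}^{-1}$ in place of ${\cal Q}$.

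The monomial bookkeeping is routine; the only delicate point is the derivative step, and for two reasons: a priori $\varphi^\ast$ need not send $\{\partial_{\alpha_i} {\cal G}_{G,Q_2}\}$ into the analogous set for $Q_1$, and the cross-term $\L\, y^{\L}{\cal H}_{G,Q_1}$ produced by $\bar\theta_{\E}$ looks as though it might enlarge the ideal. Both difficulties disappear once one works with logarithmic derivatives --- so that the change of variables acts by the linear map ${\cal Q}^{-1}$ on $\theta_1, \ldots, \theta_{\E}$ --- and observes that the offending cross-term already lies in $\langle {\cal H}_{G,Q_1} \rangle$. If one prefers to avoid the ring-theoretic formulation, it suffices to check that $\varphi \times {\rm id}$ and $\varphi^{-1} \times {\rm id}$ send the two (reduced) zero loci into one another set-theoretically, using $y \neq 0$ and invertibility of ${\cal Q}^{-1}$; an isomorphism of ambient varieties restricting to a bijection between closed subvarieties is automatically an isomorphism between them.
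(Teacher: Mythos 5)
Your argument is correct and follows the paper's own route: the paper proves the statement precisely by noting that the substitution $\alpha = \varphi(\bar\alpha)$ turns the equations \eqref{eq:G} into an equivalent set \eqref{eq:H}, and leaves that verification to the reader. Your logarithmic-derivative bookkeeping — the face-level identity $\varphi^*({\cal G}_{G,Q_2}) = y^{\L}\,{\cal H}_{G,Q_1}$ plus the observation that $\varphi^*$ conjugates the $\theta_i$ by ${\cal Q}^{-1}$, with the unit $y^{\L}$ and the cross-term $({\cal Q}^{-1})_{i\E}\,\L\, y^{\L}{\cal H}_{G,Q_1}$ absorbed into $\langle {\cal H}_{G,Q_1}\rangle$ — is exactly the content of that ``checking,'' made explicit in a way that also matches the paper's high-level remark that singularity of a toric hypersurface is invariant under a change of lattice coordinates.
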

\begin{proof}
    This follows from the fact that whether a Laurent polynomial defines a singular hypersurface in a toric compactification does not depend on the choice of coordinates on the lattice. In more down to earth terms, the statement follows by checking that the substitution $\alpha = \varphi(\bar{\alpha}) = (y\alpha_1, \ldots, y\alpha_{\E-1}, y)$ in
    \begin{equation} \label{eq:G}
    {\cal G}_{G,Q_2}(\alpha;z)  \,=\, \frac{\partial {\cal G}_{G,Q_2}}{\partial \alpha_1}(\alpha;z)  \,=\,  \cdots   \,=\,\frac{\partial {\cal G}_{G,Q_2}}{\partial \alpha_\E}(\alpha;z) \, = \, 0
    \end{equation}
    leads to a new set of equations which is equivalent to 
    \begin{equation} \label{eq:H}
    {\cal H}_{G,Q_1}(\bar{\alpha};z)   \,=\,\frac{\partial {\cal H}_{G,Q_1}}{\partial \alpha_1}(\bar{\alpha};z)  \,=\,  \cdots   \,=\, \frac{\partial {\cal H}_{G,Q_1}}{\partial y}(\bar{\alpha};z) \, = \, 0.
    \end{equation}
    A point $(\bar{\alpha},z) \in T_1 \times {\cal E}$ satisfies \eqref{eq:H} if and only if $(\varphi(\bar{\alpha}),z) \in T_2 \times {\cal E}$ satisfies \eqref{eq:G}.
\end{proof}
It follows from Prop. \ref{prop:equiv} that using the Feynman representation would lead to the same definition of the principal Landau determinant, as the projections of our incidence varieties to the parameter space ${\cal E}$ are identical. 

A well-known observation is that passing from Lee--Pomeransky to Feynman representation preserves the Euler characteristic. We prove a more general result.
\begin{theorem}\label{thm:EulerChar}
    Let ${\cal F}$ and ${\cal U}$ be homogeneous polynomials in $\mathbb{C}[\alpha_1, \ldots, \alpha_\E]$ of degree $d_{\cal F}$ and $d_{\cal U}$ respectively, with $d_{\cal F} > d_{\cal U}$. Let $\overline{{\cal F}}, \overline{{\cal U}}$ be the dehomogenizations obtained by setting $\alpha_{\E} = 1$. We have 
    \[ \chi\left ( (\mathbb{C}^*)^{\E} \setminus V_{(\mathbb{C}^*)^{\E}}({\cal F} + {\cal U}) \right ) \,=\, (d_{\cal U} - d_{\cal F}) \cdot \chi \left ((\mathbb{C}^*)^{\E-1} \setminus V_{(\mathbb{C}^*)^{\E-1}}(\overline{{\cal F}}\cdot \overline{{\cal U}}) \right ). \]
\end{theorem}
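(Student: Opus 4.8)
The plan is to peel off the last torus coordinate using the isomorphism $\varphi$ of Lemma~\ref{lem:varphi}, turning the statement into a fibration computation of the kind performed in Ex.~\ref{ex:embcomp}. Throughout, $\chi$ denotes the topological Euler characteristic, which for the complex algebraic varieties in play coincides with the compactly supported Euler characteristic $\chi_c$; in particular it is additive over decompositions into locally closed pieces and multiplicative in locally trivial fibrations.

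\emph{Step 1: reduce the dimension.} I would set $k := d_{\cal F} - d_{\cal U} \geq 1$ and consider $\varphi\colon (\mathbb{C}^*)^{\E-1}\times\mathbb{C}^* \to (\mathbb{C}^*)^{\E}$, $(\bar\alpha,t)\mapsto(t\bar\alpha_1,\dots,t\bar\alpha_{\E-1},t)$. Exactly as in the proof of Lemma~\ref{lem:varphi} --- which uses only that the exponent matrix \eqref{eq:Q} has determinant $1$ together with homogeneity, not the particular degrees --- $\varphi$ is an isomorphism of tori and $\varphi^*({\cal F}+{\cal U}) = t^{d_{\cal F}}\overline{{\cal F}} + t^{d_{\cal U}}\overline{{\cal U}} = t^{d_{\cal U}}\bigl(t^{k}\overline{{\cal F}} + \overline{{\cal U}}\bigr)$. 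Since $t\neq 0$ on the torus, $\varphi$ restricts to an isomorphism of $(\mathbb{C}^*)^{\E}\setminus V_{(\mathbb{C}^*)^{\E}}({\cal F}+{\cal U})$ with $\bigl((\mathbb{C}^*)^{\E-1}\times\mathbb{C}^*\bigr)\setminus V'$, where $V'=\{(\bar\alpha,t)\colon t^{k}\overline{{\cal F}}(\bar\alpha)+\overline{{\cal U}}(\bar\alpha)=0\}$. Additivity together with $\chi\bigl((\mathbb{C}^*)^{\E-1}\times\mathbb{C}^*\bigr)=0$ then gives $\chi\bigl((\mathbb{C}^*)^{\E}\setminus V({\cal F}+{\cal U})\bigr) = -\chi(V')$.

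\emph{Step 2: compute $\chi(V')$ fibrewise.} I would project $q\colon V'\to(\mathbb{C}^*)^{\E-1}$, $(\bar\alpha,t)\mapsto\bar\alpha$, and stratify the base by the vanishing of $\overline{{\cal F}}$ and $\overline{{\cal U}}$ (both nonzero polynomials, since ${\cal F},{\cal U}$ are homogeneous and nonzero). On the three strata where at least one dehomogenization vanishes the fibre of $q$ is empty --- when $\overline{{\cal F}}(\bar\alpha)\neq0=\overline{{\cal U}}(\bar\alpha)$, because then $t^{k}\overline{{\cal F}}(\bar\alpha)=0$ forces $t=0$, and when $\overline{{\cal F}}(\bar\alpha)=0\neq\overline{{\cal U}}(\bar\alpha)$ --- or equals all of $\mathbb{C}^*$, when $\overline{{\cal F}}(\bar\alpha)=\overline{{\cal U}}(\bar\alpha)=0$; in every case the fibre has Euler characteristic $0$. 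On the open stratum $U:=(\mathbb{C}^*)^{\E-1}\setminus V_{(\mathbb{C}^*)^{\E-1}}(\overline{{\cal F}}\cdot\overline{{\cal U}})$ the fibre is $\{t\colon t^{k}=-\overline{{\cal U}}(\bar\alpha)/\overline{{\cal F}}(\bar\alpha)\}$, a set of $k$ distinct nonzero points, and over $U$ the map $q$ is a finite covering of degree $k$, unramified since for each $\bar\alpha\in U$ those $k$ roots are distinct. Additivity over the strata and multiplicativity of $\chi$ for this covering then give $\chi(V')=k\,\chi(U)$.

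\emph{Step 3: conclude.} Combining the two steps, $\chi\bigl((\mathbb{C}^*)^{\E}\setminus V({\cal F}+{\cal U})\bigr) = -\chi(V') = -k\,\chi(U) = (d_{\cal U}-d_{\cal F})\,\chi\bigl((\mathbb{C}^*)^{\E-1}\setminus V_{(\mathbb{C}^*)^{\E-1}}(\overline{{\cal F}}\cdot\overline{{\cal U}})\bigr)$, which is the claim. I do not expect a genuine obstacle: this is a precise form of the folklore statement that the Lee--Pomeransky and Feynman representations have proportional Euler characteristics, and the argument above merely pins the constant down. The two points deserving care are checking that $q$ is, over $U$, literally a locally trivial fibration --- which it is, being a finite covering of constant degree $k$ --- so that Euler characteristics genuinely multiply, and recording that the additivity over the four strata uses the compactly supported Euler characteristic, which for complex varieties agrees with the topological $\chi$ in the statement.
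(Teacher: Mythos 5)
Your proof is correct and takes essentially the same route as the paper's: both reduce to a stratified fibration over $(\mathbb{C}^*)^{\E-1}$ by the vanishing loci of $\overline{\cal F}$ and $\overline{\cal U}$, and both use additivity and multiplicativity of $\chi_c$. The only structural difference is that you apply the unimodular isomorphism $\varphi$ first and fibrate the resulting hypersurface (fibers of $k$ distinct roots), whereas the paper fibrates the complement of $\{y\overline{\cal F}+\overline{\cal U}=0\}$ (fibers $\cong\mathbb{C}^*$ minus a point) and then pulls back along a separate $(d_{\cal F}-d_{\cal U})$-to-one covering $\alpha_\E\mapsto\alpha_\E^{d_{\cal F}-d_{\cal U}}$; the two organizations yield identical bookkeeping.
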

\begin{proof}
    Let $T = (\mathbb{C}^*)^{\E}$ be the torus with coordinates $\alpha_1, \ldots, \alpha_{\E-1}, y$ and consider the hypersurface complement $T \setminus V_{T}(y \cdot \overline{{\cal F}} + \overline{{\cal U}})$. The first step is to show that this has Euler characteristic $\chi \left ((\mathbb{C}^*)^{\E-1} \setminus V_{(\mathbb{C}^*)^{\E-1}}(\overline{{\cal F}}\cdot \overline{{\cal U}}) \right )$. To this end, we decompose 
    \begin{align*} T \setminus V_{T}(y \cdot \overline{{\cal F}} + \overline{{\cal U}}) \, = & \,  \textcolor{mycolor1}{(T \setminus V_{T}(y \cdot \overline{{\cal F}} + \overline{{\cal U}})) \setminus V_{T}(\overline{{\cal F}}\cdot \overline{{\cal U}})} \\
    &\, \sqcup \, \textcolor{mycolor2}{V_{T}(\overline{{\cal F}}) \setminus V_{T}(\overline{{\cal U}})} \, \sqcup \,  \textcolor{mycolor5}{V_{T}(\overline{{\cal U}}) \setminus V_{T}(\overline{{\cal F}})}. 
    \end{align*}
    Forgetting the $y$-coordinate $(\alpha_1, \ldots, \alpha_{\E-1}, y) \mapsto (\alpha_1, \ldots, \alpha_{\E-1})$ gives three maps 
    \begin{align*}
        \textcolor{mycolor1}{(T \setminus V_{T}(y \cdot \overline{{\cal F}} + \overline{{\cal U}})) \setminus V_{T}(\overline{{\cal F}}\cdot \overline{{\cal U}})} & \, \longrightarrow \, (\mathbb{C}^*)^{\E-1} \setminus V_{(\mathbb{C}^*)^{\E-1}}(\overline{{\cal F}}\cdot \overline{{\cal U}}), \\
        \textcolor{mycolor2}{V_{T}(\overline{{\cal F}}) \setminus V_{T}(\overline{{\cal U}})} & \, \longrightarrow \, V_{(\mathbb{C}^*)^{\E-1}}(\overline{{\cal F}}) \setminus V_{(\mathbb{C}^*)^{\E-1}}(\overline{{\cal U}}), \\
        \textcolor{mycolor5}{V_{T}(\overline{{\cal U}}) \setminus V_{T}(\overline{{\cal F}})} & \, \longrightarrow  \, V_{(\mathbb{C}^*)^{\E-1}}(\overline{{\cal U}}) \setminus V_{(\mathbb{C}^*)^{\E-1}}(\overline{{\cal F}}). 
    \end{align*}
    Each of these maps is a fibration, with fibers isomorphic to $\mathbb{C}^* \setminus \{ y + 1 = 0 \}$, $\mathbb{C}^*$ and $\mathbb{C}^*$ respectively, with Euler characteristics \textcolor{mycolor1}{$-1$}, \textcolor{mycolor2}{$0$} and \textcolor{mycolor5}{$0$}. We conclude that 
    \begin{align*}
        \chi \left (  T \setminus V_{T}(y \cdot \overline{{\cal F}} + \overline{{\cal U}})\right) & = \chi \left ( \textcolor{mycolor1}{(T \setminus V_{T}(y \cdot \overline{{\cal F}} + \overline{{\cal U}})) \setminus V_{T}(\overline{{\cal F}}\cdot \overline{{\cal U}})} \right ) \\
        & ~~ + \chi \left (  \textcolor{mycolor2}{V_{T}(\overline{f}) \setminus V_{T}(\overline{{\cal U}})} \right ) + \chi \left ( \textcolor{mycolor5}{V_{T}(\overline{{\cal U}}) \setminus V_{T}(\overline{{\cal F}})} \right ) \\
        & = \textcolor{mycolor1}{-1} \cdot \chi \left ( (\mathbb{C}^*)^{\E-1} \setminus V_{(\mathbb{C}^*)^{\E-1}}(\overline{{\cal F}}\cdot \overline{{\cal U}}) \right ) \\
        & ~~ + \textcolor{mycolor2}{0} \cdot \chi( V_{(\mathbb{C}^*)^{\E-1}}(\overline{{\cal F}}) \setminus V_{(\mathbb{C}^*)^{\E-1}}(\overline{{\cal U}})) + \textcolor{mycolor5}{0} \cdot \chi( V_{(\mathbb{C}^*)^{\E-1}}(\overline{{\cal U}}) \setminus V_{(\mathbb{C}^*)^{\E-1}}(\overline{{\cal F}})) \\
        & = - \chi \left ( (\mathbb{C}^*)^{\E-1} \setminus V_{(\mathbb{C}^*)^{\E-1}}(\overline{{\cal F}}\cdot \overline{{\cal U}}) \right ).
    \end{align*} 
    We now consider a different torus $\tilde{T} = (\mathbb{C}^*)^{\E}$ with coordinates $\alpha_1, \ldots, \alpha_\E$. The map $\tilde{T} \rightarrow T$ given by 
    \[ (\alpha_1, \ldots, \alpha_\E) \longmapsto \left( \frac{\alpha_1}{\alpha_\E}, \ldots, \frac{\alpha_{\E-1}}{\alpha_\E}, \alpha_\E^{d_{\cal F}-d_{\cal U}} \right ) \]
    sends $\tilde{T} \setminus V_{\tilde{T}}({\cal F} + {\cal U})$ to $T \setminus V_{T}(y \cdot \overline{{\cal F}} +  \overline{{\cal U}})$. Fibers consist of $d_{\cal F} - d_{\cal U}$ points. Hence, 
    \begin{align}
    \chi(\tilde{T} \setminus V_{\tilde{T}}({\cal F} + {\cal U})) \, &= \, (d_{\cal F}-d_{\cal U}) \cdot \chi(T \setminus V_{T}(y \cdot \overline{{\cal F}} +  \overline{{\cal U}})) \\ &= \, (d_{\cal U}-d_{\cal F}) \cdot \chi \left ( (\mathbb{C}^*)^{\E-1} \setminus V_{(\mathbb{C}^*)^{\E-1}}(\overline{{\cal F}}\cdot \overline{{\cal U}}) \right )\, .\qedhere
    \end{align} 
\end{proof}
Note that, when the polynomials ${\cal U}$ and ${\cal F}$ depend on parameters $z$, the Euler discriminants of the hypersurfaces defined by ${\cal U}(\alpha;z) + {\cal F}(\alpha;z)$ and $\overline{{\cal U}}(\alpha;z) \cdot \overline{{\cal F}}(\alpha;z)$ coincide. We point out that \cite[Lem.~48]{Bitoun:2017nre} is a special instance of Thm.~\ref{thm:EulerChar}. 

\subsection{Beyond the standard classification} \label{subsec:beyondstandard}

In this section, we make a comparison between the principal Landau determinant and a textbook formulation of Landau equations. In particular, we explain how our classification of singularities is different from that usually employed in the literature.

Let us first consider the case with all internal massive edges, $\m_i \neq 0$, which closely matches with the standard formulation \cite{Eden:1966dnq,nakanishi1971graph}. Recall that for any connected subdiagram $\gamma \subset G$, the result of substituting $\alpha_e \to \epsilon \alpha_e$ for every $e \in \gamma$ is 
\begin{equation}\label{eq:UF-limits}
\begin{matrix}
\U_G\big|_{\alpha_\gamma \rightarrow \epsilon \alpha_\gamma}  \,=\, \epsilon^{\L_\gamma}\, \U_\gamma\, \U_{G/\gamma} \,+\, {\cal O}(\epsilon^{\L_\gamma + 1})\, , \quad \F_{G}\big|_{\alpha_\gamma \rightarrow \epsilon \alpha_\gamma} \,=\, \epsilon^{\L_\gamma}\, \U_{\gamma}\, \F_{G/\gamma} \,+\, {\cal O}(\epsilon^{\L_\gamma + 1})\, ,\smallskip\\
{\cal G}_G\big|_{\alpha_\gamma \rightarrow \epsilon \alpha_\gamma} \, = \, \epsilon^{\L_\gamma}\, \U_{\gamma}\,{\cal G}_{G/\gamma} \,+\, {\cal O}(\epsilon^{\L_\gamma + 1})
\end{matrix}
\end{equation}
where $G/\gamma$ denotes the \emph{reduced diagram} obtained from $G$ by contracting all the edges in $\gamma$ and identifying all the vertices in $\gamma$.
The above assumption on the masses implies that the right-hand sides have at least one non-vanishing monomial at order $\epsilon^{\L_\gamma}$.
The proof is standard, see, e.g., \cite[Prop.~4]{Mizera:2021icv}.  This result allows us to label facets by subgraphs $\gamma$. Let $\mathbf{w}_\gamma$ be the weight vector whose entries are $1$ for every edge $i \in \gamma$ and $0$ otherwise:
\be
\mathbf{w}_\gamma \,=\, \sum_{i \in \gamma} e_i\, ,
\ee
where $e_i$ is the $i$-th basis vector in $\R^{\E}$. We think of these as vectors in the normal fan of the Newton polytope of ${\cal G}_G$. They select a face of that Newton polytope by minimizing the scalar product. The face $Q$ corresponding to $\mathbf{w}$ is the Newton polytope of the corresponding \emph{initial form}, denoted by ${\rm in}_{\bf w}({\cal G}_G) = {\cal G}_{G,Q}$. We will consider the cases $\mathbf{w} = 0, \mathbf{w}_\gamma$, $-\mathbf{w}_\gamma$, and $-\mathbf{w}_G$ to match the types of singularities studied in the literature.

\paragraph{Leading second-type singularities.}
Firstly, the dense face (the interior of the polytope) corresponds to $\mathbf{w} = 0$.  The incidence variety \eqref{eq:incidence-variety}, is defined by the equations
\be
\G_G(\alpha;z) \,=\, \partial_{\alpha_e} \G_G(\alpha;z) \,=\, 0 \text{ for all } e\in G
\ee
for $\alpha \in (\C^\ast)^\E$.
The corresponding components in ${\rm PLD}_G$ are known in the literature as \emph{leading} (also called \emph{pure}) \emph{second-type singularities} \cite{Cutkosky:1960sp,doi:10.1063/1.1724262}.

\paragraph{Leading first-type singularities.}
The simplest nonzero weight vector is $-\mathbf{w}_G = (-1, \ldots, -1)$. Since the homogeneity degree of $\F_G$ in the $\alpha$'s is one higher than that of $\U_G$, only the monomials in the first polynomial survive in the initial form:
\be
\mathrm{in}_{-\mathbf{w}_G}(\G_G) \,=\, \F_{G}\, .
\ee
The corresponding incidence variety is defined by the equations
\be\label{eq:leading-first-type}
\partial_{\alpha_e} \F_G(\alpha;z) \,=\, 0 \text{ for all } e\in G\, 
\ee
for $\alpha \in (\C^\ast)^\E$.
Since $\F_G$ is homogeneous in $\alpha$, the equation $\F_G = 0$ would be redundant and hence does not need to be written down. If we additionally impose the inequality $\U_G \neq 0$, these would give what are known in the literature as \emph{leading singularities} (of the first type) \cite{Bjorken:1959fd,Landau:1959fi,10.1143/PTP.22.128}, later formalized as the \emph{Landau discriminant} \cite{Mizera:2021icv}.

\paragraph{Subleading second-type singularities.}

For the weights $\mathbf{w}_\gamma$, applying the factorization properties \eqref{eq:UF-limits} gives
\be\label{eq:subleading-second-type}
\mathrm{in}_{\mathbf{w}_\gamma}(\G_G) \,=\, \U_\gamma\, \G_{G/\gamma}\, .
\ee
Recall that the variables $\alpha$ in $\gamma$ and $G/\gamma$ are disjoint. There are several components and we first consider the case $\G_{G/\gamma} \neq 0$. The system of equations defining the incidence variety involves $
\partial_{\alpha_e} \U_{\gamma} = 0 \text{ for all }e\in \gamma $
for $\alpha \in (\C^\ast)^\E$ (once again, $\U_\gamma = 0$ is redundant). Since none of the equations depends on the kinematic variables $z$, it either gives no solutions or a dominant component that we discard from the PLD. Hence we need $\G_{G/\gamma} = 0$ for non-trivial solutions. In this case, the system is
\begin{equation}
\partial_{\alpha_e} \G_{G/\gamma}(\alpha;z) \,=\, 0 \text{ for all }e\in G/\gamma 
\end{equation}
for $\alpha \in (\C^\ast)^\E$. Note that the variables $\alpha_e$ for $e \in \gamma$ do not appear and hence are unconstrained. This is the same system of equations as for the dense face, but for the diagram $G/\gamma$ instead of $G$. In the literature, these are referred to as \emph{subleading singularities of the second type} (also called \emph{mixed second type}) \cite{Drummond1963,Boyling1968}.  

\paragraph{Subleading first-type singularities.}
Finally, let us consider the weights $-\mathbf{w}_\gamma$. Using homogeneity properties of $\U_G$ and $\F_G$, we find
\be
\mathrm{in}_{-\mathbf{w}_\gamma}(\G_G)\,=\, \U_{\gamma^c} \F_{G/\gamma^c}\,.
\ee
Here, $\gamma^c$ denotes the complement of $\gamma$ in $G$. The analysis is entirely analogous to the case \eqref{eq:subleading-second-type}. The solutions with $\F_{G/\gamma^c} \neq 0$ can be discarded. We are hence left with
\be
\partial_{\alpha_e} \F_{G/\gamma^c} (\alpha;z) \,=\, 0 \text{ for all } e \in G/\gamma^c
\ee
for $\alpha \in (\C^\ast)^\E$. This is the same system of equations as \eqref{eq:leading-first-type}, except with $G/\gamma^c$ instead of $G$. Solutions of such equations are known as \emph{subleading singularities of the first kind}.\\

\label{subsec:parachute}

One of the simplest examples of Landau singularities is associated to the parachute diagram $G = \mathtt{par}$ illustrated in Fig.~\ref{fig:diagrams}k. In order to make it more interesting and conform to the above assumptions, we will make all the masses distinct and non-zero. The kinematic space $\cal K$ is therefore parametrized by $(s, \M_3, \M_4, \m_1, \m_2, \m_3, \m_4) \in \C^{7}$. The graph polynomial is $\G_{\tt par} = \U_{\tt par} + \F_{\tt par}$ with
\begin{align}
\U_{\tt par} &\,=\, (\alpha_1+\alpha_2)(\alpha_3 + \alpha_4) + \alpha_3 \alpha_4,\\
\F_{\tt par} &\,=\, s \alpha_1 \alpha_2 (\alpha_3 + \alpha_4) + \M_3 \alpha_1 \alpha_3 \alpha_4 + \M_4 \alpha_2 \alpha_3 \alpha_4 - (\m_1 \alpha_1 + \m_2 \alpha_2 + \m_3 \alpha_3 + \m_4 \alpha_4) \U_{\tt par}.\nn
\end{align}
The rays of the normal fan of $\Newt(\G_{\tt par})$ index its facets. They are given by
\begin{gather}
\big\{
(-1, -1, -1, -1), (1, 0, 0, 0), (0, 1, 0, 0), (0, 0, 1, 0), (0, 0, 0, 1), \\
(0, 0, 1, 1), (1, 1, 1, 0), (1, 1, 0, 1), (1, 1, 1, 1)  \big\}.
\end{gather}
The $f$-vector is $(15, 33, 27, 9)$. Hence, together with the dense face, there are in total $85$ systems of equations to solve. Note that this is much larger than the naive counting $2^4 = 16$ (each edge collapsed or not) of reduced diagrams. Let us first discuss a few interesting cases that lead to discriminants with degree larger than $1$. We will keep using the notation $\m_i = m_i^2$ whenever convenient.

\paragraph*{Weight $\mathbf{(-1,-1,0,0)}$.}
The weight vector $(-1,-1,0,0)$ lies in the relative interior of the 2-dimensional cone generated by $(-1,-1,-1,-1)$ and $(0,0,1,1)$. Its initial form is
\be
\mathrm{in}_{(-1,-1,0,0)}(\G_{\tt par}) \,=\, (\alpha_3 + \alpha_4) [ (s - \m_1 - \m_2) \alpha_1 \alpha_2 - \m_1 \alpha_1^2 - \m_2 \alpha_2^2 ]\, . 
\ee
It gives rise to the incidence variety carved out by the following set of equations:
\begin{subequations}
\begin{align}
(\alpha_3 + \alpha_4) [ (s - \m_1 - \m_2) \alpha_1 \alpha_2 - \m_1 \alpha_1^2 - \m_2 \alpha_2^2 ] &\,=\, 0, \\
(s - \m_1 - \m_2) \alpha_1 \alpha_2 - \m_1 \alpha_1^2 - \m_2 \alpha_2^2 &\,=\, 0, \label{eq:mmpp-2}\\
(\alpha_3 + \alpha_4)[(s - \m_1 - \m_2) \alpha_2 - 2\m_1 \alpha_1] &\,=\, 0,\\
(\alpha_3 + \alpha_4)[(s - \m_1 - \m_2) \alpha_1 - 2\m_1 \alpha_2] &\,=\, 0.
\end{align}
\end{subequations}
The incidence variety has $2$ components, both of dimension $9$. The first one has equation $\alpha_3 + \alpha_4 = 0$ and \eqref{eq:mmpp-2} and hence projects dominantly to the kinematic space. It has the physical interpretation of a UV sub-divergence associated to shrinking the bubble subdiagram. The second component is seen by setting $\alpha_3 + \alpha_4 \neq 0$. It projects to codimension $1$ in the kinematic space, giving the discriminant:
\be\label{eq:Epar-1}
E_{\mathtt{par},1}({\mathcal K}) \,=\, \lambda(s,\m_1,\m_2) \,=\, [s - (m_1 + m_2)^2] [s - (m_1 - m_2)^2]\, ,
\ee
where $\lambda$ is the K\"all\'en function \eqref{eq:Kallen}. It corresponds to a $3$-dimensional fiber with $(\alpha_1 : \alpha_2) = (1/m_1 : \pm 1/m_2)$ and any $\alpha_3, \alpha_4 \in \C^\ast$, see \cite[Sec.~2.6]{Mizera:2021icv}. These components are called normal and pseudo-normal thresholds in the $s$-channel. 

In \cite[Sec.~6.4]{Berghoff:2022mqu}, the authors identify a component \cite[Eq.~(6.15)]{Berghoff:2022mqu} of what they call the \emph{Landau variety} (see also \cite{Landshoff1966,doi:10.1063/1.1724262}). In our notation, this component is
\begin{equation} \label{eq:panzerberghoff} 
F_{\mathtt{par}} \,=\, s(\M_4-\m_1)(\M_3-\m_2) - (\m_1 \M_3 - \m_2 \M_4)(\m_2-\m_1 + \M_4-\M_3) \, = \, 0.
\end{equation}
Unfortunately, this is \emph{not} part of the principal Landau determinant. The reason is similar to what we saw in Ex.~\ref{ex:embcomp}. To analyze this in more detail, we made the simplification $(\M_3,\M_4,\m_1,\m_2,\m_3,\m_4) = (1, 1, 2, 3, 1, 2)$. We modify the incidence equations as follows:  
\be
\alpha_2^{-2} \alpha_4 \cdot \G_{\tt par}(\alpha;z) = \alpha_2^{-2} \alpha_4 \cdot \alpha_e \cdot \partial_{\alpha_e} \G_{\tt par}(\alpha;z) = 0 \text{ for all } e\in {\tt par}.
\ee
This does not change the solutions in $(\mathbb{C}^*)^{\E}$. Then, we apply the invertible change of coordinates $(\alpha_1,\alpha_2,\alpha_3,\alpha_4) = (y_1y_4^{-1}, y_4^{-1}, y_2y_3y_4^{-1}, y_2 y_4^{-1})$. This leads to five polynomials in $\mathbb{C}[s][y_1,y_2,y_3,y_4]$.
For the reader who is familiar with toric geometry, we are expressing the incidence equations in coordinates on a copy of $\mathbb{C}^4$ inside the toric variety associated to ${\rm Newt}({\cal G}_{\tt par})$. More precisely, $y_1,\ldots,y_4$ are coordinates on the affine piece of this projective toric variety corresponding to the smooth vertex $(0,2,0,1)$ \cite[Sec.~3.5]{telen2022introduction}.

Our two-dimensional cone coming from weight $(-1,-1,0,0)$ now corresponds to the coordinate subspace $\{y_2 = y_4 = 0 \}$. The ideal generated by our equations in $\mathbb{C}[s][y_1,y_2,y_3,y_4]$ has six primary components, of which we display the following two: 
\begin{align*}
    P_1 &\,=\, \langle \, y_3 + 1, y_2, 2y_1^2 - y_1y_4 - sy_1 + 5y_1 - y_4 + 3 \,\rangle, \\
    P_2 &\,=\, \langle 2s + 1, y_4, (y_3+1)^2, 2y_2 + 5y_3 + 5, y_1 - y_3 + 1 \rangle. 
\end{align*}
The first component $P_1$ was identified above: it projects dominantly to $s$-space and is contained in $\alpha_3 + \alpha_4 = 0$ (in our old coordinates). The second primary component $P_2$ is an embedded component of $P_1$, which is contained in $\{y_2 = y_4 = 0 \}$ and projects to $\{2s + 1 = 0\}$. One checks that this is the component \eqref{eq:panzerberghoff} identified by Berghoff and Panzer, specialized to our choice of masses. Like in Ex. \ref{ex:embcomp}, we analyzed primary components of the incidence equations, extended to a partial compactification containing the locus of the singularity, in this case $\{y_2 = y_4 = 0 \}$. 

\paragraph*{Weights $\mathbf{(-1,-1,1,0)}$ and $\mathbf{(-1,-1,0,1)}$.}
Similarly, one obtains the contributions coming from the three-dimensional cones constructed by adding either the ray $(0,0,1,0)$ or $(0,0,0,1)$ to the above two-dimensional cone. Due to symmetry, it is enough to look at the first case, which has weight $(-1,-1,1,0)$ and the initial form:
\be
\mathrm{in}_{(-1,-1,1,0)}(\G_{\tt par}) \,=\, \alpha_4 [ (s - \m_1 - \m_2) \alpha_1 \alpha_2 - \m_1 \alpha_1^2 - \m_2 \alpha_2^2 ]\, . 
\ee
The discussion is parallel to the previous case and hence these faces also lead to the discriminant $E_{\mathtt{par},1}({\cal K})$ from \eqref{eq:Epar-1}. However, physically, it comes from the Schwinger proper times expanding and contracting at different relative rates according to the weights $(-1,-1,1,0)$. This phenomenon was previously observed in \cite{Landshoff1966} in a more ad-hoc manner.

\paragraph*{Weights $\mathbf{(-1,0,-1,-1)}$ and $\mathbf{(0,-1,-1,-1)}$.}
The two-dimensional cone generated by $-(1,1,1,1)$ and $(0,1,0,0)$ contains the weight $-(1,0,1,1)$. The initial form~is
\be
\mathrm{in}_{(-1,0,-1,-1)}(\G_{\tt par}) \,=\, \M_4 \alpha_1 \alpha_3 \alpha_4 - (\m_1 \alpha_1 + \m_3 \alpha_3 + \m_4 \alpha_4)(\alpha_1 \alpha_3 + \alpha_1 \alpha_4 + \alpha_3 \alpha_4)\, .
\ee
It corresponds to the reduced diagram obtained by shrinking the edge with the Schwinger parameter $\alpha_2$. Hence, the equations are analogous to those appearing in Ex.~\ref{ex:banana-generic} for the banana diagram ${\tt B}_3$ with modified kinematics. It has two components that contribute to the principal Landau determinant:
\begin{align}
E_{\mathtt{par},2}({\mathcal K}) \,=\, &\,[\M_4 - (m_1 + m_3 + m_4)^2][\M_4 - (m_1 + m_3 - m_4)^2]\\
&\cdot[\M_4 - (m_1 - m_3 + m_4)^2][\M_4 - (m_1 - m_3 - m_4)^2],
\end{align}
and $\{ \M_4 = 0 \}$. The former comes from the one-dimensional fiber with $(\alpha_1 : \alpha_3 : \alpha_4) = (1/m_1 : \pm 1/m_3 : \pm 1/m_4)$ and $\alpha_2 \in \C^\ast$ and is the subleading first-type singularity, also known as the normal and pseudo-normal thresholds in the $\M_3$-channel. The latter is associated to a one-dimensional fiber determined by $\{ \m_1 \alpha_1 + \m_3 \alpha_3 + \m_4 \alpha_4 = 0 \} \cap \{\alpha_1 \alpha_3 + \alpha_1 \alpha_4 + \alpha_3 \alpha_4 = 0\}$ and is the subleading second-type singularity.

By symmetry, an analogous computation for the weight $(0,-1,-1,-1)$ leads to 
\be
E_{\mathtt{par},3}({\mathcal K}) \,=\, E_{\mathtt{par},2}({\mathcal K}) \Big|_{\m_1 \to \m_2, \M_4 \to \M_3}
\quad \text{and} \quad \M_3 = 0.
\ee

\paragraph*{Weight $\mathbf{(-1,-1,-1,-1)}$.}
The one-dimensional cone given by the ray $(-1,-1,-1,-1)$ has the initial form
\be
\mathrm{in}_{(-1,-1,-1,-1)}(\G_{\tt par}) \,=\, \F_{\tt par}.
\ee
This is the leading singularity. One of its components is an irreducible variety of degree $6$ in $\cal K$ given by:
\begin{align}
E_{\mathtt{par},4}({\mathcal K}) \,=\, \big(&m_1^2 \M_3 s+m_3^2 \M_3 s+m_4^2 \M_3 s+2 m_3 m_4 \M_3 s+m_2^2 \M_4 s+m_3^2 \M_4 s+m_4^2 \M_4
   s\\
   &+2 m_3 m_4 \M_4 s+m_1^4 + \M_3 m_1^2 \M_3^2+m_2^2 m_1^2 \M_3+m_3^2 m_1^2
   \M_3+m_4^2 m_1^2 \M_3\\
   &+2 m_3 m_4 m_1^2 \M_3+m_2^2 m_1^2 \M_4-m_3^2 m_1^2 \M_4-m_4^2 m_1^2
   \M_4-2 m_3 m_4 m_1^2 \M_4 \\
   &+m_1^2 \M_3 \M_4-m_2^2 \M_4^2-m_2^2 m_3^2 \M_3-m_2^2 m_4^2 \M_3-2
   m_2^2 m_3 m_4 \M_3-m_2^4 \M_4 \\
   &+m_2^2 m_3^2 \M_4+m_2^2 m_4^2 \M_4+2 m_2^2 m_3 m_4
   \M_4+m_2^2 \M_3 \M_4-m_3^2 s^2-m_4^2 s^2\\
   &-2 m_3 m_4 s^2-m_2^2 m_1^2 s+m_3^2 m_1^2
   s+m_4^2 m_1^2 s+2 m_3 m_4 m_1^2 s-m_3^4 s-m_4^4 s\\
   &-4 m_3 m_4^3 s+m_2^2 m_3^2 s+m_2^2
   m_4^2 s-6 m_3^2 m_4^2 s-4 m_3^3 m_4 s+2 m_2^2 m_3 m_4 s\\
   &-\M_3 \M_4 s \big)( m_3 \leftrightarrow m_4 ),\label{eq:par-leading-disc}
\end{align}
where the factor $( m_3 \leftrightarrow m_4 )$ is given by switching the variable $m_3$ with $m_4$ in the first factor. It has zero-dimensional fibers, i.e., the Schwinger parameters are localized to points.
As before, we used $\m_i = m_i^2$, in terms of which the discriminant factors into two irreducible components. Another component is of degree $2$ and given by
\be\label{eq:par-leading-disc2}
E_{\mathtt{par},5}({\mathcal K}) \,=\, \lambda(s,\M_3,\M_4),
\ee
which also comes from a $0$-dimensional fiber.

\paragraph*{Weight $\mathbf{(0,0,0,0)}$.}
Finally, the contribution from the dense face has the initial form equal to the graph polynomial itself:
\be
\mathrm{in}_{(0,0,0,0)}(\G_{\tt par}) \,=\, \G_{\tt par}.
\ee
It has a one-dimensional fiber which also projects down to the component $E_{\mathtt{par},5}({\mathcal K})$. It is the leading second-type singularity.

\paragraph*{Principal Landau determinant.}
The remaining discriminants can be analyzed in an analogous fashion and yield only degree-$1$ or empty components. As a result of this:
\be
E_{\mathtt{par}}({\mathcal K}) \,=\, \m_1 \m_2 \m_3 \m_4 \M_3 \M_4 s \prod_{i=1}^{5} E_{\mathtt{par},i}({\mathcal K}).
\ee
Likewise, after including the component \eqref{eq:panzerberghoff}, the Euler discriminant is given by
\be
\nabla_{\chi}({\cal K}) \,=\, \{ E_{\text{par}}({\cal K}) F_{\text{par}} \,=\, 0 \} \, .
\ee
We verified that \eqref{eq:panzerberghoff} is the only extra factor by running \texttt{cgReduction} \cite{Panzer:2014caa} and collecting all candidate components on which the signed Euler characteristic drops, see App.~\ref{sec:appendixHyperInt} for details.
Subsets and special cases of these Landau singularities were also found in \cite{Landshoff1966,doi:10.1063/1.1724262,Lairez:2022zkj,Berghoff:2022mqu,Hannesdottir:2022xki}.

For diagrams involving massless particles, the face structure of $\mathrm{Newt}(\G_G)$ may be drastically different, because the factorization implied by \eqref{eq:UF-limits} changes in such cases. More specifically, for a given $\gamma$, we might encounter $\F_{G/\gamma} = 0$ and $\mathcal{O}(\epsilon^{\L_\gamma+1})$ terms are needed to understand the leading behavior. Physically, such situations are associated with infrared (IR) divergences \cite{Arkani-Hamed:2022cqe}. Let us illustrate it on the parachute example from the previous subsection.
This time, we consider the kinematic subspace
\be
{\cal E} \,=\, {\cal K} \cap \{ s = \m_1 = \m_2 = 0 \}.
\ee
It means that $\G_{\tt par}({\cal E}) \,=\, \U_{\tt par} + \F_{\tt par}({\cal E})$ with
\be
\F_{\tt par}({\cal E}) \,=\, \M_3 \alpha_1 \alpha_3 \alpha_4 \,+\, \M_4 \alpha_2 \alpha_3 \alpha_4 \,-\, (\m_3 \alpha_3 + \m_4 \alpha_4) \, \U_{\tt par}.
\ee
The resulting polytope $\Newt(\G_{\tt par}({\cal E}))$ has fewer faces compared to the generic-mass case and its $f$-vector is $(11,23,19,7)$. The principal Landau determinant reads
\be
E_{\tt par}({\cal E}) \,=\, \m_3 \m_4 \M_3 \M_4 (\M_3 - \M_4) \lambda(\M_3, \m_3, \m_4) \lambda(\M_4, \m_3, \m_4).
\ee
Dominant components have been filtered out according to Def.~\ref{def:PLD}. We verified that $\nabla_{\chi}(\mathcal{E}) = \{ E_{\text{par}}(\mathcal{E}) = 0\}$, i.e., PLD does not miss any components in this case.

\section{One-loop and banana diagrams}\label{sec:4}

In this section, we consider the application to the simplest examples of one-loop diagrams with $n$ external legs and banana diagrams with $\E$ internal edges. Singularities of Feynman integrals belonging to these families are well-known, see, e.g., \cite{nakanishi1971graph}. The purpose of the forthcoming discussion is to demonstrate how to phrase this analysis in terms of principal Landau determinants.

\subsection{One-loop diagrams}

For the family of one-loop diagrams ${\tt A}_n$ with $n$ external legs, $\E$ internal edges with $n=\E$, and generic masses, illustrated in \cite[Fig.~1a]{Mizera:2021icv}, the Symanzik polynomials are
\begin{equation}\label{eq:UFoneloop} 
{\cal U}_{{\tt A}_n} \,=\, \alpha_1+\dots+\alpha_n, \quad {\cal F}_{{\tt A}_n} \,=\, \sum_{i<j} (s_{i,i+1,\dots,j-1}-\m_i - \m_j) \, \alpha_i \alpha_j - \sum_{i=1}^n \m_i \alpha_i^2.
\end{equation}
In the special case $j=i+1$, $s_{i}=\M_i$. The $s = n+\binom{n+1}{2}$ coefficients of the polynomial ${\cal G}_{{\tt A}_n} \coloneqq {\cal U}_\An + {\cal F}_\An$ are either constants, or linear functions of $s_{i,i+1,\dots,j-1},\m_i$ that parameterize the kinematic space ${\cal K}\subset\C^s$. We will write $A_n$ for the integer matrix of size $n\times s$ with columns given by the exponents of ${\cal G}_\An$. As explained in Sec.~\ref{sec:2}, the polynomial ${\cal G}_\An$ defines a hypersurface $V_{A_n,(s_{i,i+1,\dots,j-1},\m_i)}$ in the algebraic torus $(\C^*)^n$ when fixing the coefficients $s_{i,i+1,\dots,j-1},\m_i$. To simplify the notation, we will drop the subscript $(s_{i,i+1,\dots,j-1},\m_i)$ when referring to this variety.

In what follows, we study the principal Landau determinant $E_{{\tt A}_n}({\cal E})$ associated to one-loop diagrams ${\tt A}_n$, when restricting to the subspaces ${\cal E}$ of the kinematic space ${\cal K}\subset \C^s$ introduced in Sec.~\ref{sec:2}. 
In particular, we show that none of the factors of the principal $A$-determinant vanishes identically when substituting parameters in $\{{\cal K},{\cal E}^{(\M_i,0)}\}$. Furthermore, we conjecture that the same statement holds for ${\cal E}^{(0,0)}$. Geometrically, this means that intersecting the principal $A$-determinant variety with the kinematic space results in a proper subvariety of both. In accordance with Thm.~\ref{thm:eulercharvolume}, this behaviour is predicted from the computations of the Euler characteristic of the variety $V_{\An}({\cal E})$ and the normalized volume of the polytope $\text{Conv}(A_n({\cal E}))$ shown in Tab.~\ref{tab:ngon}.
\begin{table}[t]
\begin{center}
\begin{tabular}{ c | c | c | c | c}
   $\quad$ &  ${\cal K}$ & ${\cal E}^{(\M_i,0)}$ & ${\cal E}^{(0,\m_e)}$ & ${\cal E}^{(0,0)}$ \\
  \hline
$(-1)^n\cdot\chi(V_{\An}({\cal E}))$ & $2^n-1$ & $2^n-1-n$ & $\bm{2^n-1-n}$ & $\bm{2^n-1+n-n^2}$\\
$\text{vol}(A_n({\cal E}))$ & $2^n-1$ & $2^n-1-n$ & $2^n-1$ & $\bm{2^n-1+n-n^2}$
\end{tabular}
\caption{Values of signed Euler characteristic $(-1)^n\cdot\chi(V_{\An}({\cal E}))$ and volume $\text{vol}(A_n({\cal E}))$ for one-loop diagrams $\An$ with $n\geq 2$ external legs. We thank Hjalte Frellesvig for providing the formulas in the case ${\cal E}^{(0,0)}$. To our knowledge, bold entries are conjectured, others are proved. 
}
\label{tab:ngon}
\end{center}
\end{table}

For the subspace ${\cal E}^{(0,0)}$, we verified the equality between volume and Euler characteristic for $n=2,\dots,10$. However, understanding the face structure of the polytope $\conv(A_n({\cal E}^{(0,0)}))$ turns out to be quite challenging. Therefore, we could not prove the formula in the last column of Tab.~\ref{tab:ngon} in full generality. 

According to Def.~\ref{eq:EA}, computing the principal $A$-determinant for a one-loop diagram boils down to two steps: (i) understanding the faces of the polytope $\text{Conv}(A_n({\cal E}))$; and (ii) computing the discriminant associated to each face. 

Since ${\cal G}_\An$ has degree 2, in step (ii) we can make use of well-known descriptions of the discriminant of a quadratic form in terms of its symmetric matrix, see \cite[Ex.~1.3~(b)]{gelfand2008discriminants}. 
A general quadratic form in the variables $\alpha=(\alpha_1, \ldots, \alpha_n)$ supported on $A_n$ is
\begin{equation}\label{eq:generalPolynomial}
     f_n(\alpha;z) \,=\, \sum_{ 0 \leq i \leq  j \leq n }\!\! z_{ij} \, \alpha_i \, \alpha_j  \,=\, \frac{1}{2}  (1~ \alpha_1 ~ \cdots ~ \alpha_n) \begin{pmatrix} 0 & z_{01} & \cdots &  z_{0n} \\
z_{01} & 2z_{11} & \cdots &  z_{1n} \\
\vdots & \vdots & \ddots & \vdots \\ 
z_{0n} & z_{1n} & \cdots & 2z_{nn}
\end{pmatrix} \begin{pmatrix} 1 \\ \alpha_1 \\ \vdots \\ \alpha_n \end{pmatrix},
\end{equation}
where, in the first expression, we set $\alpha_0 = 1$ and $z_{00} = 0$ for convenience. We denote the coefficient matrix of size $n+1$ displayed above by $Z$. The principal $A_n$-determinant is a polynomial in the coefficients $z_{ij}$. Restricting to the subspaces of the kinematic space listed above corresponds to setting some of the entries of the matrix $Z$ to zero. It will be convenient to use the symmetric matrix $Z$ to describe the factors of the principal Landau determinant $E_\An({\cal E})$.

\subsubsection{Generic masses}
 We begin the study of the principal Landau determinant $E_{{\tt A}_n}({\cal K})$ by understanding in detail the facet description of the Newton polytope of the polynomial ${\cal G}_{\An}$.  
 The Newton polytope is a truncation of a dilated standard simplex, and the associated toric variety is the blow-up of $\mathbb{P}^n$ at one of its torus invariant points. To make this precise, we introduce the following notation. Let $[n] = \{ 1, \ldots, n \}$ and write $e_i$ for the $i$-th standard basis vector of $\mathbb{R}^n$. For any subset $I \subset [n]$, we write 
\[ T(I) \, = \, {\rm Conv}(\,e_i, \,2 \cdot e_i \, : \,  i \in I\,).\]
With this notation, we have ${\rm Newt}({\cal G}_{{\tt A}_n}) = T([n])$. The polytopes $T([2])$, $T([3])$ and a Schlegel diagram for $T([4])$ are shown in Fig.~\ref{fig:Tpolytopes}.
\begin{figure}
	\centering

\begin{tikzpicture}[x  = {(1cm,0cm)},
                    y  = {(0cm,1cm)},
                    z  = {(0cm,0cm)},
                    scale = 1,
                    color = {lightgray}]

  \coordinate (v0_Q) at (1, 0);
  \coordinate (v1_Q) at (0, 1);
  \coordinate (v2_Q) at (2, 0);
  \coordinate (v3_Q) at (0, 2);

  \definecolor{vertexcolor_Q}{rgb}{ 0.2 0.6 1 }

  \tikzstyle{vertexstyle_Q} = [circle, scale=0.375pt, fill=vertexcolor_Q,]

  \definecolor{facetcolor_Q}{rgb}{ 0.6 0.8 1 }

  \definecolor{edgecolor_Q}{rgb}{ 0 0 0 }
  \tikzstyle{facetstyle_Q} = [fill=facetcolor_Q, fill opacity=0.85, draw=edgecolor_Q, line width=0.8 pt, line cap=round, line join=round]

  \draw[facetstyle_Q] (v1_Q) -- (v0_Q) -- (v2_Q) -- (v3_Q) -- (v1_Q) -- cycle;

  \foreach \i in {0,1,2,3} {
    \node at (v\i_Q) [vertexstyle_Q] {};
  }

\end{tikzpicture}\hspace{1.5cm}
	\begin{tikzpicture}[x  = {(0.9cm,-0.076cm)},
                    y  = {(-0.06cm,0.95cm)},
                    z  = {(-0.44cm,-0.29cm)},
                    scale = 1,
                    color = {lightgray}]

  \coordinate (v0_R) at (1, 0, 0);
  \coordinate (v1_R) at (0, 1, 0);
  \coordinate (v2_R) at (0, 0, 1);
  \coordinate (v3_R) at (2, 0, 0);
  \coordinate (v4_R) at (0, 2, 0);
  \coordinate (v5_R) at (0, 0, 2);

  \definecolor{vertexcolor_R}{rgb}{ 0.2 0.6 1 }

  \tikzstyle{vertexstyle_R} = [circle, scale=0.375pt, fill=vertexcolor_R,]

  \definecolor{facetcolor_R}{rgb}{ 0.6 0.8 1 }

  \definecolor{edgecolor_R}{rgb}{ 0 0 0 }
  \tikzstyle{facetstyle_R} = [fill=facetcolor_R, fill opacity=0.75, draw=edgecolor_R, line width=0.8 pt, line cap=round, line join=round]

  \draw[facetstyle_R] (v0_R) -- (v3_R) -- (v5_R) -- (v2_R) -- (v0_R) -- cycle;
  \draw[facetstyle_R] (v4_R) -- (v3_R) -- (v0_R) -- (v1_R) -- (v4_R) -- cycle;
  \draw[facetstyle_R] (v0_R) -- (v2_R) -- (v1_R) -- (v0_R) -- cycle;
  \draw[facetstyle_R] (v2_R) -- (v5_R) -- (v4_R) -- (v1_R) -- (v2_R) -- cycle;

  \foreach \i in {2,1,0} {
    \node at (v\i_R) [vertexstyle_R] {};
  }

  \draw[facetstyle_R] (v5_R) -- (v3_R) -- (v4_R) -- (v5_R) -- cycle;

  \foreach \i in {5,4,3} {
    \node at (v\i_R) [vertexstyle_R] {};
  }

\end{tikzpicture}\hspace{1.5cm}
	\begin{tikzpicture}[x  = {(0.9cm,-0.076cm)},
	y  = {(-0.06cm,0.95cm)},
	z  = {(-0.44cm,-0.29cm)},
	scale = 1,
	color = {lightgray}]

	\coordinate (v0_P) at (1, 0, 0);
	\coordinate (v1_P) at (0, 1, 0);
	\coordinate (v2_P) at (0.371429, 0.371429, 0.371429);
	\coordinate (v3_P) at (0, 0, 1);
	\coordinate (v4_P) at (2, 0, 0);
	\coordinate (v5_P) at (0, 2, 0);
	\coordinate (v6_P) at (0.604651, 0.67, 0.604651);
	\coordinate (v7_P) at (0, 0, 2);

	\definecolor{vertexcolor_P}{rgb}{ 0.2 0.6 1 }
	
	\tikzstyle{vertexstyle_P} = [circle, scale=0.325pt, fill=vertexcolor_P,]
	
	\definecolor{edgecolor_P_0}{rgb}{ 0 0 0 }
	\definecolor{edgecolor_P_1}{rgb}{ 0.2 0.6 1 }
	\tikzstyle{edgestyle_P_1_0} = [line width=0.75pt,, color=edgecolor_P_0,]
	\tikzstyle{edgestyle_P_2_0} = [line width=0.75pt,, color=edgecolor_P_1,]
	\tikzstyle{edgestyle_P_2_1} = [line width=0.75pt,, color=edgecolor_P_1,]
	\tikzstyle{edgestyle_P_3_0} = [line width=0.75pt,, color=edgecolor_P_0,]
	\tikzstyle{edgestyle_P_3_1} = [line width=0.75pt,, color=edgecolor_P_0,]
	\tikzstyle{edgestyle_P_3_2} = [line width=0.75pt,, color=edgecolor_P_1,]
	\tikzstyle{edgestyle_P_4_0} = [line width=0.75pt,, color=edgecolor_P_0,]
	\tikzstyle{edgestyle_P_5_1} = [line width=0.75pt,, color=edgecolor_P_0,]
	\tikzstyle{edgestyle_P_5_4} = [line width=0.75pt,, color=edgecolor_P_0,]
	\tikzstyle{edgestyle_P_6_2} = [line width=0.75pt,, color=edgecolor_P_1,]
	\tikzstyle{edgestyle_P_6_4} = [line width=0.75pt,, color=edgecolor_P_1,]
	\tikzstyle{edgestyle_P_6_5} = [line width=0.75pt,, color=edgecolor_P_1,]
	\tikzstyle{edgestyle_P_7_3} = [line width=0.75pt,, color=edgecolor_P_0,]
	\tikzstyle{edgestyle_P_7_4} = [line width=0.75pt,, color=edgecolor_P_0,]
	\tikzstyle{edgestyle_P_7_5} = [line width=0.75pt,, color=edgecolor_P_0,]
	\tikzstyle{edgestyle_P_7_6} = [line width=0.75pt,, color=edgecolor_P_1,]
	
	
	\foreach \i/\k in {1/0,2/0,2/1,3/0,3/1,3/2,4/0,5/1,5/4,6/2,6/4,6/5,7/3,7/4,7/5,7/6} {
		\draw[edgestyle_P_\i_\k] (v\i_P) -- (v\k_P);
	}

	\foreach \i in {7,3,6,2,0,1,4,5} {
		\node at (v\i_P) [vertexstyle_P] {};
	}

\end{tikzpicture}
	\caption{The truncated simplices $T([2])$, $T([3])$ and $T([4])$.} 
 \label{fig:Tpolytopes}
\end{figure}
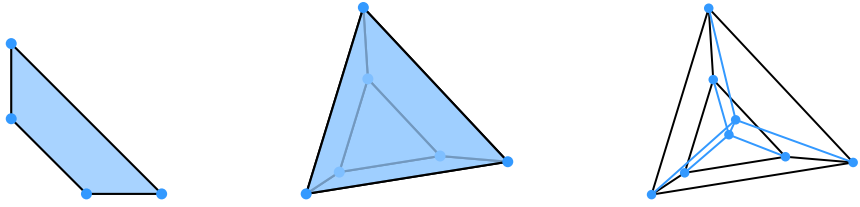
To describe the faces of the truncated simplices $T([n])$, it is convenient to introduce the notation 
\[ S(I) \,=\, {\rm Conv}(e_i \, : \,  i \in I), \quad D(I) \,=\, {\rm Conv}(2 \cdot e_i \, : \,  i \in I), \quad I \subset [n],\]
for the $(|I|-1)$-dimensional simplices $(S)$ and dilated simplices $(D)$ associated to the index set $I$. The following lemma gives a complete description of the face structure of the polytope $T([n])$.

\begin{lemma}\label{lem:facesT}
    The polytope $T([n])$ has $2n$ vertices given by $e_1, \ldots, e_n, 2\cdot e_1, \ldots, 2 \cdot e_n$. Its faces of dimension $1 \leq k \leq n -1$ consist of 
    \begin{enumerate}
        \item $\binom{n}{k+1}$ simplices $\{S(I) \, : \,  I \subset [n], \,  |I| = k+1 \}$, 
        \item $\binom{n}{k+1}$ dilated simplices $\{D(I) \, : \,  I \subset [n], \, |I| = k+1 \}$, and 
        \item $\binom{n}{k}$ truncated simplices $\{T(I)\, : \,  I \subset [n], \, |I| = k \}$.
    \end{enumerate}
    In particular, the f-vector of $T([n])$ is given by $\left (2n, 2\binom{n}{2}+n, 2 \binom{n}{3} + \binom{n}{2}, \ldots, 2  + n \right )$.
\end{lemma}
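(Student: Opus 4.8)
The plan is to reduce the statement to an explicit inequality ($H$-)description of $T([n])$, from which the facets — and then all faces — can be read off directly. Assume $n \geq 2$; the case $n = 1$ is immediate.

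\textbf{Step 1: $H$-description.} I would first prove
\[ T([n]) \, = \, \bigl\{\, x \in \mathbb{R}^n \,:\, x_i \geq 0 \ \text{ for all } i \in [n], \quad 1 \leq x_1 + \cdots + x_n \leq 2 \,\bigr\}. \]
The inclusion $\subseteq$ is clear since every generator $e_i$, $2e_i$ satisfies these constraints and the right-hand side is convex. For $\supseteq$, given $x \geq 0$ with $s := \sum_i x_i \in [1,2]$, I would write $x$ as the convex combination $\sum_{i}\bigl(\lambda_i\, e_i + \mu_i\,(2e_i)\bigr)$ with $\mu_i = \tfrac{s-1}{s}\,x_i$ and $\lambda_i = \tfrac{2-s}{s}\,x_i$: these are nonnegative, they reproduce $x$ coordinatewise, and $\sum_i(\lambda_i + \mu_i) = \tfrac1s \sum_i x_i = 1$. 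This is the one genuine (though routine) computation.

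\textbf{Step 2: facets, and faces as their intersections.} From the $H$-description, $T([n])$ is $n$-dimensional (it contains a ball around the point with all coordinates $\tfrac{3}{2n}$), while each of the $n+2$ bounding hyperplanes $\{x_i = 0\}$, $\{\sum_j x_j = 1\}$, $\{\sum_j x_j = 2\}$ meets $T([n])$ in an $(n-1)$-dimensional polytope — namely $T([n]\setminus\{i\})$, $S([n])$, and $D([n])$ respectively — so these are exactly its facets. Since every face of a polytope is the intersection of the facets containing it, a face is determined by a subset $J \subseteq [n]$ (the indices with $x_j = 0$ tight) together with at most one of the two sum-constraints (both cannot be tight, since $1 \neq 2$). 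Setting $I = [n]\setminus J$, the three possibilities produce respectively the face $T(I)$ (no sum-constraint), $S(I)$ (sum $=1$), and $D(I)$ (sum $=2$), each nonempty precisely when $I \neq \emptyset$.

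\textbf{Step 3: distinctness, dimensions, and the $f$-vector.} Faces from distinct data are distinct: the support $I$ (the set of coordinates not identically zero) is an invariant of the face, and for fixed $I \neq \emptyset$ the three faces $S(I), D(I), T(I)$ differ because $S(I) \subset \{\sum = 1\}$ and $D(I) \subset \{\sum = 2\}$ lie in different hyperplanes while $T(I)$ lies in neither (it contains a point of coordinate sum $3/2$ and a vertex $2e_i$). The dimensions are $\dim S(I) = \dim D(I) = |I| - 1$ and $\dim T(I) = |I|$ (inside the coordinate subspace indexed by $I$; $T([n])$ is the improper face and $T(\emptyset) = \emptyset$). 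Hence for $1 \leq k \leq n-1$ the $k$-dimensional faces are exactly the $\binom{n}{k+1}$ simplices $S(I)$ and $\binom{n}{k+1}$ dilated simplices $D(I)$ with $|I| = k+1$, together with the $\binom{n}{k}$ truncated simplices $T(I)$ with $|I| = k$; the vertices ($k = 0$) are the $n$ points $e_i = S(\{i\})$ and the $n$ points $2e_i = D(\{i\})$, for $2n$ in total. Collecting these counts at each dimension gives the stated $f$-vector. The main obstacle is really just Steps 1–2: pinning down the correct $H$-description and checking that those $n+2$ inequalities are precisely the facets, so that "face $=$ intersection of facets" applies; once that is in place Step 3 is pure bookkeeping, the only care needed being the degenerate cases $|I| \in \{0,1,n\}$.
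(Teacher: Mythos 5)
Your proof is correct, but it takes a different route from the paper's. The paper works on the dual side: it identifies the normal fan of $T([n])$ as the star subdivision of the normal fan of the standard simplex along the positive orthant $\sigma = \mathbb{R}_{\geq 0}\{e_1,\ldots,e_n\}$, and then reads the faces off from the cones of that fan, described by their ray generators (this also makes transparent the remark preceding the lemma that the associated toric variety is a blow-up of $\mathbb{P}^n$ at a torus-fixed point). You instead work on the primal side: you establish the inequality description $T([n]) = \{x \geq 0,\; 1 \leq \sum_i x_i \leq 2\}$ by an explicit convex-combination computation, check that the $n+2$ bounding hyperplanes each cut out an $(n-1)$-dimensional face and hence are exactly the facets, and then enumerate all faces as intersections of facets, using the support $I$ as the invariant that separates $T(I)$, $S(I)$, $D(I)$. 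These are two sides of the same coin — the facet normals $e_i$, $(1,\ldots,1)$, $(-1,\ldots,-1)$ that you extract are precisely the rays of the paper's fan — but your argument is more elementary and self-contained, needing only face-facet combinatorics, while the paper's is shorter for a reader who already knows what a star subdivision looks like. Your handling of the degenerate cases $|I| \in \{0,1,n\}$ and the distinctness check via supports and the two sum-constraints is complete, so the enumeration and $f$-vector come out as stated.
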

\begin{proof}
We sketch the proof and leave the details to the reader. Let $\Sigma$ be the normal fan of the standard simplex ${\rm Conv}(0,e_1, \ldots, e_n) \subset \mathbb{R}^n$ and let $\sigma = \mathbb{R}_{\geq 0} \cdot \{ e_1, \ldots, e_n \} \subset \mathbb{R}^n$ be the positive orthant. The normal fan of $T([n])$ is the \emph{star subdivision} $\Sigma^*(\sigma)$ of $\Sigma$ along $\sigma$. The $2n$ vertices of $T([n])$ correspond to its $2n$ full-dimensional cones. The $(n-k)$-dimensional cones come in three types. In terms of ray generators, these types are described by:
\begin{enumerate}
    \item the ray $(1,1,\ldots, 1)$ and $\{e_i \, : \, i \in [n] \setminus I \}$, for $|I| = k+1$,
    \item the ray $(-1,-1,\ldots,-1)$ and $\{e_i \, : \, i \in [n] \setminus I \}$, for $|I| = k+1$,
    \item the rays $\{e_i \, :\, i \in [n] \setminus I \}$, for $|I| = k$.
\end{enumerate}
The corresponding $k$-dimensional faces are those listed in the lemma. 
\end{proof}
Next, we investigate the principal $A_n$-determinant $E_{A_n}$ corresponding to our truncated simplices $T([n])$. Notice that the columns of the matrix $A_n$ are precisely the lattice points in $T([n]) \cap \mathbb{Z}^n$. We will express $E_{A_n}$ as a product of discriminants $\Delta_F$, where $F$ runs over the faces of $T([n])$ listed in Lem.~\ref{lem:facesT}, as described in \eqref{eq:EA}. For this purpose, we index the rows and columns of the matrix $Z$ in \eqref{eq:generalPolynomial} by $\{0,1, \ldots, n\}$ and, for any $I \subset \{0,1, \ldots, n\}$, we write $Z_I$ for the square submatrix with rows and columns indexed by $I$. 

\begin{lemma}\label{lem:discrformula}
The face discriminants of $A_n = T([n]) \cap \mathbb{Z}^n$ are given by the following formulae: 
    \begin{enumerate}
        \item $\Delta_{S(I)} = 1$ for $I \subset [n]$ and $2 \leq |I| \leq n$, 
        \item $\Delta_{D(I)} = \det(Z_I)$ for $I \subset [n]$ and $2 \leq |I| \leq n$, and
        \item $\Delta_{T(I)} = \det(Z_{\{ 0\} \cup I})$ for $I \subset [n]$ and $2 \leq |I| \leq n$. When $|I| = 1$, we have $\Delta_{T(I)} = 1$.
    \end{enumerate}
    In particular, the $A_n$-discriminant equals $\det(Z)$.
\end{lemma}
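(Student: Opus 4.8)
The plan is to work through the three families of faces of $T([n])$ listed in Lem.~\ref{lem:facesT} one at a time. For each face $F$ I will first identify $F\cap\mathbb{Z}^n$ — equivalently, the monomial support of the restriction of $f_n$ to $F$ — and then recognise the restricted polynomial as a (homogeneous or inhomogeneous) quadratic form and read off its $A$-discriminant from the associated symmetric matrix, in the spirit of \cite[Ex.~1.3~(b)]{gelfand2008discriminants}. All discriminants below are defined only up to a nonzero scalar.

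For a simplex face $S(I)$, the simplex being unimodular forces $S(I)\cap\mathbb{Z}^n=\{e_i:i\in I\}$, so the restriction of $f_n$ is $\sum_{i\in I} z_{0i}\,\alpha_i$, whose support is a unimodular simplex; hence $\nabla_{A_n\cap S(I)}$ has codimension $>1$ and $\Delta_{S(I)}=1$ (for $|I|=2$ this is the case $s=2$ of Ex.~\ref{ex:s=1and2}). For a dilated-simplex face $D(I)=2\cdot S(I)$, the lattice points are exactly the degree-two lattice combinations $\{e_i+e_j:i,j\in I\}$, so the restriction of $f_n$ equals $\sum_{i\leq j,\,i,j\in I} z_{ij}\,\alpha_i\alpha_j=\tfrac12\,\alpha_I^{\top}Z_I\,\alpha_I$, a generic quadratic form in the $|I|$ variables $(\alpha_i)_{i\in I}$; equivalently $A_n\cap D(I)$ is, up to affine-lattice isomorphism (\cite[Prop.~1.4]{gelfand2008discriminants}), the second Veronese of $\mathbb{P}^{|I|-1}$, whose dual variety is the determinantal hypersurface of singular quadrics, so $\Delta_{D(I)}=\det(Z_I)$, irreducible for $|I|\geq 2$.

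For a truncated-simplex face $T(I)$, one first checks that $T(I)=\{x\in\mathbb{R}^n_{\geq 0}: x_k=0\text{ for }k\notin I,\ 1\leq\sum_k x_k\leq 2\}$, so $T(I)$ is obtained from $2\cdot S(I)$ by deleting only the vertex $0$, and therefore $T(I)\cap\mathbb{Z}^n$ is the set of all degree-one and degree-two monomials in $(\alpha_i)_{i\in I}$. Hence the restriction of $f_n$ is $\sum_{i\in I} z_{0i}\alpha_i+\sum_{i\leq j,\,i,j\in I} z_{ij}\alpha_i\alpha_j=F(1,\alpha_I)$, where $F(\alpha_0,\alpha_I)=\tfrac12\,\widetilde\alpha^{\top}Z_{\{0\}\cup I}\,\widetilde\alpha$ with $\widetilde\alpha=(1,\alpha_I)^{\top}$. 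I would compute the projection of the incidence variety directly: by Euler's relation $\widetilde\alpha^{\top}Z_{\{0\}\cup I}\widetilde\alpha=\sum_k\widetilde\alpha_k\,(Z_{\{0\}\cup I}\widetilde\alpha)_k$, the equations (vanishing of the restriction and of all its $\alpha_i$-derivatives, $i\in I$) are, on the torus, equivalent to $Z_{\{0\}\cup I}\,\widetilde\alpha=0$ with every coordinate of $\widetilde\alpha$ nonzero. For $|I|\geq 2$ a generic point of $\{\det Z_{\{0\}\cup I}=0\}$ has corank $1$ and a kernel generator with all coordinates nonzero, so the projection is dense there and $\Delta_{T(I)}=\det(Z_{\{0\}\cup I})$; for $|I|=1$ the face $T(\{i\})=\mathrm{Conv}(e_i,2e_i)$ has only the two lattice points $e_i,2e_i$, whence $\Delta_{T(I)}=1$ by Ex.~\ref{ex:s=1and2} (indeed $z_{0i}\alpha_i=0$ forces $\alpha_i=0$, so the incidence variety is empty for generic coefficients). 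Finally, since $T([n])=\mathrm{Newt}({\cal G}_{{\tt A}_n})=\mathrm{Conv}(A_n)$ and $\{0\}\cup[n]$ indexes all rows and columns of $Z$, taking $I=[n]$ in the last formula gives $\Delta_{A_n}=\det(Z)$.

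The lattice-point bookkeeping is routine; I expect the real work to be in the family $T(I)$. There one must (i) match the singular locus of the \emph{inhomogeneous} quadric cut out by the restriction of $f_n$ with the vanishing of the bordered matrix $Z_{\{0\}\cup I}$ (the Euler-relation step above), and (ii) check that the singular point generically lies in the open torus and that $\det Z_{\{0\}\cup I}$ is squarefree, so that it is the reduced defining equation of the irreducible variety $\nabla_{A_n\cap T(I)}$. Both hold for $|I|\geq 2$, but step (i) breaks down when $|I|=1$ — which is exactly why $\Delta_{T(I)}$ degenerates to $1$ in that case — and it is this degenerate edge case, rather than any deep point, that keeps the argument from being a purely mechanical application of the quadratic-form/symmetric-matrix dictionary.
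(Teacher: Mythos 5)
Your argument agrees with the paper for points~1 and~2, but for point~3 you take a genuinely different route than the paper does, and it is worth flagging the difference. The paper's proof of $\Delta_{T([n])}=\det(Z)$ is a ``divides plus degree count'' argument: it first observes $\det(Z)\neq 0$, then shows $\Delta_{T([n])}\mid\det(Z)$ by the elementary remark that a singularity in the torus forces the discriminant of the quadric to vanish, and then closes by invoking the GKZ degree formula \cite[Chpt.~9, Thm.~2.8]{gelfand2008discriminants} to show both sides have degree $n+1$. You instead reformulate the incidence equations via Euler's relation as $Z_{\{0\}\cup I}\widetilde\alpha=0$ with all coordinates of $\widetilde\alpha$ nonzero, and argue directly that a generic point of $\{\det Z_{\{0\}\cup I}=0\}$ has corank~1 with a kernel generator having all coordinates nonzero, so the projection of the incidence variety is \emph{dense} in the determinantal hypersurface. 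That gives equality of the two varieties without any degree bookkeeping. What your version buys is self-containedness and a cleaner explanation of why the $|I|=1$ case degenerates to $\Delta_{T(I)}=1$ (there the $2\times 2$ bordered determinant is $-z_{0i}^2$, whose only kernel vectors have a vanishing coordinate). What the paper's version buys is brevity: it offloads the genericity/irreducibility work to a known degree formula. Your argument does still owe two small checks that you flag but do not carry out in detail: (a) that a corank-1 matrix of the shape $Z_{\{0\}\cup I}$ with an all-nonzero kernel vector actually exists (a single example suffices, since the condition is Zariski-open on the determinantal locus), and (b) that $\det(Z_{\{0\}\cup I})$, with the $(0,0)$-entry fixed at~$0$, is an irreducible polynomial, so that the equality of vanishing loci lifts to an equality of polynomials up to scale. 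Both are true for $|I|\geq 2$ and not hard, but they are the load-bearing steps in your route and should be spelled out rather than left implicit.
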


\begin{proof}
    Point 1 follows from the fact that the $A$-discriminant of a standard simplex equals $1$, and point 2 is a well-known formula for the discriminant of a quadratic form in terms of its symmetric matrix, see \cite[Ex.~1.3 (b)]{gelfand2008discriminants}. For point three, it suffices to show the case $I = [n]$. First, note that $\det(Z) \neq 0$, as plugging in $z_{01} = z_{11} = \cdots = z_{nn} = 1/2$ and $0$ for all other $i,j$ gives $1/4$. It is also clear that $\Delta_{T([n])}$ divides $\det(Z)$: if $f_n$ has a singularity in the torus, the determinant must vanish, as it is the discriminant of the corresponding quadric (see point 2). To show equality, it suffices to check that the degree formula in \cite[Chpt.~9, Thm.~2.8]{gelfand2008discriminants} gives $\deg(\Delta_{T([n])}) = n+1$.
\end{proof}

We have the following immediate corollary of Lem.~\ref{lem:facesT} and \ref{lem:discrformula}.
\begin{theorem}\label{thm:detformula}
    The principal $A_n$-determinant 
    corresponding to the polytope $T([n])$ is
    \begin{equation} \label{eq:EAn} E_{A_n} \, = \, \textcolor{mycolor1}{\left (\prod_{i=1}^n z_{0i}z_{ii} \right )} \cdot 
    \textcolor{mycolor2}{\left ( \prod_{\substack{I \subset [n] \\ |I| = 2}} \Delta_{D(I)} \right )} \cdot
    \textcolor{mycolor5}{ \prod_{k=2}^{n-1} \left ( \prod_{\substack{I \subset [n] \\ |I| = k+1}}  \Delta_{D(I)}  \cdot \prod_{\substack{I \subset [n] \\ |I| = k} } \Delta_{T(I)} \right )} \cdot \textcolor{mycolor4}{\Delta_{T([n])}}.
    \end{equation}
    Here, the factors are sorted by increasing dimension of the corresponding face of $T([n])$.  Equivalently, this polynomial is the product of $\left (\prod_{i=1}^n z_{0i} \right)$ with all principal minors of $Z_{[n]}$, and all principal $(k+1)$-minors of $Z$ involving the index $0$, with $k \geq 2$. The degree~is 
    \[ \textcolor{mycolor1}{2n} + \textcolor{mycolor2}{2 \cdot \binom{n}{2}} + \textcolor{mycolor5}{\sum_{k=2}^{n-1} \left(\binom{n}{k+1} + \binom{n}{k} \right ) \cdot (k+1)} + \textcolor{mycolor4}{(n+1)} \, = \,  (n+1) \cdot (2^n-1).\]
\end{theorem}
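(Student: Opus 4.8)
The plan is to substitute the face list of Lemma~\ref{lem:facesT} and the discriminant formulae of Lemma~\ref{lem:discrformula} into the GKZ factorization \eqref{eq:EA}, check that all multiplicities $e_Q$ equal $1$, discard the trivial factors, and then read off the degree as an elementary binomial sum.

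By \eqref{eq:EA} we have $E_{A_n} = \prod_{Q \in F(T([n]))} \Delta_{A_n \cap Q}^{e_Q}$. The one input with genuine content — and the step I expect to be the main obstacle — is showing $e_Q = 1$ for every face $Q$. I would establish this exactly as in Ex.~\ref{ex:n=2ands=4}, via \cite[Chpt.~10, Thm.~1.2]{gelfand2008discriminants}, whose hypothesis is satisfied here because $T([n])$ and each of its faces has a smooth projective toric variety: the simplices $S(I)$ and dilated simplices $D(I)$ give $\mathbb{P}^{|I|-1}$, and the truncated simplices $T(I)$ give the blow-up of $\mathbb{P}^{|I|}$ at a torus-fixed point, as noted in the proof of Lemma~\ref{lem:facesT}. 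Granting $e_Q = 1$, Lemma~\ref{lem:discrformula} tells us the simplex faces $S(I)$ and the one-dimensional truncated simplices $T(\{i\})$ contribute the factor $1$ and so drop out; what is left, sorted by increasing $\dim Q$, is exactly \eqref{eq:EAn}: the $2n$ linear vertex discriminants $z_{0i}$ and $z_{ii}$ in dimension $0$; the quadratics $\Delta_{D(I)}$ with $|I|=2$ in dimension $1$; the pair $\Delta_{D(I)}$ with $|I|=k+1$ together with $\Delta_{T(I)}$ with $|I|=k$ in dimension $k$ for $2\le k\le n-1$; and $\Delta_{T([n])}$ in dimension $n$.

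The ``equivalently'' clause then requires only a relabelling via Lemma~\ref{lem:discrformula}: ranging over all $\emptyset\neq I\subseteq[n]$, the $\Delta_{D(I)}=\det(Z_I)$ are precisely the principal minors of $Z_{[n]}$ (with $\det Z_{\{i\}}=2z_{ii}$ covering the vertex $2e_i$); the $\Delta_{T(I)}=\det(Z_{\{0\}\cup I})$ with $|I|=k\ge 2$ are precisely the principal $(k+1)$-minors of $Z$ that contain row and column $0$ and have size at least $3$; and the remaining $z_{0i}$ are the vertex discriminants of the $e_i$. Note that the $2\times 2$ principal minors through index $0$, equal to $-z_{0i}^2$, are absent, consistent with $\Delta_{T(\{i\})}=1$.

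For the degree, Lemma~\ref{lem:discrformula} gives $\deg z_{0i}=1$, $\deg\Delta_{D(I)}=|I|$ and $\deg\Delta_{T(I)}=|I|+1$ for $|I|\ge 2$. Multiplying by the face counts of Lemma~\ref{lem:facesT} and summing over the three surviving families — the $n$ vertices $e_i$, the dilated simplices $D(I)$ (with $|I|=1$ recovering the vertices $2e_i$), and the truncated simplices $T(I)$ with $|I|\ge 2$ — gives
\[ \deg E_{A_n} \;=\; \sum_{i=1}^{n}1 \;+\; \sum_{j=1}^{n} j\binom{n}{j} \;+\; \sum_{j=2}^{n}(j+1)\binom{n}{j}. \]
Using $\sum_{j=0}^{n}\binom{n}{j}=2^n$ and $\sum_{j=0}^{n}j\binom{n}{j}=n\,2^{n-1}$, the right-hand side collapses to $n + n2^{n-1} + \bigl(n2^{n-1}+2^{n}-1-2n\bigr) = (n+1)2^{n}-(n+1) = (n+1)(2^{n}-1)$, matching ${\rm vol}(A_n({\cal K})) = 2^n-1$ from Tab.~\ref{tab:ngon}. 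Apart from the multiplicity statement $e_Q = 1$, every step is just combining the two lemmas with this one arithmetic identity.
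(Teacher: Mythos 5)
Your route coincides with the paper's — substitute Lemmas~\ref{lem:facesT} and \ref{lem:discrformula} into \eqref{eq:EA} and read off the degree — and you are right that $e_Q = 1$ is the only content-bearing step, which the paper's ``immediate corollary'' leaves implicit. However, the justification you offer for it is not the correct hypothesis. The multiplicity $e_Q$ in \cite[Chpt.~10, Thm.~1.2]{gelfand2008discriminants} is a lattice invariant measuring how $Q$ sits \emph{inside} $T([n])$ (a subdiagram volume in the quotient $\mathbb{Z}^n/\mathrm{lin}(Q-Q)$), not a property of the face's own toric variety $X_Q$. What forces $e_Q = 1$ is that the normal fan of $T([n])$ is smooth — i.e., $X_{T([n])}$ is the smooth blow-up of $\mathbb{P}^n$ at a fixed point — together with $A_n$ being the full lattice-point set $T([n]) \cap \mathbb{Z}^n$; the smoothness of each $X_{S(I)}$, $X_{D(I)}$, $X_{T(I)}$ as an abstract variety is neither necessary nor sufficient, so invoking it is a step that would fail if pressed.

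There is a cleaner and self-contained way to close this gap, and it is the degree count you already perform: by the $A$-resultant characterization, $\deg E_{A_n} = (n+1)\,\mathrm{vol}(A_n) = (n+1)(2^n - 1)$; the right-hand side of \eqref{eq:EAn} with all exponents equal to $1$ already attains this degree; and GKZ's exponents on the faces with nontrivial discriminants are positive integers, so none can exceed $1$ without overshooting. Promote that degree computation from a by-product to the actual engine of the multiplicity argument — either drop the appeal to the second part of GKZ's theorem entirely, or if you keep it, invoke the smoothness of $X_{T([n])}$ itself rather than of the individual faces.
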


Note that the right hand side in the last formula is $(n+1) \cdot{\rm vol}(T([n]))$, consistently with Tab.~\ref{tab:ngon}, which is the degree of the $A_n$-resultant. The next question is what happens when we substitute the coefficients of ${\cal G}_\An$ from \eqref{eq:UFoneloop} in the principal $A_n$-determinant. In what follows, we will indicate this substitution with a tilde: for any face $Q$ of $T([n])$, 
\[ \widetilde{\Delta}_{Q} \, = \, (\Delta_Q)_{|z_{0i} \gets 1, \,  z_{ii} \gets -\m_i, \, z_{ij} \gets s_{i,i+1,\ldots,j-1} - \m_i - \m_j, i \neq j},\]
and
\[Z({\cal E}) = Z_{|z_{0i} \gets 1, \,  z_{ii} \gets -\m_i, \, z_{ij} \gets s_{i,i+1,\ldots,j-1} - \m_i - \m_j, i \neq j}. \]
These are polynomials in the variables $s_{i,i+1,\ldots,j-1}$, $\m_i$. Recall that, when $j = i+1$, we set $s_{i} = \M_i$.
\begin{lemma}
    For all $I \subset [n]$ with $2 \leq |I| \leq n$, we have $\widetilde{\Delta}_{D(I)} \neq 0$ and $\widetilde{\Delta}_{T(I)} \neq 0$. Moreover, these are homogeneous polynomials with $\deg(\widetilde{\Delta}_{D(I)}) = \deg(\widetilde{\Delta}_{T(I)}) + 1 = |I|$.
\end{lemma}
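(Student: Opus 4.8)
The plan is to work directly from the determinantal formulas of Lemma~\ref{lem:discrformula}: after the substitution defining $Z({\cal E})$ we have $\widetilde{\Delta}_{D(I)} = \det(Z({\cal E})_I)$ and $\widetilde{\Delta}_{T(I)} = \det(Z({\cal E})_{\{0\}\cup I})$, so everything reduces to elementary linear algebra over the polynomial ring $\C[\,s_{i,i+1,\dots,j-1},\m_i\,]$. Grade this ring by giving every variable weight $1$. Under the substitution, each entry of $Z({\cal E})_I$ indexed by $[n]$ — the diagonal entries $-2\m_i$ and the off-diagonal entries $s_{i,\dots,j-1} - \m_i - \m_j$ — is homogeneous of degree $1$, while $Z({\cal E})_{0i} = 1$ has degree $0$ and $Z({\cal E})_{00} = 0$. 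Hence $\det(Z({\cal E})_I)$ is either $0$ or homogeneous of degree $|I|$. For the bordered matrix I would use the classical identity
\[ \det\begin{pmatrix} 0 & \mathbf{1}^{\top} \\ \mathbf{1} & M \end{pmatrix} \,=\, -\,\mathbf{1}^{\top} \operatorname{adj}(M)\,\mathbf{1}, \]
valid as a polynomial identity in the entries of $M$, applied with $M = Z({\cal E})_I$ and $\mathbf{1}$ the all-ones vector of length $|I|$. Every entry of $\operatorname{adj}(M)$ is (up to sign) a minor of $M$, i.e.\ a determinant of an $(|I|-1)\times(|I|-1)$ submatrix with degree-$1$ entries, so it is homogeneous of degree $|I|-1$; therefore $\widetilde{\Delta}_{T(I)}$ is either $0$ or homogeneous of degree $|I|-1$. (Equivalently one runs the Leibniz expansion: terms with $\sigma(0)=0$ vanish because $Z({\cal E})_{00}=0$, and every surviving term has exactly two degree-$0$ border entries and $|I|-1$ degree-$1$ entries, so it lies in degree $|I|-1$.)

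It remains to exclude the degenerate case that these determinants vanish identically, and for this I would exhibit one convenient specialization of the kinematic variables: set $\m_i = 0$ for all $i$ and $s_{i,\dots,j-1} = 1$ for all index sets (recall $s_i = \M_i$). Then $Z({\cal E})_I$ specializes to $J_{|I|} - \mathrm{Id}_{|I|}$ and $Z({\cal E})_{\{0\}\cup I}$ specializes to $J_{|I|+1} - \mathrm{Id}_{|I|+1}$, where $J_m$ is the $m\times m$ all-ones matrix. Since $J_m$ has eigenvalues $m$ (once) and $0$ (with multiplicity $m-1$), we get $\det(J_m - \mathrm{Id}_m) = (-1)^{m-1}(m-1)$, which is nonzero whenever $m\ge 2$. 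Taking $m = |I|\ge 2$ and $m = |I|+1\ge 3$ respectively shows that $\widetilde{\Delta}_{D(I)}$ and $\widetilde{\Delta}_{T(I)}$ are nonzero polynomials, and together with the previous paragraph they are then homogeneous of degrees exactly $|I|$ and $|I|-1$, as claimed.

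I do not expect a genuine obstacle here; the statement is essentially bookkeeping plus one well-chosen evaluation. The only point requiring care is the degree count for the bordered determinant: the zero in the $(0,0)$ slot is precisely what drops the degree from the naive $|I|+1$ to $|I|-1$, and one should make sure the Leibniz (or $\operatorname{adj}$) argument is stated so that no higher-degree terms can survive. The non-vanishing step is free once one notices that the common specialization $\m_i\to 0$, $s_{i,\dots,j-1}\to 1$ turns all the relevant matrices into $J_m - \mathrm{Id}_m$, whose determinant is computed by the eigenvalue argument above.
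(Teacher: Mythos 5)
Your proof is correct and matches the paper's approach: the paper uses the identical specialization $\m_i = 0$, $s_{ij} = 1$ to exhibit nonvanishing (noting $\widetilde{Z}_I$ and $\widetilde{Z}_{\{0\}\cup I}$ become invertible at that point), and its degree count likewise rests on observing that the inner-block entries are linear in the kinematic variables while the border entries are constants. Your Schur-complement/Leibniz discussion of the bordered determinant simply spells out more explicitly what the paper states tersely as bihomogeneity of $\det(\widetilde{Z}_{\{0\}\cup I})$ in the coefficients of ${\cal U}_{{\tt A}_n}$ (degree $2$) and of ${\cal F}_{{\tt A}_n}$ (degree $|I|-1$).
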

\begin{proof}
For the sake of symplicity, we denote $Z({\cal E}) = \widetilde{Z}$. To show that $\widetilde{\Delta}_{D(I)} \neq 0$, by Lem.~\ref{lem:discrformula} it suffices to observe that $\widetilde{Z}_I$ is invertible for the choices $\m_i = 0$ and $s_{ij} = 1$ for all $i,j$. The statement $\deg(\widetilde{\Delta}_{D(I)}) = |I|$ follows easily from the fact that $\widetilde{Z}_I$ only involves coefficients of ${\cal F}_{{\tt A}_n}$, which are homogeneous of degree 1 in the parameters. \\
For $\widetilde{\Delta}_{T(I)}$, the same choice $\m_i = 0$ and $s_{ij} = 1$ works to show that $\widetilde{\Delta}_{T(I)} = \det(\widetilde{Z}_{\{0\}\cup I}) \neq 0$ and $\deg(\widetilde{\Delta}_{T(I)}) = |I|-1$ follows from the fact that $\det(\widetilde{Z}_{\{0\}\cup I})$ is homogeneous of degree 2 in the coefficients of ${\cal U}_{{\tt A}_n}$, and homogeneous of degree ${|I|} -1$ in those of ${\cal F}_{{\tt A}_n}$.
\end{proof}
\begin{theorem}
    Substituting the coefficients of ${\cal G}_\An$ in the principal $A_n$-determinant gives a nonzero polynomial in the kinematic variables of degree
\[ \textcolor{mycolor1}{n} + \textcolor{mycolor2}{2 \cdot \binom{n}{2}} + \textcolor{mycolor5}{\sum_{k=2}^{n-1} \left( \binom{n}{k+1} \cdot (k+1) + \binom{n}{k} \cdot (k-1) \right )} + \textcolor{mycolor4}{(n-1)}
= (n-1)\cdot 2^{n} + 1.\] 
Its square-free part defines the principal Landau determinant $E_{\An}({\cal K})$.
\end{theorem}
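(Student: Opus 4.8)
The statement bundles three claims: that the substituted principal $A_n$-determinant $\widetilde{E}_{\An}:=(E_{A_n})|_{z\gets\text{kinematics}}$ is not identically zero; that its total degree in the kinematic variables equals $(n-1)2^n+1$; and that its radical cuts out $\mathrm{PLD}_{\An}({\cal K})$. The first two I would dispatch as bookkeeping on top of Thm.~\ref{thm:detformula} and the lemma immediately preceding the statement: by \eqref{eq:EAn}, $E_{A_n}$ is the product of $\prod_{i=1}^n z_{0i}z_{ii}$ with the face discriminants $\Delta_{D(I)}$ ($2\le|I|\le n$) and $\Delta_{T(I)}$ ($2\le|I|\le n$), each to the first power. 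The substitution sends $\prod z_{0i}\mapsto 1$ and $\prod z_{ii}\mapsto\prod(-\m_i)$, and by the preceding lemma each $\widetilde\Delta_{D(I)},\widetilde\Delta_{T(I)}$ is a nonzero homogeneous polynomial of degree $|I|$, resp.\ $|I|-1$. Since $\mathbb{C}[{\cal K}]$ is a domain, $\widetilde{E}_{\An}$ is nonzero and $\deg\widetilde{E}_{\An}$ is the sum of the factor degrees; regrouping this sum into a ``$D$-part'' $\sum_{j=2}^n j\binom nj$ and a ``$T$-part'' $\sum_{k=2}^n(k-1)\binom nk$, and using $\sum_k k\binom nk=n2^{n-1}$ together with $\sum_{k=2}^n\binom nk=2^n-1-n$, collapses it to $n2^n-2^n+1=(n-1)2^n+1$.

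The identification of the square-free part with $E_{\An}({\cal K})$ is the real work, and I would prove $\mathrm{PLD}_{\An}({\cal K})=\{\widetilde{E}_{\An}=0\}$ by a two-sided inclusion. The inclusion ``$\subseteq$'' is soft: for any face $Q$ and any $(\alpha,z)$ in the incidence variety \eqref{eq:incidence-variety}, $\alpha$ is a torus singularity of $f_{A\cap Q}(\cdot\,;z)$, so $z$ lies in the $(A\cap Q)$-discriminant; hence each codimension-one projection $\nabla^{(i)}_{G,Q}({\cal K})$ lies in $\{\widetilde\Delta_{A\cap Q}=0\}$, and since $\Delta_{A\cap Q}\mid E_{A_n}$ by \eqref{eq:EA} the polynomial $\Delta^{(i)}_{G,Q}({\cal K})$ divides $\widetilde{E}_{\An}$. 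For ``$\supseteq$'' I would argue face by face. By Lem.~\ref{lem:facesT}--\ref{lem:discrformula} the only faces with nonconstant discriminant are the vertices $2e_i$, the dilated simplices $D(I)$ ($|I|\ge2$) and the truncated simplices $T(I)$ ($|I|\ge2$); for each of these $\mathcal{G}_{G,Q}$ is a quadratic form in the subfamily $(\alpha_i)_{i\in I}$ with symmetric coefficient matrix $\widetilde Z_I$, $\widetilde Z_{\{0\}\cup I}$, or the $1\times1$ block $-\m_i$, and its torus singular locus is the set of torus points in the (projectivised, for $T(I)$) kernel of that matrix. The structural input I would use is that the parametrisation \eqref{eq:UFoneloop} of these coefficients by kinematics is surjective: the $\m_i$ and the Mandelstam invariants $s_{i,\dots,j-1}$ with $i<j$ are independent coordinates on ${\cal K}$, and no two invariants occurring on a single face are forced equal by momentum conservation, so $\widetilde Z_I$ realises every symmetric matrix and $\widetilde Z_{\{0\}\cup I}$ every symmetric matrix with first row and column $(0,1,\dots,1)$. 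Consequently, on the hypersurface $\{\widetilde\Delta_{A\cap Q}=0\}$ the generic corank is exactly $1$, its cofactor (kernel) vector has all entries nonzero, and in the $T(I)$ case the $0$th cofactor $\pm\det\widetilde Z_I$ is not identically zero there, so the kernel direction generically has $\alpha_0\neq0$ and may be dehomogenised; thus $\{\widetilde\Delta_{A\cap Q}=0\}=\overline{\pi_{\cal K}(Y_{G,Q}({\cal K}))}$ is pure of codimension one, each of its components is a $\nabla^{(i)}_{G,Q}({\cal K})$ with $i\in\mathbb{I}(G,Q)_1$, hence lies in $\mathrm{PLD}_{\An}({\cal K})$. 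Unioning over faces and using \eqref{eq:EAn} gives the reverse inclusion.

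The main obstacle is exactly this surjectivity of the coefficient map for every face $Q$: one must certify that passing to ${\cal K}$ (and freezing $z_{0i}=1$) never drives a face quadric into the degenerate stratum of quadrics whose singular point is trapped on a coordinate hyperplane — the very failure illustrated in Ex.~\ref{ex:singular} for non-one-loop integrals (a diagonal binary quadric already shows $Y_{G,Q}$ need not see all of $\{\widetilde\Delta_{A\cap Q}=0\}$). This forces a careful audit of which invariants $s_{i,\dots,j-1}$ appear on which face of $T([n])$ and which coincide under momentum conservation, using the explicit shape of ${\cal F}_{\An}$ in \eqref{eq:UFoneloop}; everything else is routine. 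As a byproduct one obtains, via Tab.~\ref{tab:ngon} and Thm.~\ref{thm:eulercharvolume}, that $\mathrm{PLD}_{\An}({\cal K})=\nabla_\chi({\cal K})$, i.e.\ Conj.~\ref{conj:pldvschi} holds with equality for one-loop diagrams on the full kinematic space.
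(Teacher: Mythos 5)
Your proposal is correct, and notably it supplies an argument for the third claim that the paper itself leaves implicit: the theorem is stated with no proof environment, and is evidently meant to be read off from Thm.~\ref{thm:detformula} and the preceding lemma, but the identification of $\{\widetilde{E}_{A_n}=0\}$ with $\mathrm{PLD}_{\An}({\cal K})$ does not follow from those two results alone. As your own Ex.~\ref{ex:singular} discussion makes clear, nonvanishing of each specialized face discriminant is necessary but not obviously sufficient for the PLD to coincide with the specialized principal $A$-determinant. The nonvanishing and degree parts are indeed bookkeeping: the product formula \eqref{eq:EAn}, the degrees $\deg\widetilde\Delta_{D(I)}=|I|$ and $\deg\widetilde\Delta_{T(I)}=|I|-1$ from the preceding lemma, and the identities $\sum_j j\binom{n}{j}=n2^{n-1}$ and $\sum_{j\geq 2}\binom{n}{j}=2^n-1-n$ give $(n-1)2^n+1$ exactly as you compute.

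The substantive content is your two-sided inclusion, and the key step --- surjectivity of the coefficient map $\mathcal K\to\{\text{symmetric matrices}\}$ for each face --- is the right idea and does go through. Concretely: for $D(I)$, the entries $\widetilde Z_I[i,i]=-2\m_i$ and $\widetilde Z_I[i,j]=s_{i,\ldots,j-1}-\m_i-\m_j$ involve pairwise distinct independent coordinates of ${\cal K}$ (the arcs $\{i,\ldots,j-1\}$ with $i<j\le n$ are pairwise non-complementary, so no identification occurs), giving a linear isomorphism onto the space of symmetric $|I|\times|I|$ matrices; similarly $\widetilde Z_{\{0\}\cup I}$ realizes every symmetric matrix with bordering row/column $(0,1,\ldots,1)$. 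This faithfulness means the specialized discriminants remain irreducible, so the bad loci (corank $\geq 2$, or a vanishing cofactor/kernel entry, or for $T(I)$ a vanishing $0$-th cofactor $\det\widetilde Z_I$) are cut out by polynomials coprime to $\widetilde\Delta_{A\cap Q}$ and hence are codimension $\ge 2$. That is exactly what you need to conclude $\{\widetilde\Delta_{A\cap Q}=0\}=\overline{\pi_{\cal K}(Y_{G,Q}({\cal K}))}$ is pure codimension one, and the rest of the argument (union over faces, square-free part) is formal. Your closing byproduct --- that $\mathrm{PLD}_{\An}({\cal K})=\nabla_\chi({\cal K})$, so Conj.~\ref{conj:pldvschi} holds with equality for generic-mass one-loop diagrams --- is also correct via Thm.~\ref{thm:eulercharvolume} once $\widetilde{E}_{A_n}\neq 0$, and is worth recording.
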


\begin{remark}
In the physics literature, $\Delta_{D(I)} = 0$ with $|I|=1$ are called mass singularities, those with $|I|=2$ are the normal and pseudo-normal thresholds, and those with $2 < |I| < n$ and $|I| = n$ are the subleading and leading Landau singularities (of the first type). Similarly, $\Delta_{T(I)} = 0$ with $1 < |I| < n$ and $|I|=n$ are the subleading and leading Landau singularities of the second type, respectively.
\end{remark}

\begin{example}\label{ex:3loop}
For $n=2$, we have
\be
Z({\cal K}) \,=\, \begin{pmatrix} 0 & 1 & 1 \\
1 & -2 \m_1 & s - \m_1 - \m_2 \\
1 & s - \m_1 - \m_2 & -2 \m_2
\end{pmatrix},
\ee
where $s := \M_1$. The principal Landau determinant $E_{{\tt A}_2}({\cal E})$ is given by the vanishing locus of the degree $5$ polynomial
\be
E_{{\tt A}_2}({\cal K})\, =\, \textcolor{mycolor1}{\m_1 \m_2} \textcolor{mycolor2}{\lambda(\m_1,\m_2,s)} \textcolor{mycolor4}{s},
\ee
where $\lambda$ is the K\"all\'en function \eqref{eq:Kallen}. 
\end{example}

\begin{example}
For $n=3$, we have
\be
Z({\cal K}) \,=\, \begin{pmatrix} 0 & 1 & 1 & 1\\
1 & -2 \m_1 & \M_1 - \m_1 - \m_2 & \M_3 - \m_1 - \m_3 \\
1 & \M_1 - \m_1 - \m_2 & -2 \m_2 & \M_2 - \m_2 - \m_3 \\
1 & \M_3 - \m_1 - \m_3 & \M_2 - \m_2 - \m_3 & -2\m_3
\end{pmatrix}.
\ee
The principal Landau determinant $E_{{\tt A}_3}({\cal E})$ is given by the vanishing locus of
\begin{align}
E_{{\tt A}_3}({\cal K}) &\,=\, \textcolor{mycolor1}{\m_1 \m_2 \m_3} \textcolor{mycolor2}{\prod_{i=1}^{3}\lambda(\m_i, \m_{i+1},\M_i)}\textcolor{mycolor5}{\M_1 \M_2 \M_3(\m_1^2 \M_2+\m_1 \M_2^2}\\
&\textcolor{mycolor5}{+\m_2 \m_1 \M_1- 
   \m_3 \m_1 \M_1-\m_2
   \m_1 \M_2-\m_3 \m_1 \M_2-\m_1 \M_1 \M_2-\m_2 \m_1
   \M_3+\m_3 \m_1 \M_3}\nn\\
&\textcolor{mycolor5}{-\m_1 \M_2 
   \M_3+\m_3 \M_1^2+\m_2
   \M_3^2+\m_3^2 \M_1-\m_2 \m_3 \M_1+\m_2 \m_3 \M_2-\m_3
   \M_1 \M_2+\m_2^2 \M_3}\nn\\
&\textcolor{mycolor5}{-\m_2 \m_3 \M_3- 
   \m_2 \M_1
   \M_3-\m_3 \M_1 \M_3-\m_2 \M_2 \M_3+\M_1 \M_2 \M_3)}\nn\textcolor{mycolor4}{\lambda(\M_1,\M_2,\M_3)},\nn
\end{align}
where the subscripts are taken modulo $3$. The above polynomial has degree $17$.
\end{example}

\begin{example}
For $n=4$, we further specialize to the equal-mass subspace ${\cal E}^{(\M,\m)}$ of ${\cal K}$ given by $\M_i = \M$, $\m_i = \m$, for which
\be
Z({\cal E}^{(\M,\m)}) \,=\, \begin{pmatrix} 0 & 1 & 1 & 1 & 1\\
1 & -2 \m & \M - 2\m & s - 2\m & \M - 2\m \\
1 & \M - 2\m & -2\m & \M - 2\m & t - 2\m \\
1 & s-2\m & \M - 2\m & -2\m & \M - 2\m \\
1 & \M - 2\m & t - 2\m & \M - 2\m & -2\m
\end{pmatrix}.
\ee
The principal Landau determinant on this subspace is given by the square-free part of the polynomial
\begin{align}
& \textcolor{mycolor1}{\m^4} \textcolor{mycolor2}{(\M - 4\m)^4 \M^4 (s-4\m)s (t - 4\m) t}
 \textcolor{mycolor5}{s^2\left(4 \m \M-\m s-\M^2\right)^2}\\
&\textcolor{mycolor5}{
t^2\left(4\m \M-\m t-\M^2\right)^2 \M^4 s t \cdot s t \left(16 \m \M-4 \m s-4 \m t-4 \M^2+s t\right)} \nn\\
&\textcolor{mycolor5}{\cdot s^2 (4 \M-s)^2 t^2(4 \M-t)^2}\textcolor{mycolor4}{s t (4\M - s - t)}.\nn
\end{align}
The degree is $49$.
\end{example}

\begin{remark}\label{rmk:0externalMasses}
Notice that, restricting to the subspace ${\cal E}^{(0,\m_e)}\subset{\cal K}$ where the external masses $\M_i$ vanish, does not change the monomial support of the Symanzik polynomials. However, the principal $A_n$-determinant identically vanishes when substituting the kinematics parameters in ${\cal E}^{(0,\m_e)}$, e.g., the minor \textcolor{mycolor4}{$\Delta_{T(\{1,2\})}$} is zero for all $n\geq 3$. This is consistent with the Euler characteristic being smaller than the volume in the third column of Tab.~\ref{tab:ngon}. 
\end{remark}

\subsubsection{Zero internal masses}
In this section, we compute the principal Landau determinant when restricting to the subspace ${\cal E}^{(\M_i,0)} \subset{\cal K}$ where the internal masses $\m_e = 0$ vanish. This assumption does not change the polynomial ${\cal U}_{{\tt A}_n}$, while the second Symanzik polynomial becomes
\begin{equation} \label{eq:UFoneloop0} 
{\cal F}_{{\tt A}_n}({\cal E}^{(\M_i,0)})\,=\, \sum_{i<j} s_{i,i+1,\dots,j-1} \, \alpha_i \alpha_j,
\end{equation}
with subscripts taken modulo $n$.
The Newton polytope of the polynomial ${\cal G}_{\An}({\cal E}^{(\M_i,0)})$ is the $(n-1)$-dimensional \textit{hypersimplex} in $\R^n$:
\begin{align*}
     \Delta_{n,2} \, &= \, \text{Conv}(\,e_i,\,e_i+e_j \; : \; i,j\in[n],\, i<j\,)\\
     &= \, \{(\alpha_1\dots,\alpha_n) \in [0,1]^n \, : \, 1\leq \alpha_1+\dots + \alpha_n \leq 2\}.
\end{align*}
Its $f$-vector is described in \cite[Cor.~4]{hibi2015face}. We write it explicitly for completeness:      
\begin{equation}\label{eq:fvec}
\Biggl( \binom{n+1}{2},\, \binom{n+1}{2}\cdot(n-1),\, \binom{n+1}{k+1}\cdot(n-k+1) \;\;\text{for}\,\, 2\leq k \leq n\Biggr).
\end{equation}
Notice that the hypersimplex $\Delta_{n,2}$ can be thought of as a slice of the unit hypercube $[0,1]^n$ by the hyperplanes $\sum \alpha_i = 1$ and $\sum \alpha_i = 2$. It is well-know that the normalized volume of the hypersimplex $\Delta_{n,2}$ equals
the Eulerian number $A_{n,1}$ \cite{lam2007alcoved}, consistently with the computation in Tab.~\ref{tab:ngon}. In what follows, we will compute the principal $A$-determinant for the hypersimplices $\Delta_{n,2}$, where the columns of the matrix $A_n({\cal E}^{(\M_i,0)})$ are precisely the vectors $\{\,e_i,\,e_i+e_j \, : \, i,j\in[n],\, i<j\}$. A general polynomial in the variables $\alpha = (\alpha_1,\dots,\alpha_n)$ supported on $A_n({\cal E}^{(\M_i,0)})$ is given by 
\begin{equation}\label{eq:f}
f = \sum _{0\leq i < j \leq n} z_{ij}\alpha_i\alpha_j,
\end{equation}
where we set $\alpha_0 = 1$ and the coefficients $z_{ij}$ can be seen as the entries of the symmetric matrix $Z$ in \eqref{eq:generalPolynomial} where all the diagonal entries are assumed to be zero. The following lemma describes the face structure of the hypersimplex $\Delta_{n,2}$.

\begin{lemma}\label{lem:hypersimplex}
  The zero- and one-dimensional faces of the polytope $\Delta_{n,2}$ are simplices. These account for the first two entries in the $f$-vector \eqref{eq:fvec}. The faces of dimension $2\leq k\leq n$ are:
  \begin{enumerate}
      \item $\binom{n+1}{k+1}$ hypersimplices of dimension $k$, and
      \item $\binom{n+1}{k+1}\cdot(n-k)$ simplices of dimension $k$.
  \end{enumerate}
\end{lemma}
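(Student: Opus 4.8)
The plan is to reduce the statement to a description of the faces of a standard second hypersimplex, and then to obtain that description by maximising linear functionals. First I would introduce the auxiliary coordinate $\alpha_0 = 2 - (\alpha_1 + \cdots + \alpha_n)$ and observe that the affine map $\alpha \mapsto (\alpha_0,\alpha_1,\ldots,\alpha_n)$ restricts to an isomorphism of polytopes from $\Delta_{n,2} = \{\alpha \in [0,1]^n : 1 \le \alpha_1 + \cdots + \alpha_n \le 2\}$ onto the second hypersimplex on the $(n+1)$-element ground set $\{0,1,\ldots,n\}$,
\[ \Delta \, := \, \Big\{ y \in [0,1]^{\{0,\ldots,n\}} \,:\, y_0 + y_1 + \cdots + y_n \, = \, 2 \Big\}, \]
under which the vertex $e_i$ (for $i\in[n]$) goes to $e_0 + e_i$ and the vertex $e_i + e_j$ goes to $e_i + e_j$. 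Since an affine isomorphism of polytopes is in particular a combinatorial isomorphism, it suffices to describe the faces of $\Delta$ together with their dimensions.

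To do so I would use the fact that maximising a linear functional $\langle c, y\rangle$ over $\Delta$ is a fractional-knapsack problem: one sets $y_i = 1$ on the coordinates whose $c_i$ lie strictly above a suitable threshold, $y_i = 0$ on those strictly below it, and leaves the remaining ``water-line'' coordinates free subject only to $0 \le y_i \le 1$ and $\sum_i y_i = 2$. Since every face of a polytope is the set of maximisers of some linear functional, it follows that each face of $\Delta$ equals
\[ F_{S,T} \, = \, \big\{ y \in \Delta \,:\, y_i = 0 \ (i \in S), \ y_i = 1 \ (i \in T) \big\} \]
for some disjoint $S, T \subseteq \{0,\ldots,n\}$, and conversely every nonempty set of this form is a face, being cut out of $\Delta$ by the supporting hyperplanes $\{y_i = 0\}$ and $\{y_i = 1\}$. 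Taking $S$ and $T$ inclusion-maximal (the sets of coordinates that are identically $0$, resp.\ $1$, on the face) makes the representative unique, and on the free set $R = \{0,\ldots,n\}\setminus(S\cup T)$ the face $F_{S,T}$ is a copy of the hypersimplex $\{y \in [0,1]^R : \sum_{i\in R} y_i = 2 - |T|\}$. Demanding that no further coordinate be constant leaves exactly three cases: $T = \emptyset$ with $|R| \ge 3$, where $F_{S,T}$ is a hypersimplex of dimension $|R|-1$; $|T| = 1$ with $|R| \ge 2$, where $F_{S,T}$ is a simplex of dimension $|R|-1$; and $T = \{i,j\}$ with $R = \emptyset$, where $F_{S,T}$ is the vertex $e_i + e_j$. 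Here I would use that $\{y \in [0,1]^R : \sum_{i\in R} y_i = a\}$ has dimension $|R|-1$ whenever $0 < a < |R|$.

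It then remains to count by dimension. For $2 \le k \le n$, the hypersimplex faces of dimension $k$ are the $F_{S,\emptyset}$ with $|R| = k+1$, giving $\binom{n+1}{k+1}$ of them; the simplex faces of dimension $k$ are the $F_{S,\{i\}}$ with $|R| = k+1$, and choosing $i$ and then a $(k+1)$-subset $R$ of the remaining coordinates gives $(n+1)\binom{n}{k+1} = \binom{n+1}{k+1}(n-k)$ of them. Summing the two families recovers the entry $\binom{n+1}{k+1}(n-k+1)$ of the $f$-vector \eqref{eq:fvec}, while the $\binom{n+1}{2}$ vertices $e_i+e_j$ and the $(n+1)\binom{n}{2}$ edges account for the first two entries, and all of these are trivially simplices.

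The delicate point is the middle step: one must justify the passage to the inclusion-maximal pair $(S,T)$ carefully, so that faces correspond bijectively to the three families, and in particular check that the degenerate possibilities ($|R| \le 2$ with $T = \emptyset$, and $|R| \le 1$ with $|T| = 1$) are precisely those already counted as vertices, so nothing is double-counted. The remaining ingredients — the affine isomorphism in Step~1, the dimension formula for hypersimplices, and the binomial identity $(n+1)\binom{n}{k+1} = \binom{n+1}{k+1}(n-k)$ — are routine.
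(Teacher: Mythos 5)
Your proof is correct and takes essentially the same route as the paper: both pass to homogeneous coordinates, identifying $\Delta_{n,2}$ with the second hypersimplex on the ground set $\{0,\ldots,n\}$, and then enumerate faces by which coordinates are frozen to $0$ or to $1$ — the paper phrases this via cones in the normal fan ($\mathbf{w}_I=\sum_{i\in I}e_i$ for the hypersimplex faces, $\mathbf{w}_{j,I}=-e_j+\sum_{i\in I}e_i$ for the simplex faces), while you phrase it via supporting hyperplanes and the fractional-knapsack characterisation of maximisers. These are dual descriptions of the same classification, and your counts match.

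One small point worth tidying: you should note, as the paper's distinction between the two families implicitly does, that the hypersimplex face on a free set $R$ with $T=\emptyset$ is $\{y\in[0,1]^R:\sum_{i\in R}y_i=2\}$, which for $|R|=3$ is combinatorially a $2$-simplex; so the split into ``hypersimplices'' and ``simplices'' in the lemma refers to the geometric/Newton-polytope structure rather than being a partition by combinatorial type. This does not affect the count, but flagging it avoids the appearance of double-counting at $k=2$.
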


\begin{proof}
    Our strategy consists in listing the faces of the polytope $\Delta_{n,2}$ in each dimension, accounting for numbers in the $f$-vector in \eqref{eq:fvec}. Since it does not change the face structure of the polytope, we work with homogeneous coordinates.
    We denote $\Delta'_{n,2} = \conv(e_i+e_j\,:\, i,j\in\{0,\dots,n\}, i<j),$ where $e_i\in \{0,1\}^{n+1}$ is the $i$-th basis vectors of $\Z^{n+1}$. Note that $\Delta_{n,2}'$ is the Newton polytope of $f$ in \eqref{eq:f}, when $\alpha_0$ is \emph{not} set to 1.
    Each face of $\Delta'_{n,2}$ is uniquely determined by the corresponding cone in the normal fan $\Sigma_{n,2}\subset \R^{n+1}$, and the weight vectors in each of the cones are considered modulo the linearity space spanned by $(1,1,\dots,1)$.\\
    The vertices are determined by the weights in $\Sigma_{n,2}$ of type $-(e_i+e_j)$ for $i,j\in\{0,\dots,n\}$ and $i<j$. Edges correspond to weights of type ${\bf w}_{j,I} = - e_j +\sum_{i\in I} e_i$, where $j\in \{0,\dots,n\}$, and $I\subset \{0,\dots,n\}\setminus\{j\}$ with $|I| = n-2$.

     For $2\leq k\leq n$, the $(n+1-k)-$dimensional cones come in two types. In terms of ray generators, these types are described by the vectors:
    \begin{enumerate}
    \item ${\bf w}_I = \sum_{i\in I} e_i$, where $I\subset \{0,\dots ,n\}$ with $|I| = n-k$. There are precisely $\binom{n+1}{k+1}$ many of such vectors. We have 
    $$\hbox{in}_{{\bf w}_I}(f)\,=\,\sum_{i,j\in J, i<j}z_{ij}\alpha_i\alpha_j,$$
    where $J = \{0,\dots,n\}\setminus I$. Its Newton polytope is a $k$-dimensional hypersimplex;
    \item ${\bf w}_{j,I}$ as above, with $|I| = n-k-1$. There are $\binom{n+1}{k+1}(n-k)$ such vectors. We have     $$\hbox{in}_{{\bf w}_{j,I}}(f) \, = \, \alpha_j \cdot \sum_{i\in \{0,\dots,n\}\setminus I} z_{ij}\alpha_i,$$
    whose Newton polytope is a $k$-dimensional simplex. \qedhere
    \end{enumerate}
\end{proof}

The following theorem follows from Lem.~\ref{lem:hypersimplex} and Thm.~1.2 in \cite[Chpt.~10]{gelfand2008discriminants}:

\begin{theorem}
    The principal $A_n({\cal E}^{(\M_i,0)})$-determinant
    corresponding to $\Delta_{2,n}$~is
    \begin{equation}\label{eq:Adiscr0}
    E_{A_n({\cal E}^{(\M_i,0)})} \, = \,\prod_{\substack{I\subset\{0,1,\dots,n\}\\3\leq|I|\leq n+1}}\det(Z_I) \,=\, \prod_{0\leq i < j \leq n} z_{ij}^{n-1} \cdot\prod_{\substack{I\subset\{0,1,\dots,n\}\\4\leq|I|\leq n+1}}\det(Z_I).
    \end{equation}
    Its degree is $(n-1)\cdot\binom{n+1}{2}\,+\,\sum_{k=4}^{n+1}k\cdot\binom{n+1}{k} \,=\, (n+1)\cdot(2^n-1-n)$.
\end{theorem}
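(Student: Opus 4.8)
The plan is to combine the combinatorial description of the faces of $\Delta_{n,2}$ from Lem.~\ref{lem:hypersimplex} with the product formula \eqref{eq:EA} for principal $A$-determinants and the computation of face discriminants. First I would observe that, by Lem.~\ref{lem:hypersimplex}, the faces of the hypersimplex $\Delta_{n,2}$ are of exactly two shapes: lower-dimensional hypersimplices and simplices. The standard-simplex faces all have $A$-discriminant equal to $1$ (as recalled in Ex.~\ref{ex:s=1and2} and used in the proof of Lem.~\ref{lem:discrformula}, point 1), so they contribute nothing to the product. Hence $E_{A_n({\cal E}^{(\M_i,0)})}$ is, up to the exponents $e_Q$ in \eqref{eq:EA}, the product of the discriminants $\Delta_{A \cap Q}$ over the hypersimplicial faces $Q$ of $\Delta_{n,2}$, including $Q = \Delta_{n,2}$ itself.

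Next I would identify each hypersimplicial face with a smaller hypersimplex: from the proof of Lem.~\ref{lem:hypersimplex}, the $k$-dimensional hypersimplex faces (working in homogeneous coordinates on $\Delta'_{n,2}$) are indexed by subsets $J = \{0,\dots,n\}\setminus I$ with $|J| = k+1$, i.e. by subsets $J \subseteq \{0,1,\dots,n\}$ of size $\geq 3$ (sizes $1$ and $2$ give vertices and edges, which are simplices). The initial form on such a face is $\sum_{i<j,\, i,j\in J} z_{ij}\alpha_i\alpha_j$, the generic quadratic form supported on the $J$-sub-hypersimplex; its discriminant is the determinant of the corresponding symmetric matrix with zero diagonal, namely $\det(Z_J)$. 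This is the analogue for zero-diagonal quadrics of Lem.~\ref{lem:discrformula}, point 2, and I would cite \cite[Ex.~1.3~(b)]{gelfand2008discriminants} together with the degree computation from \cite[Chpt.~9, Thm.~2.8]{gelfand2008discriminants} to confirm that the discriminant is exactly $\det(Z_J)$ (irreducible, of the expected degree $|J|$) and not a proper factor — the degree check is the same as the one used for $\Delta_{T([n])}$ in Lem.~\ref{lem:discrformula}. For $|J|=3$ one has the single term $\det(Z_J) = 2 z_{ij}z_{ik}z_{jk}$ up to sign, i.e. the product of the three coordinates; this accounts for the first product $\prod_{0\leq i<j\leq n} z_{ij}^{n-1}$ once we know the exponent with which the $3$-element faces enter.

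It then remains to pin down the exponents $e_Q$ in \eqref{eq:EA}. Here I would invoke \cite[Chpt.~10, Thm.~1.2]{gelfand2008discriminants}, which gives $e_Q$ combinatorially; for the face lattice of $\Delta_{n,2}$ the relevant count is the number of chains of faces above $Q$ of the appropriate type, and the claim is that $e_Q = 1$ for every hypersimplicial face of dimension $\geq 3$ while the edge (size-$2$) discriminants $\det(Z_{\{i,j\}}) = -z_{ij}^2$ contribute with the multiplicity that, combined over all edges containing a fixed pair, yields the exponent $n-1$ on each $z_{ij}$. Concretely, each $z_{ij}$ appears in $\det(Z_J)$ precisely when $\{i,j\}\subseteq J$; tracking these occurrences across $|J|=3$ and separating them from the genuinely multivariate determinants $|J|\geq 4$ gives the factorization in \eqref{eq:Adiscr0}. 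I expect the exponent bookkeeping — showing all $e_Q$ collapse to $1$ for $|J|\geq 4$ and reconciling the edge/vertex contributions to produce exactly $z_{ij}^{n-1}$ — to be the main obstacle; everything else is a direct translation of Lem.~\ref{lem:hypersimplex} into \eqref{eq:EA}. Finally, the degree formula follows by summing $|J| = k+1$ over all $\binom{n+1}{k+1}$ subsets of each size $k+1 \geq 3$ and adding the contribution $(n-1)\binom{n+1}{2}$ from the $z_{ij}^{n-1}$ factor: the identity
\[
(n-1)\binom{n+1}{2} + \sum_{k=3}^{n+1} k\binom{n+1}{k} \,=\, (n+1)(2^n - 1 - n)
\]
is a routine binomial computation, using $\sum_{k=0}^{n+1} k\binom{n+1}{k} = (n+1)2^n$ and subtracting the $k=0,1,2,3$-deficient terms, consistently with the volume of $\Delta_{n,2}$ being the Eulerian number $A_{n,1} = 2^n-1-n$ and with Thm.~\ref{thm:eulercharvolume}.
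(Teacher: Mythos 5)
Two things go wrong before the exponent bookkeeping you flag as the main obstacle. First, the discriminant of a two-dimensional hypersimplicial face (the case $|I|=3$) is \emph{not} $\det(Z_I)$. The $3\times 3$ zero-diagonal determinant equals $2\,z_{ij}z_{ik}z_{jk}$, a reducible monomial, and a reducible polynomial cannot be an $A$-discriminant; geometrically the face $\Delta_{3,2}$ is a unimodular triangle (its only lattice points are its three vertices), so its $A$-discriminant is $1$. Carrying out the degree check you invoke via \cite[Chpt.~9, Thm.~2.8]{gelfand2008discriminants} would have returned degree $0$ here, not $3$, and would have caught this. Second, your opening assertion that all simplicial faces have $A$-discriminant equal to $1$ is false for vertices: the vertex $e_i+e_j$ is a one-point configuration, so by Ex.~\ref{ex:s=1and2} its discriminant is the monomial $z_{ij}$, not $1$. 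It is these vertex factors, entering the product \eqref{eq:EA} with their GKZ multiplicities $e_Q$, that produce $\prod z_{ij}^{n-1}$. They do not come from the $|I|=3$ determinants, nor from any ``edge'' contribution --- the pair $\{i,j\}$ indexes a \emph{vertex} of $\Delta'_{n,2}$, not an edge. The identity $\prod_{|I|=3}\det(Z_I) = c\cdot\prod z_{ij}^{n-1}$ in the statement is a bookkeeping coincidence, not a face-by-face expansion of $E_A$.

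Because of this, your plan for the exponents --- attributing the monomial factor to $|I|=3$ face discriminants and chasing their $e_Q$ --- cannot be repaired by more effort; it is aimed at faces whose discriminants are trivial. The facts actually needed are: (a) $\Delta_{A\cap Q}=\det(Z_I)$ for the hypersimplicial faces with $|I|\geq 4$; (b) these enter with exponent $e_Q=1$; and (c) each vertex enters with exponent $n-1$. The paper imports all three from \cite{helmer2018nearest}, in particular identifying the vertex exponent with a ``subdiagram volume'' via \cite[Prop.~4.7]{helmer2018nearest}, and then checks the total degree against $(n+1)\cdot\mathrm{vol}(\Delta_{n,2})$. Your closing degree computation is correct and reproduces that check, but a degree match does not by itself determine the individual $e_Q$'s, so it cannot carry the argument on its own.
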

\begin{proof}
The discriminant of a face hypersimplex determined by a subset $I\subset\{0,\dots,n\}$ of size $4\leq |I|\leq n+1$, as described in the proof of Lem.~\ref{lem:hypersimplex}, is given by $\det(Z_I)$, see \cite{helmer2018nearest}. The exponents $n-1$ of the factors $z_{ij}$ equal the subdiagram volume of any vertex of the hypersimplex, see \cite[Prop. 4.7]{helmer2018nearest}. Finally, the degree count shows that the degree of the $A_n({\cal E}^{(\M_i,0)})$-resultant agrees with $(n+1)\cdot \text{vol}(\Delta_{n,2})$, consistently with Tab. \ref{tab:ngon}. 
\end{proof}
We now investigate what happens when substituting the coefficients of the Symanzik polynomials in \eqref{eq:UFoneloop0} in the principal $A_n({\cal E}^{(\M_i,0)})$-determinant. The matrix $Z$ after substituting the parameters in the subspace ${\cal E}^{(\M_i,0)}$ is given by 
\[Z({\cal E}^{(\M_i,0)})\, = \, Z_{|z_{0i} \gets 1, \, z_{ij} \gets s_{i,i+1,\ldots,j-1}, i \neq j}, \]
where no substitution is required for the entries $z_{ii}$ since they are set to zero.
\begin{theorem}
Substituting the coefficients of ${\cal G}_\An({\cal E}^{(\M_i,0)})$ in the $A_n({\cal E}^{(\M_i,0)})$-determinant gives a polynomial in the kinematic variables of degree
\begin{align}
&(n-1)\cdot\binom{n}{2} \,\,+\,\, \sum_{k=4}^n k\cdot \binom{n}{k}\,\,+\,\, \sum_{k=3}^n (k-1)\cdot \binom{n}{k} \, \\
&= \, \frac{1}{2} [n(n-1)^2\,+\,(n-2)\cdot(2^n-1-n)\,+\,n(2^n-n^2+n-2)].\label{eq:degree}
\end{align}
 Its square-free part defines the principal Landau determinant variety ${\rm PLD}_{\An}({\cal E}^{(\M_i,0)})$.
\end{theorem}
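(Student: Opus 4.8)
\emph{Proof plan.} The point of departure is the explicit product formula \eqref{eq:Adiscr0} for the principal $A_n(\mathcal{E}^{(\M_i,0)})$-determinant. First I would substitute the coefficients of $\G_{\An}(\mathcal{E}^{(\M_i,0)})$ from \eqref{eq:UFoneloop0}, i.e.\ set $z_{0i}\gets 1$, $z_{ij}\gets s_{i,i+1,\dots,j-1}$ for $1\le i<j\le n$ (with $s_i=\M_i$) and keep $z_{ii}=0$, in the matrix $Z$ of \eqref{eq:generalPolynomial}, and sort the resulting factors into three groups. The vertex factors $z_{0i}^{\,n-1}$ become $1$ and contribute degree $0$. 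The remaining vertex factors $z_{ij}^{\,n-1}$, $1\le i<j\le n$, become $s_{i,\dots,j-1}^{\,n-1}$ of degree $n-1$, giving $(n-1)\binom n2$. For the minors $\det(Z_I)$ with $4\le|I|\le n+1$ I would distinguish the cases $0\notin I$ and $0\in I$: when $0\notin I$, $Z_I$ is symmetric with zero diagonal and off-diagonal entries of degree $1$, so $\det(Z_I)$ is homogeneous of degree $|I|$, contributing $\sum_{k=4}^n k\binom nk$; when $0\in I$, every non-vanishing term of the Leibniz expansion of $\det(Z_I)$ uses precisely the two entries in row and column $0$ (both equal to $1$ after substitution) together with $|I|-2$ off-diagonal entries, so the substituted minor is homogeneous of degree $|I|-2$, contributing $\sum_{k=3}^n(k-1)\binom nk$ (with $k=|I|-1$). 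Adding the three groups yields exactly the left-hand side of \eqref{eq:degree}, and collapsing it with the elementary identities $\sum_k k\binom nk=n2^{n-1}$ and $\sum_k\binom nk=2^n$ (a routine calculation I would not spell out) produces the closed form $(n-1)2^n-n^2+1$.

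\emph{The key step: non-vanishing.} Because substitution can annihilate factors of a principal $A$-determinant — this is the slogan \eqref{eq:rule} — the substantive part of the first assertion is that no factor above vanishes identically, so that the degree of the product really is the sum of the degrees of its factors. I would prove this by evaluating at one convenient point of $\mathcal{E}^{(\M_i,0)}$: specialize all Mandelstam-type variables $s_{i,\dots,j-1}$ (in particular all external masses $\M_i$) to $1$. Then $Z(\mathcal{E}^{(\M_i,0)})$ becomes the $(n{+}1)\times(n{+}1)$ matrix $J-\id$ with $J$ the all-ones matrix, every principal submatrix $Z_I$ of size $m=|I|$ equals $J_m-\id_m$, and $\det(J_m-\id_m)=(-1)^{m-1}(m-1)\neq 0$ for every $m\ge 2$ by a one-line eigenvalue computation. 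Hence all minors in \eqref{eq:Adiscr0} and all the $z_{ij}$ are non-zero at this point, so the substituted principal $A$-determinant is a non-zero polynomial of the claimed degree. As a cross-check, this is consistent with Tab.~\ref{tab:ngon}: by Thm.~\ref{thm:eulercharvolume} the equality $(-1)^n\chi(V_{\An}(\mathcal{E}^{(\M_i,0)}))=\mathrm{vol}(A_n(\mathcal{E}^{(\M_i,0)}))$ recorded there is exactly the statement that the principal $A$-determinant does not restrict to zero on $\mathcal{E}^{(\M_i,0)}$ — in contrast to the subspace $\mathcal{E}^{(0,\m_e)}$ of Rmk.~\ref{rmk:0externalMasses}.

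\emph{Identifying the square-free part with the PLD.} For the second assertion I would show that the square-free part of the substituted polynomial is the defining equation of ${\rm PLD}_{\An}(\mathcal{E}^{(\M_i,0)})$ in the sense of Def.~\ref{def:PLD}. Here the quadratic nature of $\G_{\An}$ is decisive: for each face $Q$ of $\Newt(\G_{\An}(\mathcal{E}^{(\M_i,0)}))$ listed in Lem.~\ref{lem:hypersimplex}, the initial form $\G_{\An,Q}$ is — up to a monomial factor — again a quadratic form or a monomial times a linear form, and the incidence variety \eqref{eq:incidence-variety} projects onto the hypersurface cut out by the corresponding face discriminant, namely a principal minor of $Z$ (for the hypersimplex faces) or a coordinate hyperplane $\{z_{ij}=0\}$ (for the vertices), as long as that discriminant is non-zero after substitution, and projects dominantly or emptily otherwise. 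By the previous paragraph, each such discriminant $\widetilde{\Delta}_{A_n\cap Q}$ is non-zero on $\mathcal{E}^{(\M_i,0)}$, so every face contributes exactly the codimension-$1$ component $\{\widetilde{\Delta}_{A_n\cap Q}=0\}$ and no dominant components occur; taking the union over $Q$ and discarding repeated factors as prescribed in Def.~\ref{def:PLD} recovers precisely the square-free part of the substituted \eqref{eq:Adiscr0}. I expect this last point to be the main obstacle: one must verify, face by face, that the torus solutions of $\G_{\An,Q}=\partial_\alpha\G_{\An,Q}=0$ project onto $\{\widetilde{\Delta}_{A_n\cap Q}=0\}$ and nothing more — i.e.\ that restricting from the GKZ model over $\C^s$ to the Feynman subspace $\mathcal{E}^{(\M_i,0)}$ creates neither spurious dominant components nor components that drop from codimension $1$ to codimension $\ge 2$. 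This is exactly the phenomenon warned about in \eqref{eq:rule}, and for one-loop diagrams it is held in check by the non-vanishing established above together with the transparent discriminant formulas $\Delta_{A_n\cap Q}=\det(Z_I)$ coming from the quadratic-form structure.
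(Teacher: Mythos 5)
Your proof is correct and follows essentially the same route as the paper: substitute into the explicit product formula \eqref{eq:Adiscr0}, establish non-vanishing by evaluating all $s$-invariants at $1$, and count degrees via the Leibniz expansion of the minors $\det(\widetilde{Z}_I)$ according to whether $0\in I$ or not (with the size-$3$ minors accounting for the $\prod z_{ij}^{n-1}$ factor, hence the first summand). You spell out the one-line eigenvalue computation $\det(J_m-\mathrm{id}_m)=(-1)^{m-1}(m-1)$ that the paper's proof leaves implicit behind the phrase ``it is enough to observe,'' and your closed form $(n-1)2^n - n^2 + 1$ agrees with the stated right-hand side. Your third paragraph, on identifying the square-free part with $\mathrm{PLD}_{\An}(\mathcal{E}^{(\M_i,0)})$, addresses an assertion of the theorem that the paper's own proof is silent on; your sketch is reasonable, and you correctly isolate as the remaining obstacle the face-by-face verification that the torus critical points project \emph{onto} (not merely into) each discriminant hypersurface $\{\widetilde{\Delta}_{A_n\cap Q}=0\}$ — a point the paper likewise does not carry out in detail.
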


\begin{proof}
Let $I\subset \{0,1,\dots,n\}$ with $3\leq |I|\leq n+1$. For simplicity, we denote $\widetilde{Z} = Z({\cal E}_\An^{(\M_i,0)})$. It is enough to observe that $\det (\tilde{Z}_I)\neq 0$ for the choices $s_{i,i+1,\dots,j-1} = 1$. To compute the degree we first observe that the contribution from the principal minors of type $\tilde{Z}_I$, where $0\notin I$ have degree $|I|$. Their contribution accounts for the second summand in \eqref{eq:degree}. The first summand instead comes from the minors of size 3. Finally, the last summand comes from the $\det(\tilde{Z}_{\{0\}\cup I})$, where $I\subset [n]$ and $3\leq|I|\leq n$. In particular, in these cases the degree of $\det(\tilde{Z}_{\{0\}\cup I})$ equals $|I|-1$.
\end{proof}

\begin{example}
For $n=4$, the principal Landau determinant of the graph ${\tt A}_4$ with coefficient matrix 
\be
Z({\cal E}^{(\M_i,0)})\, =\,
\begin{pmatrix} 0 & 1 & 1 & 1 & 1\\
1 & 0 & \M_1 & s & \M_4 \\
1 & \M_1 & 0 & \M_2 & t \\
1 & s & \M_2 & 0 & \M_3 \\
1 & \M_4 & t & \M_3 & 0
\end{pmatrix}
\ee
is given by the square-free part of the polynomial
\begin{align*}
& s^3 t^3 \M_1^3 \M_2^3 \M_3^3 \M_4^3\cdot \lambda(\M_3,\M_4,s)\cdot \lambda(\M_2,\M_3,t)\cdot\lambda(\M_1,\M_4,t)\cdot\lambda(\M_1,\M_2,s)\\
&\cdot(\M_1^2\M_3-\M_1\M_2\M_3+\M_1\M_3^2-\M_1\M_2\M_4+\M_2^2\M_4-\M_1\M_3\M_4-\M_2\M_3\M_4\\
&+\M_2\M_4^2+\M_1\M_2s-\M_1\M_3s-\M_2\M_4s+\M_3\M_4s-\M_1\M_3t+\M_2\M_3t\\
&+\M_1\M_4t-\M_2\M_4t-\M_1st-\M_2st-\M_3st-\M_4st+s^2t+st^2)\\
&\cdot\lambda(\M_1\M_3,\M_2\M_4,st).
\end{align*}
The degree is 33.
\end{example}

\subsubsection{Zero internal and external masses}
In this subsection, we further specialize the family of one-loop diagrams to the subspace ${\cal E}^{(0,0)}$ where both internal and external masses equal zero. In this case, the second Symanzik polynomial is given by
$${\cal F}_{{\tt A}_n}({\cal E}^{(0,0)})\, = \, \sum_{\substack{i<j,\\ j\neq i+1\\ j \neq i-1(\text{mod} \,n)}} s_{i,i+1,\dots, j-1} \alpha_i \alpha_j.$$
The Newton polytope of ${\cal G}_{\An}({\cal E}^{(0,0)})$ can be described as
$$P_n \, = \, \text{Conv}(\, e_i, e_i+e_j \,\, : \,\, (i,j)\in [n],\, i<j, \,j\neq i + 1  ,\, j\neq i-1(\text{mod}\, n)).$$
For $n\geq 4$, $P_n$ is a full-dimensional polytope in $\R^n$ with $\binom{n}{2}$ vertices. Furthermore, we extrapolated from computations for small $n$ that it is defined by the inequalities $$\alpha_i\geq 0, \;\;\, \alpha_i+\alpha_{i+1} \leq 1,\;\;\, \sum_{i\in[n]} \alpha_i \geq 1, \;\;\,\sum_{i\in [n]}\alpha_i \leq 2, $$
where the indices are considered modulo $n$. Notice that, in particular, the last hyperplane just appears for $n\geq 5$, accounting for $2n+1$ facets. We denote $A_n({\cal E}^{(0,0)})$ the matrix whose columns are the lattice points in $P_n\cap \Z^n$.
However, the complexity of the combinatorics of this polytope makes more complicated to develop an analogous discussion to determine the principal $A_n({\cal E}^{(0,0)})$-determinant corresponding to the polytope $P_n$. A general polynomial in the variables $\alpha = (\alpha_1,\dots,\alpha_n)$ with support in $A_n({\cal E}^{(0,0)})$ can be written as in \eqref{eq:f} where the matrix $Z$ must have all diagonal entries and all entries $z_{ij}$ with $j=i+1 (\text{mod} \, n)$ set to zero. We will write $\hat{Z}$ for such a matrix.
\begin{conjecture}\label{conj:A00}
The principal $A_n({\cal E}^{(0,0)})$-determinant variety corresponding to the polytope $P_n$ is the vanishing locus of the square-free polynomial
\begin{equation}\label{eq:A00}
    E_{A_n({\cal E}^{(0,0)})} \, = \, \prod_{\substack{I\subset\{0,1,\dots,n\}\\4\leq|I|\leq n+1}} \det(\hat{Z}_I).
\end{equation}
The intersection of the principal $A_n({\cal E}^{(0,0)})$-determinant variety with the subspace ${\cal E}^{(0,0)}$ is a hypersurface in ${\cal E}^{(0,0)}$. Its defining equation is the principal Landau determinant $E_\An({\cal E}^{(0,0)})$ and it can be attained by substituting the coefficients of the polynomials ${\cal G}_\An({\cal E}^{(0,0)})$ into \eqref{eq:A00}.
\end{conjecture}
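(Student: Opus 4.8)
## Proof plan for Conjecture \ref{conj:A00}

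The plan is to mirror, as closely as possible, the arguments already carried out for the generic-mass case (Lem.~\ref{lem:facesT}, Lem.~\ref{lem:discrformula}, Thm.~\ref{thm:detformula}) and the zero-internal-mass case (Lem.~\ref{lem:hypersimplex} and the theorem following it), adapting each step to the polytope $P_n$. First I would nail down the combinatorics: prove that $P_n$ is exactly the polytope cut out by $\alpha_i \geq 0$, $\alpha_i + \alpha_{i+1} \leq 1$, $\sum_i \alpha_i \geq 1$, $\sum_i \alpha_i \leq 2$ (indices mod $n$), and classify its faces. The natural route is to homogenize, passing to $\hat P_n = \conv(e_i + e_j : i,j \in \{0,\dots,n\}, i<j, j \neq i\pm 1 \bmod n)$ inside $\R^{n+1}$, describe its normal fan $\hat\Sigma_n$, and read off face initial forms from weight vectors, exactly as in the proof of Lem.~\ref{lem:hypersimplex}. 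The key structural claim to isolate: every \emph{face} of $\hat P_n$ is again, up to relabeling of coordinates, either a polytope of the same combinatorial type (a ``$P_k$'' for the appropriate cyclic sub-structure), a genuine hypersimplex $\Delta_{m,2}'$, or a simplex. I expect this to be the main obstacle — the cyclic adjacency condition $j \neq i \pm 1 \bmod n$ does not behave as cleanly under taking faces as the hypersimplex condition did, and one must be careful that the ``forbidden diagonal'' structure inherited by a face may or may not still be cyclic.

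Granting the face classification, the second step is the discriminant formula \eqref{eq:A00}. For each face $F$ of $P_n$ I would identify $\Delta_F$ using the quadratic-form dictionary $\Delta = \det(\text{symmetric matrix})$ from \cite[Ex.~1.3~(b)]{gelfand2008discriminants}, together with the two facts already used in Lem.~\ref{lem:discrformula}: the $A$-discriminant of a standard simplex is $1$, and for a face supported on a subset $I \subset \{0,1,\dots,n\}$ whose initial form is a full quadric in the variables indexed by $I$ one gets $\det(\hat Z_I)$. One must check that these are the only contributions with nontrivial discriminant, and in particular that the ``simplex-type'' faces (the $\binom{n+1}{k+1}(n-k)$-analogue appearing in Lem.~\ref{lem:hypersimplex}, type~2) contribute $1$; the faces of ``$P_k$-type'' contribute recursively by the same formula. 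Then invoke \cite[Chpt.~10, Thm.~1.2]{gelfand2008discriminants} for the product formula \eqref{eq:EA}, which requires computing the multiplicities $e_Q$; here I would argue, as in the proof of the theorem after Lem.~\ref{lem:hypersimplex}, that a degree count — comparing $\deg E_{A_n({\cal E}^{(0,0)})}$ with $(n+1)\cdot\vol(P_n) = (n+1)\cdot(2^n - 1 + n - n^2)$ (the last-column volume of Tab.~\ref{tab:ngon}) — forces all relevant exponents to equal $1$ and the vertex exponents to vanish. Note this is cleaner than the $\Delta_{n,2}$ case: since every vertex of $P_n$ is $e_i$ or $e_i + e_j$ and the polytope no longer contains the $2e_i$ dilates, the subdiagram volumes at vertices should come out to be $0$, which is why there is no $\prod z_{ij}^{\,\text{power}}$ factor in \eqref{eq:A00}.

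The third and final step is the substitution statement: that plugging $z_{0i} \gets 1$, $z_{ij} \gets s_{i,i+1,\dots,j-1}$ (with diagonal and cyclically-adjacent entries zero) into \eqref{eq:A00} yields a nonzero polynomial, so that the PLD variety is genuinely codimension $1$. As in the earlier proofs, it suffices to exhibit, for each $I$ with $4 \leq |I| \leq n+1$, one kinematic point at which $\det(\hat Z_I) \neq 0$; the uniform choice $s_{i,i+1,\dots,j-1} = 1$ should work, reducing to checking invertibility of a fixed $0/1$ matrix with zero diagonal and zeros on the cyclic superdiagonal, which one handles by a direct determinant evaluation (or by noting it is a perturbation of the all-ones-off-diagonal matrix, whose principal minors of size $\geq 2$ are nonzero). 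Squaring-free then gives $E_{\An}({\cal E}^{(0,0)})$ by Def.~\ref{def:PLD}, and by Thm.~\ref{thm:eulercharvolume} the equality of volume and Euler characteristic in the last column of Tab.~\ref{tab:ngon} (verified for $n \leq 10$) is exactly the statement that no face discriminant drops out — which is what makes the ``substitute into \eqref{eq:A00}'' prescription legitimate here, in contrast to the subspace ${\cal E}^{(0,\m_e)}$ of Rmk.~\ref{rmk:0externalMasses}. The bottleneck throughout remains step one: without a clean inductive description of the faces of $P_n$ — which the authors explicitly flag as ``quite challenging'' — steps two and three cannot be made rigorous in general $n$, which is presumably why the statement is left as a conjecture.
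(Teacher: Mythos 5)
This statement is a conjecture that the paper deliberately leaves unproved, so there is no paper argument to compare against; your ``proof'' is a plan of attack, not a proof, and you yourself flag the face combinatorics of $P_n$ as the unresolved bottleneck, which matches exactly the obstacle the authors identify. The overall template (classify faces, read off discriminants via the symmetric-matrix dictionary, check non-vanishing under the kinematic substitution, compare with the Euler characteristic from Tab.~\ref{tab:ngon}) is the same one the paper uses in the proved cases and is the natural route here.

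However, your step on the exponents $e_Q$ contradicts what the paper actually finds. You propose that a degree count matching $\deg E_{A_n(\mathcal{E}^{(0,0)})}$ against $(n+1)\cdot\mathrm{vol}(P_n)$ would force all face exponents for faces of dimension $\geq 1$ to be $1$ and all vertex exponents to vanish. The paper states the opposite immediately after the conjecture: ``A degree check indicates that we should expect nontrivial integer exponents, as in \eqref{eq:EA}, also for discriminants corresponding to faces of dimension greater than one.'' In other words, the product in \eqref{eq:A00} with all exponents $1$ has degree strictly smaller than $(n+1)\cdot\mathrm{vol}(P_n)$, so the degree count cannot force exponents to be $1$ — it is a single scalar equation in many unknowns, and in the proved cases the exponents were computed from subdiagram volumes (\cite[Chpt.~10, Thm.~1.2]{gelfand2008discriminants}, \cite[Prop.~4.7]{helmer2018nearest}) with the degree serving only as a consistency check afterwards. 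This does not invalidate the conjecture (which concerns only the square-free polynomial, where exponent values are immaterial), but it means your justification for the absence of $z_{ij}$ factors must come from a direct computation of vertex subdiagram volumes, not from degree matching. You should also note that both $\mathrm{vol}(P_n)=2^n-1+n-n^2$ and the matching Euler characteristic are themselves only conjectured (the bold entries in Tab.~\ref{tab:ngon}), so any degree-matching argument would be circular in $n$.
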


Conj. \ref{conj:A00} presents a number of challenges, the main one being that the combinatorics of the polytope $P_n$ is hard to understand. A degree check indicates that we should expect nontrivial integer exponents, as in \eqref{eq:EA}, also for discriminants corresponding to faces of dimension greater than one. Computing such exponents would prove the formula for the number of master integrals in the last column of Tab.~\ref{tab:ngon}.

\subsection{Banana diagrams}
Let ${\tt B}_\E$ denote the banana diagram with $\E\geq 2$ internal edges, see Fig.~\ref{fig:schlegel} (left) for an example. We denote $\alpha=(\alpha_1,\dots,\alpha_\E)$, then the Symanzik polynomials are given by
\be
{\cal U}_{{\tt B}_\E}\, = \, \sigma_{\E-1}(\alpha),\qquad {\cal F}_{{\tt B}_\E} \,=\, s\prod_{e=1}^\E\alpha_e-\biggl(\sum_{e=1}^\E\m_e\alpha_e\biggr){\cal U}_{{\tt B}_\E},
\ee
where $\sigma_{\E-1}(\alpha)$ denotes the elementary symmetric polynomial of degree $\E-1$ in $\E$ variables. For example, $\U_{{\tt B}_3} = \alpha_1 \alpha_2 + \alpha_2 \alpha_3 + \alpha_1 \alpha_3$. We will denote $A_\E$ the matrix whose columns are the exponent vectors of the graph polynomial ${\cal G}_{{\tt B}_\E}.$ When choosing generic coefficients for the parameters $\m_e,s$ in the kinematic space ${\cal K}\subset\C^{\E+1}$, the computation of the Euler characteristic and the volume return different values. More precisely, we verified using \texttt{Julia} that for $2\leq \E\leq 10$ we have 
\be
|\chi(V_{{\tt B}_\E}({\cal K}))|\, = \, 2^\E-1, \qquad \hbox{vol}(A_\E({\cal K})) = \binom{2\E-1}{\E}.
\ee
These values for the Euler characteristic are proven, see \cite{Kalmykov:2016lxx,Bitoun:2017nre}.
When restricting to the subspace ${\cal E}^{(0,0)}\subset{\cal K}$, i.e., setting the internal and external masses to zero, the graph polynomial ${\cal G}_{{\tt B}_\E}({\cal E}^{(0,0)})$ is supported on the vertices of an $\E$-dimensional simplex. Therefore, we have $\hbox{vol}(A_\E({\cal E}^{(0,0)})) = 1.$
As mentioned in Ex.~\ref{ex:banana-nongeneric}, the signed Euler characteristic of a smooth very affine variety coincides with its maximum likelihood degree, see \cite{franecki2000gauss,huh2013maximum}. Furthermore, varieties with maximum likelihood degree one were geometrically characterized in \cite{huh2014varieties}. We will use that characterization to prove that the signed Euler characteristic $(-1)^\E\cdot\chi(V_{A_\E}({\cal E}^{(0,0)}))$ equals one for all $\E\geq 2$.
\begin{proposition}
If $s\neq 0$, the signed Euler characteristic of $V_{A_\E}({\cal E}^{(0,0)})$ is one: $(-1)^\E\cdot\chi(V_{A_\E}({\cal E}^{(0,0)}))= 1$. Moreover, the likelihood function $L = \alpha_1^{\nu_1}\cdots \alpha_\E^{\nu_\E} {\cal G}^\mu_{{\tt B}_\E}({\cal E}^{(0,0)})$ has one unique critical point given by 
\begin{equation}
    ((\E-1)\cdot\mu+\nu_1+\cdots+\nu_\E) \cdot \left(-\frac{1}{s(\mu+\nu_1)}, \, \ldots \, , -\frac{1}{s(\mu+\nu_E)}  \right).
\end{equation}
\end{proposition}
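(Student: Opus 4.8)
The plan is to reduce the graph polynomial to a manifestly simple form on the torus, read off the Euler characteristic from a standard arrangement, and then solve the likelihood equations to exhibit the unique critical point.

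\emph{Step 1 (simplify the graph polynomial).} After setting all internal and external masses to zero, $\F_{{\tt B}_\E} = s\,\alpha_1\cdots\alpha_\E$ and $\U_{{\tt B}_\E} = \sigma_{\E-1}(\alpha)$, so
\[ {\cal G} \,:=\, {\cal G}_{{\tt B}_\E}({\cal E}^{(0,0)}) \,=\, \sigma_{\E-1}(\alpha) + s\,\alpha_1\cdots\alpha_\E \,=\, (\alpha_1\cdots\alpha_\E)\Bigl(s + \textstyle\sum_{e=1}^{\E}\alpha_e^{-1}\Bigr). \]
On $(\C^*)^\E$ the monomial $\alpha_1\cdots\alpha_\E$ is a unit, so $V_{A_\E}({\cal E}^{(0,0)})$ is cut out there by $h := s + \sum_e \alpha_e^{-1}$, and it is smooth because $\nabla h$ never vanishes on the torus. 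Under the torus automorphism $\beta_e = \alpha_e^{-1}$ it becomes the intersection of the affine hyperplane $\{\beta_1 + \cdots + \beta_\E = -s\}$ with $(\C^*)^\E$. For $s \neq 0$ this is the complement of $\E$ hyperplanes in general position inside an $(\E-1)$-dimensional affine space, one of the standard very affine varieties of maximum likelihood degree one \cite{huh2014varieties}; together with the identification of the signed Euler characteristic with the ML degree for smooth very affine varieties \cite{franecki2000gauss,huh2013maximum}, this already yields the first assertion. (When $s = 0$ the Newton polytope degenerates, which is why $s\neq 0$ is imposed.)

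\emph{Step 2 (the critical point).} To obtain the explicit critical point — and to re-prove Step 1 self-containedly — I would solve the log-likelihood equations. With $W = \sum_e \nu_e\log\alpha_e + \mu\log{\cal G}$, the factorization of Step 1 gives $W = \sum_e(\nu_e+\mu)\log\alpha_e + \mu\log h$, and since $\alpha_j\,\partial_{\alpha_j}h = -\alpha_j^{-1}$ the equations $\alpha_j\,\partial_{\alpha_j}W = 0$ become $(\nu_j+\mu) = (\mu/h)\,\alpha_j^{-1}$ for $j = 1,\dots,\E$. Hence $\alpha_j^{-1} = (\nu_j+\mu)\,h/\mu$; summing over $j$ and substituting into $h = s + \sum_j\alpha_j^{-1}$ gives one linear equation determining $h$ uniquely, and back-substitution yields $\alpha_j = -\bigl((\E-1)\mu + \sum_k\nu_k\bigr)/\bigl(s(\mu+\nu_j)\bigr)$, which is the asserted formula. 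For generic $\mu,\nu$ this point lies in $(\C^*)^\E$ and has $h \neq 0$, so it is a genuine critical point of $L$ on $(\C^*)^\E \setminus V_{A_\E}({\cal E}^{(0,0)})$, and the elimination shows it is the only one; by \cite{huh2013maximum} the number of critical points equals the signed Euler characteristic, which is therefore $1$.

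\emph{Main obstacle.} The computations themselves are short, so the only points that really need care are: that clearing the monomial denominators and passing from ${\cal G}$ to $h$ neither creates nor destroys critical points — equivalently, that the critical scheme is exactly the single reduced point above, with no stray positive-dimensional component escaping to a coordinate hyperplane or to $\{h = 0\}$; the smoothness hypothesis in Huh's theorem, checked in Step 1; and the sign bookkeeping relating $\chi$ of $V_{A_\E}({\cal E}^{(0,0)})$, $\chi$ of its torus complement, and the ML degree. I expect the first of these to be the only genuine obstacle, and it is resolved by the explicit elimination, which leaves no free parameters.
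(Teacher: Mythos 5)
Your proof is correct, and it takes a genuinely different route from the paper's. The paper's proof directly feeds an explicit matrix pair $A\in\Z^{1\times(\E+2)}$, $B\in\Z^{(\E+2)\times(\E+1)}$ and a scaling vector $\mathbf d=(1/s,\ldots,1/s,1/s^{\E-1})$ into Theorems 1 and 2 of Huh's \emph{Varieties with maximum likelihood degree one} and reads off both the ML degree and the critical point from the Horn-type parametrization there. You instead observe the monomial factorization ${\cal G}_{{\tt B}_\E}({\cal E}^{(0,0)})=\bigl(\prod_e\alpha_e\bigr)\bigl(s+\sum_e\alpha_e^{-1}\bigr)$, pass through the torus automorphism $\beta_e=\alpha_e^{-1}$ to recognize $V$ as the complement of $\E$ coordinate hyperplanes (in general position precisely because $s\neq 0$) inside the affine hyperplane $\{\sum\beta_e=-s\}\cong\C^{\E-1}$, and then solve the log-likelihood equations by a short linear elimination. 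What your route buys is a more elementary and essentially self-contained argument: Step~2 alone, which reduces the critical equations to $\alpha_j^{-1}=(\nu_j+\mu)h/\mu$ followed by a single linear constraint on $h$, pins down the unique critical point and hence the ML degree without invoking Huh's structure theorem at all, whereas the paper's proof requires the reader to unpack the hypotheses of that theorem for the chosen $A$, $B$, $\mathbf d$. Your Step~1 phrase ``one of the standard very affine varieties of maximum likelihood degree one'' is a bit loose as a citation to \cite{huh2014varieties}, but this does not matter since Step~2 makes the claim self-contained.

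One small bookkeeping point, which is an infelicity in the paper's statement rather than in your argument: the unique critical point lives on the complement $X_z=(\C^*)^\E\setminus\{\G=0\}$, so by \cite{huh2013maximum} it is $(-1)^\E\chi(X_z)$ that equals $1$; since $\chi(X_z)=-\chi(V_{A_\E}({\cal E}^{(0,0)}))$ for a hypersurface complement in the torus, this gives $(-1)^{\E-1}\chi(V_{A_\E}({\cal E}^{(0,0)}))=1$, matching the sign convention used in Table~1 of the paper rather than the $(-1)^\E$ written in the proposition. Your computation implicitly produces the consistent sign; it is worth making that explicit so the reader is not misled.
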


\begin{proof}
The proof is an application of \cite[Thm.~1 and Thm.~2]{huh2014varieties}. We let  
\be
A \, = \, \begin{pmatrix} 1 & 1  &\cdots & 1 & 1 \end{pmatrix} \in \mathbb{Z}^{1 \times (\E+2)}, \quad B \, = \, \begin{pmatrix}
1  & 1 & \dots & 1 & \E-1\\
-1 & 0 & \dots & 0 & -1 \\
0 & -1 & \dots & 0 & -1 \\
\vdots& \vdots & \ddots & \vdots & \vdots\\
0 & 0 & \dots & -1 & -1\\
0 & 0 & \dots & 0 & 1
\end{pmatrix} \in \Z^{(\E+2)\times(\E+1)}
\ee
and we choose the vector ${\bf d}=(1/s,\dots,1/s,1/s^{\E-1})$. Plugging these data into \cite[Thm.~2]{huh2014varieties}, which uses the same notation, proves that $(-1)^\E\cdot\chi(V_{A_\E}({\cal E}^{(0,0)}))= 1$.
The critical points of the function $L$ are determined via the map in \cite[Thm.~1]{huh2014varieties}.
\end{proof}

\section{Computing principal Landau determinants} \label{sec:algorithm}

Our definition of the principal Landau determinant in Sec.~\ref{subsec:definition} hints at an algorithm for computing it via elimination of variables. 
Standard methods for this are based on Gr\"obner bases. They are implemented, for instance, in the software package \texttt{Oscar.jl} \cite{OSCAR}. The advantage of such methods is that they return the \emph{exact} answer. That is, they are guaranteed to return the principal Landau determinant with exact, integer coefficients. In large examples, such as those illustrated in Fig.~\ref{fig:diagrams}, it is not feasible to use these methods. We then resort to a numerical sampling algorithm that attempts to reconstruct the principal Landau determinant using homotopy continuation methods. This goes a long way in tackling such larger cases, and gives reliable answers in practice. It is a generalization of the strategy used to compute Landau discriminants in \cite{Mizera:2021icv}.

These algorithms are implemented in a \texttt{Julia} package \texttt{PLD.jl}, publicly available at
\begin{center}
\PLDwebsite.
\end{center}
The website contains a database of diagrams, the source code, and a tutorial exemplifying its use.
For instance, for the parachute diagram, the principal Landau determinant is computed as follows: 
\begin{minted}{julia}
edges = [[3,1],[1,2],[2,3],[2,3]];
nodes = [1,1,2,3];
getPLD(edges, nodes, internal_masses = :generic,
                     external_masses = :generic)
\end{minted}
Here, \texttt{edges} and \texttt{nodes} encode the diagram in the same format as in \cite{Mizera:2021icv}: each vertex is assigned a number; \texttt{edges} is the list of pairs of vertices that are connected by internal edges; \texttt{nodes} is the list of vertices to which we attach external momenta $p_1, p_2, \ldots$. The internal masses $m_1, m_2, \ldots$ and Schwinger parameters $\alpha_1, \alpha_2, \ldots$ are assigned in the same order as they appear in \texttt{edges}. For example, above the vertex $1$ has the momentum $p_1 + p_2$, vertex $2$ has $p_3$, and vertex $3$ has $p_4$. There are internal edges connecting $3$ to $1$ with mass $m_1$, $1$ to $2$ with mass $m_2$, and $2$ to $3$ twice with masses $m_3$ and $m_4$. The resulting diagram is $G = \texttt{par}$ from Fig.~\ref{subfig:par}.

When masses are set to \texttt{:generic}, the function \texttt{getPLD} automatically assigns distinct variables to the internal masses squared $\texttt{m}_{\texttt{1}}, \texttt{m}_{\texttt{2}}, \ldots$, and external masses squared $\texttt{M}_{\texttt{1}}, \texttt{M}_{\texttt{2}}, \ldots$, as well as Mandelstam invariants $\texttt{sij\ldots} = (p_i + p_j + \ldots)^2$ in a cyclic basis (for $n=4$, the basis consists of $\texttt{s} = (p_1 + p_2)^2$ and $\texttt{t} = (p_2 + p_3)^2$). Other options of \texttt{getPLD} are explained in detail in Ex.~\ref{ex:getPLD} and in the tutorial at \PLDwebsite.

The output of the above command gives the results described in Sec.~\ref{subsec:beyondstandard}. For example, a few lines of the output are
\begin{minted}{julia}
codim: 3, face: 1/33, weights: [-1, 0, 0, 0], discriminant: m?$_1$?
codim: 3, face: 2/33, weights: [-1, 0, -1, 0], discriminant: 1
codim: 3, face: 3/33, weights: [0, 1, 1, 2], discriminant: 1
codim: 3, face: 4/33, weights: [-1, -1, 0, 1], discriminant: m?$_1$?^2 - 2*m?$_1$?*m?$_2$?
                                             - 2*m?$_1$?*s + m?$_2$?^2 - 2*m?$_2$?*s + s^2
\end{minted}
A verbose version of the output also prints other information, for example, whether a given face has dominant components (UV/IR divergences), $\alpha$-positive solutions, etc.

\subsection{Symbolic elimination} \label{subsec:symbolic}

Formally, the problem of elimination of variables can be phrased as follows. 
\begin{center}
    \textit{Given a set of generators $f_i(\alpha,z), i = 1, \ldots, m$ for an ideal $I \subset \mathbb{Q}[\alpha_1, \ldots, \alpha_n,z_1, \ldots, z_s]$, compute a set of generators of the \emph{elimination ideal} $I_z = I \cap \mathbb{Q}[z_1, \ldots, z_s]$.}
\end{center}
This is the algebraic version of \emph{coordinate projection}. More precisely, let $Y = V(I)$ be the affine variety defined by $I$ in $\mathbb{C}^n \times \mathbb{C}^s$, and let $\pi_z(Y) \subset \mathbb{C}^s$ be its image under the coordinate projection $(\alpha, z) \mapsto z$. The variety of the elimination ideal $I_z$ is the Zariski closure of $\pi_z(Y)$ \cite[Chpt. 3, \S 2]{cox2013ideals}. Notice that, although we are interested in complex algebraic varieties, we assume in this section that the equations are defined over $\mathbb{Q}$. This is necessary in order to manipulate the generators symbolically. A standard symbolic tool for elimination is a \emph{Gr\"obner basis} of $I$ \cite[Chpt. 2]{cox2013ideals}. Many computer algebra systems, including \texttt{Oscar.jl} \cite{OSCAR}, offer an implementation.

For our purposes, the variety $Y$ is the incidence variety in \eqref{eq:incidence-variety}. Recall that ${\cal E} \simeq \mathbb{C}^s$ is a linear subspace of the kinematic space ${\cal K}$ associated to a diagram $G$. Our variety lives in $(\mathbb{C}^*)^\E \times \mathbb{C}^s$, where $\mathbb{C}^* = \mathbb{C} \setminus \{0\}$. To enforce nonzero $\alpha$-coordinates, we add a new variable $y$ and impose the condition $y\cdot \alpha_1 \cdots \alpha_\E - 1 = 0$. This gives an ideal $I_{G,Q} \subset \mathbb{Q}[\alpha,y,z]$ generated by $\E + 2$ equations: the graph polynomial ${\cal G}_{G,Q}$ restricted to a face $Q$, its partial derivatives, and this extra equation. Explicitly,
\begin{equation} \label{eq:IGQ} I_{G,Q} \, = \, \langle {\cal G}_{G,Q}, \, \partial_\alpha {\cal G}_{G,Q}, \, y\cdot \alpha_1 \cdots \alpha_{\E} - 1 \rangle. \end{equation}
The variety $V(I_{G,Q}) \subset \mathbb{C}^{\E+1} \times {\cal E}$ is isomorphic to $Y_{G,Q}({\cal E})$, and it has the same projection onto ${\cal E}$. Rather than computing this projection, we need to do this for each irreducible component $Y^{(i)}_{G,Q}({\cal E})$ (see Sec.~\ref{subsec:definition}). Since we are only interested in the variety of $I_{G,Q}$, not in any non-reduced scheme structure, the Nullstellensatz allows us to replace $I_{G,Q}$ with its radical \cite[Chpt.~4]{cox2013ideals}. We will therefore assume that $I_{G,Q}$ is a radical ideal from now on. The ideals of the individual components $Y^{(i)}_{G,Q}({\cal E})$ are found from $I_{G, Q}$ using \emph{primary decomposition} \cite[Chpt.~4, \S 8]{cox2013ideals}. This too is implemented in most computer algebra software systems, and returns a list of ideals $I^{(j)}_{G,Q}$ such that 
\begin{equation} \label{eq:primarydecomp} I_{G,Q} \, = \, \bigcap_{i \in \mathbb{I}(G,Q)} I^{(i)}_{G,Q},
\end{equation}
and the ideals $I^{(j)}_{G,Q}$ are \emph{prime}. In particular, their varieties are irreducible. Note that \eqref{eq:primarydecomp} is the algebraic counterpart of the irreducible decomposition \eqref{eq:irreddecomp}.

We are now ready to eliminate: we compute the elimination ideals $(I_{G, Q}^{(i)})_z$ for $i \in \mathbb{I}(G,Q)$. Among the results, the ideals that have codimension 1 are the nonzero principal ideals. We add their generator to the list of factors $\Delta_{G,Q}^{(i)}({\cal E})$ of $E_{G}({\cal E})$. 

This method serves a more general purpose than computing principal Landau determinants. Namely, we solve the following elimination problem. Let $Y^{(i)}$, $i \in \mathbb{I}$ be the irreducible components of 
\begin{equation} \label{eq:incidenceY}
    Y \, = \, \{ (\alpha, z) \in \mathbb{C}^n \times \mathbb{C}^s \, : \, f_1(\alpha,z) = \cdots = f_m(\alpha,z) = 0 \}.
\end{equation}
The map $\pi_z: (x,z) \mapsto z$ projects onto $z$-space. We compute the unique (up to scale) defining equation $\Delta^{(i)} \in \mathbb{Q}[z]$ of the projection $\pi_z(Y^{(i)})$, for all $i \in \mathbb{I}$ such that this projection has codimension 1. In analogy with our notation in Sec.~\ref{subsec:definition}, we will denote this subset of indices by $\mathbb{I}_1 \subset \mathbb{I}$. We demonstrate this on a running example. 

\begin{example}[$n = 1, s = 2$ and $m = 4$] \label{ex:running}
We consider four polynomials in $\mathbb{Q}[\alpha,z_1,z_2]$:
\begin{align}
     f_1 \, = \, & (z_1^2 + z_2^2 - 3 \alpha)   (z_1^2 \alpha - 2 z_1 z_2^2 + 4 z_1 z_2 \alpha + 2 z_1 z_2 + z_1 \alpha^2  - 3 z_1 \alpha - 4 z_2^3 - z_2^2 \alpha \\ 
     & + 6 z_2^2 + 2 z_2 \alpha^2 - z_2 \alpha - 2 z_2 - 2 \alpha^2 + 2 \alpha)   (z_1^2 - z_2 + 2), \\ 
f_2 \, = \, & (z_1^2 + z_2^2 - 3 \alpha)   z_2   (z_1 + z_2 + \alpha - 1)   (z_1^2 - z_2 + 2)   (z_1 - z_2 + \alpha - 1), \\ 
f_3 \,  = \, & (z_1^2 + z_2^2 - 3 \alpha)   (z_1 + z_2 + \alpha - 1)   (z_1 - 2)   (z_1^2 - z_2 + 2)   (z_1 - z_2 + \alpha - 1),\\ 
f_4 \, = \, & (z_1^2 + z_2^2 - 3 \alpha)   (2 z_1^2 \alpha^2 - z_1^2 \alpha - 4 z_1^2 + 6 z_1 z_2^2 - 4 z_1 z_2 \alpha - 10 z_1 z_2 - 3 z_1 \alpha^2 - z_1 \alpha \\
& + 12 z_1 + 2 z_2^2 \alpha^2 + z_2^2 \alpha - 6 z_2^2 - 4 z_2 \alpha^2 + z_2 \alpha + 14 z_2 - 2 \alpha^2 + 6 \alpha - 8)   (z_1^2 - z_2 + 2).
\end{align}
The real part of the incidence variety $Y = \{f_1 = f_2 = f_3 = f_4 = 0\}$ is shown in Fig. \ref{fig:runningex}. The ideal is constructed in \texttt{Oscar.jl} as follows:
\begin{minted}{julia}
R, vrs = PolynomialRing(QQ,["z1";"z2";"α"]); z1, z2 = vrs[1:2]; α = vrs[3];
Q1 = z1^2 + z2^2 - 3*α; Q2 = z1^2 - z2 + 2;
f1 = Q1*(z1^2*α - 2*z1*z2^2 + 4*z1*z2*α + 2*z1*z2 + z1*α^2 - ... + 2*α)*Q2
f2 = Q1*z2*(z1 + z2 + α - 1)*Q2*(z1 - z2 + α - 1)
f3 = Q1*(z1 + z2 + α - 1)*(z1 - 2)*Q2*(z1 - z2 + α - 1)
f4 = Q1*(2*z1^2*α^2 - z1^2*α - 4*z1^2 + 6*z1*z2^2 - 4*z1*z2*α - ... - 8)*Q2
I = ideal(R,[f1;f2;f3;f4])
\end{minted}
It turns out that this ideal is radical. This can be verified via
\begin{minted}{julia}
I == radical(I)
\end{minted}
We now compute its primary decomposition. This is done with the command
\begin{minted}{julia}
PD = primary_decomposition(I)
\end{minted}
We find that $I = I_1 \cap I_2 \cap I_3 \cap I_4 \cap I_5$, with 
\[ \begin{matrix} 
I_1 = \langle \alpha + z_1 + z_2 -1, z_1^2 + z_2^2 -1 \rangle, \quad I_2 = \langle z_1-2, z_2 \rangle, \quad I_3 = \langle 3\alpha -z_1^2-z_2^2 \rangle, \\ 
I_4 = \langle \alpha + z_1 - z_2-1, (z_1-1)^2+(z_2-1)^2 -1 \rangle, \quad I_5 = \langle z_2-z_1^2-2 \rangle.
\end{matrix}
\]
Each of the ideals $I_j$ contributes an irreducible component of the variety $Y$ in Fig. \ref{fig:runningex}. We eliminate $\alpha$ from each $I_j$ and record the generators of the nonzero principal ideals: 
\begin{minted}{julia}
PLD = []
for i = 1:length(PD)
      E = eliminate(PD[i][1],[α])
      if length(gens(E)) == 1 && gens(E)[1] !=0
            push!(PLD, gens(E)[1])
      end
end
\end{minted}
The result consists of three equations $\Delta^{(1)}, \Delta^{(4)}, \Delta^{(5)}$: 
\[ \Delta^{(1)} \, = \, z_1^2 + z_2^2 -1, \quad \Delta^{(4)} \, = \, (z_1-1)^2 + (z_2-1)^2 - 1, \quad \Delta^{(5)} \, = \, z_2-z_1^2-2.\]
Here $\Delta^{(i)}$ is the defining equation of the projection of $Y^{(i)} = V(I_i)$ onto $(z_1,z_2)$-space, see Fig. \ref{fig:runningex}. The components $Y^{(2)}$ and $Y^{(3)}$ do not contribute, because their projections have codimension 2 and 0, respectively. 
\end{example}

\begin{figure}
\centering
\includegraphics[scale=1]{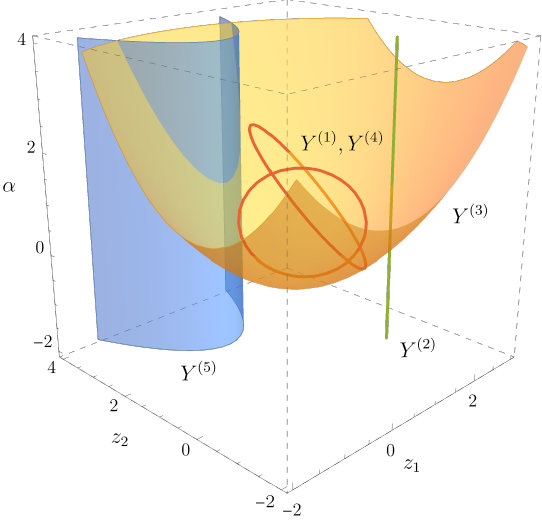}
\caption{The incidence variety from Ex. \ref{ex:running}.}
\label{fig:runningex}
\end{figure}

The above operations of computing the radical of $I_{G, Q}$, then its primary decomposition, and then performing elimination of variables for each prime component tend to be quite costly in practice. In larger examples, it is necessary to come up with a numerical alternative. The resulting algorithm will compute the factors $\Delta_{G,Q}^{(i)}({\cal E})$ without ever computing the generators of $I^{(j)}_{G,Q}$. The idea is to use \emph{sampling} and \emph{interpolation} instead.

\subsection{Numerical elimination} \label{sec:numerical}

We will present our numerical strategy in the general setup introduced in the previous section. That is, $Y \subset \mathbb{C}^n \times \mathbb{C}^s$ is defined by $m$ equations $f_i(\alpha,z) \in \mathbb{Q}[\alpha,z]$. It is then straightforward to specialize the discussion to principal Landau determinants. Like before, our aim is to compute the unique (up to scale) defining equation $\Delta^{(i)} \in \mathbb{Q}[z]$ of the projection $\pi_z(Y^{(i)})$, for all $i \in \mathbb{I}_1$. In \texttt{PLD.jl}, this computation is executed by the function \texttt{project\_codim1}. Since this method uses numerical computations, the input is now in the format of the package \texttt{HomotopyContinuation.jl} \cite{10.1007/978-3-319-96418-8_54}, rather than \texttt{Oscar.jl} \cite{OSCAR}. Here is an example code snippet based on our running Ex. \ref{ex:running}.
\begin{minted}{julia}
using HomotopyContinuation; @var z1 z2 α;
Q1 = z1^2 + z2^2 - 3*α; Q2 = z1^2 - z2 + 2;
f1 = ... # The code for generating f1, f2, f3, f4 is the same as above.
f = [f1;f2;f3;f4]
Δ, samples, gaps = project_codim1(f,[z1;z2], α; homogeneous = false)
\end{minted}
In general, the inputs are \texttt{f}, the list of $m$ equations $f_i$, and two seperate lists of variables, the second of which should be eliminated. We often call \texttt{z} the \emph{parameters}, and \texttt{α} the \emph{variables}. The option \texttt{homogeneous = false} indicates that we do not expect the factors of the PLD to be homogeneous polynomials in this example. It plays a role in numerical interpolation, which is part of the algorithm (see below). 

 For ease of exposition, we assume that the vanishing ideal of each irreducible component $Y^{(i)}$ with $i \in \mathbb{I}_1$, is a prime component of $I = \langle f_1, \ldots, f_m \rangle \subset \mathbb{C}[\alpha,z]$. If this is not the case, $I$ has higher multiplicity along the component $Y^{(i)}$, and our current implementation will not find the defining equation $\Delta^{(i)}$ of $\pi_z(Y^{(i)})$. The difficulty with such non-reduced components is essentially that it is hard to compute points on them numerically. To overcome this issue, new methodology in numerical algebraic geometry is needed, which is beyond the scope of our article. In theory, and sometimes in practice, our assumption can be realized by replacing the ideal $I$ with its radical. The strategy of \texttt{project_codim1} can be summarized as follows: 
\begin{enumerate}
    \item[(i)] Compute a set of sample points $\tilde{S} \subset \mathbb{C}^n \times \mathbb{C}^s$ on the incidence variety $Y$ from \eqref{eq:incidenceY}, so that all points in $\tilde{S}$ lie on a component $Y^{(i)}$ with ${\rm codim}(\pi_z(Y^{(i)})) \leq 1$.
    \item[(ii)] Filter out points that lie on a component which projects dominantly to $z$-space, i.e., ${\rm codim}(\pi_z(Y^{(i)})) = 0$. Call the remaining set of samples $S$.
    \item[(iii)] Divide $S$ into groups according to which component $Y^{(i)}$ they lie on. Let $S^{(i)} \subset S$ be the set of samples on $Y^{(i)}$.
    \item[(iv)] Deduce the degree of $\pi_z(Y^{(i)})$ from the corresponding group of samples $\pi_z(S^{(i)})$. 
    \item[(v)] Sample sufficiently many more points on each component $\pi_z(Y^{(i)})$ to find unique interpolating polynomials $\Delta^{(i)}$. 
\end{enumerate}
We now explain each of these steps in more detail.

\paragraph{Step (i)} We choose a general line $L \subset \mathbb{C}^s$ in $z$-space. It is clear that $\pi_z^{-1}(L) \cap Y^{(i)} \neq \emptyset$ if and only if ${\rm codim}(\pi_z(Y^{(i)}) )\leq 1$. 
If $\pi_z(Y^{(i)})$ has codimension 1 in $\mathbb{C}^s$ and the restriction of $\pi_z$ to $Y^{(i)}$ has finite fibres, i.e., $i \in \mathbb{I}_1$, this nonempty intersection consists of finitely many points. These are regular solutions to the system of equations
\begin{equation} \label{eq:addline}
f_1(\alpha;z) \, = \, \cdots \, = \, f_m(\alpha;z) \, = \, 0 \quad \text{and} \quad z \in L. 
\end{equation}
Here \emph{regular} means that the Jacobian matrix of this system of equations, evaluated at a point in $\pi_z^{-1}(L) \cap Y^{(i)}$, has rank $n + s$.
Components which project dominantly to $\mathbb{C}^s$ under $\pi_z$ contribute positive dimensional solution sets to \eqref{eq:addline}. At each point on these positive dimensional components, the Jacobian matrix has rank at most $n + s -1$. We conclude that all regular solutions to \eqref{eq:addline} lie on a component $Y^{(i)}$ for which $i \in \mathbb{I}_1$. We solve \eqref{eq:addline} using \texttt{HomotopyContinuation.jl}, and add all regular solutions to $\tilde{S}$. The presence of positive dimensional components will lead to singular solutions in the~output. 

However, it might happen that not all indices $i \in \mathbb{I}_1$ are accounted for. For instance, in the situation of Fig. \ref{fig:runningex}, the preimage $\pi_z^{-1}(L)$ of a general line $L$ does not intersect $Y^{(2)}$, as desired, and it makes a one-dimensional intersection with $Y^{(3)}$. The intersections with $Y^{(1)}$ and $Y^{(4)}$ consist of 2 points each, and these are regular solutions to \eqref{eq:addline}. This gives four points in $\tilde{S}$. However, in that example, $\pi_z(Y^{(5)})$ also has codimension 1, and $\pi_z^{-1}(L) \cap Y^{(5)}$ is positive dimensional. The reason is that the restriction of $\pi_z$ to $Y^{(5)}$ has 1-dimensional fibres. We need to make sure that such components are not missed.

Our algorithm proceeds as follows. If \eqref{eq:addline} has singular solutions, let $h_1(\alpha) = c_0 + c_1 \alpha_1 + \cdots + c_n \alpha_n$ be a random affine linear polynomial in the $\alpha$ variables. We cut down the dimension of the fibres of $\pi_z$ by adding $h_1$ to the system: 
\begin{equation} \label{eq:addline-addh}
f_1(\alpha;z) \, = \, \cdots \, = \, f_m(\alpha;z) \, = \, h_1(\alpha) \, = \, 0 \quad \text{and} \quad z \in L. 
\end{equation}
The regular solutions to this system may lie on two types of componentes $Y^{(i)}$. Either $i \in \mathbb{I}_1$ and $\pi_z$ has 1-dimensional fibres restricted to $Y^{(i)}$, or $Y^{(i)}$ projects dominantly to $\mathbb{C}^s$, with zero-dimensional fibres. Clearly, the first group of samples is the one we want to keep. Filtering out spurious samples is taken care of in step (ii). For now, we add all regular solutions to \eqref{eq:addline-addh} to $\tilde{S}$. If \eqref{eq:addline-addh} has singular solutions, there might be indices $i \in \mathbb{I}_1$ for which $Y^{(i)} \rightarrow \pi_z(Y^{(i)})$ has 2-dimensional fibres, so we add another random equation $h_2(\alpha) = 0$ to \eqref{eq:addline-addh}. We solve, retain regular solutions, and proceed in this manner until no singular solutions are found.  

\paragraph{Step (ii)} To check whether $p = (\alpha^*,z^*) \in \tilde{S} \subset Y$ lies on a component which projects dominantly to $\mathbb{C}^s$, it is enough to check whether its tangent space $T_pY$ projects dominantly to $\mathbb{C}^s$. This is the kernel of the Jacobian matrix of $f_1 = \cdots = f_m = 0$, evaluated at $p$. Computing the dimension of the projection of this linear space is an elementary task from linear algebra. We apply this test to all points in $\tilde{S}$, and discard those points which lie on a dominant component. The remaining set is $S$.

\paragraph{Step (iii)} Decomposing $S$ into groups $S^{(i)}$ according to the irreducible components of $Y$ can be done using monodromy loops in \texttt{HomotopyContinuation.jl}. This was applied for Landau discriminants in \cite[Sec.~3.2]{Mizera:2021icv}, to which we refer for more details. 

\paragraph{Step (iv)} Notice that a general line $L \subset \mathbb{C}^s$ intersects $\nabla^{(i)} = \overline{\pi(Y^{(i)})}$ in $\deg (\nabla^{(i)})$-many points, for $i \in \mathbb{I}_1$. Hence, the degree of this hypersurface is the number of distinct points in $\pi_z(S^{(i)})$. For simplicity, we write $d_i = \deg(\nabla^{(i)}) = \deg(\Delta^{(i)})$. 

\paragraph{Step (v)} Finally, we gather more samples on each component $\nabla^{(i)}$, $i \in \mathbb{I}_1$. The minimal number of samples needed is the number of monomials that may appear in the equation $\Delta^{(i)}$, minus one. Indeed, vanishing on a sample imposes one linear condition on the polynomial $\Delta^{(i)}$, and the total amount of parameters is the dimension of the space of polynomials in $z_1, \ldots, z_m$ of degree $d_i$. The formula is $\binom{m-1+d_i}{d_i}$ if \texttt{homogeneous = true}, and $\binom{m+d_i}{d_i}$ if \texttt{homogeneous = false}. The way we collect more samples on $Y^{(i)}$ is as follows. We pick a new line $L' \subset \mathbb{C}^s$, and move our initial line $L$ continuously towards $L'$. Along the way, we use homotopy continuation to track the points in $S^{(i)}$ to a new set of samples $(S^{(i)})'$. In such a homotopy, the sample points cannot leave their irreducible component $Y^{(i)}$. We repeat this many times, until the necessary amount of samples is achieved. For more details and numerical considerations, see \cite[Sec.~3.2]{Mizera:2021icv}. 

After we have used numerical interpolation to obtain floating point approximations of the coefficients of $\Delta^{(i)}$, we may use \emph{rationalization} to recover the exact coefficients in $\mathbb{Q}$. This assumes the original equations $f_i$ were defined over $\mathbb{Q}$, as is the case for principal Landau determinants. We used the command \texttt{rationalize} in \texttt{Julia} for this.

The function \texttt{project\_codim1} returns a list of polynomials $\Delta$ with rational coefficients. The second output contains, for each component $\Delta^{(i)}$, the list of samples used to interpolate it. The third output, \texttt{gaps}, gives an indication of the quality of the numerical interpolation for each component. More precisely, for each component, it records the ratio between the second smallest and the smallest singular value of the linear interpolation problem. As a rule of thumb, when $\mathtt{gap} = 10^e$ for component $\Delta^{(i)}$, one can expect that its coefficients were approximated with $e$ accurate digits. As a tolerance for rationalization, we use $10^{-8}$ as a default. This means that if $e<8$, the answer should probably not be trusted. This was never the case in our experiments.

\subsection{Overall algorithm}
This subsection presents a pseudocode that illustrates the key steps of the algorithm behind the function \texttt{getPLD} in \texttt{PLD.jl}, which was shown in the introduction to this section. The definition of the principal Landau determinant requires to perform elimination on the ideal $I_{G,Q}$ from \eqref{eq:IGQ}, for each face $Q$ of ${\rm Newt}({\cal G}_G)$. Alg. \ref{algorithm} takes this into account, and it summarizes all the steps explained in the previous subsections.  

\begin{algorithm}[t] 
\caption{The algorithm to compute the principal Landau determinant}\label{algorithm}
\begin{algorithmic}[1] 
\Input The vectors of nodes and edges encoding a Feynman diagram $G$, a subspace $\cal{E}$ of the kinematic space $\cal{K}$, a choice of method (\texttt{:sym} for symbolic, \texttt{:num} for numerical).
\Output A vector \texttt{discs} of specialized discriminants $\Delta_{G,Q}^{(i)}({\cal E})$ that constitute the factors of the principal Landau determinant $E_G({\cal E})$.
\Statex
\For{each face $Q$ of $\text{Newt}({\cal G}_G({\cal E}))$}
    \State Compute the initial form ${\cal G}_{G,Q}({\cal E})$;
    \State Define $Y_{G,Q}$ by \texttt{eqs} $:=\{ \mathcal{G}_{G,Q} =\partial_{\alpha} \mathcal{G}_{G,Q} = y\cdot\alpha_1\cdots\alpha_\E-1=0\}$;
    \State \texttt{discs} $:= \texttt{[\,\,]}$;
    \If{\textcolor{mycolor1}{\texttt{method == :sym}}}
        \State $I_{G,Q} := \,$\texttt{radical}(\texttt{ideal}(\texttt{eqs})); 
        \State $\texttt{PD} := \,$\texttt{primary_decomposition}$(I_{G,Q})$;
        \For{$I_{G,Q}^{(i)}$ in  $\texttt{PD}$}
            \State $J := \,$\texttt{eliminate}$(I_{G,Q}^{(i)},[\alpha,y])$;
            \ForAll{principal, codimension 1, non-zero ideals $J^{(i)}$ in $J$}
                    \State Add generator of $J^{(i)}$ to \texttt{discs};
            \EndFor
        \EndFor
    \ElsIf{\textcolor{mycolor1}{\texttt{method == :num}}}
     \State $\Delta_{G,Q}, \,\texttt{_}\,, \texttt{gaps} := \texttt{project_codim1}(\texttt{eqs},\texttt{pars},[\alpha,y])$;
        \ForAll{components $\Delta_{G,Q}^{(i)}$ in $\Delta_{G,Q}$ with large $\texttt{gaps}^{(i)}$}
            \State Add $\Delta_{G,Q}^{(i)}$ to \texttt{discs};
        \EndFor
    \EndIf
\EndFor
\State \textbf{return} \texttt{discs};
\end{algorithmic}
\end{algorithm}

A tutorial on how to use \texttt{PLD.jl} can be found at \PLDwebsite. We here present an example to illustrate some of the optional inputs of the main function, which makes it very easy to use. 

\begin{example}\label{ex:getPLD}
The following code snippet runs \texttt{getPLD} for the diagram $G = $ \texttt{outer-dbox}:
\begin{minted}{julia}
@var m2;
edges = [[1,2],[2,3],[3,4],[4,5],[5,6],[6,1],[3,6]];
nodes = [1,2,4,5];
getPLD(edges, nodes, internal_masses = [m2,m2,m2,m2,m2,m2,0],
                     external_masses = :zero,
                     method = :sym,
                     high_prec = false,
                     codim_start = -1,
                     face_start = 1,
                     single_face = false,
                     single_weight = nothing,
                     verbose = true,
                     homogeneous = true,
                     save_output = "",
                     load_output = "")
\end{minted}
Only \texttt{edges} and \texttt{nodes} are required arguments.
The options \texttt{internal_masses} and \texttt{external_masses} can be set to either \texttt{:generic}, \texttt{:equal}, \texttt{:zero} (the default), or a custom list of variables. In the first three cases, the variables are assigned automatically.
Above, we set all the internal masses to be equal to \texttt{m}, except for the edge \texttt{[3,6]}, which is massless, and all the external masses are set to zero.

The \texttt{method} can be either \texttt{:sym} (the default) or \texttt{:num}.
Here, we use symbolic elimination. The optional input \texttt{high\_prec} is only meaningful when using \texttt{method = :num}. It tells the program to do the numerical interpolation in higher precision. This is slower, but it might be necessary for exact rational reconstruction. It was not needed for the examples we computed, listed at \PLDwebsite, but it was crucial to find some of the larger discriminants in \cite{Mizera:2021icv}. The option \texttt{verbose = true} makes sure that intermediate results are printed, and \texttt{homogeneous = true} means we expect all factors of the PLD to be homogeneous polynomials. This helps for interpolation, see Sec.~\ref{sec:numerical}.

The next important options are \texttt{codim_start} and \texttt{face\_start}. Our program runs through the faces of ${\rm Newt}({\cal G}_G)$ in order of decreasing codimension. That is, it runs through all vertices first, then all edges, etc. If a computation was interrupted, or one is for other reasons only interested in the result for a subset of all faces, it is convenient to skip low dimensional faces. The faces in each codimension are numbered according to the output of the function \texttt{getWeights}, which returns, for each codimension, a list of weight vectors revealing the faces. This is illustrated in Ex.~\ref{ex:weights}. The function \texttt{getPLD} will start from face number \texttt{face_start} in codimension \texttt{codim_start}. If \texttt{single_face} is true, it will do the computation only for that face. Similarly, if \texttt{single_weight} is set to a weight list, the computation will happen only for the face with this specific weight. If \texttt{codim_start = -1} (the default option), the program will run over all faces. The output for each face can be saved by setting the variable \texttt{save_output} to the file name. Likewise, it can be loaded by setting \texttt{load_output} to this file name, in which case the discriminants will be loaded instead of computed from scratch. This option is useful for quickly converting an output of the computation to \texttt{Oscar} format.
\end{example}

\begin{example}\label{ex:weights}
The package \texttt{PLD.jl} provides the functions \texttt{getWeights} and \texttt{getIF}. These compute weight vectors lying in the interior of each cone in the normal fan of the Newton polytope of a polynomial (\texttt{getWeights}) and the initial form of a polynomial for a fixed weight vector (\texttt{getIF}). Here is an example for $G = \texttt{par}$: 
\begin{minted}{julia}
edges = [[3,1],[1,2],[2,3],[2,3]];
nodes = [1,1,2,3];
U, F = getUF(edges, nodes, internal_masses = :generic,
                           external_masses = :generic);
weights = getWeights(U+F);
getIF(U+F, weights[2])
\end{minted}
The following code lines compute the weights for the example of Sec.~\ref{subsec:parachute}, as well as the initial forms of the faces of the Newton polytope on $\cal{G}_{\texttt{par}}$ of codimension $1$ (recall that \texttt{Julia} indexes lists from $1$, but our codimensions start from $0$).

For example, \texttt{weights[2][1]} gives the weight list \texttt{[-1,-1,-1,-1]} corresponding to the first-type leading singularity. If we wanted to compute discriminants only for this case, it is enough to run
\begin{minted}{julia}
getPLD(edges, nodes, internal_masses = :generic, external_masses = :generic,
                     method = :num, single_weight = weights[2][1])
\end{minted}
The result is the discriminant with two components from \eqref{eq:par-leading-disc} and \eqref{eq:par-leading-disc2}.
\end{example}

\begin{example}
Even though our main focus is on Feynman integrals, the package \texttt{PLD.jl} can be also used to analyze singularities of other classes of integrals, for instance those appearing in applications to cosmological wavefunctions, gravitational-wave physics, or energy correlators. As a simple example, consider the integral
\begin{align}\label{eq:example15}
\int_{\R^3_+} &\frac{z_1^\varepsilon z_2^\varepsilon z_3^\varepsilon\, \d^3 z}{(X_1 + z_1 + X_2 + z_2 + X_3 + z_3)(X_1 + z_1 + Y_1)(X_3 + z_3 + Y_2)}\\
&\times \frac{(X_1 + z_1 + X_3 + z_3 + 2X_2 + 2z_2 + Y_1 + Y_2)}{(X_2 + z_2 + Y_1 + Y_2)(X_1 + z_1 + X_2 + z_2 + Y_2)(X_2 + z_2 + X_3 + z_3 + Y_1)}
\end{align}
with $\varepsilon\in \C$. Furthermore, all the symbols with capital letters are parameters and $z_1, z_2, z_3$ are integration variables. 
This integral computes the three-chain graph wavefunction, see, e.g., \cite{Hillman:2019wgh}. Its singularities can be obtained with the specialized principal $A$-determinant applied to the Cayley configuration $\sum_{i=1}^{6} \alpha_i P_i$, where $P_i$'s are the polynomial factors in the denominator of \eqref{eq:example15} and the $\alpha_i$'s are adjoined to the list of variables. We can dehomogenize it by setting, say $\alpha_6 = 1$. In \texttt{PLD.jl}, it amounts to running
\begin{minted}{julia}
R, pars = PolynomialRing(QQ, ["X1", "X2", "X3", "Y1", "Y2"])
S, vars = LaurentPolynomialRing(R, ["z1", "z2", "z3",
                                    "α1", "α2", "α3", "α4", "α5"])
(X1, X2, X3, Y1, Y2) = pars
(z1, z2, z3, α1, α2, α3, α4, α5) = vars

P1 = X1 + z1 + X2 + z2 + X3 + z3
P2 = X1 + z1 + Y1 
P3 = X3 + z3 + Y2
P4 = X2 + z2 + Y1 + Y2
P5 = X1 + z1 + X2 + z2 + Y2
P6 = X2 + z2 + X3 + z3 + Y1
Cayley = α1*P1 + α2*P2 + α3*P3 + α4*P4 + α5*P5 + P6

getSpecializedPAD(Cayley, pars, vars)
\end{minted}
The output gives the discriminants
\begin{gather}
X_1 + X_2 + X_3,\; X_1 + X_2 + Y_2,\; X_1 + X_2 - Y_2,\; X_1 + X_3 - Y_1 - Y_2,\; X_1 + Y_1,\\
X_1 - X_3 - Y_1 + Y_2,\; X_1 - Y_1,\; X_1 - Y_1 + 2Y_2,\; X_1 - Y_1 - 2Y_2,\; X_2 + X_3 + Y_1,\\
X_2 + X_3 - Y_1,\; X_2 + Y_1 + Y_2,\; X_2 + Y_1 - Y_2,\; X_2 - Y_1 + Y_2,\; X_2 - Y_1 - Y_2,\\
X_3 + 2Y_1 - Y_2,\; X_3 + Y_2,\; X_3 - 2Y_1 - Y_2,\; X_3 - Y_2,\; Y_1,\; Y_1 + Y_2,\; Y_1 - Y_2,\; Y_2\, .
\end{gather}
We verified that the singularities of the integral \eqref{eq:example15} evaluated in \cite[Sec.~4.3]{Hillman:2019wgh} and \cite[App.~B]{De:2023} are strictly contained in the above set.

\end{example}

\subsection{Standard Model examples}
\label{subsec:standard-model}

We applied the above algorithm to all diagrams in Fig.~\ref{fig:diagrams}. For each of them, we ran the symbolic elimination described in Sec.~\ref{subsec:symbolic} until the problems became too lengthy or did not terminate. For example, for $G = \texttt{outer-dbox}$, the discriminants for all faces of codimension-$2$ or higher were computed symbolically. Beyond this point, we apply the numerical algorithm from Sec.~\ref{sec:numerical}.\footnote{For comparison, the full computation for one of the simplest diagrams, $G = \texttt{inner-dbox}$, took $9.3$ minutes, while the most difficult diagram, $G = \texttt{npl-dpent}$, took $63.7$ hours on two Intel Xeon E5-2695 v4 CPUs with 18 cores each.} In addition, we also ran \texttt{HyperInt} on the same set of diagrams using the method outlined in App.~\ref{sec:appendixHyperInt}. In contrast with \texttt{PLD.jl}, \texttt{HyperInt} terminated only for diagrams (a-e) and (g-h). For (b-e) and (g-h), \texttt{HyperInt} found additional components of the Euler discriminant not found by \texttt{PLD.jl}. Out of the 114 diagrams we tested, \texttt{HyperInt} terminated for 64. For 25 diagrams among these 64, the Euler discriminant contains at least one more component than the PLD.

Tab.~\ref{tab:diagrams} summarizes the basic information of each diagram $G$: the number of kinematic variables it depends on ($\dim \mathcal{E}$), the $f$-vector of the Newton polytope $\Newt(\G_G(\mathcal{E}))$, as well as the degrees of the discriminants found by either method. The notation follows \cite[Tab.~1]{Mizera:2021icv} where $[\ldots, a^b, \ldots]_1$ means that there are $b$ components with degree $a$ and the codimension in $\mathcal{E}$ is $1$.
All the detailed results are collected at \PLDwebsite, where we also provide results for all the diagrams in \cite[Fig.~1]{Mizera:2021icv} with various assignments of internal and external masses.\footnote{To ensure stability of the rationalization step (v), we do not attempt to interpolate components with $>10000$ samples. For example, $G=\texttt{pentb}$ from \cite[Fig.~1]{Mizera:2021icv} with all massless particles depends on $m=5$ kinematic parameters, which means we can reconstruct its discriminants up to degree $d_i \leqslant 19$, while the case with all generic internal and external masses has $m=18$, which gives $d_i \leqslant 4$.}

An example output file is organized as follows. The header contains all the information about the diagram:
\begin{minted}{julia}
################################
# Diagram information
################################

name = "outer-dbox"

edges = [[1, 2], [2, 5], [3, 5], [3, 4], [4, 6], [1, 6], [5, 6]]
nodes = [1, 2, 3, 4]
internal_masses = [m2, m2, m2, m2, m2, m2, 0]
external_masses = [0, 0, 0, 0]

U = x[1]*x[3] + x[1]*x[4] + x[1]*x[5] + x[1]*x[7] + # more terms
F = -m2*x[1]^2*x[3] - m2*x[1]^2*x[4] - m2*x[1]^2*x[5] + # more terms
parameters = [m2, s, t]
variables = [x[1], x[2], x[3], x[4], x[5], x[6], x[7]]

χ_generic = 64
f_vector = [39, 153, 271, 272, 165, 60, 12]
\end{minted}
Here, \texttt{χ_generic} is the generic signed Euler characteristic that can be found in Tab.~\ref{tab:volVSEuler}, while \texttt{f_vector} is the $f$-vector from Tab.~\ref{tab:diagrams}. This header is followed by a list of individual components (in this case, $8$ of them), for example:
\begin{minted}{julia}
################################
# Component 4
################################

D[4] = m2*s + m2*t - 1//4*s*t
χ[4] = 55
weights[4] = [[-1, -1, 0, -1, -1, 0, -2], # more weights ]
computed_with[4] = ["PLD_sym", "PLD_num", "HyperInt"]
\end{minted}
For each component, \texttt{D[i]} is the defining polynomial, $\texttt{χ[i]}$ is the signed Euler characteristic evaluated on $\texttt{D[i]} = 0$ (computed using reliable numerical techniques, see Rmk.~\ref{rmk:1}), $\texttt{weights[i]}$ is the list of weights for which the component was found using either method (no weights are recorded for \texttt{HyperInt}). Finally, $\texttt{computed_with[i]}$ summarizes which method was used to find the corresponding component.

\begin{table}[t]
\centering
\footnotesize
    \begin{tabular}{c|c|c|c}
        Diagram $G$ & $\dim \mathcal{E}$ & $f$-vector of $\Newt(\G_G(\mathcal{E}))$ & Degrees of $\nabla_\chi(\mathcal{E})$ \\
        \hline
        $\texttt{inner-dbox}$ & 3 &  $(37, 156, 294, 310, 195, 72, 14)$ & $[1^7, 2]_1$ \\
        $\texttt{outer-dbox}$ & 3 & $(39, 153, 271, 272, 165, 60, 12)$ & $[1^6, 2, 3]_1$ \\
        $\texttt{Hj-npl-dbox}$ & 4 & $(40, 174, 333, 350, 215, 76, 14)$ & $[1^{13}, 2^6, 3^3]_1$ \\
        $\texttt{Bhabha-dbox}$ & 4 & $(32, 162, 347, 393, 252, 90, 16)$ & $[1^6, 2^3]_1$ \\
        $\texttt{Bhabha2-dbox}$ & 4 & $(37, 187, 394, 435, 272, 96, 17)$ & $[1^7, 2^3, 3]_1$ \\
        $\texttt{Bhabha-npl-dbox}$ & 4 & $(35, 184, 402, 457, 291, 103, 18)$ & $[1^8, 2^3]_1$ \\
        $\texttt{kite}$ & 6 & $(24, 66, 73, 39, 10)$ & $[1^6, 2^4, 3, 4^2]_1$ \\
        $\texttt{par}$ & 7 & $(15, 33, 27, 9)$ & $[1^7, 2^2, 3, 4^2, 6]_1$ \\
        $\texttt{Hj-npl-pentb}$ & 7 & $(56, 294, 681, 884, 699, 343, 101, 16)$ & $[1^{31}, 2^{16}, 3^{12}, 4^{7}, 6^{3}, 12^{2}]_1$ \\
        $\texttt{dpent}$ & 9 & $(64, 528, 1770, 3158, 3336, 2171, 867, 202, 24)$ & $[1^{30}, 2^{11}, 4^6, 5, 6]_1$ \\
        $\texttt{npl-dpent}$ & 9 & $(64, 597, 2117, 3852, 4058, 2606, 1029, 239, 28)$ & $[1^{52}, 2^{22}, 3^6, 4^{16}, 5^7, 6^3]_1$ \\
        $\texttt{npl-dpent2}$ & 9 & $(63, 562, 1969, 3591, 3820, 2482, 988, 230, 27)$ & $[1^{46}, 2^{21}, 3^2, 4^{12}, 5^4, 6^3]_1$\\
    \end{tabular}
\caption{\label{tab:diagrams} Summary table of the results for each diagram $G$: dimension of the kinematic subspace $\mathcal{E}$, $f$-vector of the corresponding Newton polytope $\Newt(\G_G(\mathcal{E}))$, and degrees of the components of $\nabla_\chi(\mathcal{E})$ we found, see the main text for details.}
\end{table}

In the above computations, we made heavy use of parallelization implemented in \texttt{HomotopyContinuation.jl} to speed-up computations.
On multi-core machines, the number of threads is controlled by the variable \texttt{JULIA_NUM_THREADS}, e.g., one can set it with \texttt{export JULIA_NUM_THREADS = 64} before running \texttt{PLD.jl}.

\section{\label{sec:conclusion}Conclusion and outlook}

In this work, we revisited the Landau analysis of singular loci of Feynman integrals using tools from computational algebraic geometry. With a practical view towards explicit computations, we defined the principal Landau determinant (PLD) $E_G({\cal E})$ of a Feynman diagram as a subvariety of the parameter space ${\cal E}$. It estimates the Euler discriminant variety $\nabla_{\chi}({\cal E})$, which is the locus of kinematic parameters for which the signed Euler characteristic of the hypersurface defined by the graph polynomial ${\cal G}_G$ drops compared to its generic value.  

Algorithms for computing the PLD are implemented in our open-source \texttt{Julia} package \texttt{PLD.jl} available at \url{https://mathrepo.mis.mpg.de/PLD/} together with a tutorial explaining its features and a database containing the output of our algorithm on 114 examples.
Our definitions and algorithms are inspired by the principal $A$-determinant from the work of Gelfand, Kapranov, and Zelevinsky (GKZ), which is known to be the singular locus of a $D$-module, called $A$-hypergeometric system. Our examples show that a careful adaptation of the GKZ framework is necessary, but it results in an effective method for computing singularities of Feynman integrals.
Sec.~\ref{sec:4} presented the only cases we found in which the kinematic space is not properly contained in the principal $A$-determinant. We leave the case of vanishing internal and external masses with a conjectural formula (see Conj.~\ref{conj:A00}) for the principal $A$-determinant. The main obstacle in proving this formula is the complexity of the face structure of the Newton polytope.

We compared our results with those of \texttt{HyperInt} for computing an upper bound on the singularity locus of Feynman integrals (see App.~\ref{sec:appendixHyperInt}). The \texttt{HyperInt} computation is based on Pham and Brown's \emph{Landau variety} \cite{pham2011singularities,Brown:2009ta}. In future research, it would be desirable to understand the relation between all these geometric objects in detail. Concretely, which inclusions hold between the Landau variety, the Euler discriminant variety, the principal Landau determinant variety, and the singular locus of the $D$-module annihilating the Feynman integral? Our belief about the relation between the varieties defined by the Euler discriminant and the PLD is stated in Conj.~\ref{conj:pldvschi}. 

Furthermore, our algorithms for computing the PLD can be improved. For instance, one could exploit \emph{sparsity}, meaning the fact that its factors only have a few nonzero coefficients. Also, to what extent can we shrink the gap between Euler discriminant and PLD by (partially) compactifying Schwinger parameter space in our computations (see Ex.~\ref{ex:embcomp} and App.~\ref{sec:appendix2})?

\acknowledgments

We thank Marko Berghoff, Marie Brandenburg, Francis Brown, Hofie Hannesdottir, Erik Panzer, and Bernd Sturmfels for useful discussions. C.F. has received funding from the
European Union’s Horizon 2020 research and innovation programme under the Marie Sklodowska-Curie grant agreement No 101034255.
S.M. gratefully acknowledges funding provided by the Sivian Fund and the Roger Dashen Member Fund at the Institute for Advanced Study.
This material is based upon work supported by the U.S. Department of Energy, Office of Science, Office of High Energy Physics under Award Number DE-SC0009988. We are grateful to anonymous referees for their insightful comments on an earlier version of this paper. 

\appendix

\section{\label{sec:appendixHyperInt} Bounding the Landau variety with \texttt{HyperInt}}

In this appendix, we explain how to use \texttt{HyperInt} \cite{Panzer:2014caa} for computing an upper bound on the singularity locus of Feynman integrals (the Landau variety in the sense of \cite{Brown:2009ta}).\footnote{We thank Erik Panzer for suggesting to compare the output of \texttt{HyperInt} with \texttt{PLD.jl}, and for clarifying the use of the command \texttt{cgReduction}.} We then use the Euler discriminant to filter out spurious components and compare it with the output of \texttt{PLD.jl}. As an illustrative example, we consider the double-box diagram with an outer massive loop, $G = \texttt{outer-dbox}$, from Fig.~\ref{fig:diagrams}c.

\texttt{HyperInt} implements the compatibility graph method of polynomial reduction based on \cite{Brown:2009ta}. A Feynman diagram does not need to integrate to polylogarithms for this algorithm to terminate.
For the specific example at hand, the \texttt{Maple} code needed to find it is 
\begin{minted}{julia}
read "HyperInt.mpl":

edges := [{1,2}, {2,5}, {5,3}, {3,4}, {4,6}, {6,1}, {5,6}]:
nodes := [[1,0], [2,0], [3,0], [4,0]]:
internal_masses := [m2, m2, m2, m2, m2, m2, 0]:

U := graphPolynomial(edges):
F := subs({s12 = -s, s14 = -t},
          secondPolynomial(edges, nodes, internal_masses)):

S := irreducibles({U,F}):
L[{}] := [S, combinat[choose](S, 2)]:

cgReduction(L, {s,t,m2}, 4):
candidates := {s,t,m2} union L[{seq(x[i], i=1..7)}][1];
\end{minted}
The notation matches that of \texttt{PLD.jl} as much as possible (the additional minus signs in \texttt{-s} and \texttt{-t} are needed because \texttt{HyperInt} uses Euclidean conventions). Line $14$ computes the upper bound on the Landau variety, defined by the zero locus of polynomials in \texttt{s}, \texttt{t}, and \texttt{m2}. In general, one can provide an additional parameter \texttt{d} to the command \texttt{cgReduction} of \texttt{HyperInt} (line 14), see \cite{Panzer:2014caa}. The default value is 1, and here it is chosen to be 4. This parameter controls the maximum degree of the consecutive coordinate projections of the algorithm. The result $\mathtt{candidates}$ is a list of $17$ polynomials:
\begin{minted}{julia}
candidates := {m2,
               s,
               t,
               m2 - s,
               m2 - 1/2*s,
               m2 - 1/4*s,
               m2 - 1/4*t,
               s + t,
               s + 2*t,
               m2*s + m2*t - 1/4*s*t,
               m2*s + 2*m2*t - 1/2*s*t,
               m2*t - 1/4*s^2 - 1/4*s*t,
               m2*t + 1/4*s^2 - 1/4*s*t,
               m2*t + s^2,
               m2*s^2 + 4*m2*s*t + 4*m2*t^2 - s*t^2,
               m2*s^2 + 2*m2*s*t + m2*t^2 - s^2*t,
               m2^2*s - 2*m2*s*t - 4*m2*t^2 + s*t^2}
\end{minted}
They are candidates for the polynomials defining components of the Landau variety. However, not all of them correspond to actual singularities. To detect which ones do, we can use the Euler characteristic check. This can be achieved with the tools provided in \texttt{PLD.jl}. After defining \texttt{edges}, \texttt{nodes}, \texttt{internal_masses} as above, it amounts to running the following \texttt{Julia} code:
\begin{minted}[escapeinside=||]{julia}
U, F, pars, vars = getUF(edges, nodes;
                         internal_masses = internal_masses,
                         external_masses = :zero);
(m2, s, t) = pars;
candidates = # the above list with / |\(\rightarrow\)| //
EulerDiscriminantQ(U+F, pars, vars, candidates)
\end{minted}
The function ${\tt EulerDiscriminantQ}$ checks whether the provided candidate components belong to the Euler discriminant by computing their signed Euler characteristics and comparing them to the generic signed Euler characteristic $\chi^*$. Moreover, to increase reliability of the results, each Euler characteristic is computed ten times and the maximum of the computed and certified values is selected.
In particular, in the code snippet above, line $6$ scans over the list of \texttt{candidates} and finds that only $8$ out of the $17$ components lie in the Euler discriminant. They are
\be
\{ m^2,\; s,\; t,\; m^2 - s/4,\; m^2 - t/4,\; s+t,\; m^2(s+t) - st/4,\; m^4 s - 2 m^2 t (s + 2t)  + s t^2\}\, ,
\ee
where $m^2 = \mathtt{m2}$.
They have Euler characteristics $12, 6, 33, 48, 63, 45, 55, 62,$ respectively, compared to the generic $64$. This list agrees with the singularities of this topology of Feynman integrals that can be determined from the differential equations in \cite[App.~A]{Caron-Huot:2014lda}.

This output can be compared with the result of running \texttt{PLD.jl}. The latter finds the first $7$ out of the above $8$ components. These results are summarized in {\PLDwebsite} in the conventions explained in Sec.~\ref{subsec:standard-model}. 
Note that, in order for \texttt{HyperInt} to return a non-empty list of components, one might have to increase the optional parameter \texttt{d} of \texttt{cgReduction} (see above). In practice, one needs to strike a balance: too low of a bound leads to no components, and computations with too high of a bound might not terminate or consume too much memory. In our experiments, we tried several values of \texttt{d} in an unstructured way. Using this parameter in a more informed way, it might be that \texttt{HyperInt} can find an upper bound for the Landau variety of a few more diagrams. 
For larger examples (f) and (i-l) from Fig.~\ref{fig:diagrams}, we found that \texttt{HyperInt} does not terminate, which means we could not make a direct comparison with \texttt{PLD.jl}.

\section{\label{sec:appendix}From loop momentum to Schwinger parameters}

In this appendix, we informally review the steps that bring a Feynman integral in the loop-momentum representation into an integral over Schwinger parameters. This derivation also defines the Symanzik polynomials for integrals involving numerators.

The starting point is a family of scalar Feynman integrals in $\D$ space-time dimensions with $\n$ external legs and $\L$ loops:
\be\label{eq:Feynman-integral}
I_{\nu_1, \nu_2, \ldots, \nu_m} := \frac{1}{(i\pi^{\D/2})^\L} \int \frac{\d^\D \ell_1\, \d^\D \ell_2 \cdots \d^\D \ell_\L}{P_1^{\nu_1} P_2^{\nu_2} \cdots P_m^{\nu_m}}\, .
\ee
The integration variables are the $\L$ loop momenta $\ell_a$ with $a=1,2,\ldots,\L$, which are vectors in the Minkowski space $\R^{1,\D-1}$. Here, each inverse propagator $P_j$ is a quadratic polynomial in $\ell_a$'s, the external momenta $p_i$'s for $i=1,2,\ldots,\n$, and the masses $m_i$. The external momenta satisfy the momentum conservation $\sum_{i=1}^{n} p_i = 0$. Finally, $i=\sqrt{-1}$, and $\nu_j$'s are integer exponents. Their number $m$ can be taken to be equal to the number $\E$ of internal edges in the diagram or larger. In many applications, one extends this set to span a basis of the kinematic invariants $\{ \ell_a \cdot \ell_b, \ell_a \cdot p_i\}$ with $a,b = 1,2,\ldots,\L$ and $i = 1,2,\ldots,\min(\D,\n-1)$. The number of such invariants is
\be
m = \frac{\L (\L + 1)}{2} + \L \min(\D, \n-1)
\ee
if we assume that all $\ell_a$'s are unconstrained (it is enough that $\D \geq \frac{\L+1}{2}$). The most interesting case is $\nu_j \geq 0$ for $j=1,2,\ldots,\E$ for the set of propagators, and $\nu_j \leq 0$ for the remaining $j=\E+1,\ldots,m$ called irreducible scalar products (ISP's). However, in the context of master integrals for differential equations, one often treats both as arbitrary integers. Whenever \eqref{eq:Feynman-integral} diverges, one treats it as a function of $\D$ in a procedure called dimensional regularization.

We will derive two Schwinger-parametric representations of Feynman integrals \eqref{eq:Feynman-integral}, depending on whether the powers $\nu_j$ for $j=\E+1,\ldots,m$ are arbitrary or constrained to be non-negative.

\subsection{Arbitrary powers of ISP's}

We first treat the case in which all $\nu_j$'s are arbitrary. We introduce Schwinger parameters $\alpha_j$ for $j=1,2,\ldots,m$ by writing
\be\label{eq:Schwinger-trick}
\frac{1}{P_j^{\nu_j}} \,=\, \frac{(-i)^{\nu_j}}{\Gamma(\nu_j)} \int_0^{\infty}  \alpha_j^{\nu_j - 1}\, \e^{i P_j \alpha_j}\d \alpha_j\, .
\ee
Applying this identity $m$ times, we obtain%
\footnote{Strictly speaking, we should use the identity \eqref{eq:Schwinger-trick} with $\frac{1}{(P_j/\mu^2)^{\nu_j}}$ with some mass scale $\mu$ so that the exponent in the integrand is dimensionless, but here we work in the units where $\mu = 1$.}
\be
I_{\nu_1, \nu_2, \ldots, \nu_m} \,=\, \frac{1}{(i\pi^{\D/2})^\L} \frac{(-i)^{\sum_{j=1}^{m}\nu_j}}{\prod_{j=1}^{m} \Gamma(\nu_j)} \int \d^\D \ell_1\, \d^\D \ell_2 \cdots \d^\D \ell_\L \int_{\R_+^m} \prod_{j=1}^{m} \alpha_j^{\nu_j - 1}\, \e^{i \sum_{j=1}^{m} \alpha_j P_j}\d^{m}\alpha \, .
\ee
Using the fact that each $P_j$ is quadratic in the loop momenta, the sum in the exponent can be further rewritten as
\be
\sum_{j=1}^{m} \alpha_j P_j \,=:\, \sum_{a,b=1}^{\L} \ell_a \cdot \ell_b\, \mathbf{Q}_{ab} + 2 \sum_{a=1}^{\L} \ell_a \cdot \mathbf{L}_a + c\, ,
\ee
thus defining $\mathbf{Q}$ which is an $\L \times \L$ matrix, $\mathbf{L}$ is an $\L$-vector whose entries are Minkowski vectors, and $c$ is a scalar. Each of $\mathbf{Q}, \mathbf{L}, c$ is linear in the Schwinger parameters. At this stage, we can complete the square in the loop momenta as follows:
\be
\sum_{i=1}^{m} \alpha_i P_i \,=\, \left[ \bm{\ell} + \mathbf{Q}^{-1} \mathbf{L} \right]^\intercal \cdot \mathbf{Q} \left[ \bm{\ell} + \mathbf{Q}^{-1} \mathbf{L} \right] - \mathbf{L}^\intercal \cdot \mathbf{Q}^{-1} \mathbf{L} + c\, ,
\ee
where for brevity we used matrix notation with the vector $[\bm{\ell}]_a = \ell_a$. At this stage, the integrals over the loop momenta are Gaussian and we can simply do them (Wick rotation of the contour in the time component of $\mathbb{R}^{1,\D-1}$ leads to an additional factor of $i^\L$, see \cite[App.~A]{Hannesdottir:2022bmo}):
\be
I_{\nu_1, \nu_2, \ldots, \nu_m} \,=\, i^{\L\D/2}\frac{(-i)^{\sum_{j=1}^{m}\nu_j}}{\prod_{j=1}^{m} \Gamma(\nu_j)} \int_{\R_+^m} \frac{1}{\Ue^{\D/2}} \prod_{j=1}^{m} \alpha_j^{\nu_j - 1}\, \e^{i \Fe/\Ue}\d^{m}\alpha\, .
\ee
Here, we have defined the generalized \emph{Symanzik polynomials}
\be
\Ue := \det \mathbf{Q}, \qquad \Fe := \left( - \mathbf{L}^\intercal \cdot \mathbf{Q}^{-1} \mathbf{L} + c \right) \Ue\, .
\ee
Compared to the main text, we omit the subscripts ${}_G$ for clarity.
They are homogeneous polynomials with degrees $\L$ and $\L{+}1$ in the Schwinger parameters respectively.

One can further massage this expression into the form used in the main part of the paper. For example, without changing the value of the integral, we can multiply it by a dummy propagator with the Schwinger parameter $\alpha_0$:
\be
\frac{1}{1^{\nu_0}} \,=\, \frac{(-i)^{\nu_0}}{\Gamma(\nu_0)} \int_0^{\infty} \d \alpha_0\, \alpha_0^{\nu_0 - 1}\, \e^{i \alpha_0}\, ,
\ee
where the choice $\nu_0 = (\L+1)\D/2 - \sum_{j=1}^{m} \nu_j$ will turn out to be convenient. The integral now takes the form
\be
I_{\nu_1, \nu_2, \ldots, \nu_m} \,=\, i^{\L\D/2} \frac{(-i)^{\sum_{j=0}^{m}\nu_j}}{\prod_{j=0}^{m} \Gamma(\nu_j)} \int_{\R_+^{m+1}} \frac{1}{\Ue^{\D/2}} \prod_{j=0}^{m} \alpha_j^{\nu_j - 1}\, \e^{i (\Fe + \alpha_0 \Ue)/\Ue}\d^{m+1}\alpha \, .
\ee
Finally, we can integrate out the overall scale by a change of variables
\be
(\alpha_0, \alpha_1, \ldots, \alpha_m) \,\to\, \lambda(1,\alpha_1, \ldots, \alpha_m)\, .
\ee
where $\lambda > 0$ and the measure transforms as $\d^{m+1} \alpha \to \lambda^{m} \d\lambda\, \d^{m} \alpha$. Using homogeneity properties of the Symanzik polynomials, the result is
\begin{align}
I_{\nu_1, \nu_2, \ldots, \nu_m} &= i^{\L\D/2} \frac{(-i)^{\sum_{j=0}^{m}\nu_j}}{\prod_{j=0}^{m} \Gamma(\nu_j)} \int_{\R_+} \lambda^{\D/2 - 1}\d \lambda\!\! \int_{\R_+^{m}} \frac{1}{\Ue^{\D/2}} \prod_{j=1}^{m} \alpha_j^{\nu_j - 1}\, \e^{i \lambda (\Fe + \Ue)/\Ue}\d^{m}\alpha \\
&= \frac{\Gamma(\D/2)}{\prod_{j=0}^{m} \Gamma(\nu_j)} \int_{\R_+^{m}} \frac{1}{(\Fe + \Ue)^{\D/2}} \prod_{j=1}^{m} \alpha_j^{\nu_j - 1}\d^{m}\alpha\, .\label{eq:FU-parametrization}
\end{align}
In the final step we integrated out $\lambda$ using the same identity as in \eqref{eq:Schwinger-trick}. This is the form of the Feynman integrals \cite{Lee:2013hzt} used in the main text.

\subsection{Non-positive powers of ISP's}

Let us now discuss the case in which the powers $\nu_j \leq 0$ are non-positive integers for $j=\E+1,\ldots,m$. The difference will be that now we can use a different Schwinger parametrization:
\be
\frac{1}{P_j^{\nu_j}} = \frac{(-i)^{-\nu_j}\Gamma(1{-}\nu_j) }{2\pi i}\oint_{|\alpha_j| = \eps} \alpha_j^{\nu_j - 1} \e^{i P_j \alpha_j}\, \d \alpha_j\, .
\ee
Apart from the prefactors, the difference to \eqref{eq:Schwinger-trick} is that we integrate $\alpha_j$ over a small anti-clockwise circle with radius $\eps$ around the origin.

Since the integrand is identical to the one encountered in the previous subsection, we can carry out the same manipulations up to \eqref{eq:FU-parametrization}. Taking care of the overall normalization, we get
\be
I_{\nu_1, \nu_2, \ldots, \nu_m} =  \frac{\Gamma(\D/2) \prod_{j=\E+1}^m (-i)^{-2\nu_j}\Gamma(1-\nu_j)}{(2\pi i)^{m-\E}\prod_{j=0}^{m} \Gamma(\nu_j)} \int_{\R_+^{\E} \times T^{m-\E}} \frac{1}{(\Fe + \Ue)^{\D/2}} \prod_{j=1}^{m} \alpha_j^{\nu_j - 1}\d^{m}\alpha\, .
\ee
The idea is to finish by carrying out the $m-\E$ integrations over the torus $T^{m-\E}$ given by the product of circles $\{ |\alpha_j| = \eps \}$ for $j=\E+1,\ldots,m$. To this end, let us define
\be
\U := \Ue \big|_{\alpha_j = 0},\qquad
\F := \Fe \big|_{\alpha_j = 0} \qquad \text{for}\qquad j=\E{+}1, \ldots, m\, .
\ee
These are the Symanzik polynomials of the Feynman diagram without any numerators, as defined in \cite[Defs.~1--2]{Mizera:2021icv}. Recall that $\U$ and $\F$ can be computed combinatorially in terms of sums over spanning trees and $2$-trees. Let us write $\Ue = \U + \U'$ and $\Fe = \F + \F'$, where the primed polynomials contain all the dependence on the Schwinger parameters we want to integrate out.
At this stage, we need to compute
\begin{align}
\frac{\prod_{j=\E+1}^{m} (-1)^{-2\nu_j}\Gamma(1-\nu_j)}{(2\pi i)^{m-\E}}\oint_{T^{m-\E}} &\frac{1}{(\F + \U + \F' + \U')^{\D/2}} \prod_{j=\E+1}^{m} \alpha_j^{\nu_j - 1} \d^{m-\E} \alpha\\
&=: \frac{\mathcal{N}}{(\F + \U)^{\D/2 - \sum_{j=\E+1}^{m} \nu_j }}\, ,
\end{align}
which defines the polynomial $\mathcal{N}$. The normalization is chosen such that it has degree $-\sum_{j=\E+1}^{m} \nu_j \geq 0$ in the remaining Schwinger parameters $\alpha_j$ with $j=1,2,\ldots,\E$.

Putting everything together, we find that the Feynman integral with numerators can be written as
\be
I_{\nu_1, \nu_2, \ldots,\nu_\E | \nu_{\E+1},\ldots, \nu_m} \,=\, \frac{\Gamma(\D/2)}{\prod_{j=0}^{m} \Gamma(\nu_j)} \int_{\R_+^{\E}} \frac{\mathcal{N}}{(\F + \U)^{\D/2 - \sum_{j=\E+1}^{m} \nu_j}} \prod_{j=1}^{\E} \alpha_j^{\nu_j - 1}\d^{\E}\alpha\, .
\ee
In particular, since $\mathcal{N}$ is polynomial, it can be expanded in terms of monomials
\be
\mathcal{N} = \sum_{k} c_k \prod_{j=1}^{\E} \alpha_j^{\rho_{jk}}\, .
\ee
Here, the coefficients $c_k$ are independent of the $\alpha_j$'s and all $\rho_{jk} \in \Z_{\geq 0}$.
As a consequence, any Feynman integral with non-positive powers $\nu_j$ of ISP's, $j=\E+1,\ldots, m$ can be expressed as a linear combination of those with no ISP powers and shifted $\D$. Its singularity analysis is then identical to the case without numerators, which leads to the following result.

\begin{proposition}
The PLD associated to the integral $I_{\nu_1, \nu_2, \ldots,\nu_\E | \nu_{\E+1},\ldots, \nu_m}$ with arbitrary $\nu_{j\leq E}$ and $\nu_{j>E} \in \Z_{\leq 0}$ coincides with that of $I_{\nu_1, \nu_2, \ldots,\nu_\E | 0,\ldots, 0}$ for arbitrary $\nu_{j\leq E}$.
\end{proposition}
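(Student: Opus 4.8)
The plan is to build directly on the reduction carried out in the preceding subsection. For non-positive ISP powers, integrating out the $m-\E$ auxiliary Schwinger parameters $\alpha_{\E+1},\dots,\alpha_m$ over the torus $T^{m-\E}$ returns an integral over $\R_+^{\E}$ against the \emph{ordinary} Symanzik polynomials $\U=\U_G$ and $\F=\F_G$ of the graph $G$, against a polynomial numerator $\mathcal{N}\in\C[\alpha_1,\dots,\alpha_{\E}]$ with non-negative integer exponents, and with $\D$ shifted to $\D'$:
\[
I_{\nu_1,\dots,\nu_\E | \nu_{\E+1},\dots,\nu_m} \;=\; \frac{\Gamma(\D/2)}{\prod_{j=0}^{m}\Gamma(\nu_j)}\int_{\R_+^{\E}}\frac{\mathcal{N}}{(\F+\U)^{\D'/2}}\prod_{j=1}^{\E}\alpha_j^{\nu_j-1}\,\d^{\E}\alpha,\qquad \D' := \D - 2\!\!\sum_{j=\E+1}^{m}\!\!\nu_j .
\]

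First I would expand $\mathcal{N}=\sum_k c_k\prod_{j=1}^{\E}\alpha_j^{\rho_{jk}}$ with $c_k$ independent of the Schwinger parameters and $\rho_{jk}\in\Z_{\geq 0}$, and distribute the integral. This exhibits $I_{\nu_1,\dots,\nu_\E | \nu_{\E+1},\dots,\nu_m}$ as a finite $\C$-linear combination (with kinematics-independent coefficients, up to $\Gamma$-prefactors) of scalar Feynman integrals of the form $I_{\nu_1+\rho_{1k},\dots,\nu_\E+\rho_{\E k} | 0,\dots,0}$ in dimension $\D'$, i.e.\ each summand is a \emph{no-ISP} integral for the \emph{same} graph $G$, with only the propagator powers and the space-time dimension altered.

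The central observation is then that the PLD of Def.~\ref{def:PLD} is manifestly insensitive to these alterations. Its construction uses only the graph polynomial $\G_G=\U_G+\F_G$: the matrix $A$ of exponents occurring in $\G_G$, the face lattice $F(A)$ of $\conv(A)$, the initial forms $\G_{G,Q}$, the incidence varieties $Y_{G,Q}(\mathcal{E})=\{\G_{G,Q}=\partial_\alpha\G_{G,Q}=0\}$ from \eqref{eq:incidence-variety}, and their projections to $\mathcal{E}$. None of the exponents $\nu_j$ (equivalently $\nu$ in the Euler-integral representation \eqref{eq:Eulerintegral}) nor $\D$ (equivalently the twist $\mu_1+\mu_2=-\D/2$) enters any of these data. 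Hence each summand $I_{\nu_1+\rho_{1k},\dots | 0,\dots,0}$ carries exactly the PLD $E_G(\mathcal{E})$ of $I_{\nu_1,\dots,\nu_\E | 0,\dots,0}$, so one defines the PLD of $I_{\nu_1,\dots,\nu_\E | \nu_{\E+1},\dots,\nu_m}$ to be this common polynomial, which is the assertion.

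The point requiring care — and the main obstacle — is conceptual rather than computational: one must fix what ``the PLD associated to the integral'' means for an integrand carrying ISP numerators, and verify that eliminating the auxiliary Schwinger parameters really lands one back on the graph polynomial $\U_G+\F_G$ and not on the enlarged polynomial $\Ue+\Fe$ in $m$ variables. This is precisely what the preceding subsection establishes, together with the elementary fact already invoked in App.~\ref{sec:appendix} that a polynomial numerator, once expanded into monomials, only shifts the $\alpha^\nu$-twist of each Euler-integral summand. A secondary subtlety worth a remark is that a $\C$-linear combination of integrals sharing a PLD could in principle have a \emph{smaller} actual singular locus through cancellation; since the PLD is a combinatorial object rather than the exact singular locus, this does not affect the statement as phrased, though it does mean the proposition should not be read as an equality of \emph{Euler discriminants} without a separate argument.
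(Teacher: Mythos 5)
Your proof is correct and follows essentially the same route as the paper: the paper's implicit proof \emph{is} the reduction you recapitulate — express the integral via $\mathcal{N}/(\mathrm{F}+\mathrm{U})^{\mathrm{D}'/2}$, expand $\mathcal{N}$ into monomials to land on a linear combination of no-ISP integrals for the same graph with shifted $\nu_j$ and $\mathrm{D}$, and then note that Def.~\ref{def:PLD} depends only on $\mathcal{G}_G$ and $\mathcal{E}$, not on the twist exponents. Your closing caveat distinguishing "shares the same PLD" from an equality of actual singular loci under cancellation is a sensible clarification, consistent with the paper's phrasing.
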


In other words, numerators cannot introduce new components of the PLD.

\section{\label{sec:appendix2}A toric view on principal Landau determinants}

Our Def.~\ref{def:PLD} of the principal Landau determinant $E_G({\cal E})$ uses the incidence varieties $Y_{G,\Gamma}({\cal E})$ for each face $Q$ of the polytope ${\rm Conv}(A)$. In this appendix, we define a global object ${\cal Y}_G$ which captures all these incidence varieties at once. Let $P = {\rm Newt}({\cal G}_G)$ and write $X_P$ for the corresponding projective toric variety. We consider the equations 
\[ {\cal G}_G \, = \, \alpha_1 \cdot \frac{\partial {\cal G}_G}{\partial \alpha_1} \,= \, \cdots \, = \, \alpha_\E \cdot \frac{\partial {\cal G}_G}{\partial \alpha_\E} \, = \, 0. \]
The Newton polytope of each of these equations is contained in $P$. Hence, we can regard them as $\E+1$ global sections of the line bundle ${\cal O}_{X_P}(D_P) \otimes {\cal L}$ on $X_P \times {\cal E}$, where $D_P$ is the divisor on $X_P$ associated to $P$ (cf.~\cite[Ex. 5.30]{telen2022introduction}), and ${\cal L}$ is the trivial bundle on ${\cal E}$ with sections $\mathbb{C}[{\cal E}]$. The intersection of their zero loci on $X_P \times {\cal E}$ is ${\cal Y}_G$. 

The toric variety $X_P$ is stratified by torus orbits, one for each face $Q \subset P$. We denote these orbits by $O_Q$, and have that $O_Q \simeq (\mathbb{C}^*)^{\dim Q}$. In particular, for the full-dimensional face $Q = P$, we have that $O_P \simeq (\mathbb{C}^*)^\E$ and 
\[ {\cal Y}_G \cap (O_P \times {\cal E}) \, = \, Y_{G,P}. \]
More generally, for lower dimensional faces $Q \subset P$, we have 
\[ {\cal Y}_G \cap (O_Q \times {\cal E}) \simeq Y_{G,Q} / T_Q, \]
where $T_Q$ is a $(\E-\dim Q)$-dimensional quasi-torus acting on $(\mathbb{C}^*)^\E$ so that $(\mathbb{C}^*)^\E/T_Q = O_Q$. Importantly, this means that the projections to the parameter space ${\cal E}$ agree: 
\[ \pi_{\cal E}( {\cal Y}_G \cap (O_Q \times {\cal E}) ) \, = \, \pi_{\cal E}(Y_{G,Q}). \]
Let ${\cal Y}_G = \bigcup_{i \in \mathbb{I}(G)} {\cal Y}_G^{(i)}$ be the irreducible decomposition of ${\cal Y}_G$ for some finite indexing set $\mathbb{I}(G)$. As in Sec.~\ref{sec:3}, we define discriminants by projecting and taking the closure:
\[ \nabla_G^{(i),\circ} \, = \, \pi_{\cal E}({\cal Y}_G^{(i)}), \quad \nabla_G^{(i)} \,=\, \overline{\nabla_G^{(i),\circ}}.  \]
Let $\mathbb{I}(G)_1 = \{ i \in \mathbb{I}(G) \, :\, \dim {\nabla}_G^{(i)} = \dim {\cal E} - 1 \}$. For each $i \in \mathbb{I}(G)_1$, there is a unique (up to scale) polynomial $\Delta_G^{(i)}$ with vanishing locus $\nabla_G^{(i)}$.
\begin{definition} \label{def:PLDtoric}
    The principal Landau determinant $E_G({\cal E})$ is the unique (up to scale) square-free polynomial $E_G({\cal E}) \in \mathbb{C}[{\cal E}]$ such that 
    \[ \{ E_G({\cal E}) \, = \, 0 \} \, = \, \left \{ \prod_{i \in \mathbb{I}(G)_1} \Delta_{G}^{(i)}({\cal E}) \, = \, 0 \right \} \, = \, \bigcup_{i \in \mathbb{I}(G)_1} \nabla_G^{(i)}. \]
\end{definition}
This definition is equivalent to Def. \ref{def:PLD}. It is simpler to state, but it is less practical for computations. Our approach in Sec.~\ref{sec:algorithm} uses explicit local equations for ${\cal Y}_G$ on $O_Q \times {\cal E}$. A similar algorithm for Def. \ref{def:PLDtoric} would make use of an explicit \emph{global} set of equations for ${\cal Y}_G$. Such equations can be obtained by using \emph{Cox coordinates} on $X_P$ \cite[Sec.~6]{telen2022introduction}. This way, we do not have to run over all faces of $P$. However, these global equations are more complicated: there is one Cox coordinate for each facet of $P$. 

\begin{example}
    Next to simplifying the definition of the principal Landau determinant and avoiding a for-loop over all faces of $P$, an advantage of using global coordinates is that the ideal in the Cox ring sees embedded components such as the one in Ex. \ref{ex:embcomp}, and in \eqref{eq:panzerberghoff}. We illustrate this for the problem of Ex. \ref{ex:embcomp}. The toric variety $X_P$ is a weighted projective plane with Cox ring $\mathbb{C}[x_0,x_1,x_2]$, where $\deg(x_0) =1, \deg(x_1) = 2, \deg(x_2) = 3$. The incidence variety ${\cal Y}_G$ is defined by the homogeneous ideal 
    \[ \langle (x_2-x_0^3)^2 - x_1^3 + zx_0^2x_1^2, \, 2x_2(x_2-x_0^3), \, 3x_1^3 + 2zx_1^2x_0^2 \rangle. \] 
    This ideal has a primary component $\langle z, x_1^3, x_2-x_0^3 \rangle$, which corresponds to the embedded component $\langle z, \alpha_2-1, \alpha_1^3\rangle$ seen in \eqref{eq:PDcubic}. This suggests that the primary decomposition of the ideal defining ${\cal Y}_G$ in the Cox ring of $X_P$ may bring us closer to the Euler discriminant. In particular, one can ask if all irreducible components of the Euler discriminant correspond to one of its primary components. 
\end{example}
\noindent
\textbf{Declaration of competing interest}\\
The authors declare that they have no known competing financial interests or personal relationships that could have appeared to
influence the work reported in this paper.

\addcontentsline{toc}{section}{References}
\bibliographystyle{JHEP}
\bibliography{references}

\end{document}